\newcommand{\even}{\mathsf{even}} 
\newcommand{\odd}{\mathsf{odd}}
\newcommand{\M}{\mathsf{M}}
\newcommand{\tout}{\mathsf{out}}
\newcommand{\tens}{\mathsf{tens}}
\newcommand{\conn}{\mathsf{conn}}
\newcommand{\tree}{\mathsf{tree}}
\newcommand{\pow}{\mathsf{pow}}
\newcommand{\offdiag}{\mathsf{offdiag}}
\newcommand{\EvenPart}{\mathsf{EvenPart}}
\newcommand{\err}{\mathsf{err}}
\newcommand{\corr}{\mathsf{corr}}
\newcommand{\main}{\mathsf{main}}
\newcommand{\tie}{\mathsf{tie}}
\newcommand{\tied}{\mathsf{tied}}
\renewcommand{\root}{\mathsf{root}}
\newcommand{\set}{\mathsf{set}}
\newcommand{\overbar}[1]{\mkern 4mu\overline{\mkern-4mu#1}}
\newcommand{\sFbar}{\overbar{\sF}}
\newcommand{\Plans}{\mathsf{Plans}}
\newcommand{\MaxSpan}{\mathsf{MaxSpan}}
\newcommand{\pat}{\mathsf{pat}}
\newcommand{\Pat}{\mathsf{Pat}}
\newcommand{\pairs}{\mathsf{pairs}}
\newcommand{\id}{\mathsf{id}}
\newcommand{\freq}{\mathsf{freq}}
\newcommand{\coll}{\mathsf{coll}}
\newcommand{\ind}{\mathsf{ind}}
\newcommand{\isovec}{\mathsf{isovec}}
\newcommand{\SOS}{\mathsf{SOS}}
\newcommand{\tEE}{\widetilde{\EE}}
\renewcommand{\epsilon}{\varepsilon}
\title{Positivity-preserving extensions of sum-of-squares pseudomoments over the hypercube}
\author{Dmitriy Kunisky\thanks{Email: \textit{kunisky@cims.nyu.edu}. Partially supported by NSF grants DMS-1712730 and DMS-1719545.}}
\affil{Department of Mathematics, Courant Institute of Mathematical Sciences, New York University}
\date{September 15, 2020}
\begin{document}

\pagenumbering{gobble}

\maketitle

\begin{abstract}
    We introduce a new method for building higher-degree sum-of-squares (SOS) lower bounds over the hypercube $\bx \in \{\pm 1\}^N$ from a given degree 2 lower bound.
    Our method constructs pseudoexpectations that are positive semidefinite by design, lightening some of the technical challenges common to other approaches to SOS lower bounds, such as pseudocalibration.
    The construction is based on a ``surrogate'' random symmetric tensor that plays the role of $\bx^{\otimes d}$, formed by conditioning a natural gaussian tensor distribution on consequences of both the hypercube constraints and the spectral structure of the degree 2 pseudomoments.

    We give general ``incoherence'' conditions under which degree 2 pseudomoments can be extended to higher degrees.
    As an application, we extend previous lower bounds for the Sherrington-Kirkpatrick Hamiltonian from degree 4 to degree 6.
    (This is subsumed, however, in the stronger results of the parallel work \cite{GJJPR-2020-SK}.)
    This amounts to extending degree 2 pseudomoments given by a random low-rank projection matrix.
    As evidence in favor of our construction for higher degrees, we also show that random high-rank projection matrices (an easier case) can be extended to degree $\omega(1)$.
    We identify the main obstacle to achieving the same in the low-rank case, and conjecture that while our construction remains correct to leading order, it also requires a next-order adjustment.

    Our technical argument involves the interplay of two ideas of independent interest.
    First, our pseudomoment matrix factorizes in terms of \emph{multiharmonic polynomials} associated with the degree 2 pseudomoments being extended.
    This observation guides our proof of positivity.
    Second, our pseudomoment values are described graphically by sums over forests, with coefficients given by the \emph{\Mobius\ function} of a partial ordering of those forests.
    This connection with inclusion-exclusion combinatorics guides our proof that the pseudomoments satisfy the hypercube constraints.
    We trace the reason that our pseudomoments can satisfy both the hypercube and positivity constraints simultaneously to a remarkable combinatorial relationship between multiharmonic polynomials and this \Mobius\ function.
\end{abstract}

\clearpage

\setcounter{tocdepth}{2}
\tableofcontents

\clearpage

\pagenumbering{arabic}

\section{Introduction}



The problem of \emph{certifying bounds} on optimization problems or \emph{refuting feasibility} of constraint satisfaction problems, especially on random instances, has received much attention in the computer science literature.
In certification, rather than searching for a single high-quality solution to a problem, an algorithm must produce an easily verifiable \emph{proof} of a bound on the quality of \emph{all} feasible solutions.
Determining the computational cost of certification, in particular as compared to that of search, is a fundamental problem in the analysis of algorithms.

The sum-of-squares (SOS) hierarchy of semidefinite programming (SDP) convex relaxations is a powerful family of algorithms that gives a unified way to certify bounds on polynomial optimization problems \cite{Shor-1987-SumOfSquares, Lasserre-2001-GlobalOptimizationMoments,Parrilo-2003-SDPSemialgebraic,Laurent-2009-SOS}.
For many problems, both for worst-case instances and in the average case where instances are drawn at random, SOS relaxations enjoy the best known performance among certification algorithms; often, rounding techniques post-processing the output of SOS also give optimal search algorithms \cite{BBHKSZ-2012-Hypercontractivity,BS-2014-SOSQuest,BKS-2014-SOSRounding,BKS-2015-TensorDecompositionSOS,HSS-2015-TensorPCA,Hopkins-2018-Thesis}.
Moreover, a remarkable general theory has emerged recently showing that SOS relaxations are, in a suitable technical sense, optimal among all efficient SDP relaxations for various problems \cite{LRST-2014-SymmetricSDP,LRS-2015-LowerBoundsSDP}.
Conversely, in light of this apparent power, \emph{lower bounds} against the SOS hierarchy are an exceptionally strong form of evidence for the difficulty of efficiently certifying bounds on a problem \cite{MW-2015-SOSSparsePCA,BCK-2015-SOSPairwiseIndependence,KMOW-2017-SOSCSP,BHKKMP-2019-PlantedClique}.
Especially in the average case setting, where lower bounds against arbitrary efficient algorithms are far out of reach of current techniques (even with standard complexity-theoretic assumptions like $\mathsf{P} \neq \mathsf{NP}$), lower bounds against SOS have emerged as an important standard of computational complexity.

It is therefore valuable to identify general techniques for proving SOS lower bounds, which requires constructing fictitious \emph{pseudosolutions} to the underlying problem that can ``fool'' the SOS certifier.
We know of only one broadly applicable method for doing this in random problems, a technique called \emph{pseudocalibration} introduced in the landmark paper \cite{BHKKMP-2019-PlantedClique} to prove lower bounds for the largest clique problem in random graphs.\footnote{Other works with different, more problem-specific approaches to SOS include \cite{BCK-2015-SOSPairwiseIndependence,MW-2015-SOSSparsePCA,KBG-2017-CommunityDetection}.}
Pseudocalibration uses the idea that certification performs an ``implicit hypothesis test.''
Namely, whenever it is possible to certify a bound over a distribution $\QQ$ of problem instances, it is also possible to distinguish between $\QQ$ and any variant $\PP$ where an unusually high-quality solution is ``planted'' in the problem instance.
If there exists such $\PP$ that appears difficult to distinguish from $\QQ$, then it should also be difficult to certify bounds under $\QQ$.\footnote{Some other recent works, including \cite{BKW-2019-ConstrainedPCA,BBKMW-2020-SpectralPlantingColoring} in which the author participated, have used this idea of \emph{computationally-quiet planting} to give indirect evidence that certification is hard without proving lower bounds against specific certification algorithms, by performing this reduction and then using other techniques to argue that testing between $\PP$ and $\QQ$ is hard.}
Taking such $\PP$ as input, pseudocalibration builds a pseudosolution that appears to the SOS certifier to mimic the solution that is planted under $\PP$ (in a technical sense involving both averaging over the instance distributions and restricting to low-degree polynomial statistics).
Beyond largest clique, pseudocalibration has since successfully yielded SOS lower bounds for several other problems \cite{HKPRSS-2017-SOSSpectral,MRX-2019-SOS4,BCR-2020-ExtendedFormulationCSP}, and has inspired other offshoot techniques including novel spectral methods \cite{HKPRSS-2017-SOSSpectral,RSS-2018-EstimationSOS}, the low-degree likelihood ratio analysis \cite{HS-2017-BayesianEstimation,HKPRSS-2017-SOSSpectral,Hopkins-2018-Thesis,KWB-2019-NotesLowDegree}, and the local statistics SDP hierarchy \cite{BMR-2019-LocalStatistics,BBKMW-2020-SpectralPlantingColoring}.

We suggest, however, that pseudocalibration suffers from two salient drawbacks.
The first is that it requires rather notoriously challenging technical analyses, mostly pertaining to the positive semidefiniteness of certain large structured matrices that it constructs.
The mechanics of the calculations involved in these arguments is a subject unto itself \cite{AMP-2016-GraphMatrices,CP-2020-ZGraphMatrices}, and it is natural to wonder whether there might be a more conceptual explanation for this positivity, which the pseudocalibration construction \emph{a priori} gives little reason to expect.
Second, for several certification problems, it seems to have been (or, in some cases, to remain) quite challenging to advance from lower bounds against small ``natural SDPs'' (such as the Goemans-Williamson SDP for maximum cut \cite{GW-1995-MaxCutApprox,MS-2016-SDPSparseRandomGraphs,DMOSS-2019-SDPNAE,MRX-2019-SOS4}, the \Lovasz\ $\vartheta$ function for graph coloring \cite{Lovasz-1979-ShannonCapacityGraph,CO-2003-LovaszRandomGraphs,BKM-2017-LovaszThetaRandomGraphs}, or the SDP suggested by \cite{MR-2015-NonNegative} for non-negative PCA), which usually coincide with degree~2 SOS, to even the next-smallest degrees of the hierarchy.\footnote{The SOS hierarchy is graded by an even positive integer called the \emph{degree}. The degree $2d$ SDP may be solved, for sufficiently well-behaved problems, in time $N^{O(d)}$ \cite{ODonnell-2017-SOSNotAutomatizable,RW-2017-BitComplexity}.}
Outside of some specially structured deterministic problems \cite{Grigoriev-2001-SOSKnapsack,Laurent-2003-CutPolytopeSOS}, the question of whether a degree 2 lower bound can in itself suggest an extension to higher degrees without deeper reasoning about planted distributions has not been thoroughly explored.

In this paper, we introduce an alternative framework to pseudocalibration for proving higher-degree SOS lower bounds that attempts to address the two points raised above, in the context of the particular problem of optimizing quadratic forms over the hypercube.
While incorporating some intuition gleaned from a planted distribution common to such problems, our technique does not require detailed analysis of its moments, and instead proceeds by building the simplest possible higher-degree object (in a suitable technical sense) that extends a given degree 2 feasible point.
We also build this extension to be positive semidefinite by construction, reasoning from the beginning in terms of a Gram or Cholesky factorization of the matrix involved.
This gives a novel and intuitive interpretation of the positive semidefiniteness discussed above, and appears to ease some of the technicalities typical of pseudocalibration.

\paragraph{Quadratic forms over the hypercube}
We will focus for the remainder of the paper on the specific problem of optimizing a quadratic form over the $\pm 1$ hypercube:
\begin{equation}
    \M(\bW) \colonequals \max_{\bx \in \{\pm 1\}^N} \bx^{\top}\bW\bx \text{ for } \bW \in \RR^{N \times N}_{\sym}.
\end{equation}
Perhaps the most important application of this problem, at least within combinatorial optimization, is computing the maximum cut in a graph, which arises when $\bW$ is a graph Laplacian.
Accordingly, by the classical result of Karp \cite{Karp-1972-Reducibility}, it is $\mathsf{NP}$-hard to compute $\M(\bW)$ in the worst case, making average-case settings especially interesting to gain a more nuanced picture of the computational complexity of this class of problems.

Such problems also admit a simple and convenient benchmark certification algorithm, the \emph{spectral certificate} formed by ignoring the special structure of $\bx$:
\begin{equation}
    \M(\bW) \leq \max_{\|\bx\| = \sqrt{N}}\bx^{\top}\bW\bx \leq N \lambda_{\max}(\bW).
\end{equation}
Though this approach to certification seems quite naive, several SOS lower bounds for specific problems (discussed below) as well as the general heuristic of \cite{BKW-2019-ConstrainedPCA} suggest that it is often optimal.
Thus a central question about certification for $\M(\bW)$ is whether, given a particular distribution of $\bW$, polynomial-time SOS relaxations can certify bounds that are typically tighter than the spectral bound.

More specifically, the following three distributions of $\bW$ have emerged as basic challenges for proving average-case lower bounds for certification.\footnote{Another notable, though more complicated, constraint satisfaction problem that fits into this framework is not-all-equal-3SAT, which corresponds to $\bW$ the graph Laplacian of a different, non-uniform distribution of sparse regular graphs \cite{DMOSS-2019-SDPNAE}.}
\begin{enumerate}
\item \textbf{Sherrington-Kirkpatrick (SK) Hamiltonian:} $\bW$ is drawn from the \emph{gaussian orthogonal ensemble}, $\bW \sim \GOE(N)$, meaning that $W_{ij} = W_{ji} \sim \sN(0, 1 / N)$ for $i < j$ and $W_{ii} \sim \sN(2 / N)$, independently for distinct index pairs.
\item \textbf{Sparse random regular graph Laplacian:} $\bW$ is the graph Laplacian of a random $\gamma$-regular graph $G$ on $N$ vertices for $\gamma$ held constant as $N \to \infty$, normalized so that when $\bx \in \{\pm 1\}^N$, then $\bx^{\top}\bW\bx$ counts the edges of $G$ crossing the cut given by the signs of $\bx$.
\item \textbf{Sparse \Erdos-\Renyi\ random graph Laplacian:} $\bW$ is the graph Laplacian of a random \Erdos-\Renyi\ graph on $N$ vertices with edge probability $\gamma / N$ (and therefore mean vertex degree $\gamma$) for $\gamma$ held constant as $N \to \infty$, normalized as above.
\end{enumerate}
The SK Hamiltonian has a long and remarkable history in the statistical physics of spin glasses, the bold conjectures of \cite{Parisi-1979-SK} on the asymptotic value of $N^{-1} \M(\bW) \approx 1.526$ inspiring a large body of mathematical work to justify them \cite{Guerra-2003-BrokenRSB,Talagrand-2006-Parisi,Panchenko-2013-SK}.
For our purposes, it provides a convenient testbed for the difficulty of certification, since a basic result of random matrix theory shows that $\lambda_{\max}(\bW) \approx 2$, giving a precise gap between the spectral certificate and the best possible certifiable value.
In fact, the recent result of Montanari \cite{Montanari-2019-SKOptimization} also showed that search algorithms succeed (up to small additive error and assuming a technical conjecture) in finding $\what{\bx} \in \{\pm 1\}^N$ with $\what{\bx}^{\top}\bW\what{\bx} \approx 1.526$, suggesting that, if the spectral certificate is optimal, then the same gap obtains between the two algorithmic tasks of search and certification.

The sparse random graph models, which are more natural problems for combinatorial optimization, are in fact also closely related to the SK Hamiltonian.
In the limit $\gamma \to \infty$, \cite{DMS-2017-CutsSparseRandomGraphs} showed that in both graph models the asymptotics of $\M(\bW)$ reduce to those of the SK model, while \cite{MS-2016-SDPSparseRandomGraphs} showed the same for the value of the degree~2 SOS relaxation.\footnote{Generally, degree fluctuations make irregular graphs more difficult to work with in this context; one may, for example, contrast the proof techniques of \cite{MS-2016-SDPSparseRandomGraphs} for random regular and \Erdos-\Renyi\ random graphs, or those of \cite{BKM-2017-LovaszThetaRandomGraphs} and \cite{BT-2019-VectorColoringIrregular} which treat lower bounds for graph coloring.}
Our results will be inspired by the case of the SK Hamiltonian, and we will not work further with the random graph models here, since the gaussian instance distribution of the SK Hamiltonian greatly simplifies the setting.
Based on the results cited above, we do expect that lower bounds for the SK Hamiltonian should be possible to import to either random graph model for large average degree~$\gamma$, though perhaps indirectly and with substantial technical difficulties.

\begin{remark}[Constrained PCA]
    The problem $\M(\bW)$ may be generalized to the natural broader class of problems where $\bx \in \{\pm 1\}^N$ is replaced by $\bX \in \sV \subset \RR^{N \times k}$ where $k$ is a small constant and the columns of matrices in $\sV$ are constrained to lie on a sphere of fixed radius, and the objective function is replaced with $\Tr(\bX^{\top}\bW\bX)$.
    These are sometimes called \emph{constrained PCA} problems, which search for structured low-rank matrices aligned with the top of the spectrum of $\bW$.
    Certification for these problems shares many of the same phenomena as $\M(\bW)$: there is again a natural spectral certificate, and, as argued in \cite{BKW-2019-ConstrainedPCA,BBKMW-2020-SpectralPlantingColoring}, the spectral certificate is likely often optimal.
    As our construction depends in part on the hypercube constraints but perhaps more deeply on the goal of producing SOS pseudosolutions aligned with the top of the spectrum of $\bW$, our methods may be applicable in these other similarly-structured problems as well.
\end{remark}

\paragraph{Sum-of-squares relaxations}
We now give the formal definition of the sum-of-squares relaxations of $\M(\bW)$.
These are formed by writing the constraints in polynomial form as $x_i^2 - 1 = 0$ for $i = 1, \dots, N$, and applying a standard procedure to build the following feasible set and optimization problem (see, e.g., \cite{Laurent-2009-SOS} for details on this and other generalities on SOS constructions).
\begin{definition}[Hypercube pseudoexpectation]
    \label{def:pe}
    Let $\tEE: \RR[x_1, \dots, x_N]_{\leq 2d} \to \RR$ be a linear operator.
    We say $\tEE$ is a \emph{degree $2d$ pseudoexpectation over $\bx \in \{\pm 1\}^N$}, or, more precisely, \emph{with respect to the constraint polynomials $\{x_i^2 - 1\}_{i = 1}^N$}, if the following conditions hold:
    \begin{enumerate}
    \item $\tEE[1] = 1$ (normalization),
    \item  $\tEE[(x_i^2 - 1) p(\bx)] = 0$ for all $i \in [N]$, $p \in \RR[x_1, \dots, x_N]_{\leq 2d - 2}$ (ideal annihilation),
    \item $\tEE[p(\bx)^2] \geq 0$ for all $p \in \RR[x_1, \dots, x_N]_{\leq d}$ (positivity).
    \end{enumerate}
    In this paper, we abbreviate and simply call such $\tEE$ a \emph{degree $2d$ pseudoexpectation}.
\end{definition}
\noindent
Briefly, a pseudoexpectation is an object that imitates an expectation with respect to a probability distribution supported on $\{\pm 1\}^N$, but only up to the consequences that this restriction has for low-degree moments.
As the degree $2d$ increases, the constraints on pseudoexpectations become more and more stringent, eventually (at degree $2d \geq N$) forcing them to be genuine expectations over such a probability distribution \cite{Laurent-2003-CutPolytopeSOS,FSP-2016-SOSLifts}.

\begin{definition}[Hypercube SOS relaxation]
    The \emph{degree $2d$ SOS relaxation} of the optimization problem $\M(\bW)$ is the problem
    \begin{equation}
        \SOS_{2d}(\bW) \colonequals \max_{\substack{\tEE \text{ degree } 2d \\ \text{pseudoexpectation}}} \tEE[\bx^{\top}\bW \bx].
    \end{equation}
\end{definition}
\noindent
Optimization problems of this kind can be written as SDPs \cite{Laurent-2009-SOS}, converting Condition 3 from Definition~\ref{def:pe} into an associated $N^d \times N^d$ matrix being positive semidefinite (psd).
The results of~\cite{RW-2017-BitComplexity} imply that the SDP of $\SOS_{2d}$ may be solved up to fixed additive error in time $N^{O(d)}$.

How do we build $\tEE$ to show that SOS does not achieve better-than-spectral certification for $\M(\bW)$, i.e., to show that $\SOS_{2d}(\bW) \approx N \lambda_{\max}(\bW)$?
We want to have $\tEE[\bx^{\top}\bW \bx] = \langle \tEE[\bx\bx^{\top}], \bW \rangle \approx N\lambda_{\max}(\bW)$.
Since $\Tr(\tEE[\bx\bx^{\top}]) = N$ and $\tEE[\bx\bx^{\top}] \succeq \bm 0$, we see that $\tEE[\bx\bx^{\top}]$ must be closely aligned with the leading eigenvectors of $\bW$ (those having the largest eigenvalues).
Indeed, the degree 2 SOS lower bounds in the SK Hamiltonian and random regular graph Laplacian instance distributions (both treated in \cite{MS-2016-SDPSparseRandomGraphs}) build $\tEE[\bx\bx^{\top}]$ essentially as a rescaling of a projection matrix to the leading eigenvectors of $\bW$.
In the graph case this is indirectly encoded in the ``gaussian wave'' construction of near-eigenvectors of the infinite tree \cite{CGHV-2015-InvariantGaussian}.
In the case of the SK Hamiltonian, this idea is applied directly and the projection matrix involved is especially natural: since the distribution of the frame of eigenvectors of $\bW \sim \GOE(N)$ is invariant under orthogonal transformations, the span of any collection of leading eigenvectors is a uniformly random low-dimensional subspace.

We thus reach the following distilled form of the task of proving that SOS relaxations of $\M(\bW)$ achieve performance no better than the spectral certificate.
\begin{question}
    \label{ques:main}
    Can the rescaled projection matrix to a uniformly random or otherwise ``nice'' low-dimensional subspace typically arise as $\tEE[\bx\bx^{\top}]$ for a degree $\omega(1)$ pseudoexpectation $\tEE$?
\end{question}
\noindent
As mentioned above, \cite{MS-2016-SDPSparseRandomGraphs} showed that this is the case for degree~2, for both the uniformly random projection matrices arising in the SK model and the sparser approximate projection matrices in the random regular graph model.
For higher-degree SOS relaxations, the only previous known results are those of the concurrent works \cite{KB-2019-Degree4SK-Arxiv,MRX-2019-SOS4} for degree~4; both treat the SK model, while the latter also handles the random regular graph model and, more generally, gives a generic extension from degree~2 to degree~4, an insightful formulation that we follow here.
The approach of \cite{MRX-2019-SOS4} is based on pseudocalibration, while the approach of \cite{KB-2019-Degree4SK-Arxiv}, in which the author participated, uses a modified version of the degree~4 special case of the techniques we will develop here.

Finally, while this paper was being prepared, the parallel work \cite{GJJPR-2020-SK} was released, which performs a deeper technical analysis of pseudocalibration for the SK Hamiltonian and proves degree~$\omega(1)$ lower bounds.
This subsumes some of our results, but we emphasize that we are also able to give distribution-independent results for extending \emph{any} reasonably-behaved degree 2 pseudomoments, making progress towards the conjecture, discussed in their Section 8, ``that the Planted Boolean Vector problem...is still hard for SoS if the input is no  longer i.i.d.\  Gaussian or boolean entries, but is drawn from a `random enough' distribution.''

\paragraph{Local-global tension in SOS lower bounds}
We briefly remark on our technical contributions with the following perspective on what is difficult about SOS lower bounds.
In building a pseudoexpectation $\tEE$ to satisfy Definition~\ref{def:pe}, or equivalently its \emph{pseudomoment matrix} $\tEE[(\bx^{\otimes d})(\bx^{\otimes d})^{\top}]$, there is a basic tension between Properties~1 and 2 from the definition on the one hand and Property~3 on the other.
In the pseudomoment matrix, Properties~1 and 2 dictate that various entries of the matrix should equal one another, giving \emph{local} constraints that concern a few entries at a time.
Property 3, on the other hand, dictates that the matrix should be psd, a \emph{global} constraint concerning how all of the entries behave in concert.\footnote{We are using the terms ``local'' and ``global'' in the sense of \cite{RV-2018-RandomMatrixLocalGlobal}. The more typical distinction is between linear and semidefinite constraints in an SDP, which matches our distinction between local and global constraints, but we wish to emphasize the locality of the linear constraints in that they each concern only a small number of entries.}
It is hard to extend an SOS lower bound to higher degrees because it is hard to satisfy both types of constraint, which are at odds with each other since making many local changes---setting various collections of entries equal to one another---has unpredictable effects on the global spectrum, while making large global changes---adjusting the spectrum to eliminate negative eigenvalues---has unpredictable effects on the local entries.

To the best of our knowledge, SOS lower bound techniques in the literature, most notably pseudocalibration, all proceed by determining sensible entrywise values for each $\tEE[x_{i_1}\cdots x_{i_{2d}}]$, and then verifying positivity by other, often purely technical means.
As a result, there is little intuitive justification for \emph{why} these constructions should satisfy positivity.
We take a step towards rectifying this imbalance: the heuristic underlying our construction gives a plausible reason for both positivity and many of the local constraints to hold.
Some mysterious coincidences do remain in our argument, concerning the family of local constraints that we do not enforce by construction.
Still, we hope that our development of some of the combinatorics that unite the local and global constraints in this case will lead to a clearer understanding of how other SOS lower bound constructions have managed to negotiate these difficulties.

\subsection{Main Results}

\paragraph{Positivity-preserving extension}
Our main result is a generic procedure for building a feasible high-degree pseudoexpectation from a given collection of degree 2 pseudomoments.
We will focus here on describing the result of this construction, which does not in itself show why it is ``positivity-preserving'' as we have claimed---that is explained in Section~\ref{sec:informal-deriv}, where we present the underlying derivation.
We denote the given matrix of degree 2 pseudomoments by $\bM$ throughout.
Our task is then to build $\tEE$ a degree $2d$ pseudoexpectation with $\tEE[\bx\bx^{\top}] = \bM$.
This pseudoexpectation is formed as a linear combination of a particular type of polynomial in the degree 2 pseudomoments, which we describe below.
\begin{definition}[Contractive graphical scalar]
    \label{def:cgs}
    Suppose $G = (V, E)$ is a graph with two types of vertices, which we denote $\bullet$ and $\square$ visually and whose subsets we denote $V = V^{\bullet} \sqcup V^{\square}$.
    Suppose also that $V^{\bullet}$ is equipped with a labelling $\kappa: V^{\bullet} \to [ | V^{\bullet} | ]$.
    For $\bs \in [N]^{|V^{\bullet}|}$ and $\ba \in [N]^{V^{\square}}$, let $f_{\bs, \ba}: V \to [N]$ have $f_{\bs, \ba}(v) = s_{\kappa(v)}$ for $v \in V^{\bullet}$ and $f_{\bs, \ba}(v) = a_v$ for $v \in V^{\square}$.
    Then, for $\bM \in \RR^{N \times N}_{\sym}$, we define
    \begin{equation}
        Z^G(\bM; \bs) \colonequals \sum_{\ba \in [N]^{V^{\square}}} \prod_{\{v, w\} \in E} M_{f_{\bs, \ba}(v) f_{\bs, \ba}(w)}.
    \end{equation}
    We call this quantity a \emph{contractive graphical scalar (CGS)} whose \emph{diagram} is the graph $G$.
    When $S$ is a set or multiset of elements of $[N]$ with $|S| = |V^{\bullet}|$, we define $Z^G(\bM; S) \colonequals Z^G(\bM; \bs)$ where $\bs$ is the tuple of the elements of $S$ in ascending order.
\end{definition}
\noindent
As an intuitive summary, the vertices of the underlying diagram $G$ correspond to indices in $[N]$, and edges specify multiplicative factors given by entries of $\bM$.
The $\bullet$ vertices are ``pinned'' to the indices specified by $\bs$, while the $\square$ vertices are ``contracted'' over all possible index assignments.
CGSs are also a special case of existing formalisms, especially popular in the physics literature, of \emph{trace diagrams} and \emph{tensor networks} \cite{BB-2017-TensorNetworksNutshell}.

\begin{remark}
    Later, in Section~\ref{sec:cgm}, we will also study \emph{contractive graphical matrices (CGMs)}, set- or tuple-indexed matrices whose entries are CGSs with the set $S$ varying according to the indices.
    CGMs are similar to \emph{graphical matrices} as used in other work on SOS relaxations  \cite{AMP-2016-GraphMatrices,BHKKMP-2019-PlantedClique,MRX-2019-SOS4}.
    Aside from major but ultimately superficial notational differences, the main substantive difference is that graphical matrices require all indices labelling the vertices in the summation to be different from one another, while CGMs and CGSs do not.
    This restriction is natural in the combinatorial setting---if $\bM$ is an adjacency matrix then the entries of graphical matrices count occurrences of subgraphs---but perhaps artificial more generally.
    While the above works give results on the spectra of graphical matrices, and tensors formed with tensor networks have been studied at length elsewhere, the spectra of CGM-like matrix ``flattenings'' of tensor networks remain poorly understood.\footnote{One notable exception is the calculations with the trace method in the recent work \cite{MW-2019-SpectralTensorNetworks}.}
    We develop some further tools for working with such objects in Appendix~\ref{app:cgm-tools}.
\end{remark}

Next, we specify the fairly simple class of diagrams whose CGSs will actually appear in our construction.
\begin{definition}[Good forest]
    \label{def:good-forest}
    We call a forest \emph{good} if it has the following properties:
    \begin{enumerate}
    \item no vertex is isolated, and
    \item the degree of every internal (non-leaf) vertex is even and at least 4.
    \end{enumerate}
    We count the empty forest as a good forest.
    Denote by $\sF(m)$ the set of good forests on $m$ leaves, equipped with a labelling $\kappa$ of the leaves by the set $[m]$.
    We consider two labelled forests equivalent if they are isomorphic as partially labelled graphs; thus, the same underlying forest may appear in $\sF(m)$ with some but not all of the $m!$ ways that it could be labelled.
    For $F \in \sF(m)$, we interpret $F$ as a diagram by calling $V^{\bullet}$ the leaves of $F$ and calling $V^{\square}$ the internal vertices of $F$.
    Finally, we denote by $\sT(m)$ the subset of $F \in \sF(m)$ that are connected (and therefore trees).
\end{definition}
\noindent
We note that, for $m$ odd, the constraints imply that $\sF(m)$ is empty.
We give some examples of these forests and the associated CGSs in Figure~\ref{fig:cgs-examples}.

Finally, we define the coefficients that are attached to each forest diagram's CGS in our construction.
\begin{definition}[\Mobius\ function of good forests]
    \label{def:mu-F}
    For $F = (V^{\bullet} \sqcup V^{\square}, E) \in \sF(m)$, define
    \begin{equation}
        \mu(F) \colonequals \prod_{v \in V^{\square}}\big(-(\deg(v) - 2)!\big) = (-1)^{|V^{\square}|}\prod_{v \in V^{\square}}(\deg(v) - 2)!.
    \end{equation}
    For $F$ the empty forest, we set $\mu(F) = 1$ by convention.
\end{definition}
\noindent
These constants have an important interpretation in terms of the combinatorics of $\sF(m)$: as we will show in Section~\ref{sec:poset}, when $\sF(m)$ is endowed with a natural partial ordering, $\mu(F)$ is (up to sign) the \emph{\Mobius\ function} of the ``interval'' of forests lying below $F$ in this ordering.
In general, \Mobius\ functions encode the combinatorics of inclusion-exclusion calculations under a partial ordering \cite{Rota-1964-Foundations}.
In our situation, $\mu(F)$ ensures that, even if we allow repeated indices in the monomial index $S$ in the definition below, a suitable cancellation occurs such that the pseudoexpectation of $\bx^{S}$ still approximately satisfies the ideal annihilation constraint in Definition~\ref{def:pe}.

\begin{figure}
    \begin{center}
        \setlength{\tabcolsep}{15pt}
        \begin{tabular}{c:c:c}
           & & \\[-0.5em]
          \hspace{-0.5cm}\includegraphics[scale=0.65]{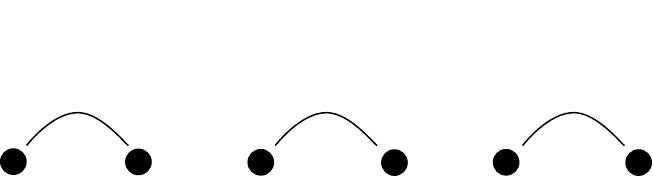} & \includegraphics[scale=0.65]{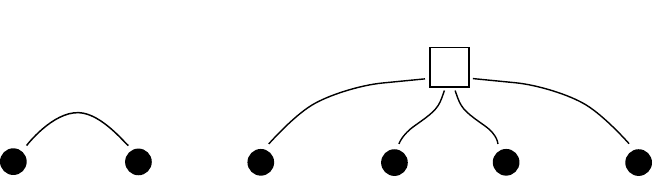} & \includegraphics[scale=0.65]{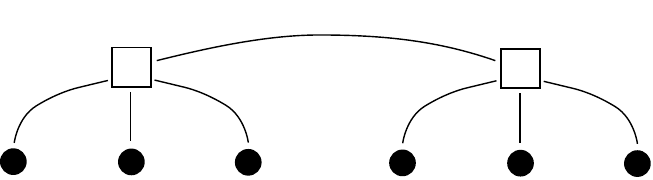} \\
          \hspace{-0.48cm}\footnotesize 1 \hspace{0.5cm} 2 \hspace{0.38cm} 3 \hspace{0.53cm} 4 \hspace{0.36cm} 5 \hspace{0.54cm} 6 & \footnotesize 1 \hspace{0.5cm} 2 \hspace{0.38cm} 3 \hspace{0.53cm} 4 \hspace{0.36cm} 5 \hspace{0.54cm} 6 & \footnotesize 1 \hspace{0.39cm} 2 \hspace{0.45cm} 3 \hspace{0.6cm} 4 \hspace{0.46cm} 5 \hspace{0.4cm} 6 \\[-0.5em] & & \\
          \hspace{-0.5cm}\small $M_{ij}M_{k\ell}M_{mn}$ & \small $-2M_{ij}\displaystyle\sum_{a = 1}^N M_{ak}M_{a\ell}M_{am}M_{an}$ & \small $4\displaystyle\sum_{a, b = 1}^NM_{ai}M_{aj}M_{ak}M_{ab}M_{b\ell}M_{bm}M_{bn}$ \\[-1em] & & 
        \end{tabular}
    \end{center}
    \caption{\textbf{Forests, polynomials, and coefficients.} We show three examples of forests $F \in \sF(6)$ with labelled leaves, together with the corresponding CGS terms $\mu(F) \cdot Z^F(\bM; (i, j, k, \ell, m, n))$ appearing in the pseudoexpectation of Definition~\ref{def:lifting}.}
    \label{fig:cgs-examples}
\end{figure}

With these ingredients defined, we are prepared to define our pseudoexpectation.
\begin{definition}[Extending pseudoexpectation]
    \label{def:lifting}
    For $\bM \in \RR^{N \times N}_{\sym}$, define $\tEE_{\bM}: \RR[x_1, \dots, x_N] \to \RR$ to be a linear operator with $\tEE_{\bM}[x_i^2p(\bx)] = \tEE_{\bM}[p(\bx)]$ for all $i \in [N]$ and $p \in \RR[x_1, \dots, x_N]$, and values on multilinear monomials given by
    \begin{equation}
        \label{eq:lifting-prediction}
        \tEE_{\bM}\left[\prod_{i \in S} x_i\right] \colonequals \sum_{F \in \sF(|S|)} \mu(F) \cdot Z^F(\bM; S) \text{ for all } S \subseteq [N].
    \end{equation}
\end{definition}
\noindent
Our main result is that, for ``nice'' $\bM$, the restriction of $\tEE_{\bM}$ to low-degree polynomials is a valid pseudoexpectation.

First, we introduce several quantities measuring favorable behavior of $\bM$ that, taken together, will describe whether $\bM$ is sufficiently well-behaved.
As a high-level summary, these quantities capture various aspects of the ``incoherence'' of $\bM$ with respect to the standard basis vectors $\be_1, \dots, \be_N$.
To formulate the subtlest of the incoherence quantities precisely, we will require the following preliminary technical definition, whose relevance will only become clear in the course of our proof in Section~\ref{sec:main-error-terms}.
There, it will describe a residual error term arising from allowing repeated indices in $S$ in Definition~\ref{def:lifting}, after certain cancellations are taken into account.

\begin{definition}[Maximal repetition-spanning forest]
    \label{def:repetitions-forest}
    For each $F \in \sF(m)$ and $\bs \in [N]^m$, let $\mathsf{MaxSpan}(F, \bs)$ be the subgraph of $F$ formed by the following procedure.
    Let $C_1, \dots, C_k$ be the connected components of $F$.
    \begin{leftbar}
        \begin{algorithmic}
            \State{Initialize with $\mathsf{MaxSpan}(F, \bs) = \emptyset$.}
    \For {$i = 1, \dots, N$}
    \For {$j = 1, \dots, k$}
    \If {\,$C_j$ has two leaves $\ell_1 \neq \ell_2$ with $s_{\kappa(\ell_1)} = s_{\kappa(\ell_2)} = i$}
    \State Let $T$ be the minimal spanning tree of all leaves $\ell$ of $C_j$ with $s_{\kappa(\ell)} = i$.
    \If {\,$T$ is vertex-disjoint from $\mathsf{MaxSpan}(F, \bs)$}
    \State Add $T$ to $\mathsf{MaxSpan}(F, \bs)$.
    \EndIf
    \EndIf
    \EndFor
\EndFor
\end{algorithmic}
\end{leftbar}
\noindent
We say that $\ba \in [N]^{V^{\square}(F)}$ is \emph{$(F, \bs)$-tight} if, for all connected components $C$ of $\MaxSpan(F, \bs)$ with $s_{\kappa(\ell)} = i$ for all leaves $\ell$ of $C$, for all $v \in V^{\square}(C)$, $a_v = i$.
    Otherwise, we say that $\ba$ is \emph{$(F, \bs)$-loose}.
\end{definition}
\noindent
With this, we define the following functions of $\bM$.
Below, $\bM^{\circ k}$ denotes the $k$th entrywise power of $\bM$, and $\set(\bs)$ for a tuple $\bs$ denotes the set of indices occurring in $\bs$.
\begin{definition}[Incoherence quantities]
    \label{def:error-quantities}
    For $\bM \in \RR^{N \times N}_{\sym}$, define the following quantities:
    \begin{align}
      \epsilon_{\offdiag}(\bM) &\colonequals \max_{1 \leq i < j \leq N} |M_{ij}|, \\
      \epsilon_{\corr}(\bM) &\colonequals \left(\max_{1 \leq i < j \leq N} \sum_{k = 1}^N M_{ik}^2 M_{jk}^2 \right)^{1/2}, \\
      \epsilon_{\pow}(\bM) &\colonequals \max_{k \geq 2} \|\bM^{\circ k} - \bm I_N\|, \\
      \epsilon_{\tree}(\bM; 2d) &\colonequals \max_{0 \leq d^{\prime} \leq d} \max_{T \in \sT(2d^{\prime})} \max_{\bs \in [N]^{2d^{\prime}}} \left|Z^T(\bM; \bs) - \One\{s_1 = \cdots = s_N\}\right|, \\
      \epsilon_{\err}(\bM; 2d) &\colonequals \max_{0 \leq d^{\prime} \leq d} \max_{T \in \sT(2d^{\prime})} \max_{\bs \in [N]^{2d^{\prime}}} N^{|\set(\bs)| / 2} \Bigg|\sum_{\substack{\ba \in [N]^{V^{\square}} \\ \ba \text{ } (T, \bs)\text{-loose}}} \prod_{(v, w) \in E(T)} M_{f_{\bs, \ba}(v)f_{\bs, \ba}(w)}\Bigg|, \\
      \epsilon(\bM; 2d) &\colonequals \epsilon_{\offdiag}(\bM) + \epsilon_{\corr}(\bM) + \epsilon_{\pow}(\bM) + \epsilon_{\tree}(\bM; d) + \epsilon_{\err}(\bM; 2d).
    \end{align}
\end{definition}

Our main result then states that $\bM$ may be extended to a high-degree pseudoexpectation so long as its smallest eigenvalue is not too small compared to the sum of the incoherence quantities.
\begin{theorem}
    \label{thm:lifting}
    Let $\bM \in \RR^{N \times N}_{\sym}$ with $M_{ii} = 1$ for all $i \in [N]$.
    Suppose that
    \begin{equation}
      \lambda_{\min}(\bM) \geq (12d)^{32}\|\bM\|^{5}\epsilon(\bM; 2d)^{1/d}.
    \end{equation}
    Then, $\tEE_{\bM}$ is a degree $2d$ pseudoexpectation with $\tEE_{\bM}[\bx\bx^{\top}] = \bM$.
\end{theorem}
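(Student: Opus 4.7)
Three of the four required properties---normalization, ideal annihilation, and $\tEE_{\bM}[\bx\bx^{\top}] = \bM$---hold essentially by construction, leaving positivity as the sole substantive content. Normalization follows since $\sF(0) = \{\emptyset\}$ with $\mu(\emptyset) = Z^{\emptyset}(\bM;\emptyset) = 1$. Ideal annihilation is built in: the defining relation $\tEE_{\bM}[x_i^2 p(\bx)] = \tEE_{\bM}[p(\bx)]$ is equivalent to annihilation of $x_i^2 - 1$ against every $p$. Finally, $\tEE_{\bM}[\bx\bx^{\top}] = \bM$ follows by inspection: $\sF(2)$ consists only of the single edge forest, giving $\tEE_{\bM}[x_ix_j] = M_{ij}$ for $i \neq j$, while $\tEE_{\bM}[x_i^2] = \tEE_{\bM}[1] = 1$ matches $M_{ii} = 1$ by hypothesis.

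For positivity, my plan is to exhibit the pseudomoment matrix $M^{(d)}$, indexed by subsets $S, T \subseteq [N]$ with $|S|, |T| \leq d$ and with entries obtained from $\tEE_{\bM}[\bx^S \bx^T]$ via the multilinearizing reduction, as an approximate Gram matrix $M^{(d)} = \Phi \Phi^{\top} + E$. The rows of $\Phi$ would be built from the \emph{multiharmonic polynomials} associated to $\bM$ promised by the derivation of Section~\ref{sec:informal-deriv}: factoring $\bM = U U^{\top}$ and equipping $\RR^N$ with a gaussian measure making the rows of $U$ a unit frame, one takes $\Phi_S$ to be the orthogonal projection of $\prod_{i \in S} u_i^{\top} g$ onto the degree-$|S|$ polynomials annihilated by the $\bM$-adapted Laplacian. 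Applying an Isserlis/Wick expansion to $\langle \Phi_S, \Phi_T \rangle$ produces a sum over pair-partitions of the combined leaf set, and the inclusion-exclusion realizing this harmonic projection reorganizes the sum into exactly the forest sum of Definition~\ref{def:lifting}---provided $S \cup T$ contains no repeated indices. This coincidence between harmonic projection and the forest Möbius function (whose partial-order interpretation is developed in Section~\ref{sec:poset}) is the combinatorial identity highlighted in the abstract, and it makes $\Phi \Phi^{\top}$ manifestly positive semidefinite, with minimum eigenvalue at least of order $\lambda_{\min}(\bM)^d / \mathrm{poly}(d)$ when $\bM$ is well-conditioned.

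The main obstacle will be bounding $\|E\|$. Two kinds of discrepancy arise between the Gram formula and the definition \eqref{eq:lifting-prediction}. First, when the leaf indices in $S$ or $T$ repeat, the Gram contraction and the forest sum produce different contributions; after the Möbius cancellations all but the \emph{loose} terms of Definition~\ref{def:repetitions-forest} collapse, and the survivors are exactly what $\epsilon_{\err}$ controls via the $\MaxSpan$ construction. Second, even for distinct indices the multiharmonic factorization only collapses cleanly to the identity when the entrywise powers $\bM^{\circ k}$, off-diagonal entries, partial column correlations $\sum_k M_{ik}^2 M_{jk}^2$, and tree-shaped contractions all behave like their idealized values, with deviations bounded respectively by $\epsilon_{\pow}$, $\epsilon_{\offdiag}$, $\epsilon_{\corr}$, and $\epsilon_{\tree}$. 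Converting these per-entry estimates into an operator-norm bound of the form $\|E\| \leq (12d)^{O(d)} \|\bM\|^{O(d)} \epsilon(\bM; 2d)$ will be the most delicate step, and I would carry it out using the contractive graphical matrix spectral-norm tools of Appendix~\ref{app:cgm-tools}: each CGM summand comprising $E$ would be bounded via a trace-method-style estimate, and the forest structure keeps the number of summands controlled in $d$. Combined with the lower bound on $\lambda_{\min}(\Phi \Phi^{\top})$, the hypothesis $\lambda_{\min}(\bM) \geq (12d)^{32} \|\bM\|^5 \epsilon(\bM; 2d)^{1/d}$ is calibrated so that, after raising to the $d$-th power, $\Phi \Phi^{\top}$ dominates $E$ and $M^{(d)} \succeq 0$.
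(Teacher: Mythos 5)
Your overall architecture---reduce to positivity, approximate the pseudomoment matrix by a Gram matrix built from multiharmonic polynomials, and control the discrepancy by the incoherence quantities---is the right one, and your reading of what $\epsilon_{\err}$, $\epsilon_{\tree}$, $\epsilon_{\pow}$, $\epsilon_{\offdiag}$, $\epsilon_{\corr}$ are each responsible for is essentially accurate. But there is a genuine gap in how you set up the Gram decomposition, and it is not a technicality: as written, the claimed identity $\langle \Phi_S, \Phi_T \rangle \approx \tEE_{\bM}[\bx^S\bx^T]$ is false.

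Concretely: the Gram matrix of multiharmonic polynomials, $Y_{S,T} = \langle h_S(\bV^{\top}\bz), h_T(\bV^{\top}\bz)\rangle_{\partial}$, is \emph{block-diagonal in degree} $|S|$, because it is an inner product of homogeneous polynomials and vanishes whenever $|S|\neq|T|$. The pseudomoment matrix in the monomial basis is very far from block diagonal: for instance at degree $4$, $\tEE_{\bM}[x_ix_jx_kx_\ell]$ contains the perfect-matching term $M_{ij}M_{k\ell}$, which in the Gram inner product $\langle h_{\{i,j\}}, h_{\{k,\ell\}}\rangle_{\partial}$ is simply absent (compare Example~\ref{ex:k-equals-2}, where you only get $M_{ik}M_{j\ell} + M_{i\ell}M_{jk} - 2\sum_a M_{ai}M_{aj}M_{ak}M_{a\ell}$---the $M_{ij}M_{k\ell}$ term belongs to the ``ideal'' part of Lemma~\ref{lem:poly-cond}, not the ``harmonic'' part). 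So $M^{(d)}-\Phi\Phi^{\top}$ is not small; it is $\Theta(1)$ in entries and, because of the sided pair diagrams this leaves behind, $\Omega(\operatorname{poly}(N))$ in operator norm---the multiscale phenomenon of Remark~\ref{rem:multiscale-spectrum}.

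The missing move is to use the basis-independence of positivity (Proposition~\ref{prop:positivity-any-basis}): instead of writing $\tEE_{\bM}[\bx^S\bx^T]$ as an approximate Gram matrix, change basis to the lowered multiharmonics $h_S^{\downarrow}(\bx)$ and show that $Z_{S,T} = \tEE_{\bM}[h_S^{\downarrow}(\bx)h_T^{\downarrow}(\bx)]$ is approximately $\bY$. This change of basis is exactly what kills the sided pair components (Proposition~\ref{prop:Z-main-harmonic-basis} shows only stretched forest ribbon diagrams survive), collapses the spectrum from multiscale to $O(1)$, and produces an approximate block-diagonalization consistent with the Gram matrix you want to compare against. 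Once you replace ``$M^{(d)} = \Phi\Phi^\top + E$'' by ``$\bZ = \bY + E$ in the multiharmonic basis,'' and then split $E$ into the main-term tying errors (Lemmata~\ref{lem:tying-stretched-ribbon}, \ref{lem:tying-partition-transport-ribbon}, requiring the nontrivial combinatorial coincidence of Lemmata~\ref{lem:stretched-ribbon-combinatorial} and \ref{lem:partition-transport-ribbon-combinatorial}) plus the repeated-index error $\bZ^{\err}$ (Lemma~\ref{lem:Z-err-bound}), your plan goes through and agrees with the paper's proof.
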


In practice, Theorem~\ref{thm:lifting} will not be directly applicable to the $\bM$ we wish to extend, which, as mentioned earlier, will be rank-deficient and therefore have $\lambda_{\min}(\bM) = 0$ (or very small).
This obstacle is easily overcome by instead extending $\bM^{\prime} = (1 - \alpha)\bM + \alpha \bm I_N$ for $\alpha \in (0, 1)$ a small constant, whereby $\lambda_{\min}(\bM^{\prime}) \geq \alpha$.
Unfortunately, it seems difficult to make a general statement about how the more intricate quantities $\epsilon_{\tree}$ and $\epsilon_{\err}$ transform when $\bM$ is replaced with $\bM^{\prime}$; however, we will show in our applications that directly analyzing these quantities for $\bM^{\prime}$ is essentially no more difficult than analyzing them for $\bM$.
Indeed, we expect these to only become smaller under this replacement since $\bM^{\prime}$ equals $\bM$ with the off-diagonal entries multiplied by $(1 - \alpha)$.

\begin{remark}[Different ways of nudging]
    \label{rem:nudging}
A similar ``nudging'' operation to the one we propose above, moving $\bM$ towards the identity matrix, has been used before in \cite{KB-2019-Degree4SK-Arxiv,MRX-2019-SOS4} for degree~4 SOS and in the earlier work \cite{AU-2003-LPMaxCut} for LP relaxations.\footnote{I thank Aida Khajavirad for bringing the reference \cite{AU-2003-LPMaxCut} to my attention.}
However, the way that this adjustment propagates through our construction is quite different: while \cite{KB-2019-Degree4SK-Arxiv,MRX-2019-SOS4} consider, in essence, a convex combination of the form $(1 - \alpha) \tEE_{\bM} + \alpha \tEE_{\bm I_N}$, we instead consider $\tEE_{(1 - \alpha)\bM + \alpha \bm I_N}$.
The mapping $\bM \mapsto \tEE_{\bM}$ is highly non-linear, so this is a major difference, which indeed turns out to be crucial for the adjustment to effectively counterbalance the error terms in our analysis.
\end{remark}

We expect the following general quantitative behavior from this result.
Typically, we will have $\epsilon(\bM; 2d) = O(N^{-\gamma})$ for some $\gamma > 0$.
We will also have $\|\bM\| = O(1)$ and $\lambda_{\min}(\bM) = \widetilde{\Omega}(1)$ after the adjustment discussed above.
Therefore, Theorem~\ref{thm:lifting} will ensure that $\bM$ is extensible to degree $2d$ so long as $N^{-\gamma / d} \poly(d) = O(1)$, whereby the threshold scaling at which the condition of Theorem~\ref{thm:lifting} is no longer satisfied is slightly smaller than $d \sim \log N$; for instance, such $\bM$ will be extensible to degree $d \sim \log N / \log\log N$.
See the brief discussion after Proposition~\ref{prop:size-F} for an explanation of why this scaling of the degree is likely the best our proof techniques can achieve.

\paragraph{Application 1: Laurent's parity lower bound}
As a first application of Theorem~\ref{thm:lifting}, we show that we can recover a ``soft version'' of the following result of \cite{Laurent-2003-CutPolytopeSOS}, which says that a parity inequality that holds for $\bx \in \{\pm 1\}^N$ fails for pseudoexpectations with degree less than $N$.

\begin{theorem}[Theorem 6 of \cite{Laurent-2003-CutPolytopeSOS}]
    Define $\tEE: \RR[x_1, \dots, x_N]_{\leq N - 1 - \One\{N \text{ even}\}} \to \RR$ to be a linear operator with $\tEE[x_i^2p(\bx)] = \tEE[p(\bx)]$ for all $i \in [N]$ and $p \in \RR[x_1, \dots, x_N]$ and values on multilinear monomials given by
    \begin{equation}
        \label{eq:laurent-pseudomoments}
        \tEE\left[\prod_{i \in S} x_i\right] \colonequals \One\{|S| \text{ even}\} \cdot \frac{(-1)^{|S| / 2} (|S| - 1)!!}{\prod_{1 \leq k \leq |S| / 2}(N - 2k + 1)} \text{ for all } S \subseteq [N].
    \end{equation}
    Then, $\tEE$ is a degree $(N - 1 - \One\{N \text{ even}\})$ pseudoexpectation which satisfies
    \begin{align}
      \tEE[\bx\bx^{\top}] &= \left(1 + \frac{1}{N - 1}\right)\bm I_N - \frac{1}{N - 1}\one_N\one_N^{\top}, \\
      \tEE[(\one^{\top}\bx)^2] &= 0.
    \end{align}
\end{theorem}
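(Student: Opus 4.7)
The plan is to verify the three conditions of Definition~\ref{def:pe} together with the two stated identities in turn, treating everything except positivity as direct computation. Normalization $\tEE[1]=1$ is immediate from~\eqref{eq:laurent-pseudomoments} at $S=\emptyset$. The diagonal entries of $\tEE[\bx\bx^\top]$ equal $1$ by the reduction $\tEE[x_i^2 p]=\tEE[p]$, while the off-diagonal entries equal $-1/(N-1)$ by~\eqref{eq:laurent-pseudomoments} with $|S|=2$; summing all $N^2$ entries then yields $\tEE[(\one^\top\bx)^2] = N\cdot 1 + N(N-1)\cdot(-\tfrac{1}{N-1}) = 0$. Ideal annihilation is automatic from the linear extension rule defining $\tEE$.

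The only substantive step is positivity. My plan is to exhibit $\tEE$ as the expectation of a genuine probability distribution on a hypercube slice. When $N$ is even, I would take the uniform distribution on $\SS = \{\bx \in \{\pm 1\}^N : \one^\top\bx = 0\}$: writing $x_i = 2\One\{i \in T\} - 1$ for $T$ a uniformly random $N/2$-subset of $[N]$, the moment $\EE[\prod_{i \in S} x_i]$ reduces to an alternating sum over $|S \cap T|$ that can be evaluated by a standard hypergeometric identity and shown by induction on $|S|$ to match~\eqref{eq:laurent-pseudomoments}. Positivity is then automatic because $\tEE[p(\bx)^2]$ is the expectation of a nonnegative quantity.

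For odd $N$ the slice $\SS$ is empty and this realization fails, and this is where I expect the main obstacle to lie. The fix I would try first is representation-theoretic: the pseudomoment matrix (indexed by subsets $S, T \subseteq [N]$ of size $\leq d$) is invariant under the action of $S_N$ permuting coordinates, so it decomposes into small blocks indexed by the irreducible representations of $S_N$ of the form $V_{(N-j,j)}$ for $j = 0, \ldots, d$. Each block has entries expressible in closed form as signed combinations of the values in~\eqref{eq:laurent-pseudomoments}, and positivity of the full matrix reduces to positive-semidefiniteness of these small blocks. This becomes a finite family of combinatorial identities relating values of~\eqref{eq:laurent-pseudomoments} at different $|S|$, which I would attempt either by direct manipulation or by identifying the relevant quantities with evaluations of Hahn or dual Hahn polynomials at integer arguments, for which nonnegativity is classical.
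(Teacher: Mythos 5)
The paper does not prove this theorem; it cites it directly from Laurent (2003) as background for the ``soft version'' in Theorem~\ref{thm:appl-laurent-approx}, so there is no in-paper proof to match against.

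Evaluating the proposal on its own merits: the elementary checks are fine. Normalization, the degree-2 identities and $\tEE[(\one^\top\bx)^2] = 0$ are exactly as you computed, and ideal annihilation is indeed automatic once one observes that the linear extension rule amounts to ``reduce to the unique multilinear representative, then apply \eqref{eq:laurent-pseudomoments}.'' Your $N$-even positivity argument is correct and in fact overshoots the statement: the uniform distribution on $\{\bx \in \{\pm 1\}^N : \one^\top\bx = 0\}$ has moments $\EE[\prod_{i \in S}x_i]$ given exactly by \eqref{eq:laurent-pseudomoments} (a hypergeometric moment computation one can check, e.g., for $|S|=2,4,6$), so $\tEE$ is a genuine expectation and hence PSD at \emph{every} degree, not merely $N-2$. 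The degree restriction in the theorem only has teeth when $N$ is odd.

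The genuine gap is exactly the odd-$N$ case, which is the entire content of Laurent's theorem. You have correctly identified the right reduction --- the pseudomoment matrix is $S_N$-invariant, hence lives in the Bose--Mesner algebra of the Johnson scheme and block-diagonalizes over the irreducibles $V_{(N-j,j)}$, with block entries given by (dual) Hahn polynomial evaluations --- but you stop precisely where the work begins: computing those block spectra for the specific sequence \eqref{eq:laurent-pseudomoments} and verifying nonnegativity up to degree $N-1$. That verification \emph{is} the theorem; without it the proposal is a plan, not a proof. Two remarks that might help if you pursue it: (i) Laurent's own derivation obtains the values \eqref{eq:laurent-pseudomoments} inductively from $\tEE[1]=1$ and the extra linear constraints $\tEE[(\one^\top\bx)\bx^S]=0$, which bypasses the hypergeometric identity you invoke for even $N$ and determines the sequence uniformly in the parity of $N$; and (ii) you should make explicit where the argument \emph{fails} at degree $N$ for odd $N$ (one of the block eigenvalues crosses zero at that degree), since this is what keeps the theorem from being vacuous --- a positivity proof that doesn't see the degree cutoff is suspect.
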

\noindent
For $N$ odd and $\bx \in \{\pm 1\}^N$ we always have $(\one^{\top}\bx)^2 = (\sum_{i = 1}^N x_i)^2 \geq 1$, while the result shows that pseudoexpectations must have degree at least $N$ before they are forced by the constraints to obey this inequality.
(\cite{FSP-2016-SOSLifts} later showed that this result is tight as well.)

The version of this that follows from Theorem~\ref{thm:lifting} is as follows.
\begin{theorem}[Soft version of Laurent's theorem]
    \label{thm:appl-laurent-approx}
    Let $\alpha = \alpha(N) = (\log\log N)^{-50}$.
    Then, for all $N$ sufficiently large, there exists $\tEE$ a degree $\frac{1}{100}\log N / \log\log N$ pseudoexpectation satisfying
    \begin{align}
      \tEE\left[\prod_{i \in S} x_i\right] &= \One\{|S| \text{ even}\} \cdot \left(\frac{(-1)^{|S| / 2}(|S| - 1)!!}{(N / (1 - \alpha))^{|S| / 2}} + O_{|S|}\left(\frac{1}{N^{|S| / 2 +  1}}\right)\right), \label{eq:laurent-approx-1} \\
      \tEE[\bx\bx^{\top}] &= \left(1 + \frac{1 - \alpha}{N - 1}\right)\bm I_N - \frac{1 - \alpha}{N - 1}\one_N\one_N^{\top}. \label{eq:laurent-approx-2}
    \end{align}
\end{theorem}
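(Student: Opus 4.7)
The plan is to apply Theorem~\ref{thm:lifting} to the ``nudged projection'' matrix
\[
\bM \colonequals \left(1 + \frac{1 - \alpha}{N-1}\right)\bm I_N - \frac{1 - \alpha}{N-1}\one_N\one_N^{\top},
\]
which has unit diagonal and spectrum $\{\alpha\} \cup \{1 + \tfrac{1-\alpha}{N-1}\}$ (the latter with multiplicity $N-1$). Thus $\lambda_{\min}(\bM) = \alpha = (\log\log N)^{-50}$ and $\|\bM\| \leq 2$. With $2d \colonequals \tfrac{1}{100}\log N/\log\log N$, once I show $\epsilon(\bM; 2d) = O(1/N)$, the hypothesis of Theorem~\ref{thm:lifting} reduces to $(12d)^{32} \cdot N^{-1/d} \ll \alpha$, which holds with enormous slack because $N^{-1/d} = (\log N)^{-100}$ while $(12d)^{32} = O((\log N/\log\log N)^{32})$ and $\alpha = (\log\log N)^{-50}$. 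The output $\tEE_{\bM}$ automatically satisfies $\tEE_{\bM}[\bx\bx^{\top}] = \bM$, which is exactly \eqref{eq:laurent-approx-2}.

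\textbf{Incoherence bounds.} Writing $c \colonequals (1-\alpha)/(N-1)$, the elementary quantities are immediate: $\epsilon_{\offdiag}(\bM) = c = O(1/N)$; $\sum_k M_{ik}^2 M_{jk}^2 = 2c^2 + (N-2)c^4$ for $i \neq j$, so $\epsilon_{\corr}(\bM) = O(1/N)$; and $\bM^{\circ k} - \bm I_N$ has off-diagonal entries $(-c)^k$ and hence operator norm at most $Nc^k = O(N^{1-k})$ for $k \geq 2$, giving $\epsilon_{\pow}(\bM) = O(1/N)$. The tree quantities $\epsilon_{\tree}$ and $\epsilon_{\err}$ are controlled by expanding $M_{ab} = (1+c)\delta_{ab} - c$, which turns each edge product $\prod_{(v,w) \in E(T)} M_{f(v)f(w)}$ into a signed sum over edge subsets $E' \subseteq E(T)$ of terms $(1+c)^{|E'|}(-c)^{|E(T)|-|E'|}\prod_{(v,w) \in E'}\delta_{f(v)f(w)}$. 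Summing over $\ba \in [N]^{V^{\square}}$, each connected component of $(V(T), E')$ that contains no $\bullet$-vertex contributes a factor of $N$, while components containing $\bullet$-vertices both kill the $N$-factor and force the participating leaves to share an $s$-value. Because every internal vertex of $T$ has degree $\geq 4$, every choice of $E'$ other than the ``trivial'' one (the one whose components exactly match how the $s_i$ partition) loses at least one net power of $N$ relative to the main term, which gives $\epsilon_{\tree}(\bM; 2d) = O(1/N)$.

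\textbf{Main obstacle.} The same expansion drives $\epsilon_{\err}$, but now restricted to $(T,\bs)$-loose $\ba$ and with the additional rescaling $N^{|\set(\bs)|/2}$. The construction of $\MaxSpan(T,\bs)$ is designed precisely so that the ``main'' tight contractions---internal vertices of the maximal repetition-spanning forest taking the repeated leaf values---coincide with the contraction responsible for the dominant term in $Z^T$; excluding them therefore costs at least one extra $c$-factor per connected component of $\MaxSpan$. Converting this into a sharp $N^{|\set(\bs)|/2}$-weighted bound requires a careful component-by-component bookkeeping of how distinct leaf values, internal-vertex degrees, and the $\MaxSpan$ components interact, and is the main technical obstacle of the proof.

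\textbf{Pseudomoment formula.} The formula \eqref{eq:laurent-approx-1} then follows by direct expansion of Definition~\ref{def:lifting}. The leading contribution comes from the $(|S|-1)!!$ perfect matchings in $\sF(|S|)$, which have no internal vertex, $\mu = 1$, and evaluate on the distinct indices of $S$ to $Z^F(\bM;S) = (-c)^{|S|/2}$. Expanding $(-c)^{|S|/2} = (-1)^{|S|/2}(1-\alpha)^{|S|/2}(N-1)^{-|S|/2}$ and comparing with $(N/(1-\alpha))^{-|S|/2}$ produces the claimed main term plus an error $O_{|S|}(N^{-|S|/2 - 1})$ arising from $(1 - 1/N)^{-|S|/2} - 1$. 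Every other $F \in \sF(|S|)$ contains at least one internal vertex of degree $\geq 4$; by the contraction-counting of the previous paragraph, each such $F$ contributes $O_{|S|}(N^{-|S|/2 - 1})$, and since $|\sF(|S|)|$ is a finite function of $|S|$ alone the total error is of the stated order.
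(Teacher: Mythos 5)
Your overall strategy is the one the paper uses: take $\bM$ to be the rank-$1$-perturbed identity \eqref{eq:laurent-approx-2}, observe $\lambda_{\min}(\bM)=\alpha$ and $\|\bM\|\le 2$, bound the incoherence quantities, invoke Theorem~\ref{thm:lifting}, and then read off \eqref{eq:laurent-approx-1} from the perfect-matching terms in Definition~\ref{def:lifting}. The elementary bounds on $\epsilon_{\offdiag},\epsilon_{\corr},\epsilon_{\pow}$ and the final pseudomoment expansion all match the paper. For $\epsilon_{\tree}$ and $\epsilon_{\err}$ you propose a different combinatorial bookkeeping than the paper: an exact expansion $M_{ab}=(1+c)\delta_{ab}-c$ and a signed sum over edge subsets $E'\subseteq E(T)$, rather than the paper's pattern-diagram machinery (grouping terms of the $\ba$-sum by the quotient graph $G=\pat(T,\bs,\ba)$ and counting $N^{|V^{\square}(G)|}$ terms of size $\le (N-1)^{-|E(G)|}$ per pattern). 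These are morally dual pictures, and your $\epsilon_{\tree}$ argument, once made precise, does succeed: removing $k$ edges from a good tree can create at most $\lfloor k/2\rfloor$ purely-$\square$ components (each needs $\ge 4$ boundary edges and each removed edge can serve two of them), so the factor is $c^{k}N^{m}\le c^{k-m}=O(1/N)$.

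The genuine gap is exactly where you flag it: the bound on $\epsilon_{\err}$. You describe the intuition but then write that converting it ``into a sharp $N^{|\set(\bs)|/2}$-weighted bound... is the main technical obstacle of the proof,'' i.e.\ you do not carry it out, and without it Theorem~\ref{thm:lifting} cannot be applied. The paper closes this with a single clean observation (Proposition~\ref{prop:pat-err-bound}): if $\ba$ is $(T,\bs)$-loose then the pattern graph $G=\pat(T,\bs,\ba)$ cannot be a tree, because some $\square$-vertex on the minimal spanning tree of a set of repeated-index leaves has two neighbors collapsing to the same vertex of $G$ while itself receiving a different index, which produces a parallel edge in $G$. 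Hence $|E(G)|\ge |V^{\bullet}(G)|+|V^{\square}(G)|$; combined with the identity $|V^{\bullet}(G)|=|\set(\bs)|$ this gives $N^{|\set(\bs)|/2}\cdot N^{|V^{\square}(G)|-|E(G)|}\le N^{-|\set(\bs)|/2}\le N^{-1/2}$, which is all that is needed. In your expansion language the analogous claim would be that for any loose $\ba$ the surviving edge subset $E'$ and the component structure it induces on $V(T)$ cost strictly more $c$-factors than the $N$-factors they recover after the $N^{|\set(\bs)|/2}$ rescaling, but this is precisely what you have not argued; the $\MaxSpan$ machinery you invoke is set up to make this true, but the step from that machinery to the estimate is the content of the proof, not a given. (Separately, with degree $2d=\frac{1}{100}\log N/\log\log N$ you should get $N^{-1/d}=(\log N)^{-200}$, not $(\log N)^{-100}$, and the paper's weaker $\epsilon_{\err}=O(N^{-1/2})$ is already enough; neither affects the slack in the final inequality.)
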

\noindent
This is weaker than the original statement; most importantly, it only gives a  pseudoexpectation $\tEE$ with $\tEE[(\one^{\top}\bx)^2] \approx \alpha N$, and thus does not show that the parity inequality above fails for $\tEE$.
However, it has two important qualitative features: (1) it implies that we need only add to $\tEE[\bx\bx^{\top}]$ an adjustment with operator norm $o(1)$ to obtain an automatically-extensible degree 2 pseudomoment matrix, and (2) it gives the correct leading-order behavior of the pseudomoments.
Elaborating on the latter point, our derivation in fact shows how the combinatorial interpretation of $(|S| - 1)!!$ as the number of perfect matchings of a set of $|S|$ objects is related to the appearance of this quantity in Laurent's construction.
In the original derivation this arises from assuming the pseudomoments depend only on $|S|$ and satisfy $\tEE[(\one^{\top}\bx)\bx^S] = 0$, which determines the pseudomoments inductively starting from $\tEE[1] = 1$.
In our derivation, this coefficient simply comes from counting the diagrams of $\sF(|S|)$ making leading-order contributions, which are the diagrams of perfect matchings.

\paragraph{Application 2: random high-rank projectors}
We also consider a random variant of the setting of Laurent's theorem, where the special subspace spanned by $\one_N$ is replaced with a random low-dimensional subspace.
This is also essentially identical to the setting we would like to treat to give SOS lower bounds for the SK Hamiltonian, except for the dimensionality of the subspace.
\begin{theorem}[Random high-rank projectors]
    \label{thm:appl-high-rank}
    Suppose $n: \NN \to \NN$ is an increasing function with $\log(N) \ll n(N) \ll N / \log N$ as $N \to \infty$.
    Let $V$ be a uniformly random $(N - n)$-dimensional subspace of $\RR^N$.
    Then, with high probability as $N \to \infty$, there exists $\tEE$ a degree $\frac{1}{300}\log(N / n) / \log\log N$ pseudoexpectation satisfying
    \begin{alignat}{2}
      \frac{\tEE[\langle \bx, \bv \rangle^2]}{\|\bv\|^2} &\in \left[1, 1 + 4\frac{n}{N}\right] &&\text{ for all } \bv \in V \setminus \{ \bm 0 \}, \\
      \frac{\tEE[\langle \bx, \bv \rangle^2]}{\|\bv\|^2} &\in \left[0, \frac{1}{(\log\log N)^{32}} + 4\frac{n}{N} \right] &&\text{ for all } \bv \in V^{\perp} \setminus \{ \bm 0 \}.
    \end{alignat}
\end{theorem}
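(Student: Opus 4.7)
The plan is to apply Theorem~\ref{thm:lifting} to a pseudomoment matrix $\bM$ built from $P_V$, after which $\tEE[\langle \bx, \bv \rangle^2] = \bv^\top \bM \bv$ gives the claimed quadratic-form bounds directly. Since $\dim V^{\perp} = n \ll N$, $P_V$ is close to $\bm I_N$, which drives every estimate. I would first rescale $\bM_0 \colonequals \frac{N}{N - n} P_V$ so that $\Tr(\bM_0) = N$, renormalize the diagonal to exact ones by $\bM_1 \colonequals \bD^{-1/2} \bM_0 \bD^{-1/2}$ with $\bD = \mathrm{diag}(\bM_0)$ (a negligible perturbation since $(\bM_0)_{ii}$ concentrates near $1$), and then nudge
\[
  \bM \colonequals (1 - \alpha) \bM_1 + \alpha \bm I_N, \qquad \alpha = (\log \log N)^{-32}.
\]
This gives $M_{ii} = 1$, $\|\bM\| = O(1)$, and $\lambda_{\min}(\bM) \geq \alpha$. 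Writing $\bE \colonequals \bm I_N - \bM$, the residual is essentially a small multiple of $P_{V^{\perp}} - \frac{n}{N} \bm I_N$; realizing $P_{V^{\perp}} = \bG(\bG^\top \bG)^{-1} \bG^\top$ for a Gaussian $\bG \in \RR^{N \times n}$ and applying standard concentration gives, with high probability, $\|\bE\|_{\infty} = O(\sqrt{n \log N} / N)$ and $\|\bE\| = O(n / N) + \alpha$.

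\textbf{Incoherence estimates.} These bounds on $\bE$ immediately control $\epsilon_{\offdiag}(\bM)$, $\epsilon_{\corr}(\bM)$, and $\epsilon_{\pow}(\bM)$, each of order $N^{-\gamma}$ for some $\gamma(n, N) > 0$ whenever $n \ll N / \log N$; indeed $\bM^{\circ k} - \bm I_N = (-\bE)^{\circ k}$ and $\|\bE^{\circ k}\| \leq N \|\bE\|_{\infty}^k = o(1)$ for $k \geq 2$. For the tree quantities, I would expand $Z^T(\bm I_N - \bE; \bs)$ as a sum over subsets $S \subseteq E(T)$ of edges carrying an $\bE$-factor, collapsing vertices joined only by $\bm I$-factors via Kronecker-delta identifications. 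The empty set $S = \emptyset$ contributes exactly the reference value $\One\{s_1 = \cdots = s_{2d'}\}$, while each nonempty term is controlled by combining $\|\bE\|_{\infty}$ bounds for small $|S|$ with Hanson--Wright-type concentration for large $|S|$ that exploits the low-rank Gaussian structure of $\bE$. A union bound over the polynomially many forest diagrams with $\leq 2d$ leaves then gives uniform $N^{-\Omega(1)}$ control of both $\epsilon_{\tree}$ and $\epsilon_{\err}$.

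\textbf{Verifying the hypothesis and translating.} With $\|\bM\| = O(1)$, $\lambda_{\min}(\bM) \geq \alpha = (\log \log N)^{-32}$, and $\epsilon(\bM; 2d) = N^{-\gamma}$ for $\gamma \gtrsim \log(N/n) / \log N$ (coming from $\|\bE\|_{\infty} \lesssim \sqrt{n}/N$), the hypothesis of Theorem~\ref{thm:lifting} reduces to $(12 d)^{32} N^{-\gamma / d} \lesssim \alpha$, which holds for $d \leq \frac{1}{300} \log(N / n) / \log \log N$. The resulting $\tEE$ satisfies $\tEE[\bx \bx^\top] = \bM$; a short calculation using $\bD^{-1/2} = \bm I_N + O(\sqrt{n}/N)$ yields, for $\bv \in V$, $\bv^\top \bM \bv / \|\bv\|^2 = (1 - \alpha) \frac{N}{N - n} + \alpha + O(n/N) \in [1, 1 + 4 n / N]$, and for $\bv \in V^\perp$, $\bv^\top \bM \bv / \|\bv\|^2 = \alpha + O(n / N) \in [0, (\log \log N)^{-32} + 4 n / N]$.

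\textbf{Main obstacle.} The technical heart lies in the uniform bounds on $\epsilon_{\tree}$ and $\epsilon_{\err}$ across all tree shapes and leaf labelings with up to $\Theta(\log N / \log \log N)$ leaves. The high-rank regime is markedly friendlier than the low-rank case treated elsewhere in the paper because $\bm I_N - \bM$ is small both entrywise and in operator norm, so the bulk of each tree contraction is the Kronecker-delta contribution and only $O(1)$ correction orders require delicate probabilistic analysis; the combinatorics of forests with $O(\log N)$ vertices is then handled by a straightforward union bound.
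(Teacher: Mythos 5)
Your high-level plan — build $\bM$ from $P_V$, nudge it toward the identity, bound the incoherence quantities, feed $\bM$ to Theorem~\ref{thm:lifting}, and read off the two quadratic-form inequalities from $\tEE[\bx\bx^{\top}] = \bM$ — is the paper's strategy, and your asymptotic bookkeeping for the allowed degree is essentially correct. There are, however, two gaps.

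The most serious one is your treatment of $\epsilon_{\tree}$ and $\epsilon_{\err}$, which is the actual technical heart of this theorem. The naive entrywise bound you propose for ``small $|S|$'' terms (sum over pattern graphs $G$, each contributing $N^{|V^{\square}(G)|}\|\bE\|_{\infty}^{|E(G)|}$) does not suffice: using $\|\bE\|_{\infty} = O(\sqrt{n\log N}/N)$ and the deterministic edge lower bound $|E| \geq |V^{\bullet}| + |V^{\square}| - 1$ gives a contribution of order $n^{|E|/2}/N$ for some pattern graphs, and with $|E| = \Theta(d)$ this is not $N^{-\Omega(1)}$ uniformly across the allowed range of $n$ (in particular when $n$ is as small as $\mathrm{polylog}(N)$, where $d$ is large). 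The paper cannot avoid this and instead computes the \emph{second moment} of each term $Z^T(\bM; \bs) - \One\{s_1 = \cdots\}$ (and of the loose error sums) by pairing two copies of the pattern diagram, evaluates the resulting Gaussian expectation with the cycle-cover identity (Proposition~\ref{prop:cycle-cover}), and then uses the combinatorial fact $|V| + c_{\max}(G) - 1 \leq |E|$ with equality only for inflated trees (Proposition~\ref{prop:cycle-cover-edges}), together with a structural argument showing that the pattern graphs arising from $(T,\bs)$-loose assignments cannot be inflated trees. This produces a second-moment bound of order $n/N^2$ (not $n^{|E|/2}/N$), and then degree-$O(d)$ Gaussian hypercontractivity (Corollary~\ref{cor:gauss-hc-tail}, not Hanson–Wright, which only covers degree 2) converts it into a uniform high-probability tail over the $\leq N^{6d}$ choices of tree and indices. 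Your ``Hanson–Wright-type concentration for large $|S|$'' gestures in this direction but does not supply either of the two key lemmas (cycle covers and the loose-implies-not-inflated-tree combinatorics), and without them the bound on $\epsilon_{\err}$ and $\epsilon_{\tree}$ does not go through.

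A secondary issue is your diagonal normalization. You conjugate by $\bD^{-1/2}$ (multiplicative), which does not manifestly give $\bv^{\top}\bM\bv \geq \|\bv\|^2$ for $\bv \in V$: after conjugation, $\bD^{-1/2}\bv$ leaves $V$, and the sign of the resulting $O(n/N)$ correction is not controlled in your sketch. The paper sidesteps this cleanly with an \emph{additive} correction, $\bM = \bm I_N + \bD - \bM^{(0)}$ with $\bD \succeq \bm 0$ and $\mathrm{range}(\bM^{(0)}) = V^{\perp}$, so that $\bv^{\top}\bM\bv = \|\bv\|^2 + \bv^{\top}\bD\bv \geq \|\bv\|^2$ holds identically on $V$. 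You should switch to an additive construction or otherwise track the sign of the diagonal correction carefully.
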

\noindent
As in the case of our version of Laurent's theorem, this result does not imply an SOS integrality gap that is in itself particularly interesting.
Indeed, results in discrepancy theory have shown that hypercube vectors can avoid random subspaces of sub-linear dimension ($V^{\perp}$, in our case) unusually effectively; see, e.g., \cite{TMR-2020-BalancingGaussianVectors} for the recent state-of-the-art.
Rather, we present this example as another qualitative demonstration of our result, showing that it is possible to treat the random case in the same way as the deterministic case above, and that we can again obtain an automatic higher-degree extension after an adjustment with operator norm $o(1)$ of $\tEE[\bx\bx^{\top}]$ from a random projection matrix.

\paragraph{Application 3: Sherrington-Kirkpatrick Hamiltonian}
Unfortunately, our approach above does not appear to extend directly to the low-rank setting.
We discuss the reasons for this in greater detail in Section~\ref{sec:future}, but, at a basic level, if $M_{ij} = \langle \bv_i, \bv_j \rangle$ for unit-norm ``Gram vectors'' $\bv_i$, then, as captured in the incoherence quantity $\epsilon_{\pow}(\bM)$, our construction relies on the $\bv_i^{\otimes k}$ for all $k \geq 2$ behaving like a nearly-orthonormal set.
Once $n = \Theta(N)$ in the setting of Theorem~\ref{thm:appl-high-rank}, this is no longer the case: for $k \geq 3$ the $\bv_i^{\otimes k}$ still behave like an orthonormal set, but the $\bv_i^{\otimes 2}$, which equivalently may be viewed as the matrices $\bv_i\bv_i^{\top}$, are too ``crowded'' in $\RR^{r \times r}_{\sym}$ and have an overly significant collective bias in the direction of the identity matrix.

However, for low degrees of  SOS, we can still make a manual correction for this and obtain a lower bound.
That is essentially what was done in \cite{KB-2019-Degree4SK-Arxiv} for degree 4, and the following result extends this to degree~6 with a more general formulation.
(As part of the proof we also give a slightly different and perhaps simpler argument for the degree 4 case than \cite{KB-2019-Degree4SK-Arxiv}.)

We present our result in terms of another, modified extension result for arbitrary degree~2 pseudomoments.
This extension only reaches degree 6, but allows the flexibility we sought above in $\epsilon_{\pow}$.
We obtain it by inelegant means, using simplifications specific to the diagrams appearing at degree 6 to make some technical improvements in the argument of Theorem~\ref{thm:lifting}.
\begin{definition}[Additional incoherence quantities]
    For $\bM \in \RR^{N \times N}_{\sym}$ and $t > 0$, define the following quantities:
    \begin{align}
      \widetilde{\epsilon}_{\pow}(\bM, t) &\colonequals \max\left\{ \|\bM^{\circ 2} - \bm I_N - t\one_N\one_N^{\top}\|, \, \max_{k \geq 3} \|\bM^{\circ 3} - \bm I_N\|\right\}, \\
      \widetilde{\epsilon}(\bM, t) &\colonequals \epsilon_{\offdiag}(\bM) + \widetilde{\epsilon}_{\pow}(\bM, t) + N^{-1/2}\epsilon_{\err}(\bM; 6).
    \end{align}
\end{definition}

\begin{theorem}
    \label{thm:low-rank-lifting}
    Let $\bM \in \RR^{N \times N}_{\sym}$ with $M_{ii} = 1$ for all $i \in [N]$, and suppose $t_{\pow} > 0$.
    Suppose that
    \begin{equation}
        \lambda_{\min}(\bM) \geq 10^{50}\|\bM\|^{5}\widetilde{\epsilon}(\bM, t_{\pow})^{1/3}.
    \end{equation}
  Define the constant
  \begin{equation}
      c \colonequals 250 t_{\pow}\big(\|\bM\|^6 \|\bM^2\|_F + N \epsilon_{\offdiag}(\bM^2) + N^2 \epsilon_{\offdiag}(\bM^2)^3\big).
  \end{equation}
  Then, there exists a degree 6 pseudoexpectation $\tEE$ with $\tEE[\bx\bx^{\top}] = (1 - c)\bM + c\bm I_N$.
\end{theorem}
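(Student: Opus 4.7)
The plan is to specialize the proof of Theorem~\ref{thm:lifting} to the case $d = 3$ while modifying it to accommodate the weakened bound on $\bM^{\circ 2}$. Concretely, I would set $\bM^{\prime} \colonequals (1-c)\bM + c\bm I_N$ and take $\tEE \colonequals \tEE_{\bM^{\prime}}$ from Definition~\ref{def:lifting}, truncated to degree 6. Then $M^{\prime}_{ii} = 1$ and $\tEE[\bx\bx^{\top}] = \bM^{\prime}$ are immediate from the construction, matching the desired conclusion, so the remaining task is to verify positivity and ideal annihilation. Since $\lambda_{\min}(\bM) \leq 1$ (trace-over-$N$ bound, using $M_{ii} = 1$ and positive semidefiniteness implicit in the hypothesis), the nudging yields $\lambda_{\min}(\bM^{\prime}) \geq \lambda_{\min}(\bM)$, while $\|\bM^{\prime}\| \leq \|\bM\|$, so the spectral hypothesis of Theorem~\ref{thm:low-rank-lifting} transports directly to $\bM^{\prime}$.

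For positivity, I would reuse the multiharmonic-polynomial Cholesky factorization of the pseudomoment matrix developed for Theorem~\ref{thm:lifting} and specialize to $d = 3$. That factorization reduces positivity to a quantitative lower bound of the form $\lambda_{\min}(\bM^{\prime}) \geq (\text{poly in }d)\|\bM\|^5 \widetilde{\epsilon}(\bM, t_{\pow})^{1/3}$, which is exactly the assumed bound (with exponent $1/d = 1/3$ for $d = 3$, and with constant $10^{50}$ absorbing the $(12d)^{32}$ of the general statement). The three pieces of $\widetilde{\epsilon}$ take over the roles of $\epsilon_{\offdiag}$, $\epsilon_{\pow}$, and $\epsilon_{\err}$ at general $d$; the $N^{-1/2}$ weighting on $\epsilon_{\err}(\bM; 6)$ is consistent with the maximal index-set size produced by trees in $\sT(2d^{\prime})$ for $d^{\prime} \leq 3$.

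The substantive work is verifying ideal annihilation, i.e.\ $\tEE[(x_i^2 - 1) p(\bx)] = 0$ for $p$ of degree at most $4$. Following the M\"obius-cancellation argument sketched for Theorem~\ref{thm:lifting}, this reduces to bounding what survives in $\sum_{F \in \sF(|S|)} \mu(F) Z^F(\bM^{\prime}; S)$ when $S$ has a repeated index. In the general argument such residuals are controlled using $\epsilon_{\pow}(\bM)$, which governs the deviation of \emph{squared-edge contractions} (two parallel edges between the same pair of vertices, producing entries of $\bM^{\circ 2}$) from those of $\bm I_N$. Here the relaxed bound $\widetilde{\epsilon}_{\pow}(\bM, t_{\pow})$ leaves behind an additional rank-one residue $t_{\pow} \one_N \one_N^{\top}$ at the $k=2$ entrywise power. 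At degree 6 only a short explicit list of subdiagrams of forests in $\sF(4) \cup \sF(6)$ carries a squared-edge contraction that survives the M\"obius cancellations, and each such surviving contribution is proportional to $t_{\pow}$ times one of the three expressions $\|\bM\|^6 \|\bM^2\|_F$, $N\epsilon_{\offdiag}(\bM^2)$, or $N^2 \epsilon_{\offdiag}(\bM^2)^3$ appearing in the definition of $c$. These correspond, respectively, to subdiagrams with a spanned tree above the squared edge, a near-matching topology with one isolated contraction, and a doubly-contracted topology where off-diagonal entries of $\bM^2$ appear twice. Choosing $c$ with the prefactor $250$ as stated is what makes the shift from $\bM$ to $(1-c)\bM + c\bm I_N$ produce exactly the off-setting cancellation.

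The main obstacle will be precisely this degree-6 diagrammatic accounting: enumerating the forests in $\sF(4) \cup \sF(6)$ that carry a surviving $t_{\pow}\one_N\one_N^{\top}$ residue after M\"obius cancellation, computing the coefficient with which each contributes, and checking that the aggregate is exactly cancelled by the nudging shift. This is the ``inelegant'' technical content that distinguishes Theorem~\ref{thm:low-rank-lifting} from its general-$d$ progenitor, and it is where the specific constant $250$ and the specific combination of the three norm expressions in $c$ arise. Once this bookkeeping is in hand, the remaining estimates mirror the $d=3$ instance of Theorem~\ref{thm:lifting} essentially verbatim.
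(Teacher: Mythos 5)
Your approach has a genuine gap, and it is specifically the gap that Theorem~\ref{thm:low-rank-lifting} exists to fix. You propose $\tEE \colonequals \tEE_{\bM^{\prime}}$ for $\bM^{\prime} = (1-c)\bM + c\bm I_N$, i.e.\ the pure matrix-nudging strategy from Remark~\ref{rem:nudging}. But as the paper explains in Section~\ref{sec:eps-pow-obstacle}, the problematic contribution for low-rank $\bM$ is the residue $t_{\pow}\one_N\one_N^{\top}$ in $\bM^{\circ 2} - \bm I_N$, which produces a term of operator norm $\Theta(1)$ in $\bY - \bZ^{\main}$. Nudging by $c$ scales this residue by $(1-c)^2$, not to zero, so $\epsilon_{\pow}(\bM^{\prime})$ remains $\Theta(1)$ and the spectral bound $\lambda_{\min}(\bZ^{\main}) \geq \lambda_{\min}(\bY) - \|\bY - \bZ^{\main}\|$ has a $\Theta(1)$ loss that cannot be absorbed unless $c$ itself is $\Theta(1)$ (which would make the resulting pseudoexpectation useless for the SK application). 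In the SK regime $c \sim N^{-1/8}$, while the obstruction is $\Theta(1)$; they are on different scales and no choice of the prefactor $250$ in $c$ rescues this.

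The paper's construction is genuinely different: it keeps $\bM$ unnudged inside $\tEE_{\bM}$ and adds an explicit correction term $2t_{\pow}\tEE^{\pairs}_{\bM}$, a sum over \emph{perfect-matching} forests with edges labelled by $\bM^2$ rather than $\bM$, and then takes a convex combination with $\tEE^{\id}$ to yield $(1-c)(\tEE_{\bM} + 2t_{\pow}\tEE^{\pairs}_{\bM}) + c\,\tEE^{\id}$. The term $\tEE^{\pairs:\main:1}$, when written in the multiharmonic basis, contributes exactly the two-sided-pair diagrams that cancel the $t_{\pow}\one_N\one_N^{\top}$ residue from the partition-transport diagram with $\bD = [\, 2 \, ]$ (Figure~\ref{fig:low-rank-d-2}): the surviving discrepancy is then $\|\bM + t_{\pow}\one\one^{\top} - \bM^{\circ 2}\| = \widetilde{\epsilon}_{\pow}(\bM)$, which is small. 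The purpose of $c$ and $\tEE^{\id}$ is different from what you envision: they only absorb the smaller errors coming from the $\leq 1$-sided-pair piece $\tEE^{\pairs:\main:2}$ and from $\tEE^{\pairs:\err}$ (error terms with repeated indices), which have norms of order $t_{\pow}\|\bM^2\|_F$, $t_{\pow} N\epsilon_{\offdiag}(\bM^2)$, etc.\ --- the three expressions in $c$. You should also note that ideal annihilation is not the substantive work for $\tEE_{\bM^{\prime}}$: Definition~\ref{def:lifting} enforces $\tEE_{\bM^{\prime}}[x_i^2 p(\bx)] = \tEE_{\bM^{\prime}}[p(\bx)]$ by fiat, so that property holds automatically; the substantive work is positivity in the multiharmonic basis, which is precisely where the missing $\tEE^{\pairs}$ correction is indispensable.
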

\noindent
(The abysmal constant in the first condition could be improved with a careful analysis, albeit one even more specific to degree 6.)
We show as part of the proof that a pseudoexpectation achieving this can be built by adding a correction of sub-leading order to those terms of the pseudoexpectation in Definition~\ref{def:lifting} where $F$ is a perfect matching.
It is likely that to extend this result to degree $\omega(1)$ using our ideas would require somewhat rethinking our construction and the derivation we give in Section~\ref{sec:informal-deriv} to take into account the above obstruction, but this makes it plausible that the result will be some form of small correction added to $\tEE_{\bM}$.

Finally, applying the above with a uniformly random low-rank projection matrix gives the following degree~6 lower bound for the SK Hamiltonian.
As mentioned before, this result is subsumed in the results of the parallel work \cite{GJJPR-2020-SK}, but we include it here to illustrate a situation where the above extension applies quite easily.
\begin{theorem}
    \label{thm:appl-sk}
    For any $\epsilon > 0$, for $\bW \sim \GOE(N)$, $\lim_{N \to \infty} \PP[\SOS_6(\bW) \geq (2 - \epsilon)N] = 1$.
\end{theorem}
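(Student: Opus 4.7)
My plan is to apply Theorem~\ref{thm:low-rank-lifting} to a diagonally-normalized projection onto the top eigenvectors of $\bW$. Because $\bW \sim \GOE(N)$ is orthogonally invariant, its eigenvector frame is Haar-distributed on $O(N)$ and independent of the eigenvalues $\lambda_1\geq\cdots\geq\lambda_N$. Fix a small constant $\eta>0$ and set $n = n(N) = \lceil N^{1-\eta}\rceil$; let $\bP$ be the orthogonal projector onto the span of the top $n$ eigenvectors of $\bW$, which is then a uniformly random rank-$n$ projection independent of the $\lambda_k$. Let $\bm D$ be the diagonal of $\bP$ (concentrated at $(n/N)\bm I_N$), and take $\bM := \bm D^{-1/2}\bP\bm D^{-1/2}$, so that $M_{ii}=1$, the matrix $\bM$ agrees with $(N/n)\bP$ up to an operator-norm-$o(N/n)$ error, and $\|\bM\| = (1+o(1))N/n$.

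The crux is verifying the incoherence hypotheses of Theorem~\ref{thm:low-rank-lifting} with a suitable $t_{\pow}$. Standard Haar-projection moment identities give $\EE[M_{ij}^2]=(1+o(1))/n$ for $i\ne j$, which motivates setting $t_{\pow} := 1/n$; a concentration argument (matrix Bernstein on $\bM^{\circ 2}-\bm I_N-t_{\pow}\one_N\one_N^\top$, viewed entrywise as a Haar-functional) then yields operator norm $o(1)$. The sign symmetry of the Gaussians generating $\bP$ makes $\EE[M_{ij}^3]=0$, and a similar argument gives $\|\bM^{\circ 3}-\bm I_N\|=o(1)$, so $\widetilde\epsilon_{\pow}(\bM,t_{\pow})=o(1)$. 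Off-diagonal concentration gives $|M_{ij}|=\widetilde O(1/\sqrt n)$, hence $\epsilon_{\offdiag}(\bM)=o(1)$. For $\epsilon_{\err}(\bM;6)$ I enumerate the $O(1)$ trees on at most $6$ leaves: in any loose configuration, at least one internal contraction is not pinned to the repeated-leaf value, so it sums against a genuine off-diagonal factor of size $\widetilde O(1/\sqrt n)$, and combining with $\|\bM\|=N/n$ yields $\epsilon_{\err}(\bM;6)=o(\sqrt N)$, whence $N^{-1/2}\epsilon_{\err}(\bM;6)=o(1)$. Plugging into the constant $c=250\,t_{\pow}(\|\bM\|^6\|\bM^2\|_F + N\epsilon_{\offdiag}(\bM^2) + N^2\epsilon_{\offdiag}(\bM^2)^3)$ and using $\bM^2\approx(N/n)\bM$, each summand is a bounded power of $N/n$ times polylog factors, and for $\eta$ small enough one obtains $c=o(1)$; since $\lambda_{\min}((1-c)\bM+c\bm I_N)\geq c>0$, the hypothesis of Theorem~\ref{thm:low-rank-lifting} is satisfied after nudging.

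The theorem then produces a degree $6$ pseudoexpectation $\tEE$ with $\tEE[\bx\bx^\top]=(1-c)\bM+c\bm I_N$ for some $c=o(1)$, so
\begin{equation}
\tEE[\bx^\top\bW\bx] = (1-c)\langle\bM,\bW\rangle + c\,\Tr(\bW).
\end{equation}
The diagonal of $\bW$ has variance $2/N$ per entry, so $\Tr(\bW)=O(1)$ w.h.p. For the main term, the $o(N/n)$-operator-norm approximation $\bM\approx(N/n)\bP$ gives $\langle\bM,\bW\rangle = (N/n)\sum_{k=1}^n\lambda_k(\bW)+o(N)$, and GOE eigenvalue rigidity yields $\lambda_k(\bW)=2-O((k/N)^{2/3})$ uniformly for $k\leq n$; since $n/N=N^{-\eta}\to 0$, the average of the top $n$ eigenvalues tends to $2$, so $\tEE[\bx^\top\bW\bx]=(2-o(1))N$ w.h.p., which exceeds $(2-\epsilon)N$ eventually. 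The main obstacle is the incoherence analysis in step two: correctly identifying the rank-one bias $t_{\pow}\one_N\one_N^\top$ in $\bM^{\circ 2}$ and certifying its sub-leading fluctuation, and enumerating the loose tree configurations underlying $\epsilon_{\err}(\bM;6)$. These are Haar-measure moment computations that are routine in principle but must be executed carefully, since every error has to survive multiplication by the pre-factors $\|\bM\|^6$ and $N^2$ appearing in Theorem~\ref{thm:low-rank-lifting}.
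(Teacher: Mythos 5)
There is a genuine gap at the step where you invoke Theorem~\ref{thm:low-rank-lifting}. Your matrix $\bM = \bD^{-1/2}\bP\bD^{-1/2}$ has rank exactly $n < N$ (since $\bP$ has rank $n$ and $\bD$ is invertible with high probability), so $\lambda_{\min}(\bM) = 0$, which cannot satisfy the hypothesis $\lambda_{\min}(\bM) \geq 10^{50}\|\bM\|^{5}\widetilde{\epsilon}(\bM, t_{\pow})^{1/3} > 0$. The sentence ``since $\lambda_{\min}((1-c)\bM + c\bm I_N) \geq c > 0$, the hypothesis of Theorem~\ref{thm:low-rank-lifting} is satisfied after nudging'' conflates the theorem's input and its output: the hypothesis is a condition on the $\bM$ you feed in, whereas the form $(1-c)\bM + c\bm I_N$ is what the theorem hands you as $\tEE[\bx\bx^\top]$ once that hypothesis has already been verified. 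The paper never forms a rank-deficient matrix: it sets $\bM := \bm I_N - \bD + \bM^{(0)}$ with $\bM^{(0)}$ a rescaled Wishart version of the projection and $\bD = \diag(\bM^{(0)})$. This still has $M_{ii}=1$, but since $\bm I_N - \bD \succeq (\alpha/3)\bm I_N$ and $\bM^{(0)}\succeq\bm 0$ with high probability, one gets $\lambda_{\min}(\bM) \geq \alpha/3$. In other words, the nudge toward $\bm I_N$ (as in Remark~\ref{rem:nudging} and the discussion after Theorem~\ref{thm:lifting}) must be applied to the input before invoking the extension theorem, not appear only in the conclusion.

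A secondary divergence, less immediately fatal but not adequately controlled, is your choice $n = N^{1-\eta}$ of sub-linear rank, which forces $\|\bM\| \approx N/n = N^{\eta} \to \infty$. The paper takes $r = \delta N$ for a small fixed constant $\delta$, so that $\|\bM\| = K(\delta) = O(1)$ and the incoherence quantities simply tend to zero as $N\to\infty$. With $\|\bM\|\to\infty$, every prefactor $\|\bM\|^k$ in the hypothesis and in the constant $c$ grows as $N^{k\eta}$, and you must trade this against decay of $\epsilon_{\offdiag}$, $\widetilde{\epsilon}_{\pow}$, $\epsilon_{\offdiag}(\bM^2)$, and $N^{-1/2}\epsilon_{\err}(\bM;6)$ without tracking the exponents; the paper's verification (e.g.\ via \cite{ALPTJ-2011-RIPIndependentColumns} for $\widetilde{\epsilon}_{\pow}$) is carried out in the $r=\Theta(N)$ regime and does not automatically transfer to sub-linear rank, and your estimate $\epsilon_{\err}(\bM;6) = o(\sqrt{N})$ is asserted rather than derived. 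The final inner-product bookkeeping, where an operator-norm error of $o(N/n)$ in $\bM - (N/n)\bP$ must be paired against $\bW$, also needs an explicit rank or nuclear-norm argument once $N/n$ grows. The paper's constant-fraction choice renders all of this bookkeeping trivial.
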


\subsection{Proof Techniques}

We give a brief overview here of the ideas playing a role in the proof of our extension results, Theorems~\ref{thm:lifting} and \ref{thm:low-rank-lifting}.
The method for predicting the values of $\tEE_{\bM}$ was suggested in \cite{KB-2019-Degree4SK-Arxiv}: we predict $\tEE_{\bM}[\bx^{\bs} \bx^{\bt}] = \EE[G_{\bs}G_{\bt}]$ for a gaussian symmetric tensor $\bG$, a ``surrogate'' for $\bx^{\otimes d}$, which is endowed with a natural orthogonally-invariant tensor distribution conditional on properties that cause $\tEE_{\bM}$ to satisfy (1) some of the ideal annihilation and normalization constraints, and (2) the ``subspace constraint'' that $\bx$ behaves as if it is constrained to the row space of $\bM$ (recall that our motivation is the case where $\bM$ is roughly a projection matrix, in which case this is just the subspace that $\bM$ projects to).
We call this a \emph{positivity-preserving extension} of $\bM$ because by construction the pseudomoment matrix of $\tEE$ is the degree 2 moment matrix of $\bG$, so if we were to not make any further modifications, $\tEE$ would be guaranteed to be psd.

To carry out an approximate calculation of the mean and covariance of $\bG$, we reframe the task in terms of homogeneous polynomials.
This reveals that the fluctuations of $\bG$ after conditioning are along a subspace of symmetric tensors associated to certain \emph{multiharmonic polynomials}, those $p(\bz)$ satisfying $\bv_i^{\top}\nabla^2 p(\bz) \bv_i = 0$ where the $\bv_i$ are the Gram vectors for which $M_{ij} = \langle \bv_i, \bv_j \rangle$.
To compute the covariance of $\bG$, we must compute orthogonal projections to this subspace with respect to the \emph{apolar inner product}, that inherited by homogeneous polynomials through their correspondence with symmetric tensors.
To compute these projections, we heuristically extend classical but somewhat obscure ideas of Maxwell and Sylvester~\cite{Maxwell-1873-Treatise1,Sylvester-1876-SphericalHarmonics} for projecting to harmonic polynomials, and a generalization of Clerc \cite{Clerc-2000-KelvinTransform} for projecting to certain multiharmonic polynomials, which does not quite capture our situation but allows us to make a plausible prediction.

Using this, we arrive at a closed form for our prediction of $\tEE_{\bM}$, which, after some combinatorial arguments, reduces to the form given above in Definition~\ref{def:lifting} where graphical terms are multiplied by an associated \Mobius\ function.
Identifying the \Mobius\ function in these coefficients allows us to verify that $\tEE_{\bM}$ approximately satisfies \emph{all} ideal annihilation constraints, not just those enforced by the construction of $\bG$ (and that those enforced by the construction have not been lost in our heuristic calculations), as well as the symmetry constraints that $\tEE_{\bM}[\bx^{\bs} \bx^{\bt}]$ is unchanged by joint permutations of $\bs$ and $\bt$.
This combinatorial calculation is the key to the argument, as it shows the relationship between the positivity of $\tEE_{\bM}$ produced by our ``Gramian'' construction in terms of $\bG$ and the entrywise constraints satisfied thanks to the \Mobius\ function's appearance.

Lastly, to actually give a full proof, we work backwards: as we have done above, we \emph{define} $\tEE_{\bM}$ in its final graphical form in terms of the forest CGSs and their \Mobius\ functions, which exactly satisfies all entrywise constraints.
We then show that, up to small error, it admits the Gram factorization inspired by our construction of $\bG$, and therefore also satisfies positivity.

\subsection{Organization}

The remainder of the paper is organized as follows.
In Section~\ref{sec:prelim}, we present preliminary materials.
In Section~\ref{sec:informal-deriv}, we derive the extension formula \eqref{eq:lifting-prediction}, starting with the conjectural construction of~\cite{KB-2019-Degree4SK-Arxiv} and following the sketch above to reach a closed form.
This derivation is informal, but provides an important intuition accounting for the positivity of the pseudoexpectation.
Then, in Section~\ref{sec:poset}, we describe the partial ordering structure associated with the forests giving the terms in the pseudomoments, and show that their coefficients in the extension formula are the \Mobius\ function of this partially ordered set.
Combining these ideas, in Section~\ref{sec:pf:lifting} we give the full proof of Theorem~\ref{thm:lifting}.
In Section~\ref{sec:pf:applications} we prove our applications to Laurent's pseudomoments and random high-rank projection matrices.
In Section~\ref{sec:future}, we give some discussion about what stops our techniques from extending directly to the low-rank case that is more relevant for the SK Hamiltonian and random graph applications.
Finally, in Section~\ref{sec:pf:applications-sk}, we prove the degree~6 extension of Theorem~\ref{thm:low-rank-lifting} and our partial result for the SK Hamiltonian.

\section{Preliminaries}
\label{sec:prelim}

\subsection{Notation}

\paragraph{Sets, multisets, partitions}
For a set $A$, we write $2^A$, $\binom{A}{k}$, and $\binom{A}{\leq k}$ for the sets of all subsets of $A$, subsets of size $k$ of $A$, and subsets of size at most $k$ of $A$, respectively.
We write $\sM(A)$, $\sM_k(A)$, and $\sM_{\leq k}(A)$ for the sets of all multisets (sets with repeated elements allowed) with elements in $A$, all multisets of size $k$ with elements in $A$, and all multisets of size at most $k$ with elements in $A$, respectively.

For $A, B$ multisets, we write $|A|$ for the number of elements in $A$, $A + B$ for the disjoint union of $A$ and $B$, and $A - B$ for the multiset difference of $A$ and $B$ (where the number of occurrences of an element in $A$ is reduced by the number of occurrences in $B$, stopping at zero).
It may be clearer to think of multisets as functions from an alphabet to $\NN$, in which case $A + B$ is ordinary pointwise addition, while $A - B$ is the maximum of the pointwise difference with the zero function.
We write $A \subseteq B$ if each element occurs at most as many times in $A$ as it does in $B$.
We do not use a special notation for explicit multisets; when we write, e.g., $A = \{i, i, j\}$, then it is implied that $A$ is a multiset.

For $A$ a set or multiset, we write $\Part(A)$ for the set or multiset, respectively, of partitions of $A$.
Repeated elements in a multiset are viewed as distinct for generating partitions, making $\Part(A)$ a multiset when $A$ is a multiset.
For example,
\begin{equation}
    \Part(\{i, i, j\}) = \bigg\{\{\{i\}, \{i\}, \{j\}\}, \, \{\{i, i\}, \{j\}\}, \, \underbrace{\{\{i\}, \{i, j\}\}, \, \{\{i\}, \{i, j\}\}}_{\text{repeated}}, \, \{\{i, i, j\}\}\bigg\}.
\end{equation}
We write $\Part(A; \even)$ and $\Part(A; \odd)$ for partitions into only even or odd parts, respectively, and $\Part(A; k)$, $\Part(A; \geq k)$, and $\Part(A; \leq k)$ for partitions into parts of size exactly, at most, and at least $k$, respectively.
We also allow these constraints to be chained, so that, e.g., $\Part(A; \even; \geq k)$ is the set of partitions into even parts of size at least $k$.
Similarly, for a specific partition $\pi \in \Part(A)$, we write $\pi[\even], \pi[\odd], \pi[k], \pi[\geq k], \pi[\leq k], \pi[\even; \geq k]$ and so forth for the parts of $\pi$ with the specified properties.

\paragraph{Linear algebra}
We use bold uppercase letters ($\bA$, $\bB$) for matrices, bold lowercase letters ($\ba$, $\bb$) for vectors, and plain letters for scalars, including for the entries of matrices and vectors ($a$, $a_i$, $A_{ij}$).
We denote by $\bA \circ \bB$ the entrywise or Hadamard product of matrices, and by $\bA^{\circ k}$ the Hadamard powers.

\subsection{Symmetric Tensors and Homogeneous Polynomials}
\label{sec:prelim-symtens-poly}

We first review some facts about symmetric tensors, homogeneous polynomials, and the relationships between their respective Hilbert space structures.

\paragraph{Hilbert space structures}
The vector space of \emph{symmetric $d$-tensors} $\Sym^d(\RR^N) \subset (\RR^N)^{\otimes d}$ is the subspace of $d$-tensors whose entries are invariant under permutations of the indices.
The vector space of \emph{homogeneous degree $d$ polynomials} $\RR[y_1, \dots, y_N]_d^{\hom}$ is the subspace of degree $d$ polynomials whose monomials all have total degree $d$.
Having the same dimension $\binom{N + d - 1}{d}$, these two vector spaces are isomorphic; a natural correspondence between homogeneous polynomials $p(\by)$ and symmetric tensors $\bA$ is
\begin{align}
    \bA &\mapsto p(\by) = \bA[\by, \ldots, \by] = \sum_{\bs \in [N]^d} A_{\bs}\by^{\bs}, \label{eq:correspondence-symtens-poly} \\
    p(\by) &\mapsto A_{\bs} = \binom{d}{\freq(\bs)}^{-1} \cdot [\by^{\bs}](p), \label{eq:correspondence-poly-symtens}
\end{align}
where $\freq(\bs)$ is the sequence of integers giving the number of times different indices occur in $\bs$ (sometimes called the \emph{derived partition}) and $[\by^{\bs}](p)$ denotes the extraction of a coefficient.

The general $d$-tensors $(\RR^N)^{\otimes d}$ may be made into a Hilbert space by equipping them with the \emph{Frobenius inner product},
\begin{equation}
    \langle \bA, \bB \rangle \colonequals \sum_{\bs \in [N]^d} A_{\bs}B_{\bs}.
\end{equation}
The symmetric $d$-tensors inherit this inner product, which when $\bA, \bB \in \Sym^d(\RR^N)$ may be written
\begin{equation}
    \langle \bA, \bB \rangle = \sum_{S \in \sM_d([N])} \binom{d}{\freq(S)}A_{S}B_{S}.
\end{equation}
Perhaps less well-known is the inner product induced on homogeneous degree $d$ polynomials by the Frobenius inner product pulled back through the mapping  \eqref{eq:correspondence-poly-symtens}, which is called the \emph{apolar inner product} \cite{ER-1993-ApolarityCanonicalForms,Reznick-1996-HomogeneousPolynomial,Vegter-2000-ApolarBilinearForm}.\footnote{\cite{ER-1993-ApolarityCanonicalForms} write: ``...the notion of apolarity has remained sealed in the well of oblivion.''}\textsuperscript{,}\footnote{Other names used in the literature for this inner product include the \emph{Bombieri}, \emph{Bombieri-Weyl}, \emph{Fischer}, or \emph{Sylvester} inner product.
  The term \emph{apolar} itself refers to polarity in the sense of classical projective geometry; see~\cite{ER-1993-ApolarityCanonicalForms} for a historical overview in the context of invariant theory.}
For the sake of clarity, we distinguish this inner product with a special notation:
\begin{equation}
    \langle p, q \rangle_{\circ} \colonequals \sum_{S \in \sM_d([N])} \binom{d}{\freq(S)}^{-1} \cdot [\by^{S}] (p) \cdot [\by^{S}] (q).
    \label{eq:apolar-ip-def}
\end{equation}
In the sequel we also follow the standard terminology of saying that ``$p$ and $q$ are apolar'' when $\langle p, q \rangle_{\circ} = 0$; we also use this term more generally to refer to orthogonality under the apolar inner product, speaking of apolar subspaces, apolar projections, and so forth.

\paragraph{Properties of the apolar inner product}
The most important property of the apolar inner product that we will use is that multiplication and differentiation are adjoint to one another.
We follow here the expository note \cite{Reznick-1996-HomogeneousPolynomial}, which presents applications of this idea to PDEs, a theme we will develop further below.
The basic underlying fact is the following.
For $q \in \RR[y_1, \ldots, y_N]$, write $q(\bm\partial) = q(\partial_{y_1}, \dots, \partial_{y_N})$ for the associated differential operator.\footnote{If, for instance, $q(\by) = y_1^2y_2 + y_3^3$, then $q(\bm\partial)f = \frac{\partial^3 f}{\partial y_1^2\partial y_2} + \frac{\partial^3 f}{\partial y_3^3}$.}

\begin{proposition}[Theorem 2.11 of \cite{Reznick-1996-HomogeneousPolynomial}]
    \label{prop:apolar-adjointness}
    Suppose $p, q, r \in \RR[y_1, \dots, y_N]^{\hom}$, with degrees $\deg(p) = a, \deg(q) = b$, and $\deg(r) = a + b$.
    Then,
    \begin{equation}
        \langle pq, r \rangle_{\circ} = \frac{a!}{(a + b)!}\langle p, q(\bm\partial)r \rangle_{\circ}.
    \end{equation}
    In particular, if $\deg(p) = \deg(q) = a$, then $\langle p, q \rangle_{\circ} = p(\bm \partial) q / a!$.
\end{proposition}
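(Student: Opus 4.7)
The plan is to prove the identity by bilinearity, reducing to the case where $p$, $q$, and $r$ are all monomials, where the apolar inner product and the differential operator act by transparent formulas. It is convenient to adopt multi-index notation indexed by multisets $S \in \sM_d([N])$: write $\by^{S} = \prod_i y_i^{s_i}$ where $s_i$ is the multiplicity of $i$ in $S$, and $S! \colonequals \prod_i s_i!$. Then $\binom{d}{\freq(S)} = d!/S!$, so the definition \eqref{eq:apolar-ip-def} of the apolar inner product on monomials becomes simply
\begin{equation}
    \langle \by^{S}, \by^{T} \rangle_{\circ} = \frac{S!}{d!}\One\{S = T\} \qquad (|S| = |T| = d).
\end{equation}

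Now fix monomials $p = \by^{A}$, $q = \by^{B}$, $r = \by^{C}$ with $|A| = a$, $|B| = b$, $|C| = a+b$. On the left-hand side, $pq = \by^{A+B}$, giving
\begin{equation}
    \langle pq, r \rangle_{\circ} = \frac{(A+B)!}{(a+b)!}\One\{A + B = C\}.
\end{equation}
On the right-hand side, iterated application of $\partial_{y_i}^{b_i}$ to $y_i^{c_i}$ gives $\frac{c_i!}{(c_i - b_i)!}y_i^{c_i - b_i}$ when $b_i \leq c_i$ and $0$ otherwise, so
\begin{equation}
    q(\bm\partial) r = \frac{C!}{(C - B)!}\by^{C - B} \cdot \One\{B \subseteq C\},
\end{equation}
and consequently
\begin{equation}
    \frac{a!}{(a+b)!}\langle p, q(\bm\partial)r \rangle_{\circ} = \frac{a!}{(a+b)!} \cdot \frac{C!}{(C-B)!} \cdot \frac{A!}{a!}\One\{A = C - B, \, B \subseteq C\}.
\end{equation}
Both sides vanish unless $A + B = C$, in which case $C - B = A$ and $C! = (A+B)!$, and the right-hand side simplifies to $(A+B)!/(a+b)!$, matching the left-hand side exactly. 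Extending by bilinearity in each of $p, q, r$ gives the identity for arbitrary homogeneous polynomials of the prescribed degrees.

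For the second statement, apply the just-proved identity with the degree-$0$ polynomial $1$ in place of $p$, the given $p$ in place of $q$, and the given $q$ in place of $r$, so that the degrees are $(0, a, a)$. This yields
\begin{equation}
    \langle p, q \rangle_{\circ} = \langle 1 \cdot p, q \rangle_{\circ} = \frac{0!}{a!}\langle 1, p(\bm\partial) q \rangle_{\circ} = \frac{p(\bm\partial) q}{a!},
\end{equation}
where the last equality uses that $p(\bm\partial) q$ is a constant (since $\deg p = \deg q$) and that $\langle 1, c\rangle_{\circ} = c$ for any scalar $c$ by \eqref{eq:apolar-ip-def}. There is no genuine obstacle here: the entire content lies in the bookkeeping with multinomial coefficients, which is why reducing to the monomial case where everything collapses to a single $S! / d!$ factor is the right move.
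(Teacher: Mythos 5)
Your proof is correct and is the canonical argument: reduce to monomials via bilinearity, translate the binomial normalizer $\binom{d}{\freq(S)}^{-1}$ into the transparent form $S!/d!$, and match multinomial factorials on both sides. The paper itself supplies no proof (it cites Theorem 2.11 of Reznick), but Reznick's proof proceeds by the same monomial computation, so your route is essentially identical to the reference's.

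One small point worth being pedantic about, since it is implicit in your last displayed calculation for the first part: when $B \not\subseteq C$, the polynomial $q(\bm\partial) r$ is identically zero, and one needs the mild convention that the zero polynomial lies in $\RR[y_1,\dots,y_N]^{\hom}_a$ so that $\langle p, q(\bm\partial) r\rangle_\circ$ is defined and equals zero. You handle this correctly by carrying the indicator $\One\{B \subseteq C\}$ through, but it does no harm to say explicitly that both sides vanish in that case before specializing to $A + B = C$. For the second assertion, your instantiation with $(1, p, q)$ and degree triple $(0, a, a)$ is clean, and the observation that $\langle 1, c\rangle_\circ = c$ for the constant $c = p(\bm\partial) q$ follows directly from \eqref{eq:apolar-ip-def} with $\sM_0([N]) = \{\emptyset\}$. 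No gap.
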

\noindent
In fact, it will later be useful for us to define the following rescaled version of the apolar inner product that omits the rescaling above.
\begin{definition}
    \label{def:apolar-partial}
    For $p, q \in \RR[y_1, \dots, y_N]^{\hom}$ with $\deg(p) = \deg(q)$, let $\langle p, q \rangle_{\partial} \colonequals p(\bm \partial) q$.
\end{definition}

Using the preceding formula, we also obtain the following second important property, that of invariance under orthogonal changes of monomial basis.
\begin{proposition}
    \label{prop:apolar-orth-invariant}
    Suppose $p, q \in \RR[y_1, \dots, y_N]^{\hom}_d$ and $\bQ \in \sO(N)$.
    Then,
    \begin{equation}
        \langle p(\by), q(\by) \rangle_{\circ} = \langle p(\bQ \by), q(\bQ \by) \rangle_{\circ}.
    \end{equation}
\end{proposition}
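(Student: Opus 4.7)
The plan is to combine the adjointness formula from Proposition~\ref{prop:apolar-adjointness} with a polarization argument on powers of linear forms. The special case of that proposition gives $\langle p, q \rangle_{\circ} = p(\bm\partial) q / d!$, so the target identity is equivalent to $p(\bm\partial) q = p_{\bQ}(\bm\partial) q_{\bQ}$ where $p_{\bQ}(\by) \colonequals p(\bQ\by)$. Since both sides are bilinear in $(p, q)$, it suffices to verify the identity on a family of pairs whose bilinear span covers $\RR[y_1, \dots, y_N]^{\hom}_d \times \RR[y_1, \dots, y_N]^{\hom}_d$.

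The natural choice is the family of powers of linear forms $p_{\bu}(\by) \colonequals (\bu^{\top} \by)^d$ for $\bu \in \RR^N$. Under the correspondence \eqref{eq:correspondence-poly-symtens} these polynomials correspond (up to a common scalar) to the rank-one symmetric tensors $\bu^{\otimes d}$; since rank-one symmetric tensors span $\Sym^d(\RR^N)$ by a standard polarization identity, the $p_{\bu}$ span $\RR[y_1, \dots, y_N]^{\hom}_d$, and bilinearity reduces the question to the case $p = p_{\bu}$, $q = p_{\bv}$.

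On this family the check is a direct computation. Iterating the elementary identity $(\bu^{\top} \bm\partial)(\bv^{\top} \by)^k = k (\bu^{\top} \bv)(\bv^{\top} \by)^{k - 1}$ a total of $d$ times yields $p_{\bu}(\bm\partial) p_{\bv} = d! \, (\bu^{\top} \bv)^d$, so $\langle p_{\bu}, p_{\bv}\rangle_{\circ} = (\bu^{\top}\bv)^d$. On the other hand, $p_{\bu}(\bQ\by) = ((\bQ^{\top} \bu)^{\top} \by)^d = p_{\bQ^{\top} \bu}(\by)$, so the same formula after substitution gives $\langle p_{\bu}(\bQ\by), p_{\bv}(\bQ\by)\rangle_{\circ} = ((\bQ^{\top} \bu)^{\top}(\bQ^{\top} \bv))^d = (\bu^{\top} \bv)^d$, where the last equality uses $\bQ^{\top} \bQ = \bm I_N$. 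I expect no real obstacle: the only step worth flagging is the spanning claim for powers of linear forms, which one may equivalently justify by observing that the apolar inner product is by construction the pullback of the Frobenius inner product on $\Sym^d(\RR^N)$, and the latter is manifestly invariant under the diagonal action of $\sO(N)$ on the $d$ tensor factors.
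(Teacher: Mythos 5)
Your proof is correct. The paper states the proposition without a separate proof, merely noting that it follows from the adjointness formula of Proposition~\ref{prop:apolar-adjointness}; your first argument executes exactly that route, with the additional (standard and correct) step of reducing, by bilinearity and polarization, to the family $p_{\bu}(\by) = (\bu^{\top}\by)^d$ on which $\langle p_{\bu}, p_{\bv}\rangle_{\circ} = (\bu^{\top}\bv)^d$ is a clean direct computation. The alternative you flag at the end is worth highlighting: since the apolar inner product is defined as the pullback of the Frobenius inner product on $\Sym^d(\RR^N)$ through \eqref{eq:correspondence-poly-symtens}, and composing $p$ with $\bQ$ corresponds on the tensor side to the diagonal action $\bA \mapsto (\bQ^{\top})^{\otimes d}\bA$, orthogonal invariance is immediate from $\bQ^{\top}\bQ = \bm I_N$ without invoking the adjointness formula at all. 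That version is arguably the cleanest and most faithful to the paper's emphasis on the tensor--polynomial isometry, though either route is fine.
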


\paragraph{Isotropic gaussians}
Using these Hilbert space structures, we may define the canonical isotropic gaussian random ``vectors'' (tensors or polynomials) in $\Sym^d(\RR^N)$ or $\RR[y_1, \dots, y_N]^{\hom}_d$.
\begin{definition}
    For $\sigma > 0$, $\mathcal{G}^{\tens}_d(N, \sigma^2)$ is the unique centered gaussian measure over $\mathsf{Sym}^d(\RR^N)$ such that, when $\bG \sim \mathcal{G}^{\tens}_d(N, \sigma^2)$, then for any $\bA, \bB \in \mathsf{Sym}^d(\RR^N)$,
    \begin{equation}
        \EE[\langle \bA, \bG \rangle \langle \bB, \bG \rangle] = \sigma^2 \langle \bA, \bB \rangle.
    \end{equation}
    Equivalently, the entries of $\bG$ have laws $G_{\bs} \sim \sN(0, \sigma^2 / \binom{d}{\freq(\bs)})$ and are independent up to equality under permutations.
    Equivalently again, letting $\bG^{(0)} \in (\RR^N)^{\otimes d}$ have i.i.d.\ entries distributed as $\sN(0, \sigma^2)$, $G_{\bs} = \frac{1}{d!}\sum_{\pi \in S_d} G^{(0)}_{s_{\pi(1)}, \ldots , s_{\pi(d)}}$.
\end{definition}
\noindent
For example, the gaussian orthogonal ensemble scaled to have the bulk of its spectrum supported asymptotically in $[-2, 2]$ is $\sG^{\tens}_{2}(N, 2 / N)$.
The tensor ensembles have also been used by \cite{RM-2014-TensorPCA} and subsequent works on tensor PCA under the name ``symmetric standard normal'' tensors.

\begin{definition}
    \label{def:isotropic-poly}
For $\sigma > 0$, $\mathcal{G}^{\poly}_d(N, \sigma^2)$ is unique centered gaussian measure over $\RR[y_1, \dots, y_N]_d^{\hom}$ such that, when $g \sim \mathcal{G}^{\poly}_d(N, \sigma^2)$, then for any $p, q \in \RR[y_1, \dots, y_N]_d^{\hom}$,
\begin{equation}
    \EE[\langle p, g \rangle_{\circ} \langle q, g \rangle_{\circ}] = \sigma^2 \langle p, q \rangle_{\circ}.
\end{equation}
Equivalently, the coefficients of $g$ are independent and distributed as $[\by^{\bs}](g) \sim \sN(0, \sigma^2 \binom{d}{\freq(\bs)})$.
\end{definition}
\noindent
See \cite{Kostlan-2002-SystemRandomPolynomials} for references to numerous works and results on this distribution over polynomials, and justification for why it is ``the most natural random polynomial.''
Perhaps the main reason is that, as a corollary of Proposition~\ref{prop:apolar-orth-invariant}, this polynomial is orthogonally invariant (unlike, say, a superficially simpler-looking random polynomial with i.i.d.\ coefficients).
\begin{proposition}
    \label{prop:G-poly-orth-invariant}
    If $g \sim \mathcal{G}^{\poly}_d(N, \sigma^2)$ and $\bQ \in \sO(N)$, then $g \eqd g \circ \bQ$.
\end{proposition}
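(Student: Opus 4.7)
\smallskip
\noindent\textbf{Proof plan.} The strategy is to use the defining characterization of $\mathcal{G}^{\poly}_d(N, \sigma^2)$ in Definition~\ref{def:isotropic-poly}: since it is the unique centered gaussian law on $\RR[y_1, \dots, y_N]_d^{\hom}$ whose covariance, viewed through the apolar inner product, is the scalar $\sigma^2$, it suffices to show that $g \circ \bQ$ is also centered gaussian with the same covariance structure. Orthogonal invariance will then follow immediately by pushing the action of $\bQ$ across the inner products using Proposition~\ref{prop:apolar-orth-invariant}.

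\smallskip
\noindent\textbf{Key steps.} First I would note that the map $T_{\bQ}: \RR[y_1, \dots, y_N]_d^{\hom} \to \RR[y_1, \dots, y_N]_d^{\hom}$ given by $T_{\bQ}(p)(\by) = p(\bQ\by)$ is a linear map on a finite-dimensional vector space. Since $g$ is centered gaussian, $T_{\bQ}(g) = g \circ \bQ$ is automatically centered gaussian as well, so it is determined by the bilinear form $(p, q) \mapsto \EE[\langle p, g \circ \bQ\rangle_{\circ} \langle q, g \circ \bQ \rangle_{\circ}]$. Second, I would apply Proposition~\ref{prop:apolar-orth-invariant} to rewrite
\begin{equation}
    \langle p(\by), g(\bQ \by)\rangle_{\circ} = \langle p(\bQ^{-1}\bQ\by), g(\bQ\by)\rangle_{\circ} = \langle p(\bQ^{-1}\by), g(\by) \rangle_{\circ},
\end{equation}
where the second equality is Proposition~\ref{prop:apolar-orth-invariant} applied with the orthogonal matrix $\bQ$ and polynomials $p \circ \bQ^{-1}$ and $g$, using that $(p \circ \bQ^{-1})(\bQ \by) = p(\by)$. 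The same identity holds for $q$. Third, using the defining covariance identity of $g$ and then Proposition~\ref{prop:apolar-orth-invariant} once more on the resulting apolar inner product, I would conclude
\begin{equation}
    \EE[\langle p, g \circ \bQ \rangle_{\circ} \langle q, g \circ \bQ \rangle_{\circ}] = \sigma^2 \langle p \circ \bQ^{-1}, q \circ \bQ^{-1} \rangle_{\circ} = \sigma^2 \langle p, q \rangle_{\circ}.
\end{equation}
Therefore $g \circ \bQ$ has the same mean and covariance as $g$, and since both are centered gaussian this forces equality in distribution.

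\smallskip
\noindent\textbf{Main obstacle.} There is no serious obstacle; the entire content of the claim is packaged in Proposition~\ref{prop:apolar-orth-invariant}, and the only care needed is bookkeeping of whether the orthogonal transformation acts on the ``test'' polynomial or on $g$ in the apolar pairing. The one conceptual point worth emphasizing (to motivate why this is not a tautology) is that the gaussian law with \emph{i.i.d.}\ coefficients would \emph{not} be orthogonally invariant; invariance here relies on the particular variances $\sigma^2 \binom{d}{\freq(\bs)}$ in Definition~\ref{def:isotropic-poly}, which are precisely what makes the covariance the apolar form and thus inherit invariance from Proposition~\ref{prop:apolar-orth-invariant}.
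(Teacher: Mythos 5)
Your proof is correct and uses exactly the route the paper indicates: the paper does not write this out but simply remarks, just before the proposition, that it is a corollary of Proposition~\ref{prop:apolar-orth-invariant}, and your argument (centered gaussian; compute the covariance via two applications of the orthogonal invariance of $\langle \cdot, \cdot \rangle_{\circ}$) is the natural expansion of that remark. Your closing observation about i.i.d.\ coefficients also matches the parenthetical comment the paper makes at the same point.
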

\noindent
(Likewise, though we will not use it, $\mathcal{G}^{\tens}_d(N, \sigma^2)$ is invariant under contraction of each index with the same orthogonal matrix, generalizing the orthogonal invariance of the GOE.)

Finally, by the isotropy properties and the isometry of apolar and Frobenius inner products under the correspondences \eqref{eq:correspondence-symtens-poly} and \eqref{eq:correspondence-poly-symtens}, we deduce that these two gaussian laws are each other's pullbacks under those correspondences.
\begin{proposition}
    If $\bG \sim \sG_d^{\tens}(N, \sigma^2)$, then $\bG[\by, \ldots, \by]$ has the law $\sG_d^{\poly}(N, \sigma^2)$.
    Conversely, if $g \sim \sG_d^{\poly}(N, \sigma^2)$, and $\bG$ has entries $G_{\bs} = [\by^{\bs}](g) / \binom{d}{\freq(\bs)}$, then $\bG$ has the law $\sG_d^{\tens}(N, \sigma^2)$.
\end{proposition}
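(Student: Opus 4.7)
The proof amounts to the observation that, by construction, the two correspondences in \eqref{eq:correspondence-symtens-poly} and \eqref{eq:correspondence-poly-symtens} are mutually inverse linear isomorphisms between $(\Sym^d(\RR^N), \langle \cdot, \cdot \rangle)$ and $(\RR[y_1,\dots,y_N]_d^{\hom}, \langle \cdot, \cdot \rangle_\circ)$, and moreover they are isometries. Indeed, if $\bA \in \Sym^d(\RR^N)$ corresponds to $p(\by)$, then $[\by^S](p) = \binom{d}{\freq(S)} A_S$ for each $S \in \sM_d([N])$, so substituting into the definition \eqref{eq:apolar-ip-def} immediately gives $\langle p, q \rangle_\circ = \sum_S \binom{d}{\freq(S)} A_S B_S = \langle \bA, \bB \rangle$. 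Since pushforwards of centered gaussian measures through linear isometries are characterized by matching covariances, both halves of the statement follow once the relevant covariances are matched using this isometry.

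\textbf{First direction.} Suppose $\bG \sim \sG_d^{\tens}(N, \sigma^2)$ and let $g(\by) \colonequals \bG[\by, \ldots, \by]$. Since $g$ is a linear functional of the gaussian vector $\bG$ with values in $\RR[y_1, \dots, y_N]_d^{\hom}$, it is a centered gaussian random element of that space; to show it has the law $\sG_d^{\poly}(N, \sigma^2)$ it suffices to verify $\EE[\langle p, g \rangle_\circ \langle q, g \rangle_\circ] = \sigma^2 \langle p, q \rangle_\circ$ for all $p, q \in \RR[y_1, \dots, y_N]_d^{\hom}$. Let $\bA, \bB$ be the symmetric tensors associated to $p, q$ via \eqref{eq:correspondence-poly-symtens}. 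By the isometry observed above, $\langle p, g \rangle_\circ = \langle \bA, \bG \rangle$, $\langle q, g \rangle_\circ = \langle \bB, \bG \rangle$, and $\langle \bA, \bB \rangle = \langle p, q \rangle_\circ$. Applying the defining property of $\sG_d^{\tens}(N, \sigma^2)$ then gives
\begin{equation}
    \EE[\langle p, g \rangle_\circ \langle q, g \rangle_\circ] = \EE[\langle \bA, \bG \rangle \langle \bB, \bG \rangle] = \sigma^2 \langle \bA, \bB \rangle = \sigma^2 \langle p, q \rangle_\circ,
\end{equation}
as required.

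\textbf{Second direction.} Conversely, suppose $g \sim \sG_d^{\poly}(N, \sigma^2)$ and define $\bG$ by $G_{\bs} \colonequals [\by^{\bs}](g) / \binom{d}{\freq(\bs)}$; this is exactly the inverse correspondence \eqref{eq:correspondence-poly-symtens}, so $\bG$ is a centered gaussian random element of $\Sym^d(\RR^N)$. The identical calculation, now reading the isometry in the reverse direction, shows $\EE[\langle \bA, \bG\rangle \langle \bB, \bG \rangle] = \EE[\langle p, g \rangle_\circ \langle q, g \rangle_\circ] = \sigma^2 \langle p, q\rangle_\circ = \sigma^2 \langle \bA, \bB \rangle$ for all $\bA, \bB \in \Sym^d(\RR^N)$, establishing that $\bG \sim \sG_d^{\tens}(N, \sigma^2)$.

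\textbf{Main obstacle.} There is essentially no obstacle, since apolarity was defined precisely so that the two correspondences are isometric; the only mildly delicate step is keeping the combinatorial factors $\binom{d}{\freq(S)}$ straight when verifying the isometry from the two explicit coordinate-wise expressions for the inner products. Once this is done, the gaussian covariance matching is automatic.
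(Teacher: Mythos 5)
Your proof is correct and is essentially the same argument the paper gives, namely observing that the correspondences \eqref{eq:correspondence-symtens-poly} and \eqref{eq:correspondence-poly-symtens} are linear isometries between $(\Sym^d(\RR^N), \langle\cdot,\cdot\rangle)$ and $(\RR[y_1,\dots,y_N]_d^{\hom}, \langle\cdot,\cdot\rangle_\circ)$, so the two centered gaussian laws with matching covariance structure push forward onto each other. The paper makes this exact point in the sentence immediately preceding the proposition and leaves the covariance-matching details to the reader; you have simply written them out.
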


\subsection{Homogeneous Ideals and Multiharmonic Polynomials}
\label{sec:prelim-ideal-harmonic}

We now focus on homogeneous polynomials and the apolar inner product, and describe a crucial consequence of Proposition~\ref{prop:apolar-adjointness}.
Namely, for any homogeneous ideal, any polynomial uniquely decomposes into one part belonging to the ideal, and another part, apolar to the first, that satisfies a certain system of PDEs associated to the ideal.
\begin{proposition}
    \label{prop:apolar-decomp}
    Let $p_1, \dots, p_m \in \RR[y_1, \dots, y_N]^{\hom}$ and $d \geq \max_{i = 1}^m \deg(p_i)$.
    Define two subspaces of $\RR[y_1, \dots, y_N]^{\hom}_d$:
    \begin{align}
        V_{\mathcal{I}} &\colonequals \left\{\sum_{i = 1}^m p_iq_i : q_i \in \RR[y_1, \dots, y_N]^{\hom}_{d - \deg p_i}\right\}, \text{ the \emph{``ideal subspace,''} and}\\
        V_{\mathcal{H}} &\colonequals \left\{q: p_i(\bm\partial)q = 0 \text{ for all } i \in [m]\right\}, \text{ the \emph{``harmonic subspace.''}}
    \end{align}
    Then, $V_{\mathcal{I}}$ and $V_{\mathcal{H}}$ are orthogonal complements under the apolar inner product.
    Consequently, $\RR[y_1, \dots, y_N]^{\hom}_d = V_{\mathcal{I}} \oplus V_{\mathcal{H}}$.
\end{proposition}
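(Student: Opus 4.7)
The plan is to reduce the claim to Proposition~\ref{prop:apolar-adjointness}, which already packages the essential content (multiplication is adjoint to differentiation under $\langle \cdot, \cdot \rangle_{\circ}$). Concretely, I would show $V_{\mathcal{H}} = V_{\mathcal{I}}^{\perp}$; then, since $\RR[y_1,\dots,y_N]^{\hom}_d$ is finite-dimensional and $\langle \cdot, \cdot \rangle_{\circ}$ is a genuine (positive-definite) inner product, the direct sum decomposition $\RR[y_1, \dots, y_N]^{\hom}_d = V_{\mathcal{I}} \oplus V_{\mathcal{H}}$ follows automatically.

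First I would check $V_{\mathcal{H}} \subseteq V_{\mathcal{I}}^{\perp}$. A generic element of $V_{\mathcal{I}}$ is $\sum_{i=1}^{m} p_i q_i$ with $q_i$ homogeneous of degree $d - \deg p_i$ (allowed since $d \geq \max_i \deg p_i$). For any $r \in V_{\mathcal{H}}$, Proposition~\ref{prop:apolar-adjointness} yields
\begin{equation}
    \langle p_i q_i,\, r \rangle_{\circ} = \frac{(\deg q_i)!}{d!}\, \langle q_i,\, p_i(\bm\partial) r \rangle_{\circ} = 0,
\end{equation}
because $p_i(\bm\partial) r = 0$ by definition of $V_{\mathcal{H}}$. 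Summing over $i$ shows $r$ is apolar to every element of $V_{\mathcal{I}}$.

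For the reverse inclusion $V_{\mathcal{I}}^{\perp} \subseteq V_{\mathcal{H}}$, suppose $r \in V_{\mathcal{I}}^{\perp}$ and fix any $i \in [m]$. Since $p_i q_i \in V_{\mathcal{I}}$ for every homogeneous $q_i$ of degree $d - \deg p_i$, the same adjointness identity gives $\langle q_i,\, p_i(\bm\partial) r \rangle_{\circ} = 0$ for all such $q_i$. Now $p_i(\bm\partial) r$ is itself homogeneous of degree $d - \deg p_i$, so testing against $q_i = p_i(\bm\partial) r$ produces $\langle p_i(\bm\partial) r,\, p_i(\bm\partial) r \rangle_{\circ} = 0$. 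Because the formula \eqref{eq:apolar-ip-def} writes $\langle h, h \rangle_{\circ}$ as a strictly positive combination of squared coefficients, $\langle \cdot, \cdot \rangle_{\circ}$ is positive-definite, forcing $p_i(\bm\partial) r = 0$. As $i$ was arbitrary, $r \in V_{\mathcal{H}}$.

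There is no real obstacle here; the entire proof lives in Proposition~\ref{prop:apolar-adjointness} together with the elementary observation that the apolar form is positive-definite. The only point requiring a little care is the homogeneity bookkeeping ensuring $p_i(\bm\partial) r$ lands in the same homogeneous degree as the $q_i$'s one tests against, which is what lets the non-degeneracy argument close the loop.
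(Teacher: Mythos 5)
Your proof is correct. The paper states Proposition~\ref{prop:apolar-decomp} without proof, treating it as a standard preliminary fact (it attributes the surrounding circle of ideas to Reznick's expository note and the classical Fischer/Expansion decomposition). Your argument is exactly the standard one and covers all the necessary points: the inclusion $V_{\mathcal{H}} \subseteq V_{\mathcal{I}}^{\perp}$ is immediate from the adjointness in Proposition~\ref{prop:apolar-adjointness}; the reverse inclusion follows by the self-test $q_i = p_i(\bm\partial)r$, which is legitimate precisely because $p_i(\bm\partial)r$ is homogeneous of degree $d - \deg p_i$ so it lies in the space being swept over; positive-definiteness of $\langle\cdot,\cdot\rangle_{\circ}$ (visible from~\eqref{eq:apolar-ip-def} since $\binom{d}{\freq(S)} > 0$) then forces $p_i(\bm\partial)r = 0$; and finite-dimensionality converts $V_{\mathcal{H}} = V_{\mathcal{I}}^{\perp}$ into the orthogonal direct-sum decomposition. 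Nothing is missing.
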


Perhaps the most familiar example is the special case of harmonic polynomials, for which this result applies as follows.
\begin{example}
\label{ex:harmonic-polynomials}
Suppose $m = 1$, and $p_1(\by) = \|\by\|_2^2 = y_1^2 + \cdots + y_N^2$.
Then, $p_1(\bm\partial) = \Delta$, so Proposition~\ref{prop:apolar-decomp} implies that any $q \in \RR[y_1, \dots, y_N]^{\hom}_d$ may be written uniquely as $p(\by) = q_d(\by) + \|\by\|^2q_{d - 2}(\by)$ where $q_d$ is harmonic, $\deg(q_d) = d$, and $\deg(q_{d - 2}) = d - 2$.
Repeating this inductively, we obtain the familiar fact from harmonic analysis that we may in fact expand
\begin{equation}
    p(\by) = \sum_{a = 0}^{\lfloor d / 2 \rfloor} \|\by\|^{2a}q_{d - 2a}(\by)
\end{equation}
where each $q_{i}$ is harmonic with $\deg(q_i) = i$ and the $q_i$ are uniquely determined by $p$.
\end{example}
\noindent
This is sometimes called the ``Fischer decomposition;'' see also the ``Expansion Theorem'' in \cite{Reznick-1996-HomogeneousPolynomial} for a generalization of this type of decomposition.

We will be especially interested in computing apolar projections onto $V_{\mathcal{H}}$ (or, equivalently, $V_{\mathcal{I}}$).
We therefore review a few situations where there are direct methods for carrying out such computations.
Again, the clearest case is that of harmonic polynomials.
\begin{proposition}[Theorem 1.7 of \cite{AR-1995-HarmonicPolynomialsDirichlet}; Theorem 5.18 of \cite{ABW-2013-HarmonicFunctionTheory}]
    Suppose $N \geq 3$.\footnote{A variant of this result also holds for $N = 2$; see Section 4 of \cite{AR-1995-HarmonicPolynomialsDirichlet}.}
    Let $V_{\mathcal{H}} \subset \RR[y_1, \dots, y_N]^{\hom}_d$ be the subspace of harmonic polynomials ($q(\by)$ with $\Delta q = 0$), and let $P_{\mathcal{H}}$ be the apolar projection to $V_{\mathcal{H}}$.
    Define
    \begin{align}
        \phi(\by) &\colonequals \|\by\|^{2 - N} \text{ (the \emph{Green's function of $\Delta$}), and} \\
        K[u](\by) &\colonequals \phi(\by)u(\by / \|\by\|^2) \text{ (the \emph{Kelvin transform}),}
    \end{align}
    the latter defined for $u: \RR^N \setminus \{\bm 0\} \to \RR$ a smooth function.
    Let $p \in \RR[y_1, \dots, y_N]^{\hom}_d$.
    Then,
    \begin{equation}
        P_{\mathcal{H}}[q] = \frac{1}{\prod_{i = 0}^{d - 1}(2 - N - 2i)}K\left[q(\bm\partial)\phi\right].
    \end{equation}
\end{proposition}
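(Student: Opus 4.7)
The plan is to verify that the operator $L \colon q \mapsto c_d \cdot K[q(\bm\partial)\phi]$, where $c_d \colonequals \prod_{i=0}^{d-1}(2 - N - 2i)^{-1}$, coincides with the apolar projection $P_{\mathcal H}$. By the specialization of Proposition~\ref{prop:apolar-decomp} given in Example~\ref{ex:harmonic-polynomials}, we have the apolar decomposition $\RR[y_1,\ldots,y_N]^{\hom}_d = V_{\mathcal H} \oplus \|\by\|^2 \RR[y_1,\ldots,y_N]^{\hom}_{d-2}$, so it suffices to show three things: (a) $L$ takes values in $V_{\mathcal H}$, (b) $L$ vanishes on the ideal summand, and (c) $L$ restricts to the identity on $V_{\mathcal H}$.

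Parts (a) and (b) are essentially bookkeeping. For (a), a direct induction on the number of derivatives shows that for $q \in \RR[\by]^{\hom}_d$ one has $q(\bm\partial)\phi = P(\by) / \|\by\|^{N + 2d - 2}$ for some polynomial $P$ of degree $d$; applying $K$, the radial factors collapse using $\|\by/\|\by\|^2\| = 1/\|\by\|$ and homogeneity of $P$, giving a genuine polynomial of degree $d$. Since $\phi$ is harmonic on $\RR^N \setminus \{\bm 0\}$ (the standard Green's function calculation), partial derivatives commute with $\Delta$, and the Kelvin transform preserves harmonicity, $L[q]$ is harmonic off the origin and hence, being polynomial, everywhere. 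For (b), if $q = \|\by\|^2 s$ then as differential operators $q(\bm\partial) = \Delta \circ s(\bm\partial)$, so $q(\bm\partial)\phi = s(\bm\partial)(\Delta\phi) = 0$ on $\RR^N \setminus \{\bm 0\}$.

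The substance is in (c), which I would handle representation-theoretically. Because $\phi$ is $\mathsf{O}(N)$-invariant, the Kelvin transform commutes with the $\mathsf{O}(N)$-action, and $q \mapsto q(\bm\partial)$ is $\mathsf{O}(N)$-equivariant, the operator $L$ is $\mathsf{O}(N)$-equivariant on $\RR[\by]^{\hom}_d$ and restricts to an equivariant endomorphism of $V_{\mathcal H}$. For $d \geq 1$, $V_{\mathcal H}$ is the standard (absolutely) irreducible $\mathsf{O}(N)$-representation on spherical harmonics, so Schur's lemma forces $L|_{V_{\mathcal H}} = \lambda \cdot \mathrm{id}$ for some scalar $\lambda$. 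To pin down $\lambda$ I would test on $q(\by) = (y_1 + \sqrt{-1}\, y_2)^d$, which is harmonic because $\partial_1^2 + \partial_2^2$ annihilates it, and evaluate $(\partial_1 + \sqrt{-1}\, \partial_2)^d \|\by\|^{2-N}$ by induction on $d$: writing $D = \partial_1 + \sqrt{-1}\, \partial_2$ and $z = y_1 + \sqrt{-1}\, y_2$, one checks $Dz = 0$ and $D(\|\by\|^2) = 2z$, so the product rule applied to the inductive hypothesis $D^d \|\by\|^{2-N} = \prod_{i=0}^{d-1}(2 - N - 2i) \cdot z^d / \|\by\|^{N + 2d - 2}$ leaves only a radial contribution and advances the product by one factor. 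This yields the Axler-Ramey identity
\[
    q(\bm\partial)\phi(\by) = \Bigl(\prod_{i=0}^{d-1}(2 - N - 2i)\Bigr) \frac{q(\by)}{\|\by\|^{N + 2d - 2}},
\]
and the same cancellation as in (a) shows $K$ of the right-hand side equals $\prod_{i=0}^{d-1}(2 - N - 2i) \cdot q(\by)$. Multiplying by $c_d$ gives $L[q] = q$, so $\lambda = 1$.

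The main obstacle I anticipate is the inductive computation of the scalar: although the representation-theoretic setup guarantees \emph{a priori} that $L|_{V_{\mathcal H}}$ is scalar, one still needs the clean identity $Dz = 0$ (and more generally the vanishing of the cross terms from the product rule) to confirm that only the radial differentiation contributes at each stage, so that the scalar is exactly the claimed product. Combining (a), (b), and (c) then yields $L = P_{\mathcal H}$.
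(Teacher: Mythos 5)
The paper does not give its own proof of this proposition; it is stated as an imported result, with the reader referred to Axler--Ramey and Axler--Bourdon--Wade for the argument. Your proposal is a correct, self-contained proof, so the comparison has to be against those sources rather than against anything in the paper itself.

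Your decomposition into (a) $L$ lands in $V_{\mathcal H}$, (b) $L$ annihilates $\|\by\|^2 \cdot \RR[\by]^{\hom}_{d-2}$, and (c) $L|_{V_{\mathcal H}} = \mathrm{id}$, is sound, and all the internal steps check out: the inductive form $q(\bm\partial)\phi = P(\by)/\|\by\|^{N+2d-2}$ with $P$ homogeneous of degree $d$ does collapse under $K$ to the polynomial $P$; harmonicity of $L[q]$ follows cleanly from $\Delta\phi = 0$ on $\RR^N\setminus\{\bm 0\}$ plus Kelvin-transform preservation of harmonicity; and $q(\bm\partial) = \Delta \circ s(\bm\partial)$ for $q = \|\by\|^2 s$ does kill the ideal summand. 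For (c), the essential facts you invoke are all correct: $L$ is $\mathsf O(N)$-equivariant because $\phi$ is rotation-invariant and $q\mapsto q(\bm\partial)$ is equivariant; the degree-$d$ real spherical harmonics are irreducible under $\mathsf O(N)$ for $N\geq 3$; and the test computation with $D = \partial_1 + \sqrt{-1}\,\partial_2$, $z = y_1 + \sqrt{-1}\,y_2$, rests only on $Dz = 0$ and $D(\|\by\|^2) = 2z$, which are immediate. (You could even dispense with absolute irreducibility here: since $\ker(L - \mathrm{id})$ is a nonzero invariant subspace by your test computation, plain irreducibility already forces $L|_{V_{\mathcal H}} = \mathrm{id}$. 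Also $d = 0$ is trivially verified directly, which you gloss over, but that is harmless.) Where this differs from the cited sources: Axler--Ramey and Axler--Bourdon--Wade establish the identity $q(\bm\partial)\phi = \bigl(\prod_{i=0}^{d-1}(2-N-2i)\bigr)\, q(\by)/\|\by\|^{N+2d-2}$ for \emph{arbitrary} harmonic $q$ directly, by induction on $d$ using the Fischer decomposition, with no representation theory; you instead derive the general case from a single test polynomial via equivariance and irreducibility. Your route is shorter if one accepts irreducibility of spherical harmonics as a black box; the direct route is more elementary and more explicit about why the cross terms in the product rule vanish for general $q$ (not just for $z^d$, where $Dz = 0$ makes it obvious). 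Both are valid; yours is a legitimate alternative proof.
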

\noindent
Roughly speaking, the Kelvin transform is a generalization to higher dimensions of inversion across a circle, so this result says that apolar projections to harmonic polynomials may be computed by inverting corresponding derivatives of the Green's function of $\Delta$.

This result has a long history.
At least for $N = 3$, the idea and its application to the expansion of Example~\ref{ex:harmonic-polynomials} were already present in classical physical reasoning of Maxwell~\cite{Maxwell-1873-Treatise1}.
Soon after, Sylvester~\cite{Sylvester-1876-SphericalHarmonics} gave a mathematical treatment, mentioning that an extension to other $N$ is straightforward.
See Section VII.5.5 of \cite{CH-1962-MathematicalPhysics1} on ``The Maxwell-Sylvester representation of spherical harmonics'' for a modern exposition.
These ideas were rediscovered by~\cite{AR-1995-HarmonicPolynomialsDirichlet}; there and in the later textbook treatment~\cite{ABW-2013-HarmonicFunctionTheory} there is greater emphasis on $P_{\sH}$ being a projection, though the fact that the apolar inner product makes it an \emph{orthogonal} projection goes unmentioned.
Some further historical discussion is given in an unpublished note of Gichev~\cite{Gichev-XXXX-HarmonicComponentHomogeneous} as well as Appendix A of the lecture notes \cite{Arnold-1997-LecturesPDE}.

When we seek to apply these ideas in our setting, we will want to project to \emph{multiharmonic} polynomials, which satisfy $p_i(\bm\partial)q = 0$ for several polynomials $p_1, \dots, p_m$.\footnote{Unfortunately, the term \emph{multiharmonic function} is also sometimes used to refer to what is usually called a \emph{pluriharmonic function}, the real or imaginary part of a holomorphic function of several variables, or to what is usually called a \emph{polyharmonic function}, one that satisfies $\Delta^mq = 0$ for some $m \in \NN$.}
In our case the polynomials will be quadratic, but a generalization to arbitrary polynomials is also sensible.
This question has been studied much less.
The main work we are aware of in this direction is due to Clerc \cite{Clerc-2000-KelvinTransform} (whose Green's function construction was suggested earlier in Herz's thesis \cite{Herz-1955-BesselFunctionsMatrix}; see Lemma~1.6 of the latter), where the $p_i$ are quadratic forms with the basis elements of a Jordan subalgebra of $\RR^{r \times r}_{\sym}$.
The following is one, essentially trivial, instance of those results.
\begin{proposition}
    \label{prop:basis-harmonic-proj}
    Let $\bv_1, \dots, \bv_r$ be an orthonormal basis of $\RR^r$.
    Define an associated Green's function and Kelvin transform
    \begin{align}
      \phi(\bz) &\colonequals \prod_{i = 1}^r \langle \bv_i, \bz \rangle, \\
      K[f](\bz) &\colonequals \phi(\bz)f\left(\sum_{i = 1}^r \langle \bv_i, \bz \rangle^{-1} \bv_i\right).
    \end{align}
    Let $p \in \RR[\bx_1, \dots, \bx_r]$.
    Then, the apolar projection of $p$ to the harmonic subspace is $K[p(\bm\partial) \phi]$.
\end{proposition}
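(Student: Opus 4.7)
The plan is to reduce to the standard-basis case by an orthogonal change of coordinates, identify the ``harmonic subspace'' as the space of multilinear homogeneous polynomials, and then verify the formula on a monomial basis; the core computation is short, as befits an ``essentially trivial'' case.

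For the reduction, since $\bv_1, \ldots, \bv_r$ is orthonormal, I would let $\bV = [\bv_1 | \cdots | \bv_r] \in \sO(r)$ and substitute $\bz = \bV \bz'$. Writing $\tilde q(\bz') \colonequals q(\bV\bz')$ for the pulled-back polynomial, a direct check using $\partial_{\bz'} = \bV^{\top}\partial_{\bz}$ and $\bV^{\top}\bv_i = \be_i$ shows that $\tilde\phi(\bz') = z_1'\cdots z_r'$, that $p(\bm\partial)\phi$ pulls back to $\tilde p(\bm\partial)\tilde\phi$, and that $K$ transforms to the standard-basis Kelvin transform $\tilde K[\tilde f](\bz') = z_1'\cdots z_r' \cdot \tilde f(1/z_1', \ldots, 1/z_r')$. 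By Proposition~\ref{prop:apolar-orth-invariant} the apolar inner product is preserved, and the defining conditions $\langle \bv_i, \bm\partial\rangle^2 q = 0$ become $(\partial_{z_i'})^2 \tilde q = 0$. So I may assume $\bv_i = \be_i$ throughout.

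Following the framework described just before the statement, I would then interpret the ``harmonic subspace'' as the apolar complement, in $\RR[z_1, \ldots, z_r]^{\hom}_d$, of the homogeneous ideal generated by the rank-one quadratic forms $\langle \bv_i, \bz\rangle^2 = z_i^2$ (whose matrices $\bv_i\bv_i^{\top}$ span the diagonal Jordan subalgebra). Since $(z_i^2)(\bm\partial) = \partial_{z_i}^2$, Proposition~\ref{prop:apolar-decomp} gives a decomposition $\RR[z_1,\ldots,z_r]^{\hom}_d = V_{\mathcal{H}} \oplus V_{\mathcal{I}}$ where $V_{\mathcal{H}}$ is the span of multilinear monomials $\bz^{\alpha}$ ($\alpha \in \{0,1\}^r$, $|\alpha| = d$) and $V_{\mathcal{I}}$ is the span of monomials $\bz^{\alpha}$ of total degree $d$ with some $\alpha_i \geq 2$.

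Finally, set $T(p) \colonequals K[p(\bm\partial)\phi]$; by linearity it suffices to evaluate $T$ on monomials $p = \bz^{\alpha}$, $|\alpha|=d$. If $\alpha \in \{0,1\}^r$, then $p(\bm\partial)\phi = \prod_{i:\alpha_i=1}\partial_{z_i}(z_1\cdots z_r) = \bz^{\one - \alpha}$, so $K[\bz^{\one - \alpha}](\bz) = z_1\cdots z_r \cdot \prod_j z_j^{-(1-\alpha_j)} = \bz^{\alpha} = p$. If instead some $\alpha_i \geq 2$, then $p(\bm\partial)$ contains $\partial_{z_i}^{\alpha_i}$, which kills the $z_i$-linear factor of $\phi$, giving $T(p) = 0$. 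Thus $T$ is the identity on $V_{\mathcal{H}}$ and vanishes on $V_{\mathcal{I}}$, so $T$ equals the apolar projection onto $V_{\mathcal{H}}$. The only non-mechanical point is interpretive—recognizing from the shape of $\phi$ and the surrounding discussion that the relevant quadratic forms are $\langle \bv_i, \cdot\rangle^2$; once that is pinned down, the three-line monomial calculation above does everything, and there is no real technical obstacle.
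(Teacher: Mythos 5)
Your proof is correct and follows essentially the same route as the paper's own one-line argument: reduce to the standard basis (equivalently, work in the monomial basis $\langle\bv_i,\bz\rangle$), identify the harmonic subspace with the span of multilinear monomials, compute that $p(\bm\partial)\phi$ followed by the Kelvin inversion reproduces exactly the multilinear part of $p$, and conclude via Proposition~\ref{prop:apolar-decomp}. You spell out the orthogonal-change-of-variables step and the monomial check that the paper leaves implicit, but the substance is identical.
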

\noindent
In this case, it is easy to give a hands-on proof: one may write $p$ in the monomial basis $\langle \bv_i, \bz \rangle$, and in this basis the desired projection is just the multilinear part of $p$.
On the other hand, we have $p(\bm\partial)\phi = q / \phi$, where $q$ is the multilinear part of $p$, and the result follows.
Though this is a simple derivation, we will see that extending it to overcomplete families of vectors $\bv_i$ in fact forms one of the key heuristic steps in our derivation.

\subsection{\Mobius\ Functions of Partially Ordered Sets}

Finally, we review some basic concepts of the combinatorics of partially ordered sets (henceforth \emph{posets}).
Recall that a poset is a set $\sP$ equipped with a relation $\leq$ that satisfies reflexivity ($x \leq x$ for all $x \in \sP$), antisymmetry (if $x \leq y$ and $y \leq x$ then $x = y$), and transitivity (if $x \leq y$ and $y \leq z$, then $x \leq z$).
For the purposes of this paper, we will assume all posets are finite.
The following beautiful and vast generalization of the classical \Mobius\ function of number theory was introduced by Rota in \cite{Rota-1964-Foundations} (the reference's introduction gives a more nuanced discussion of the historical context at the time).

\begin{definition}[Poset \Mobius\ function]
    Let $\sP$ be a poset.
    Then, the \emph{\Mobius\ funcion} of $\sP$, denoted $\mu_{\sP}(x, y)$, is defined over all pairs $x \leq y$ by the relations
    \begin{align}
      \mu_{\sP}(x, x) &= 1, \\
      \sum_{x \leq y \leq z} \mu_{\sP}(x, y) &= 0 \text{ for all } x < z.
    \end{align}
\end{definition}

The key consequence of this definition is the following general inclusion-exclusion principle over posets, again a vast generalization of both the \Mobius\ inversion formula of number theory and the ordinary inclusion-exclusion principle over the poset of subsets of a set.
\begin{proposition}[Poset \Mobius\ inversion]
    If $\sP$ has a minimal element, $f: \sP \to \RR$ is given, and $g(x) \colonequals \sum_{y \leq x} f(y)$, then $f(x) = \sum_{y \leq x}\mu_{\sP}(y, x)g(y)$.
    Similarly, if $\sP$ has a maximal element and $g(x) \colonequals \sum_{y \geq x} f(y)$, then $f(x) = \sum_{y \geq x} \mu_{\sP}(x, y)g(y)$.
\end{proposition}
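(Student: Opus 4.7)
The plan is to verify the first inversion formula by direct substitution and interchange of summation; the second will follow by applying the first to the poset with the order reversed. Fixing $x \in \sP$, substitute the definition of $g$ into the right-hand side of the claimed identity and swap the order of summation:
\begin{equation}
\sum_{y \leq x} \mu_{\sP}(y, x)\, g(y) \;=\; \sum_{y \leq x} \mu_{\sP}(y, x) \sum_{z \leq y} f(z) \;=\; \sum_{z \leq x} f(z) \sum_{z \leq y \leq x} \mu_{\sP}(y, x),
\end{equation}
the interchange being valid since $\sP$ is finite. Everything then reduces to showing that the inner sum equals $\One\{z = x\}$: the case $z = x$ is immediate from $\mu_{\sP}(x, x) = 1$, while for $z < x$ I need the ``dual'' identity $\sum_{z \leq y \leq x} \mu_{\sP}(y, x) = 0$.

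The main obstacle is that this dual identity is not quite the defining recursion in the definition above, which fixes the \emph{first} argument of $\mu_{\sP}$ and sums over the second, whereas here the second argument is fixed and the first varies. Conceptually, both relations express that $\mu_{\sP}$ is a two-sided inverse of the zeta function $\zeta_{\sP}(x, y) \colonequals \One\{x \leq y\}$ in the incidence algebra of $\sP$: the stated recursion says $\zeta_{\sP}\mu_{\sP} = \mathrm{id}$, while the dual identity says $\mu_{\sP}\zeta_{\sP} = \mathrm{id}$. Fixing any linear extension of the partial order makes $\zeta_{\sP}$ an upper-triangular matrix with $1$'s on the diagonal, hence invertible with a two-sided inverse; the existence of one inverse then automatically forces the other, which is the dual recursion. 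A hands-on combinatorial alternative, should one prefer to avoid invoking the incidence algebra, is to prove the dual identity directly by strong induction on the cardinality of the interval $[z, x] = \{y \in \sP : z \leq y \leq x\}$, peeling off the $y = x$ term and rewriting each remaining $\mu_{\sP}(y, x)$ via the defining recursion applied on $[y, x]$ so that a telescoping cancellation occurs.

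With the dual identity in hand, the displayed computation collapses to $\sum_{z \leq x} f(z)\, \One\{z = x\} = f(x)$, proving the first formula. For the second, apply the first inversion to the poset $\sP'$ obtained by reversing the order on $\sP$: a maximal element of $\sP$ becomes a minimal element of $\sP'$, the relation ``$y \geq x$ in $\sP$'' coincides with ``$y \leq x$ in $\sP'$'', and the defining recursion forces $\mu_{\sP'}(y, x) = \mu_{\sP}(x, y)$ for all comparable pairs. Substituting these identifications into the already-proved first inversion formula yields the second, completing the proof.
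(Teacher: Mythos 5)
The paper states this proposition without proof, attributing it to Rota \cite{Rota-1964-Foundations}, so there is no ``paper proof'' to compare against; your argument is the standard one and it is correct.

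The one place a careless write-up would go wrong is exactly the place you flag: after substituting and swapping sums, the identity you need is $\sum_{z \leq y \leq x} \mu_{\sP}(y,x) = \One\{z = x\}$, in which the \emph{second} argument of $\mu_{\sP}$ is fixed and you sum over the first, whereas the paper's defining recursion fixes the first argument and sums over the second. These are the two equations $\zeta_{\sP}\mu_{\sP} = \mathrm{id}$ and $\mu_{\sP}\zeta_{\sP} = \mathrm{id}$ in the incidence algebra, and they are not literally the same statement. Both of your routes across this gap are sound: the incidence-algebra argument (choose a linear extension so $\zeta_{\sP}$ becomes unipotent upper-triangular, hence its one-sided inverse is two-sided) is the quickest, and the interval-cardinality induction works too---unwinding it, for $z < x$ one writes $\mu_{\sP}(y,x) = -\sum_{y \leq w < x}\mu_{\sP}(y,w)$ for each $y < x$, swaps the order to sum over $w$ first, and invokes the inductive hypothesis on the strictly smaller intervals $[z,w]$ so that only the $w = z$ term survives and cancels against $\mu_{\sP}(x,x)$.

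Your derivation of the second formula by passing to the order-reversed poset $\sP'$ is also correct, though it is worth noting that the identification $\mu_{\sP'}(y,x) = \mu_{\sP}(x,y)$ is itself a consequence of the dual identity you just established (the defining recursion for $\mu_{\sP'}$, unfolded under order reversal, \emph{is} $\zeta_{\sP}\mu_{\sP} = \mathrm{id}$), so the second clause genuinely depends on the first rather than being an independent symmetry. A minor remark: the hypothesis that $\sP$ have a minimal (resp.\ maximal) element is never used in your proof and is in fact unnecessary here, since the paper takes all posets to be finite; it is a standard safeguard for infinite posets where the sums defining $g$ might not converge.
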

\noindent
In addition to \cite{Rota-1964-Foundations}, the reader may consult, e.g., \cite{BG-1975-MobiusInversionCombinatorial} for some consequences of this result in enumerative combinatorics.

We give three examples of \Mobius\ functions of posets of partitions that will be useful in our calculations.
The first concerns subsets and corresponds to the classical inclusion-exclusion principle, and the latter two concern partitions of a set.
\begin{example}[Subsets]
    \label{ex:mobius-subset}
    Give $2^{[m]}$ the poset structure of $S \leq T$ whenever $S \subseteq T$.
    Write $\mu_{\mathsf{Subset}}(\cdot, \cdot)$ for the \Mobius\ function of $[m]$.
    Then,
    \begin{equation}
        \mu_{\mathsf{Subset}}(S, T) = (-1)^{|T| - |S|}.
    \end{equation}
\end{example}

\begin{example}[Partitions \cite{Rota-1964-Foundations}]
    \label{ex:mobius-part}
    Let $\Part([m])$ denote the poset of partitions of $[m]$, where $\pi \leq \rho$ whenever $\pi$ is a refinement of $\rho$.
    Write $\mu_{\Part}(\cdot, \cdot)$ for the \Mobius\ function of $\Part([m])$, eliding $m$ for the sake of brevity.
    Then,
    \begin{equation}
        \mu_{\Part}(\pi, \rho) = \prod_{A \in \rho}(-1)^{\#\{B \in \pi: B \subseteq A\} - 1} (\#\{B \in \pi: B \subseteq A\} - 1)!.
    \end{equation}
    In particular, letting $\oslash \colonequals \{\{1\}, \dots, \{m\}\}$ be the unique minimal element of $\Part([m])$, we have
    \begin{equation}
        \mu_{\Part}(\oslash, \rho) = \prod_{A \in \rho}(-1)^{|A| - 1} (|A| - 1)!.
    \end{equation}
\end{example}

\begin{example}[Partitions into even parts \cite{Sylvester-1976-ContinuousSpinIsing}]
    \label{ex:mobius-part-even}
    For $m \geq 2$ even, let $\EvenPart([m])$ denote the poset of partitions of $[m]$ into even parts, where $\pi \leq \rho$ whenever $\pi$ is a refinement of $\rho$, along with the additional formal element $\oslash$ with $\oslash \leq \pi$ for all partitions $\pi$.
    Again write $\mu_{\EvenPart}(\cdot, \cdot)$ for the associated \Mobius\ function, eliding $m$ for the sake of brevity.
    Let the sequence $\nu(k)$ for $k \geq 0$ be defined by the exponential generating function $\log \cosh(x) \equalscolon \sum_{k = 0}^{\infty} \frac{\nu(k)}{k!} x^k$, or equivalently $\tanh(x) \equalscolon \sum_{k = 0}^{\infty} \frac{\nu(k + 1)}{k!}x^k$. Then,
    \begin{equation}
        \mu_{\EvenPart}(\oslash, \rho) = -\prod_{A \in \rho} \nu(|A|).
    \end{equation}
    On the other hand, if $\pi > \oslash$, the $[\pi, \rho]$ is isomorphic to a poset of ordinary partitions, so we recover
    \begin{equation}
        \mu_{\EvenPart}(\pi, \rho) = \prod_{A \in \rho}(-1)^{\#\{B \in \pi: B \subseteq A\} - 1} (\#\{B \in \pi: B \subseteq A\} - 1)!.
    \end{equation}
\end{example}
\noindent
There is no convenient closed form for $\nu(k)$, but a combinatorial interpretation (up to sign) is given by $(-1)^k\nu(2k)$ counting the number of alternating permutations of $2k + 1$ elements.
This fact, as a generating function identity, is a classical result due to Andr\'{e} \cite{Andre-1881-AlternatingPermutations} who used it to derive the asymptotics of $\nu$; see also \cite{Stanley-2010-AlternatingPermutations} for a survey.
The connection with \Mobius\ functions was first observed in Sylvester's thesis \cite{Sylvester-1976-ContinuousSpinIsing}, and Stanley's subsequent work \cite{Stanley-1978-ExponentialStructures} explored further situations where the \Mobius\ function of a poset is given by an exponential generating function.
Some of our calculations in Section~\ref{sec:poset} indicate that the poset defined there, while not one of Stanley's ``exponential structures,'' is still amenable to analysis via exponential generating functions, suggesting that the results of \cite{Stanley-1978-ExponentialStructures} might be generalized to posets having more general self-similarity properties.

\section{Positivity-Preserving Extensions from Surrogate Tensors}
\label{sec:informal-deriv}

We now explain how we arrive at the extension formula \eqref{eq:lifting-prediction} for $\tEE_{\bM}$ (which we will abbreviate simply $\tEE$ in this section) in Definition~\ref{def:lifting}.
Most of the discussion in this section will not be fully mathematically rigorous; however, these heuristic calculations will give important context to our later proof techniques.

In Section~\ref{sec:deg2-assumptions}, we describe the informal assumptions on $\bM$ that we make for these calculations.
In Section~\ref{sec:initial-conj}, we review the conjecture of \cite{KB-2019-Degree4SK-Arxiv} that forms our starting point.
In the remaining sections, we carry out the relevant calculations, showing how we reach the extension formula.
We remind the reader that we will be making extensive use of connections between symmetric tensors and homogeneous polynomials, which we have introduced in Section~\ref{sec:prelim-symtens-poly}.

\subsection{Notations and Assumptions for Degree 2 Pseudomoment Matrix}
\label{sec:deg2-assumptions}

Suppose $\bM \in \RR^{N \times N}_{\sym}$ with $\bM \succeq \bm 0$ and $M_{ii} = 1$ for all $i \in [N]$.
We will assume this matrix is fixed for the remainder of Section~\ref{sec:informal-deriv}.
Since $\bM \succeq \bm 0$, we may further suppose that, for some $\bV \in \mathbb{R}^{r \times N}$ with $r \leq N$ and having full row rank, $\bM = \bV^{\top}\bV$.
In particular then, $\rank(\bM) = r$.
Since this number will come up repeatedly, we denote the ratio between the rank of $\bM$ and the ambient dimension by
\begin{equation}
    \delta \colonequals \frac{r}{N}.
\end{equation}
Writing $\bv_1, \dots, \bv_N \in \RR^r$ for the columns of $\bV$, we see that $\bM$ is the Gram matrix of the $\bv_i$, and since $\diag(\bM) = \one$, $\|\bv_i\|_2 = 1$ for all $i \in [N]$.

We now formulate our key assumption on $\bM$.
For the purposes of our derivations in this section, it will suffice to leave the ``approximate'' statements below vague.

\begin{assumption}[Informal]
    \label{ass:M}
    The following equivalent conditions on $\bM$ hold:
\begin{enumerate}
\item All non-zero eigenvalues of $\bM$, of which there are $r$, are approximately equal.
\item $\bM$ is approximately equal to a projection matrix to an $r$-dimensional subspace of $\RR^N$, multiplied by $\delta^{-1}$.
\item $\bV \bV^{\top} \approx \delta^{-1} \bm I_r$.
\item The vectors $\bv_1, \dots, \bv_N$ approximately form a \emph{unit-norm tight frame} (see, e.g., \cite{Waldron-2018-FiniteTightFrames}).
\end{enumerate}
\end{assumption}
\noindent
We will see that, to derive the extension of $\bM$, we may reason as if the approximate equalities are exact and obtain a sound result.

In light of Condition 3 above, it will be useful to define a normalized version of $\bV$, whose \emph{rows} have approximately unit norm: we let $\what{\bV} \colonequals \delta^{1/2}\bV$, so that $\what{\bV}\what{\bV}^{\top} \approx \bm I_r$.
The use of this matrix will be that it can be extended, by adding rows, to an orthogonal matrix (this is equivalent to the \emph{Naimark complement} construction in frame theory; see Section 2.8 of \cite{Waldron-2018-FiniteTightFrames}).

\subsection{Initial Conjecture of \cite{KB-2019-Degree4SK-Arxiv}: Conditioning Gaussian Symmetric Tensors}
\label{sec:initial-conj}

We first review a construction conjectured in Section 5 of the paper \cite{KB-2019-Degree4SK-Arxiv} of the author's with Bandeira, which gives a way to define a pseudoexpectation that satisfies many of the necessary constraints.
This description is relatively straightforward, but leaves the actual pseudoexpectation values implicit, making it difficult to verify that all constraints are satisfied or to proceed towards a proof.
Our goal in the remainder of this section will be to derive those values from the following more conceptual description.

The initial idea is to build pseudoexpectation values as second moments of the entries of a random symmetric tensor.
That is, for degree $2d$, we build $\tEE$ using a random $\bG^{(d)} \in \mathsf{Sym}^d(\RR^N)$ and taking, for multisets of indices $S, T \in \sM_d([N])$,
\begin{equation}
\label{eq:pseudoexpectation-prediction}
\text{`` }\tEE\left[\prod_{i \in S} x_{i} \prod_{j \in T} x_{j}\right] \colonequals \EE\left[G_{S}^{(d)}G_{T}^{(d)}\right].\text{ ''}
\end{equation}
At an intuitive level, if $\bx$ has the pseudodistribution encoded by $\tEE$, then one should think of identifying
\begin{equation}
    \label{eq:Gk-xk-intuition}
    \text{`` }\bG^{(d)} = \bx^{\otimes d}.\text{ ''}
\end{equation}
The key point is that, while we cannot model the values of $\tEE$ as being the moments of an actual (``non-pseudo'') random vector $\bx$, we can model them as being the moments of the tensors $\bG^{(d)}$, which are surrogates for the tensor powers of $\bx$.

The most immediate problem with \eqref{eq:pseudoexpectation-prediction} is that, \emph{a priori}, $\tEE$ is not well-defined: for that, the quantity $\EE[G_{S}^{(d)}G_{T}^{(d)}]$ must be invariant under permutations that leave $S + T$ unchanged.
Forcing $\bG^{(d)}$ to be a \emph{symmetric} tensor, as we have indicated above, ensures that some of these permutation equalities will hold (namely, those that leave both $S$ and $T$ unchanged).
It is far from obvious, however, how to design $\bG^{(d)}$ so that the other permutation equalities hold (even approximately).
We will only be able to verify that those equalities do hold at the end of our heuristic discussion, in Section~\ref{sec:informal:simplifying}.
To work more precisely with pseudoexpectations-like operators that are not guaranteed to satisfy these symmetry conditions, we use the following notation.
\begin{definition}[Bilinear pseudoexpectation]
    \label{def:bilinear-pe}
    For a bilinear operator, for some ambient degree $d$, $\tEE: \RR[x_1, \dots, x_N]_{\leq d} \times \RR[x_1, \dots, x_N]_{\leq d} \to \RR$, we denote its action by $\tEE(p(\bx), q(\bx))$.
    We use parentheses to distinguish it from a linear operator $\tEE: \RR[x_1, \dots, x_N]_{\leq 2d} \to \RR$, whose action we denote with brackets, $\tEE[p(\bx)]$.
\end{definition}

In exchange for the difficulty of ensuring these permutation equalities, our construction makes it easy to ensure that the other constraints are satisfied.
Most importantly, a pseudoexpectation of the form \eqref{eq:pseudoexpectation-prediction} is, by construction, positive semidefinite.
Also, assuming that the permutation equalities mentioned above hold, the remaining linear constraints on a pseudoexpectation will be satisfied provided that the following conditions hold:
\begin{align}
    \EE[(G_{S}^{(d)})^2] &= 1 \text{ for all } S \in \sM_d([N]), \label{eq:cond-norm} \\
    G^{(0)}_{\emptyset} &= 1 \label{eq:G0}, \\
    G_{S^{\prime} + \{i, i\}}^{(d)} &= G^{(d - 2)}_{S^{\prime}} \text{ for all } S^{\prime} \in \sM_{d -2}([N]) \text{ and } i \in [N]. \label{eq:cond-consistency}
\end{align}
The first constraint directly expresses the hypercube constraints $x_i^2 = 1$, the second the normalization constraint $\tEE[1] = 1$, and the third a ``compatibility'' condition among the $\bG^{(d)}$ that mimics the compatibility conditions that must hold among the $\bx^{\otimes d}$ for $\bx$ a hypercube vector.

Besides these general considerations, we must also build $\tEE$ that extends the prescribed degree~2 values, $\tEE[x_ix_j] = M_{ij}$.
We isolate one specific consequence of fixing these values, which plays an important role when $\bM$ is rank-deficient.
Intuitively, the idea is that if $\bM = \tEE[\bx\bx^{\top}]$ has row space $\mathsf{row}(\bM)$, then $\bx \in \mathsf{row}(\bM)$ with probability 1, and any vector formed by ``freezing'' all but one index of $\bx^{\otimes d}$ should also belong to $\mathsf{row}(\bM)$---indeed, any such vector is just $\pm \bx$.
Importing this constraint to the $\bG^{(d)}$, if we let $\bG^{(d)}[S^{\prime}, :] \colonequals (G^{(d)}_{S^{\prime} + \{i\}})_{i = 1}^N$, then we must have:
\begin{equation}
    \bG^{(d)}[S^{\prime}, :] \in \mathsf{row}(\bM) \text{ for all } S^{\prime} \in \sM_{d - 1}([N]).
    \label{eq:cond-spectral}
\end{equation}
This is a system of linear conditions on $\bG^{(d)}$: it may be written $\langle \bw_i, \bG^{(d)}[S^{\prime}, :]\rangle = \bm 0$ for $\bw_i$ a basis of $\mathsf{row}(\bM)^{\perp} = \ker(\bM)$.

We now have four desiderata for $\bG^{(d)}$: the family of quadratic normalization conditions \eqref{eq:cond-norm}, the ``base case'' \eqref{eq:G0}, and the two families of linear conditions \eqref{eq:cond-consistency} and \eqref{eq:cond-spectral}.
To describe a suitable law for $\bG^{(d)}$, we begin with the isotropic gaussian symmetric tensor, $\mathcal{G}^{\tens}_d(N, \sigma_d^2)$ (see Section~\ref{sec:prelim-symtens-poly}).
We leave a free scaling parameter $\sigma_d^2$ to select later so that \eqref{eq:cond-norm} is satisfied.
Then, we condition this gaussian distribution to satisfy the linear constraints \eqref{eq:cond-consistency} and \eqref{eq:cond-spectral}.
More precisely, we define $\bG^{(d)}$ iteratively for $d \geq 0$ as follows.

\begin{leftbar}
\vspace{-1.35em}
\paragraph{Pseudoexpectation Prediction (Tensors)}
Let $\bw_1, \dots, \bw_{N - r}$ be a basis of $\ker(\bM)$. Define a jointly gaussian collection of tensors $\bG^{(d)} \in \Sym^{d}(\RR^N)$ as follows.
\begin{enumerate}
    \item $\bG^{(0)} \in \mathsf{Sym}^0(\RR^N)$ is a scalar, with one entry $G^{(0)}_{\emptyset} = 1$.
    \item For $d \geq 1$, $\bG^{(d)}$ has the law of $\mathcal{G}_d^{\tens}(N, \sigma_d^2)$, conditional on the following two properties:
    \begin{enumerate}
        \item If $d \geq 2$, then for all $S^{\prime} \in \sM_{d - 2}([N])$ and $i \in [N]$, $G_{S^{\prime} + \{i, i\}}^{(d)} = G_{S^{\prime}}^{(d - 2)}$.
        \item For all $S^{\prime} \in \sM_{d - 1}([N])$ and $i \in [N - r]$, $\langle \bw_i, \bG^{(d)}[S^{\prime}, :]\rangle = 0$.
    \end{enumerate}
\end{enumerate}
Then, for $S, T \in \sM_d([N])$, set
\begin{equation}
\tEE(\bx^{S}, \bx^T) \colonequals \EE\left[G_{S}^{(d)}G_{T}^{(d)}\right].
\label{eq:tEE-prediction-2}
\end{equation}
\vspace{-1em}
\end{leftbar}

If it is possible to choose $\sigma_d^2$ such that the right-hand side is approximately invariant under permutations that preserve $S + T$ and such that $\EE[(G_{S}^{(d)})^2] \approx 1$ for all $S, T \in \sM_{d}([N])$, then this gives an approximate construction of a degree $2d$ pseudoexpectation extending $\bM$.

\subsection{Conditioning by Translating to Homogeneous Polynomials}
\label{sec:informal:condition-poly}

We now apply the linear-algebraic rules for conditioning gaussian vectors to compute the means and covariances of the entries of each $\bG^{(d)}$, which lets us write down the right-hand side of \eqref{eq:tEE-prediction-2} and check if it indeed defines an approximately valid pseudoexpectation.

It turns out that it is easier to interpret this calculation in terms of homogeneous polynomials rather than symmetric tensors.
Passing each $\bG^{(d)}$ through the isometry between $\Sym^d(\RR^N)$ and $\RR[y_1, \dots, y_N]^{\hom}_d$ described in Section~\ref{sec:prelim-symtens-poly}, we find an equivalent construction in terms of random polynomials $g^{(d)} \in \RR[y_1, \dots, y_N]^{\hom}_d$.
Though this may seem unnatural at first---bizarrely, we will be defining $\tEE$ for each degree in terms of the correlations of various \emph{coefficients} of a random polynomial---we will see that viewing the extraction of coefficients in terms of the apolar inner product brings forth an important connection to multiharmonic polynomials, allowing us to use a variant of the ideas in Section~\ref{sec:prelim-ideal-harmonic} to complete the calculation.

\begin{leftbar}
\vspace{-1.35em}
\paragraph{Pseudoexpectation Prediction (Polynomials)}
Let $\bw_1, \dots, \bw_{N - r}$ be a basis of $\ker(\bM)$.
Define a jointly gaussian collection of polynomials $g^{(d)} \in \RR[y_1, \dots, y_N]^{\hom}_d$ as follows.
\begin{enumerate}
    \item $g^{(0)}(\by) = 1$.
    \item For $d \geq 1$, $g^{(d)}$ has the law of $\mathcal{G}_d^{\poly}(N, \sigma_d^2)$, conditional on the following two properties:
    \begin{enumerate}
        \item If $d \geq 2$, then for all $i \in [N]$ and $S^{\prime} \in \sM_{d - 2}([N])$, $\langle g^{(d)}, \by^{S^{\prime}}y_i^2\rangle_{\circ} = \langle g^{(d - 2)}, \by^{S^{\prime}}\rangle_{\circ}$.
        \item For all $S^{\prime} \in \sM_{d - 1}([N])$ and $i \in [N - r]$, $\langle g^{(d)}, \by^{S^{\prime}} \langle \bw_i, \by \rangle\rangle_{\circ} = 0$.
    \end{enumerate}
\end{enumerate}
Then, for $S, T \in \sM_d([N])$, set
    \begin{equation}
\tEE(\bx^{S}, \bx^{T}) \colonequals \EE\left[\langle g^{(d)}, \by^{S}\rangle_{\circ}\, \langle g^{(d)}, \by^{T}\rangle_{\circ}\right].
\label{eq:tEE-prediction-poly}
\end{equation}
\vspace{-1em}
\end{leftbar}

The most immediate advantage of reframing our prediction in this way is that it gives us access to the clarifying concepts of ``divisibility'' and ``differentiation,'' whose role is obscured by the previous symmetric tensor language.
Moreover, these are nicely compatible with the apolar inner product per Proposition~\ref{prop:apolar-adjointness}.
Indeed, thanks to these connections it is now possible to carry out our calculation completely in terms of the concepts from Section~\ref{sec:prelim-symtens-poly}.
We briefly outline the reasoning below and present the final result.
More detailed justification is given in Appendix~\ref{app:pf:lem:poly-cond}.

Roughly speaking, conditioning on Property (b) above projects $g^{(d)}$ to the subspace of polynomials depending only on $\what{\bV}\by$.
On the other hand, by Proposition~\ref{prop:G-poly-orth-invariant}, $g^{(d)}$ is invariant under compositions with orthogonal matrices, and by our assumptions on $\what{\bV}$, it is merely the upper $r \times N$ block of some orthogonal matrix.
From this, one may compute that, if $g_1^{(d)}$ has the law of $\sG^{\poly}_d(N, \sigma_d^2)$ conditional on Property (b), then the collection of coefficients $(\langle g_1^{(d)}, \by^S\rangle_{\circ})_{S \in \sM_d([N])}$ has the same law as the collection $(\langle h^{(d)}, (\what{\bV}^{\top} \bz)^S\rangle_{\circ} )_{S \in \sM_d([N])}$ for $h^{(d)}(\bz) \sim \sG^{\poly}_d(r, \sigma_d^2)$.
(We use $\by = (y_1, \dots, y_N)$ for formal variables of dimension $N$ and $\bz = (z_1, \dots, z_r)$ for formal variables of dimension $r$.)
Thus conditioning on Property (b) is merely a dimensionality reduction of the canonical gaussian polynomial, in a suitable basis.

Conditioning $h^{(d)}$ as above on Property (a) brings in the ideal and harmonic subspaces discussed in Section~\ref{sec:prelim-ideal-harmonic}.
Let us define
\begin{align}
  V_{\sI} &\colonequals \left\{\sum_{i = 1}^N \langle \bv_i, \bz \rangle^2 q_i(\bz) : q_i \in \RR[z_1, \dots, z_r]^{\hom}_{d - 2}\right\}, \\
  V_{\sH} &\colonequals \left\{q \in \RR[z_1, \dots, z_r]^{\hom}_d: \langle \bv_i, \bm\partial \rangle^2 q = 0 \text{ for all } i \in [N]\right\},
\end{align}
instantiations of the constructions from Proposition~\ref{prop:apolar-decomp} for the specific polynomials $\{\langle \bv_i, \bz \rangle^2\}_{i = 1}^N$.
Conditioning on Property (a) fixes the component of $h$ belonging to $V_{\sI}$, leaving a fluctuating part equal to the apolar projection of $h^{(d)}$ to $V_{\sH}$.
Completing this calculation gives the following.

\begin{lemma}
    \label{lem:poly-cond}
    Suppose that the conditions of Assumption~\ref{ass:M} hold exactly.
    Let $P_{\sI}$ and $P_{\sH}$ be the apolar projections to $V_{\sI}$ and $V_{\sH}$, respectively.
    For each $S \in \sM_d([N])$, let $r_S \in \RR[x_1, \dots, x_N]_d^{\hom}$ be a polynomial having
    \begin{align}
      P_{\sI}[(\bV^{\top}\bz)^{S}] &= r_{S}(\bV^{\top}\bz),  \label{eq:r-poly-cond-1} \\
      r_{S}(\bx) &= \sum_{i = 1}^N x_i^{2d_i} r_{S, i}(\bx) \text{ for } d_i \geq 1, r_{S, i} \in \RR[x_1, \dots, x_N]_{d - 2d_i}^{\hom}, \label{eq:r-poly-cond-2}
                   \intertext{and further define}
                   r_{S}^{\downarrow}(\bx) &= \sum_{i = 1}^N r_{S, i}(\bx) \in \RR[x_1, \dots, x_N]_{\leq d - 2},
    \end{align}
    where we emphasize that $r_S^{\downarrow}$ is \emph{not} necessarily homogeneous.
    Set $h_S(\bx) \colonequals \bx^S - r_S(\bx)$, whereby $P_{\sH}[(\bV^{\top}\bz)^S] = h_S(\bV^{\top}\bz)$.
    Then, the right-hand side of \eqref{eq:tEE-prediction-poly} is
    \begin{equation}
        \label{eq:tEE-prediction-decomp}
        \tEE(\bx^S, \bx^T) = \underbrace{\tEE(r_{S}^{\downarrow}(\bx),  r_{T}^{\downarrow}(\bx))}_{\text{``ideal'' term}} + \underbrace{\sigma_d^2\delta^d \cdot \langle h_S(\bV^{\top}\bz), h_T(\bV^{\top}\bz)\rangle_{\circ}}_{\text{``harmonic'' term}}.
    \end{equation}
\end{lemma}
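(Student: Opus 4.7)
The plan is to compute the conditional second moments in two stages corresponding to constraints (b) and (a), translating the apolar calculations from the $\by$-variables to the $\bz$-variables so that Proposition~\ref{prop:apolar-decomp} applies directly to the ideal/harmonic decomposition $V_{\sI}\oplus V_{\sH}$.

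For the first stage, the constraint $\langle g^{(d)},\by^{S'}\langle\bw_i,\by\rangle\rangle_\circ=0$ expresses that $g^{(d)}$ is apolar to the ideal generated by the linear forms $\{\langle\bw_i,\by\rangle\}_{i=1}^{N-r}$. By Proposition~\ref{prop:apolar-decomp} the apolar complement of this ideal consists of polynomials $p$ with $\langle\bw_i,\bm\partial\rangle p=0$ for every $i$---equivalently, those depending only on $\what{\bV}\by$, since under Assumption~\ref{ass:M} the $\bw_i$ together with the rows of $\what{\bV}$ form an orthonormal basis of $\RR^N$. Conditioning the isotropic gaussian $\mathcal{G}^{\poly}_d(N,\sigma_d^2)$ onto this complement and invoking Proposition~\ref{prop:G-poly-orth-invariant} yields $g^{(d)}(\by)=h^{(d)}(\what{\bV}\by)$ with $h^{(d)}\sim\mathcal{G}^{\poly}_d(r,\sigma_d^2)$. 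A direct monomial computation verifies the substitution identity $\langle h(\what{\bV}\by),\by^S\rangle_{\circ,\by}=\langle h(\bz),(\what{\bV}^\top\bz)^S\rangle_{\circ,\bz}$, which together with $(\what{\bV}^\top\bz)^S=\delta^{d/2}(\bV^\top\bz)^S$ gives $\langle g^{(d)},\by^S\rangle_\circ = \delta^{d/2}\langle h^{(d)},(\bV^\top\bz)^S\rangle_{\circ,\bz}$ and rewrites constraint (a) as $\delta\,\langle h^{(d)},\langle\bv_i,\bz\rangle^2(\bV^\top\bz)^{S'}\rangle_\circ=\langle h^{(d-2)},(\bV^\top\bz)^{S'}\rangle_\circ$.

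The translated constraint (a) specifies the apolar pairings of $h^{(d)}$ with a spanning family of $V_{\sI}$ (since the $(\bV^\top\bz)^{S'}$ for $S'\in\sM_{d-2}([N])$ span $\RR[z_1,\dots,z_r]^{\hom}_{d-2}$ by the full row rank of $\bV$), so conditioning on (a) pins $P_{\sI}[h^{(d)}]$ to a polynomial $h^{(d)}_{\sI}$ deterministic given $h^{(d-2)}$, while leaving $P_{\sH}[h^{(d)}]$ as an isotropic gaussian on $V_\sH$ with variance $\sigma_d^2$. Using the apolar decomposition $(\bV^\top\bz)^S = r_S(\bV^\top\bz)+h_S(\bV^\top\bz)$ from Proposition~\ref{prop:apolar-decomp} and expanding $\delta^d\,\EE[\langle h^{(d)},(\bV^\top\bz)^S\rangle_\circ\langle h^{(d)},(\bV^\top\bz)^T\rangle_\circ]$ bilinearly gives four terms; the two cross terms vanish because $P_{\sH}[h^{(d)}]$ is (conditionally) centered and apolar to $V_\sI$, and the harmonic--harmonic term equals $\sigma_d^2\delta^d\langle h_S(\bV^\top\bz),h_T(\bV^\top\bz)\rangle_\circ$ by the covariance structure of $\mathcal{G}^{\poly}_d(r,\sigma_d^2)$, recovering the harmonic contribution in \eqref{eq:tEE-prediction-decomp}.

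The main obstacle is identifying the ideal--ideal term $\delta^d\langle h^{(d)}_{\sI},r_S(\bV^\top\bz)\rangle_\circ\langle h^{(d)}_{\sI},r_T(\bV^\top\bz)\rangle_\circ$ with $\tEE(r_S^\downarrow,r_T^\downarrow)$. Substituting the decomposition \eqref{eq:r-poly-cond-2} and iterating the translated constraint (a) a total of $d_i$ times peels off $\langle\bv_i,\bz\rangle^{2d_i}$ from each summand, yielding the key reduction
\[
\delta^{d/2}\,\langle h^{(d)}_{\sI},\langle\bv_i,\bz\rangle^{2d_i}r_{S,i}(\bV^\top\bz)\rangle_\circ \;=\; \langle g^{(d-2d_i)},r_{S,i}(\by)\rangle_\circ,
\]
with the analogous reduction for $T$. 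Multiplying and taking expectation gives a sum over $i,j$ of $\EE[\langle g^{(d-2d_i)},r_{S,i}\rangle_\circ\langle g^{(d-2e_j)},r_{T,j}\rangle_\circ]$, each of which matches $\tEE(r_{S,i},r_{T,j})$ inductively at the appropriate lower degree, with mismatched degrees reconciled by padding with $y_i^2$ factors via the original property (a)---this realizes the implicit ``$x_i^{2d_i}\to 1$'' reduction that distinguishes $r_S^\downarrow$ from $r_S$. Bilinearity of $\tEE$ then assembles the $i,j$ sum into $\tEE(r_S^\downarrow,r_T^\downarrow)$; the delicate bookkeeping is verifying that the $\delta^{d/2}$ factors absorbed on each side precisely cancel the $\delta$'s consumed by the iteration, and that lifting back to a common ambient degree via padding with hypercube factors is consistent across different values of $d_i,e_j$.
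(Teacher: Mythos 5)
Your proposal is correct and follows essentially the same route as the paper: translate the conditioning into the apolar/polynomial language, reduce constraint (b) to a dimensionality reduction onto polynomials in $\what{\bV}\by$ via orthogonal invariance, split $h^{(d)}$ into ideal and harmonic parts to handle constraint (a), and identify the harmonic--harmonic term via isotropy. The only execution difference is in the ideal--ideal term: the paper applies the adjointness once via its least-squares lifting operator $L_{\sI}$ to descend exactly two degrees and then discards the leftover $x_i^{2d_i-2}$ factors using the ideal-annihilation property of $\tEE$ at degree $2(d-2)$, whereas you iterate the translated constraint (a) all $d_i$ times down to degree $d-2d_i$ per summand and then re-pad with $y_\ell^2$ factors to a common degree; both are valid, and the $\delta$-bookkeeping in your version does check out ($\delta^{d_i}$ from iteration against $\delta^{(d-2d_i)/2}$ from converting $h^{(d-2d_i)}$ back to $g^{(d-2d_i)}$, together giving the $\delta^{-d/2}$ that cancels the $\delta^{d/2}$ factor per side).
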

\noindent
Thus our prediction for the degree $2d$ pseudoexpectation values decomposes according to the ideal-harmonic decomposition of the input; the ideal term depends only on the pseudoexpectation values of strictly lower degree, while the harmonic term is a new contribution that is, in a suitable spectral sense, orthogonal to the ideal term.
In this way, one may think of building up the spectral structure of the pseudomoment matrices of $\tEE$ by repeatedly lifting the pseudomoment matrix two degrees lower into a higher-dimensional domain, and then adding a new component orthogonal to the existing one.

\begin{remark}[Multiharmonic basis and block diagonalization]
    \label{rem:heuristic-block-diag}
We also mention a different way to view this result that will be more directly useful in our proofs later.
Defining $h_S^{\downarrow}(\bx) \colonequals \bx^S - r_S^{\downarrow}(\bx)$, note that we expect $\tEE(h_S^{\downarrow}(\bx), r_S^{\downarrow}(\bx)) = \tEE(h_S(\bx), r_S(\bx)) = 0$ since the ideal and harmonic subspaces are apolar.
Thus, we also expect to have
\begin{equation}
    \label{eq:heuristic-gram-mx}
    \tEE(h_S^{\downarrow}(\bx), h_T^{\downarrow}(\bx)) = \sigma_d^2\delta^d \cdot \left\langle h_S(\bV^{\top}\bz), h_T(\bV^{\top}\bz) \right\rangle_{\circ}.
\end{equation}
The $h_S^{\downarrow}(\bx)$ are a basis modulo the idea generated by the constraint polynomials $x_i^2 - 1$, which we call the \emph{multiharmonic basis}.
Since the inner product on the right-hand side is zero unless $|S| = |T|$, this basis achieves a \emph{block diagonalization} of the pseudomoment matrix.
This idea turns out to be easier to use to give a proof of positivity of $\tEE$ than the ideal-harmonic decomposition of \eqref{eq:tEE-prediction-decomp}.
\end{remark}

\subsection{Heuristic for Projecting to Multiharmonic Polynomials}

We have reduced our task to understanding how a polynomial of the form $(\bV^{\top} \bz)^{S} = \prod_{i \in S} \langle \bv_i, \bz \rangle$ decomposes into an ideal part of a linear combination of multiples of $\langle \bv_i, \bz \rangle^2$ and a harmonic part that is a zero of any linear combination of the differential operators $\langle \bv_i, \bm\partial \rangle^2$.
In this section, we develop a heuristic method to compute these projections.
Since $(\bV^{\top}\bz)^{S}$ is the sum of the two projections, it suffices to compute either one.
We will work with the projection to the multiharmonic subspace $V_{\sH}$, because that is where we may use the intuition discussed in Section~\ref{sec:prelim-ideal-harmonic}.
We warn in advance that this portion of the derivation is the least mathematically rigorous; our goal is only to obtain a plausible prediction for the projections in question.

Recall that the basic theme discussed in Section~\ref{sec:prelim-ideal-harmonic} was that the projection $P_{\sH}[q]$ may be computed by applying the differential operator $q(\bm\partial)$ to a suitable ``Green's function'' for the system of PDEs $\langle \bv_i, \bm\partial \rangle^2 f = 0$, and then taking a suitable ``Kelvin transform'' or ``inversion'' of the output.
Moreover, in Proposition~\ref{prop:basis-harmonic-proj}, we saw that when $N = r$ and the $\bv_i$ form an orthonormal basis, then one may take the Green's function $\phi(\bz) = \prod_{i = 1}^r \langle \bv_i, \bz \rangle$ and the Kelvin transform $K[f](\bz) = \phi(\bz)f(\sum_{i = 1}^r \langle \bv_i, \bz \rangle^{-1} \bv_i)$.
Crucially, we were able to build the mapping $\bz \mapsto \sum_{i = 1}^r \langle \bv_i, \bz \rangle^{-1} \bv_i$ which coordinatewise inverts the coefficients $\langle \bv_i, \bz \rangle$.

The difference in our setting is that $N > r$, and the $\bv_i$ form an overcomplete set.
In particular, in the Kelvin transform, it is not guaranteed that, given some $\bz$, there exists a $\bz^{\prime}$ such that $\bV^{\top} \bz^{\prime}$ is the coordinatewise reciprocal of $\bV^{\top}\bz$, so a genuine ``inversion'' like in the case of an orthonormal basis may be impossible.\footnote{We also remark that it is not possible to apply the results of \cite{Clerc-2000-KelvinTransform} mentioned earlier directly, since the  $\bv_i\bv_i^{\top}$ typically do not span a Jordan algebra: $\frac{1}{2}(\bv_i\bv_i^{\top} \bv_j\bv_j^{\top} + \bv_j\bv_j^{\top}\bv_i\bv_i^{\top})$ need not be a linear combination of the $\bv_k\bv_k^{\top}$.}
Nonetheless, let us continue with the first part of the calculation by analogy with Proposition~\ref{prop:basis-harmonic-proj}.
Define
\begin{equation}
    \phi(\bz) \colonequals \prod_{i = 1}^N \langle \bv_i, \bz \rangle.
\end{equation}
Then, one may compute inductively by the product rule that
\begin{equation}
  (\bV^{\top}\bm\partial)^S\phi(\bz) = \left(\sum_{\sigma \in \Part(S)} \prod_{A \in \sigma}(-1)^{|A| - 1}(|A| - 1)!\left\{\sum_{a = 1}^N \prod_{j \in A}M_{aj} \cdot \langle \bv_a, \bz \rangle^{-|A|}\right\}\right) \phi(\bz).
\end{equation}
(That various summations over partitions arise in such calculations is well-known; see, e.g., \cite{Hardy-2006-CombinatoricsPartialDerivatives} for a detailed discussion.)
We now take a leap of faith: despite the preceding caveats, let us suppose we could make a fictitious mapping $\widetilde{F}: \RR^r \to \RR^r$ that would invert the values of each $\langle \bv_i, \bz \rangle$, i.e., $\langle \bv_i, \widetilde{F}(\bz)\rangle = \langle \bv_i, \bz \rangle^{-1}$ for each $i \in [N]$.
Then, we would define a Kelvin transform (also fictitious) by $\widetilde{K}[f](\bz) = f(\widetilde{F}(\bz)) \cdot \phi(\bz)$.
Using this, and noting that $\phi(\widetilde{F}(\bz)) = \phi(\bz)^{-1}$, we predict
\begin{align}
  P_{\sH}\left[(\bV^{\top}\bz)^S\right]
  &= \widetilde{K}\left[(\bV^{\top} \bm\partial)^S\phi\right](\bz) \nonumber \\
  &= \sum_{\sigma \in \Part(S)}\prod_{A \in \sigma}(|A| - 1)!\left\{\sum_{a = 1}^N \prod_{j \in A}M_{aj} \cdot \langle \bv_a, \bz \rangle^{|A|}\right\}.
\end{align}

We make one adjustment to this prediction: when $|A| = 1$ with $A = \{i\}$, then the inner summation is $\sum_{a = 1}^N M_{ai}\langle \bv_a, \bz \rangle = (\bM\bV^{\top}\bz)_i = (\bV^{\top}\bV\bV^{\top}\bz)_i \approx \delta^{-1}\langle \bv_i, \bz \rangle$.
However, the factor of $\delta^{-1}$ here appears to be superfluous; one way to confirm this is to compare this prediction for $|S| = 2$ with the direct calculations of $P_{\sH}$ for $d = 2$ in \cite{BK-2018-GramianDescription} or \cite{KB-2019-Degree4SK-Arxiv}.
Thus we omit this factor in our final prediction.

We are left with the following prediction for the harmonic projection.
First, it will be useful to set notation for the polynomials occuring inside the summation.
\begin{definition}
For $S \in \sM([N])$ with $S \neq \emptyset$, $m \in \NN$, and $\bx \in \RR^N$, define
\begin{align}
  q_{S,m}(\bx) &\colonequals \left\{\begin{array}{ll} x_i^m & \text{if } |S| = 1 \text{ with } S = \{i\}, \\ \sum_{a = 1}^N \prod_{j \in T} M_{aj} x_a^m & \text{otherwise},\end{array}\right. \\
  q_{S}(\bx) &\colonequals q_{S, |S|}(\bx).
\end{align}
\end{definition}
\noindent
We then predict
\begin{equation}
P_{\sH}\left[(\bV^{\top}\bz)^S\right] \approx \sum_{\sigma \in \mathsf{Part}(S)} \prod_{A \in \sigma} (-1)^{|A| - 1}(|A| - 1)! \, q_{A}(\bV^{\top}\bz).
\end{equation}
By the orthogonality of the ideal and harmonic subspaces, we also immediately obtain a prediction for the orthogonal projection to $V_{\mathcal{I}}$:
\begin{equation}
    P_{\sI}\left[(\bV^{\top}\bz)^S \right] = (\bV^{\top}\bz)^S - P_{\sH}\left[(\bV^{\top}\bz)^S\right] \approx -\sum_{\substack{\sigma \in \mathsf{Part}(S) \\ |\sigma| < |S|}} \prod_{A \in \sigma} (-1)^{|A| - 1}(|A| - 1)! \, q_{A}(\bV^{\top}\bz).
\end{equation}

We therefore obtain the following corresponding predictions for the polynomials $h_S(\bx)$ and $r_S(\bx)$ appearing in Lemma~\ref{lem:poly-cond}:
\begin{align}
    h_S(\bx) &\approx \sum_{\substack{\sigma \in \mathsf{Part}(S)}} \prod_{A \in \sigma} (-1)^{|A| - 1}(|A| - 1)! \, q_{A}(\bx), \label{eq:hS-approx} \\
    r_{S}(\bx) &\approx -\sum_{\substack{\sigma \in \mathsf{Part}(S) \\ |\sigma| < |S|}}\prod_{A \in \sigma} (-1)^{|A| - 1}(|A| - 1)! \, q_{A}(\bx).
\end{align}

The ``lowered'' polynomials $r_{S}^{\downarrow}(\bx)$ may also be defined by simply reducing the powers of $x_i$ appearing in $q_{A}(\bx)$.
Here again, however, we make a slight adjustment: when $|A| = 2$ with $A = \{i, j\}$, we would compute $q_{A, 0}(\bx) = \sum_{a = 1}^NM_{ai}M_{aj} = (\bM^2)_{ij} \approx \delta^{-1} M_{ij}$.
This factor of $\delta^{-1}$ again appears to be superfluous, with the same justification as before.
Removing it, we make the following definition.
\begin{definition}
    For $S \in \sM([N])$ with $S \neq \emptyset$ and $|S|$ even, define
    \begin{equation}
        q_S^{\downarrow}(\bx) = q_S^{\downarrow}(\bx; \bM) \colonequals \left\{ \begin{array}{ll} q_{S, 1}(\bx) & \text{if } |S| \text{ is odd}, \\ M_{ij} & \text{if } |S| = 2 \text{ with } S = \{i, j\},\\ q_{S, 0}(\bx) & \text{if } |S| \geq 4 \text{ is even}. \end{array}\right.
    \end{equation}
\end{definition}
\noindent
With this adjustment, we obtain the prediction
\begin{equation}
    r_{S}^{\downarrow}(\bx) \approx -\sum_{\substack{\sigma \in \mathsf{Part}(S) \\ |\sigma| < |S|}} \prod_{A \in \sigma} (-1)^{|A| - 1}(|A| - 1)! \, q_{S^{\prime}}^{\downarrow}(\bx). \label{eq:rS-down-approx}
\end{equation}

\subsection{Simplifying to the Final Prediction}
\label{sec:informal:simplifying}

Substituting the approximations \eqref{eq:hS-approx} and \eqref{eq:rS-down-approx} into the pseudoexpectation expression we obtained in Lemma~\ref{lem:poly-cond}, we are now equipped with a fully explicit heuristic recursion for $\tEE$, up to the choice of the constants $\sigma_d^2$.
Let us demonstrate how, with some further heuristic steps, this recovers our Definition~\ref{def:lifting} for $d = 1$ and $d = 2$.
For $d = 1$ we expect a sanity check recovering that $\tEE[x_ix_j] = M_{ij}$, while for $d = 2$ we expect to recover the formula studied in \cite{BK-2018-GramianDescription,KB-2019-Degree4SK-Arxiv}.

\begin{example}[$d = 1$]
    If $|S| = 1$ with $S = \{i\}$, then there is no partition $\sigma$ of $S$ with $|\sigma| < |S|$, so $h_{\{i\}}(\bx) = x_i$ and $r_{\{i\}}^{\downarrow}(\bx) = 0$.
    Thus, if $S = \{i\}$ and $T = \{j\}$, we are left with simply
    \begin{equation}
      \tEE(x_i, x_j)
      = \sigma_1^2\delta \cdot \bigg\langle \langle \bv_i, \bz \rangle, \langle \bv_j, \bz \rangle \bigg\rangle_{\circ}
      = \sigma_1^2 \delta \cdot \langle \bv_i, \bv_j \rangle
      = \sigma_1^2 \delta \cdot M_{ij}
    \end{equation}
    which upon taking $\sigma_1^2 = \delta^{-1}$ gives $\tEE(x_i, x_j) = M_{ij}$, as expected.
\end{example}

\begin{example}[$d = 2$]
    \label{ex:k-equals-2}
    If $S = \{i, j\}$ then the only partition $\sigma$ of $S$ with $|\sigma| < |S|$ is the partition $\sigma = \{\{i, j\}\}$.
    Therefore,
    \begin{align}
      h_{\{i, j\}}(\bx) &= x_ix_j - (2 - 1)! \cdot q_{\{i, j\}}(\bx) \nonumber \\
      &= x_ix_j - \sum_{a = 1}^N M_{ai}M_{aj} x_a^2, \\
      r_{\{i, j\}}^{\downarrow}(\bx) &= (2 - 1)! \cdot Z^{\{i, j\}} = M_{ij}.
    \end{align}
    If, furthermore, $T = \{k, \ell\}$, then we compute
    \begin{align}
      &\langle h_S(\bV^{\top}\bz), h_T(\bV^{\top}\bz) \rangle_{\circ} \nonumber \\
      &\hspace{1cm} = \left\langle \langle \bv_i, \bz\rangle \langle \bv_j, \bz\rangle - \sum_{a = 1}^N M_{ai}M_{aj}\langle \bv_a, \bz \rangle^2, \langle \bv_k, \bz\rangle \langle \bv_{\ell}, \bz\rangle - \sum_{a = 1}^N M_{ak}M_{a\ell}\langle \bv_a, \bz \rangle^2\right\rangle_{\circ} \nonumber \\
      &\hspace{1cm} = \frac{1}{2}M_{ik}M_{j\ell} + \frac{1}{2}M_{i\ell}M_{jk} - 2\sum_{a = 1}^NM_{ai}M_{aj}M_{ak}M_{a\ell} + \sum_{a = 1}^N\sum_{b = 1}^N M_{ai}M_{aj}M_{bk}M_{b\ell} M_{ab}^2 \nonumber
        \intertext{and if we make the approximation that the only important contributions from the final sum are when $a = b$, then we obtain}
      &\hspace{1cm} \approx \frac{1}{2}M_{ik}M_{j\ell} + \frac{1}{2}M_{i\ell}M_{jk} - \sum_{a = 1}^NM_{ai}M_{aj}M_{ak}M_{a\ell}.
    \end{align}
    \noindent
    Substituting the above into Lemma~\ref{lem:poly-cond}, we compute
    \begin{align}
      \tEE(x_ix_j, x_kx_{\ell}) &= M_{ij}M_{k\ell} + \sigma_2^2\delta^2 \cdot \left\langle \langle \bv_i, \bz\rangle \langle \bv_j, \bz\rangle - \sum_{a = 1}^N M_{ai}M_{aj}x_a^2, \langle \bv_k, \bz\rangle \langle \bv_{\ell}, \bz\rangle - \sum_{a = 1}^N M_{ak}M_{a\ell}x_a^2\right\rangle_{\circ} \nonumber \\
      &= M_{ij}M_{k\ell} + \sigma_2^2\delta^2 \left(\frac{1}{2}M_{ik}M_{j\ell} + \frac{1}{2}M_{i\ell}M_{jk} - \sum_{a = 1}^NM_{ai}M_{aj}M_{ak}M_{a\ell}\right) \nonumber
        \intertext{where we see that the only value of $\sigma_2^2$ that will give both permutation symmetry and the normalization conditions $\tEE[x_ix_jx_ix_j] \approx 1$ is $\sigma_2^2 = 2\delta^{-2}$, which gives}
        &=M_{ij}M_{k\ell} + M_{ik}M_{j\ell} + M_{i\ell}M_{jk} - 2\sum_{a = 1}^NM_{ai}M_{aj}M_{ak}M_{a\ell},
    \end{align}
    the formula obtained for the deterministic structured case of equiangular tight frames in \cite{BK-2018-GramianDescription} and in general using a similar derivation in \cite{KB-2019-Degree4SK-Arxiv}.
\end{example}

One may continue these (increasingly tedious) calculations for larger $d$ to attempt to find a pattern in the resulting polynomials of $\bM$.
This is how we arrive at the values given in Theorem~\ref{thm:lifting}, and we will sketch the basic idea of how to close the recursion below.
We emphasize, though, that in these heuristic calculations it is important to be careful with variants of the step above where we restricted the double summation to indices $a = b$ (indeed, that step in the above derivation is not always valid; as we will detail in Section~\ref{sec:future}, this is actually the crux of the difficulty in applying our method to low-rank rather than high-rank $\bM$).
This type of operation is valid only when the difference between the \emph{matrices} containing the given summations as their entries has negligible operator norm---a subtle condition.
We gloss this point for now, but give much attention to these considerations in the technical proof details in Section~\ref{sec:pf:lifting} and in Appendix~\ref{app:cgm-tools}.

Reasoning diagramatically, increasing the degree generates new diagrams whose CGSs occur in the pseudomoments in two ways.
First, it turns out that all of the CGSs that arise from the inner product $\langle h_S(\bV^{\top}\bz), h_T(\bV^{\top}\bz) \rangle_{\circ}$ may be fully ``collapsed'' to a CGS with only one summation (or a product of summations over subsets of indices), as we have done above.
These contribute diagrams that are forests of \emph{stars}: each connected component is either two $\bullet$ vertices connected to one another, or a $\square$ vertex connected directly to several leaves.
Second, in computing $\tEE[r_S^{\downarrow}(\bx)r_T^{\downarrow}(\bx)]$, we join the diagrams of odd partition parts to the leaves of existing diagrams, a process we illustrate later in Figure~\ref{fig:odd-merge}.
This recursion generates the good forests: any good forest is either a forest of stars, or stars on some subsets of leaves with the odd subsets attached to a smaller good forest.

Thus, assuming the above intuition about collapsing summations is sound, we expect the pseudomoments to be a linear combination of CGSs of good forests.
Taking this for granted, if the pseudomoments are to be symmetric under permutations of the indices, then the coefficients $\mu(F)$ should depend only on the unlabelled shape of the graph $F$, not on the leaf labels $\kappa$.
Making this assumption, each successive $\mu(F)$ may be expressed in a cumbersome but explicit combinatorial form, eventually letting us predict the formula for $\mu(F)$.

We emphasize the pleasant interplay of diagrammatic and analytic ideas here.
As we mentioned after Lemma~\ref{lem:poly-cond}, the decomposition of the pseudomoments into the ideal and harmonic parts expresses the spectral structure of the pseudomoment matrices, which involves a sequence of alternating ``lift lower-degree pseudomoment matrix'' and ``add orthogonal harmonic part'' steps.
These correspond precisely to the sequence of alternating ``compose partitions with old forests'' and ``add new star forests'' steps generating good forests recursively.

\begin{remark}[Setting $\sigma_d^2$]
    The further calculations we have alluded to above confirm the pattern in the examples $d = 1, 2$ that the correct choice of the free scaling factor is $\sigma_d^2 \colonequals d!\, \delta^{-d}$.
    With this choice, we note that the harmonic term may be written more compactly as
    \begin{equation}
        \sigma_d^2 \delta^d \cdot \left\langle h_S(\bV^{\top}\bz), h_T(\bV^{\top}\bz) \right\rangle_{\circ} = \left\langle h_S(\bV^{\top}\bz), h_T(\bV^{\top}\bz) \right\rangle_{\partial},
    \end{equation}
    where $\langle \cdot, \cdot \rangle_{\partial}$ is the rescaled apolar inner product from Definition~\ref{def:apolar-partial}.
\end{remark}

\section{Partial Ordering and \Mobius\ Function of $\sF(m)$}
\label{sec:poset}

In the previous section, we found a way to compute the coefficients $\mu(F)$ attached to each forest diagram $F$ in Definition~\ref{def:lifting}.
Calculating examples, one is led to conjecture the formula given in Definition~\ref{def:mu-F} for these quantities.
We will eventually give the rather difficult justification for that equality in the course of proving Theorem~\ref{thm:lifting} in Section~\ref{sec:pf:lifting}.
For now, we prove the simpler interpretation of these constants that will guide other parts of that proof: the $\mu(F)$ give the \Mobius\ function of a certain partial ordering on the CGS terms of the pseudoexpectation.

\subsection{Compositional Poset Structure}

We first introduce the poset structure that is associated with $\mu(F)$.
We call this a \emph{compositional} structure because it is organized according to which forests are obtained by ``composing'' one forest with another by inserting smaller forests at each $\square$ vertex.
\begin{definition}[Compositional poset]
    \label{def:compositional-poset}
    Suppose $F \in \sF(m)$.
    For each $v \in V^{\square}(F)$, write $E(v)$ for the set of edges incident with $v$, and fix $\kappa_v: E(v) \to [|E(v)|]$ a labelling of $E(v)$.
    Suppose that for each $v \in V^{\square}(F)$, we are given $F_v \in \sF(\deg(v))$.
    Write $F[(F_v)_{v \in V^{\square}(F)}] \in \sF(m)$ for the forest formed as follows.
    Begin with the disjoint union of all $F_v$ for $v \in V^{\square}(F)$ and all pairs in $F$.
    Denote the leaves of $F_v$ in this disjoint union by $\ell_{v, 1}, \dots, \ell_{v, \deg(v)}$.
    Then, merge the edges ending at $\ell_{v, i}$ and $\ell_{w, j}$ whenever $\kappa_v^{-1}(i) = \kappa_w^{-1}(j)$.
    Whenever $\kappa_v^{-1}(i)$ terminates in a leaf $x$ of $F$, give $\ell_{v, i}$ the label that $x$ has in $F$.
    Finally, whenever $x$ belongs to a pair of $F$, give $x$ in the disjoint union the same label that it has in $F$.
    
    Let the \emph{compositional relation} $\leq$ on $\sF(m)$ be defined by setting $F^{\prime} \leq F$ if, for each $v \in V^{\square}(F)$, there exists $F_v \in \sF(\deg(v))$ such that $F^{\prime} = F[(F_v)_{v \in V^{\square}(F)}]$.
\end{definition}
\noindent
It is straightforward to check that this relation does not depend on the auxiliary orderings $\kappa_v$ used in the definition.
While the notation used to describe the compositional relation above is somewhat heavy, we emphasize that it is conceptually quite intuitive, and give an illustration in Figure~\ref{fig:poset}.

We give the following additional definition before continuing to the basic properties of the resulting poset.

\begin{definition}[Star tree]
    For $m \geq 4$ an even number, we denote by $S_m$ the \emph{star tree} on $m$ leaves, consisting of a single $\square$ vertex connected to $m$ $\bullet$ vertices.
    For $m = 2$, denote by $S_2$ the tree with no $\square$ vertices and two $\bullet$ vertices connected to one another.
    Note that all labellings of $S_m$ are isomorphic, so there is a unique labelled star tree in $\sF(m)$.
\end{definition}

\begin{proposition}
    $\sF(m)$ endowed with the relation $\leq$ forms a poset.
    The unique maximal element in $\sF(m)$ is $S_m$, while any perfect matching in $\sF(m)$ is a minimal element.
\end{proposition}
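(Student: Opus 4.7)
The plan is to verify reflexivity, transitivity, and antisymmetry in turn, and then read off the claims about extremal elements. Reflexivity follows by choosing $F_v = S_{\deg(v)}$ for every $v \in V^{\square}(F)$: a star inserted at $v$ has a unique internal vertex that inherits the role of $v$ and all of its incident edges after the merging and labelling step in Definition~\ref{def:compositional-poset}, so $F[(S_{\deg(v)})_v] = F$. Transitivity is then a routine associativity check: given $F'' = F'[(F'_{v'})_{v'}]$ with $F' = F[(F_v)_v]$, the internal vertices of $F_v$ are naturally identified with a subset of $V^{\square}(F')$, and inserting the corresponding $F'_{v'}$'s into $F_v$ produces a forest $F^*_v \in \sF(\deg(v))$ that witnesses $F'' = F[(F^*_v)_v]$.

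The main content is antisymmetry. I would drive it via the statistic $C(F) \colonequals \sum_{v \in V^{\square}(F)} (\deg(v) - 2)$. A handshake count shows that for any $G \in \sF(n)$ with $p$ pair components and $t$ tree components, $C(G) = n - 2p - 2t$, so $C(G) \leq n - 2$ with equality if and only if $G$ is connected, i.e., $G \in \sT(n)$. Because the merging in Definition~\ref{def:compositional-poset} only happens at leaves of the inserted $F_v$'s, each $w \in V^{\square}(F_v)$ keeps its degree from inside $F_v$, yielding the conservation identity $C(F[(F_v)_v]) = \sum_{v \in V^{\square}(F)} C(F_v)$ and hence $C(F[(F_v)_v]) \leq C(F)$. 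Assuming $F \leq F' \leq F$, applying this in both directions gives $C(F) = C(F')$, which forces every inserted $F_v$ to be connected, i.e., a tree in $\sT(\deg(v))$.

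To upgrade ``each $F_v$ is a tree'' to ``each $F_v$ is the star $S_{\deg(v)}$'', I would count internal vertices. Any tree in $\sT(n)$ with $n \geq 4$ has at least one internal vertex, so $|V^{\square}(F')| = \sum_v |V^{\square}(F_v)| \geq |V^{\square}(F)|$, and the symmetric inequality coming from $F \leq F'$ forces equality, hence $|V^{\square}(F_v)| = 1$ for every $v$. A tree on $\deg(v)$ leaves with a single internal vertex is exactly $S_{\deg(v)}$, so the reflexivity computation yields $F' = F[(S_{\deg(v)})_v] = F$. For the extremal elements: when $m \geq 4$, $S_m$ has a single internal vertex of degree $m$ and no internal-internal edges, so $S_m[(F)] = F$ for every $F \in \sF(m)$; this gives $F \leq S_m$ universally, making $S_m$ the unique maximum (for $m = 2$, $\sF(m) = \{S_2\}$ trivially). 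Any perfect matching $F \in \sF(m)$ has $V^{\square}(F) = \emptyset$, so the only admissible composition is the empty insertion, which returns $F$ itself; hence no forest lies strictly below it.

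The hard part, and the only step that is not essentially bookkeeping against the definition, is finding the right conservation law. The statistic $C$ is natural because composition partitions each $c(v) = \deg(v) - 2$ into the sub-multiset $\{\deg_{F_v}(w) - 2 : w \in V^{\square}(F_v)\}$, making $C$ exactly conserved when each $F_v$ is a tree and strictly decreased whenever any $F_v$ breaks into multiple components. By contrast, the raw count $|V^{\square}(\cdot)|$ can move in either direction under composition---inserting a matching decreases it while inserting a non-star tree increases it---so it cannot by itself power the antisymmetry argument.
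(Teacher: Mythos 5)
Your proof is correct. The paper states this proposition without proof, treating it as a direct verification from Definition~\ref{def:compositional-poset}, so there is no argument in the paper to compare against. Your invariant $C(F) = \sum_{v \in V^{\square}(F)}(\deg(v)-2)$ is a good choice for antisymmetry; it is worth noting that for $G \in \sF(n)$ one has $C(G) = n - 2c$ where $c$ is the number of connected components of $G$, so an equivalent phrasing of your conservation principle is simply that composition can only split components apart and never merge them, with equality in $C$ forcing every inserted $F_v$ to be connected. Your two-step squeeze---first forcing each $F_v$ to be a connected tree via $C$, then forcing each to be a star via $|V^{\square}|$---is sound, and your handling of the extremal elements, via $V^{\square}(\cdot)=\emptyset$ for perfect matchings and $F \leq S_m$ universally for the star, is exactly right.
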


To work with the \Mobius\ function, it will be more convenient to define a version of this poset augmented with a unique minimal element, as follows (this is the same manipulation as is convenient to use, for example, in the analysis of the poset of partitions into sets of even size; see \cite{Stanley-1978-ExponentialStructures}).
\begin{definition}
    Let $\sFbar(m)$ consist of $\sF(m)$ with an additional formal element denoted $\oslash$.
    We extend the poset structure of $\sF(m)$ to $\sFbar(m)$ by setting $\oslash \leq F$ for all $F \in \sFbar(m)$.
    When we wish to distinguish $\oslash$ from the elements of $\sF(m)$, we call the latter \emph{proper forests}.
\end{definition}

\begin{figure}
    \begin{minipage}[b]{0.4\textwidth}
      \includegraphics[scale=0.7]{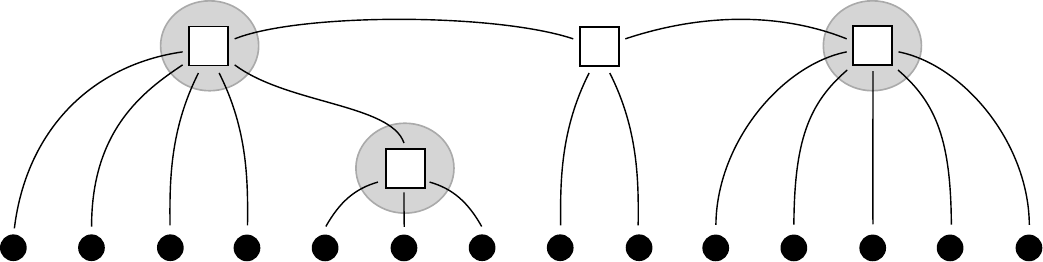}
  \end{minipage}
  \hspace{1.1cm}
  \begin{minipage}[b]{0.055\textwidth}
      $\mathlarger{\mathlarger{\geq}}$
      \vspace{0.7cm}
  \end{minipage}
  \begin{minipage}[b]{0.4\textwidth}
        \includegraphics[scale=0.7]{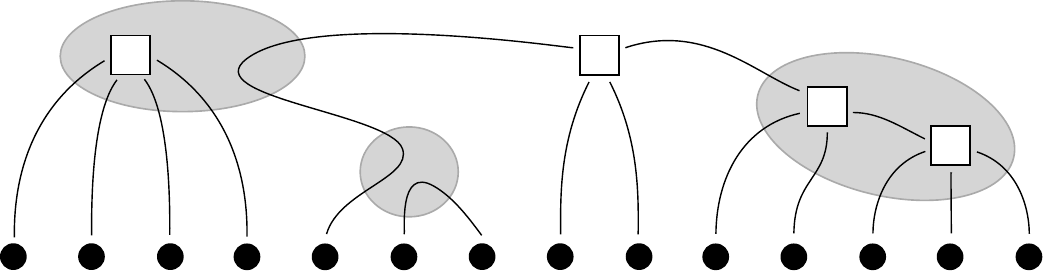}
  \end{minipage} 
    \caption{\textbf{Ordering in the compositional poset $\sF(m)$.} We give an example of the ordering relation between two forests in $\sF(14)$, highlighting the ``composing forests'' at each $\square$ vertex of the greater forest that witness this relation.}
    \label{fig:poset}
\end{figure}

\subsection{\Mobius\ Function Derivation}

The main result of this section is the following.
\begin{lemma}
    \label{lem:mobius-fn}
    Let $\mu_{\sFbar}(\cdot, \cdot)$ be the \Mobius\ function of $\sFbar(m)$ (where we elide $m$ for the sake of brevity, as it is implied by the arguments).
    Then $\mu_{\sFbar}(\oslash, \oslash) = 1$, and for $F \in \sF(m)$,
    \begin{equation}
        \mu_{\sFbar}(\oslash, F) = (-1)^{|V^{\square}(F)| + 1} \prod_{v \in V^{\square}(F)} (\deg(v) - 2)! = -\mu(F),
    \end{equation}
    where $\mu(\cdot)$ on the right-hand side is the quantity from Definition~\ref{def:mu-F}.
\end{lemma}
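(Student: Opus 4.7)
The plan is to proceed by induction on the poset $\sFbar(m)$. The base case $F = \oslash$ is immediate from the normalization of the \Mobius\ function, and for proper $F$ the defining identity $\sum_{G \in [\oslash, F]} \mu_{\sFbar}(\oslash, G) = 0$ combined with the inductive hypothesis $\mu_{\sFbar}(\oslash, G) = -\mu(G)$ for proper $G < F$ reduces the claim to the single identity, to be proved for every proper $F \in \sF(m)$,
\begin{equation}
    \sum_{G \in \sF(m),\, G \leq F} \mu(G) = 1. \tag{$\star$}
\end{equation}

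Next I would exploit the product structure of the interval. Unpacking Definition~\ref{def:compositional-poset}, the composition map $(F_v)_{v \in V^\square(F)} \mapsto F[(F_v)_v]$ is a bijection from $\prod_{v \in V^\square(F)} \sF(\deg(v))$ onto $\{G \in \sF(m) : G \leq F\}$, whose inverse reads off the sub-forest of $G$ that replaced each $\square$ vertex of $F$. Because the $\bullet$-$\bullet$ pairs of $F$ carry no internal vertex and the composition preserves the degree of every internal vertex of each $F_v$, we have $V^\square(F[(F_v)]) = \bigsqcup_v V^\square(F_v)$, so $\mu$ is multiplicative under composition: $\mu(F[(F_v)]) = \prod_v \mu(F_v)$. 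Thus $(\star)$ factors as $\prod_{v \in V^\square(F)} \sum_{F_v \in \sF(\deg(v))} \mu(F_v)$, and it suffices to establish the ``star case''
\begin{equation}
    \sum_{F' \in \sF(d)} \mu(F') = 1 \qquad \text{for every even } d \geq 2. \tag{$\star\star$}
\end{equation}

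For $(\star\star)$ I would pass to exponential generating functions. Set $\tau(2k) := \sum_{T \in \sT(2k)} \mu(T)$ and $T(x) := \sum_{k \geq 1} \tau(2k)\, x^{2k}/(2k)!$. A parity argument on $\sum_v \deg(v) = 2|E|$ (using that all internal degrees are even) shows every connected good tree has an even number of leaves, so the exponential formula gives $\exp(T(x)) = \sum_{m \geq 0}(x^m/m!) \sum_{F \in \sF(m)} \mu(F)$, and $(\star\star)$ becomes equivalent to $T(x) = \log\cosh(x)$. To identify $T$ I would introduce the ``wing'' generating function: call a \emph{wing} any structure obtained from a leaf-rooted good tree by deleting the root leaf, leaving a dangling half-edge, and let $W(x)$ be its EGF weighted by $\mu$ of the ambient tree. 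On the one hand leaf-rooted good trees have EGF $T'(x)$, so $W(x) = T'(x)$. On the other hand, decomposing a wing by whether its dangling edge ends in a leaf or in an internal vertex $v$ of even degree $d \geq 4$ supporting $d - 1$ further unordered wings gives, after the cancellation $-(d-2)!/(d-1)! = -1/(d-1)$,
\[ W(x) = x + \sum_{d \geq 4,\, d\text{ even}} (-(d-2)!)\cdot \frac{W(x)^{d-1}}{(d-1)!} = x - \sum_{k \geq 2}\frac{W(x)^{2k-1}}{2k-1}. \]
Since $\sum_{k \geq 1} y^{2k-1}/(2k-1)$ is the formal series $\tanh^{-1}(y)$, this rearranges to $\tanh^{-1}(W(x)) = x$, hence $W(x) = \tanh(x)$; integrating $T'(x) = \tanh(x)$ with $T(0) = 0$ gives $T(x) = \log\cosh(x)$, completing the proof.

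The main obstacle is the bookkeeping for the wing decomposition: verifying that the composition map is genuinely a bijection (not merely surjective) to justify the product factorization, and tracking the symmetry factors carefully so that the coefficient $1/(d-1)$ on $W(x)^{d-1}$ emerges cleanly from $-(d-2)!$ and the $(d-1)!$ of unordered children. Once these structural points are settled, the identification $W = \tanh$ and the integration $\int \tanh\, dx = \log \cosh x$ are routine.
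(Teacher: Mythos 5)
Your proposal is correct, and its high-level skeleton is the one the paper uses: reduce the lemma to a combinatorial identity, factorize that identity over the $\square$ vertices of $F$ using the product structure of the interval $(\oslash, F]$ (Proposition~\ref{prop:poset-star-factorization}), and dispatch the resulting star-tree case by exponential generating functions. But the way you handle the star case is logically cleaner and differs from the paper's Proposition~\ref{prop:mobius-fn-star}, so the comparison is worth spelling out.

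The paper computes the unknown value $\mu_{\sFbar}(\oslash, S_m)$ directly from the \Mobius-function axioms. This requires two auxiliary ingredients: first, an intermediate Claim, $\sum_{T \in \sT(m)}\mu_{\sFbar}(\oslash, T) = -\nu(m)$, established by \Mobius\ inversion over the lattice $\EvenPart([m])$; and second, a recursion through rooted odd trees that turns that Claim into the functional equation $2\tanh x = x - F'(\tanh x)$ for the EGF $F$ of the star values, which is then solved. You instead verify that the proposed closed form $\mu(F)$ \emph{satisfies} the defining \Mobius\ relation $\sum_{G \in \sF(m),\, G\le F}\mu(G) = 1$, observe (correctly) that both this sum and the multiplicative structure of $\mu$ factorize over $V^\square(F)$ by the compositional bijection, and so reduce to the single scalar identity $\sum_{F' \in \sF(d)}\mu(F') = 1$ for all even $d$, equivalently $\exp T(x) = \cosh x$. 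Your ``wing'' species is essentially the paper's rooted odd tree (a good tree rooted at the location of a deleted leaf), and the relation $\tanh^{-1}(W) = x$ you derive is the same recursion, but because you are verifying a formula rather than solving for an unknown, you never need the $\EvenPart$ inversion step at all. This both shortens the argument and makes the logical flow transparent: perfect matchings give the base case trivially, and then every step is an axiom check.

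One point worth flagging so it doesn't look like a gap: you should state explicitly that the reduction from the lemma to $\sum_{G\le F}\mu(G)=1$ is an induction over the poset $\sFbar(m)$ (equivalently, over $|V^\square(F)|$), since $\mu_{\sFbar}$ is defined recursively from the bottom; the perfect matchings are the base case, covered automatically by the empty product. You clearly intend this, and it is correct, but the uninitiated reader will want to see it written. Similarly, the bijectivity of the composition map is asserted in the paper (Proposition~\ref{prop:poset-star-factorization}) without proof, and your inverse-map description is exactly the right justification; spelling it out does not detract from either argument. Everything else --- the parity observation that connected good trees have evenly many leaves, the exponential formula, the symmetry factor $1/(d-1)!$ cancelling against $-(d-2)!$, and $\int \tanh = \log\cosh$ --- checks out.
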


We proceed in two steps: first, and what is the main part of the argument, we compute the \Mobius\ function of a star tree.
Then, we show that the \Mobius\ function of a general forest factorizes into that of the star trees corresponding to each of its internal vertices.
The following ancillary definition will be useful both here and in a later proof.

\begin{definition}[Rooted odd tree]
    \label{def:T-root}
    For odd $\ell \geq 3$, define the set of \emph{rooted odd trees} on $\ell$ leaves, denoted $\sT^{\root}(\ell)$, to be the set of rooted trees where the number of children of each internal vertex is odd and at least 3, and where the leaves are labelled by $[\ell]$.
    Define a map $e: \sT^{\root}(m - 1) \to \sT(m)$ that attaches the leaf labeled $m$ to the root.
\end{definition}
\noindent
While it is formally easier to express this definition in terms of rooted trees, it may be intuitively clearer to think of a rooted odd tree as still being a good tree, only having one distinguished ``stub'' leaf, whose lone neighbor is viewed as the root.

\begin{proposition}
    \label{prop:mobius-fn-star}
    $\mu_{\sFbar}(\oslash, S_2) = -1$. For all even $m \geq 4$, $\mu_{\sFbar}(\oslash, S_{m}) = (m - 2)!$.
\end{proposition}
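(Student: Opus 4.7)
I would treat the two claims separately.  For the base case $\mu_{\sFbar}(\oslash, S_2) = -1$, note that $S_2$ has no $\square$-vertex, so the compositional substitution of Definition~\ref{def:compositional-poset} is trivial and $[\oslash, S_2] = \{\oslash, S_2\}$ in $\sFbar(2)$; the Möbius defining relation then immediately yields $\mu_{\sFbar}(\oslash, S_2) = -\mu_{\sFbar}(\oslash, \oslash) = -1$.

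For the star case $\mu_{\sFbar}(\oslash, S_m) = (m - 2)!$ with $m \geq 4$ even, the starting observation is that $S_m$ is the unique maximum of $\sF(m)$: every $F \in \sF(m)$ arises as $S_m[F]$ by substituting $F$ at the single $\square$-vertex of $S_m$.  Consequently $[\oslash, S_m] = \sFbar(m)$, and the defining relation at $S_m$ reads
\[
\mu_{\sFbar}(\oslash, S_m) \;=\; -1 \;-\; \sum_{F \in \sF(m),\, F \neq S_m} \mu_{\sFbar}(\oslash, F).
\]
The plan is to pin down each $\mu_{\sFbar}(\oslash, F)$ with $F \neq S_m$ by strong induction on $m$ together with a factorization fact.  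The proper part of $[\oslash, F]$ is in poset bijection with $\prod_{v \in V^\square(F)}\sF(\deg(v))$ via $G = F[(F_v)_v]$, and compositional substitution preserves the degree at each surviving internal vertex.  Setting $\widetilde{\mu}(F) := (-1)^{|V^\square(F)|}\prod_v(\deg(v) - 2)!$, this preservation immediately gives the multiplicativity $\widetilde{\mu}(F[(F_v)_v]) = \prod_v \widetilde{\mu}(F_v)$, and so the Möbius defining relation at $F$ reduces to the product identity
\[
\prod_{v \in V^\square(F)} \Bigl(\sum_{F_v \in \sF(\deg(v))} \widetilde{\mu}(F_v)\Bigr) \;=\; 1.
\]
When $F \neq S_m$ every internal degree satisfies $\deg(v) < m$, so the inductive hypothesis forces each factor to be $1$ and confirms $\mu_{\sFbar}(\oslash, F) = -\widetilde{\mu}(F)$.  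Substituting back and evaluating by an exponential generating function argument — using multiplicativity of $\widetilde{\mu}$ over connected components to translate into $\exp(T(x)) = \cosh(x)$ with $T(x)$ the tree-weighted EGF, and matching the resulting coefficients against the $\nu(k)$ of Example~\ref{ex:mobius-part-even} — leaves the closed form $(m - 2)!$.

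The main obstacle I foresee is the apparent circularity at $F = S_m$ itself: the factor-identity argument breaks because the unique internal vertex of $S_m$ has degree exactly $m$, so the factor one needs is the very identity for $m$ that is still to be proved.  The resolution is to order the induction carefully and treat $S_m$ as the final step: for every $F \in \sF(m) \setminus \{S_m\}$ the formula $\mu_{\sFbar}(\oslash, F) = -\widetilde\mu(F)$ is established by the partial-order induction above, and only then do we invoke the Möbius relation at $S_m$ to solve for its value.  The remaining factor-identity at $m$ is precisely what this last relation determines, and the star value $(m - 2)!$ then reads off cleanly from the generating function $\log\cosh(x)$ whose coefficients are the $\nu(k)$.
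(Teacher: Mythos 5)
Your strategy is sound and does take a genuinely different route from the paper's, so let me first compare them and then name the one step that needs tightening.

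The paper proves the star value first: it shows $\sum_{T \in \sT(m)}\mu_{\sFbar}(\oslash,T) = -\nu(m)$ via M\"obius inversion over $\Part([m];\even)$ (using the \emph{abstract} M\"obius relation at each $S_{m'}$, $m' \leq m$), then peels off the star contribution with the rooted-odd-tree recursion and a functional equation $2\tanh x = x - F'(\tanh x)$. Only afterwards does Lemma~\ref{lem:mobius-fn} verify the closed form $\mu(F)$ on all of $\sF(m)$ via the interval factorization. You invert this: you use Proposition~\ref{prop:poset-star-factorization} and strong induction to pin $\mu_{\sFbar}(\oslash,F) = -\widetilde\mu(F)$ for every $F \neq S_m$ (all internal degrees are then $< m$), and then solve the defining relation at the top, $\mu_{\sFbar}(\oslash, S_m) = -1 + \sum_{F \neq S_m}\widetilde\mu(F)$. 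This is cleaner in one respect: it proves Proposition~\ref{prop:mobius-fn-star} and Lemma~\ref{lem:mobius-fn} simultaneously in a single induction, rather than in two stages.

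However there is a real gap at the very last step, and it is not quite where you flagged it. After solving the top relation you still need the standalone combinatorial identity $\sum_{F \in \sF(m)}\widetilde\mu(F) = 1$ (equivalently $\mu_{\sFbar}(\oslash,S_m) = -\widetilde\mu(S_m) = (m-2)!$), and this is \emph{not} a coefficient-matching exercise: the coefficients of $\log\cosh$ are the $\nu(k)$, not $(m-2)!$, and the claim $T(x) = \log\cosh(x)$ for the tree-weighted EGF $T$ is precisely as hard as what the paper calls its ``Claim.'' You cannot import the paper's proof of that Claim verbatim, because it relies on the abstract relation $\sum_{F\in\sF(m)}\mu_{\sFbar}(\oslash,F) = -1$ at the top degree $m$; substituting the closed form $\widetilde\mu$ there is exactly the identity you are trying to establish, so the argument would be circular. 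The correct way to close your version of the argument is the rooted-odd-tree recursion (Definition~\ref{def:T-root}): for the weight $\rho(T') = \prod_{v}\bigl(-(c(v)-1)!\bigr)$ on $T' \in \sT^{\root}(\ell)$ (so $\widetilde\mu(e(T')) = \rho(T')$), the root decomposition gives the EGF equation $R(x) = x - \bigl(\tanh^{-1}(R(x)) - R(x)\bigr)$, hence $R = \tanh$, hence $t_m := \sum_{T\in\sT(m)}\widetilde\mu(T) = r_{m-1} = \nu(m)$ and $T(x) = \log\cosh(x)$. With that in hand your remaining computation $-1 + \sum_{F\neq S_m}\widetilde\mu(F) = -1 + \bigl(1 - \widetilde\mu(S_m)\bigr) = (m-2)!$ goes through, and your proof is complete.
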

\begin{proof}
    We first establish the following preliminary identity.
    \begin{equation}
        \text{\emph{Claim:} }\sum_{T \in \sT(m)} \mu_{\sFbar}(\oslash, T) = -\nu(m) = \mu_{\EvenPart}(\oslash, \{[m]\}).
    \end{equation}
    We proceed using a common idiom of \Mobius\ inversion arguments, similar to, e.g., counting labelled connected graphs (see Example 2 in Section 4 of \cite{BG-1975-MobiusInversionCombinatorial}).
    For $F \in \sF(m)$, let $\conn(F) \in \Part([m]; \even)$ denote the partition of leaves into those belonging to each connected component of $F$.
    For $\pi \in \Part([m]; \even)$, define
    \begin{equation}
        b(\pi) \colonequals -\sum_{\substack{F \in \sF([m]) \\ \conn(F) = \pi}} \mu(\oslash, F),
    \end{equation}
    and $b(\oslash) = 0$.
    Then, the quantity we are interested in is $-b(\{[m]\})$.
    By \Mobius\ inversion,
    \begin{equation}
        b(\{[m]\}) = \sum_{\pi \in \Part(m; \even) \cup \{\oslash\}} \left(\sum_{\oslash \leq \rho \leq \pi} b(\rho)\right) \mu_{\EvenPart}(\pi, \{[m]\}).
    \end{equation}
    The inner summation is zero if $\pi = \oslash$, and otherwise equals
    \begin{equation}
        \sum_{\oslash \leq \rho \leq \pi} b(\rho) = \prod_{A \in \pi}\left(\sum_{F \in \sF([|A|])} -\mu(F)\right) = 1 \text{ if } \pi \neq \oslash.
    \end{equation}
    Therefore, we may continue
    \begin{equation}
        b(\{[m]\}) = \sum_{\pi \in \Part(m; \even)}  \mu_{\EvenPart}(\pi, \{[m]\}) = \sum_{\pi \in \Part(m; \even)} (-1)^{|\pi| - 1}(|\pi| - 1)!.
    \end{equation}
    By the composition formula for exponential generating functions, this means
    \begin{equation}
        \sum_{k \geq 0} \frac{b(\{[2k]\})}{(2k)!}x^{2k} = \log\left(1 + (\cosh(x) - 1)\right) = \log \cosh(x),
    \end{equation}
    and the result follows by equating coefficients.

    Next, we relate the trees of $\sT([m])$ that we sum over in this identity to the rooted odd trees introduced in Definition~\ref{def:T-root}.
    We note that the map $e$ defined there is a bijection between $\sT^{\root}(m - 1)$ and $\sT(m)$ (the inverse map removes the leaf labelled $m$ and sets its single neighbor to be the root).
    The \Mobius\ function composed with this bijection is
    \[ \mu(\oslash, e(T)) = (-1)^{|V^{\square}(T)| + 1}\prod_{v \in V^{\square}(T)} \mu(\oslash, S_{|c(v)| + 1}), \]
    where $c(v)$ gives the number of children of an internal vertex.

    Finally, we combine the recursion associated to the rooted structure of $\sT^{\root}(m - 1)$ (whereby a rooted tree is, recursively, the root and a collection of rooted trees attached to the root) and the identity of the Claim to derive a generating function identity that completes the proof.
    Namely, we may manipulate, for $m \geq 4$,
    \begin{align}
      \nu(m)
      &= -\sum_{\substack{T \in \sT(m)}} \mu(\oslash, T) \nonumber \\
      &= -\sum_{T \in \sT^{\root}(m - 1)} \mu(\oslash, e(T)) \nonumber \\
      &= \sum_{T \in \sT^{\root}(m - 1)} (-1)^{|V^{\square}(T)|}\prod_{v \in V^{\square}(G)} \mu(\oslash, S_{|c(v)| + 1}) \nonumber \\
      &= \sum_{\substack{\pi \in \mathsf{Part}([m - 1]; \mathsf{odd}) \\ |\pi| > 1}}  (-\mu(\oslash, S_{|\pi| + 1})) \prod_{S \in \pi}\left(\sum_{T \in \sT^{\root}(S)} (-1)^{|V^{\square}(T)|} \prod_{v \in V^{\square}(T)} \mu(\oslash, S_{|c(v)| + 1})\right) \nonumber \\
      &= \sum_{\substack{\pi \in \mathsf{Part}([m - 1]; \mathsf{odd}) \\ |\pi| > 1}} (-\mu(\oslash, S_{|\pi| + 1})) \prod_{S \in \pi}\left(-\sum_{T \in \sT^{\root}(|S|)} \mu(\oslash, e(T))\right) \nonumber \\
      &= \sum_{\substack{\pi \in \mathsf{Part}([m - 1]; \mathsf{odd}) \\ |\pi| > 1}} (-\mu(\oslash, S_{|\pi| + 1})) \prod_{S \in \pi}\nu(|S| + 1)
    \end{align}

We now have a relatively simple identity connecting $\mu(\oslash, S_m)$ with $\nu(m)$.
To translate this into a relation of generating functions, we remove the condition $|\pi| > 1$ and correct to account for the case $m = 2$, obtaining, for any even $m \geq 2$,
\begin{equation}
    \label{eq:mobius-fn-stars-final-id}
    2\nu(m) = \One\{m = 2\} + \sum_{\pi \in \mathsf{Part}([m - 1]; \mathsf{odd})} (-\mu(\oslash, S_{|\pi| + 1}))\prod_{S \in \pi}\nu(|S| + 1).
\end{equation}
Now, let $F(x) \colonequals \sum_{k \geq 1} \frac{\mu(\oslash, S_{2k})}{(2k)!} x^{2k}$.
Multiplying by $x^{2k - 1} / (2k - 1)!$ on either side of \eqref{eq:mobius-fn-stars-final-id} and summing over all $k \geq 1$, we find, by the composition formula for exponential generating functions,
\begin{equation}
    2\tanh(x) = x - F^{\prime}(\tanh(x)).
\end{equation}
Equivalently, taking $y = \tanh(x)$, we have
\begin{equation}
    F^{\prime}(y) = \tanh^{-1}(y) - 2y.
\end{equation}
Recalling
\begin{equation}
    \tanh^{-1}(y) = \frac{1}{2}\left(\log(1 + y) - \log(1 - y)\right) = \sum_{k \geq 0} \frac{y^{2k + 1}}{2k + 1},
\end{equation}
we have
\begin{equation}
    F(y) = -\frac{1}{2}x^2 + \sum_{k \geq 2} \frac{x^{2k}}{2k(2k - 1)} = -\frac{1}{2!}x^2 + \sum_{k \geq 2} \frac{(2k - 2)!}{(2k)!}x^{2k},
\end{equation}
and the result follows.
\end{proof}

Before completing the proof of Lemma~\ref{lem:mobius-fn}, we give the following preliminary result, describing the interval lying below a forest as a product poset.
This follows immediately from the definition of the compositional relation, since the set of forests smaller than $F$ corresponds to a choice of a local ``composing forest'' at each $v \in V^{\square}(F)$.
\begin{proposition}
    \label{prop:poset-star-factorization}
    Let $F = ((V^{\bullet}, V^{\square}), E) \in \sF(m)$.
    Then, we have the isomorphism of posets
    \begin{equation}
        (\oslash, F] \cong \prod_{v \in V^{\square}(F)} (\oslash, S_{\deg(v)}].
    \end{equation} 
\end{proposition}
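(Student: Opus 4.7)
The plan is to establish the claimed isomorphism by exhibiting an explicit, order-preserving bijection between the two posets. Since $S_{\deg(v)}$ is the unique maximum of $\sF(\deg(v))$ (by the preceding proposition), the interval $(\oslash, S_{\deg(v)}]$ equals $\sF(\deg(v))$ viewed as a subset of $\sFbar(\deg(v))$, so the right-hand side of the claimed isomorphism is the product poset $\prod_{v \in V^{\square}(F)} \sF(\deg(v))$. Define
\[ \Phi : \prod_{v \in V^{\square}(F)} \sF(\deg(v)) \longrightarrow (\oslash, F], \qquad (F_v)_v \longmapsto F[(F_v)_v], \]
using the composition operation of Definition~\ref{def:compositional-poset}. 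Surjectivity is immediate: by the very definition of the compositional relation, $F' \in (\oslash, F]$ iff such a decomposition exists.

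For injectivity, the plan is to recover each $F_v$ canonically from the pair $(F, F')$. The key structural observation is that the composition procedure preserves internal vertices while merging or relabeling only leaves of the constituent forests, so $V^{\square}(F') = \bigsqcup_{v \in V^{\square}(F)} V^{\square}(F_v)$ as an intrinsic disjoint union. To show that this partition is determined by $(F, F')$ alone, I would induct on $|V^{\square}(F)|$: the base cases (perfect matchings and single stars $S_m$) are immediate. For the inductive step, pick $v \in V^{\square}(F)$ that is a ``leaf'' of the induced tree on $V^{\square}(F)$, so $v$ has at most one neighbor in $V^{\square}(F)$. The leaves of $F$ adjacent to $v$ keep their labels in $F'$; tracing each such labeled leaf to its unique neighbor in $F'$ identifies internal vertices of $F'$ that must belong to $F_v$. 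Using the single edge joining $v$ to its internal neighbor (if any) as the ``boundary,'' one extracts the full subgraph of $F'$ forming $F_v$, and the complement is $F''[(F_w)_{w \neq v}]$ for a strictly smaller template $F''$ obtained by replacing $v$ in $F$ by the single leaf representing that boundary, to which the inductive hypothesis applies.

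For order preservation, the critical ingredient is associativity of composition:
\[ F[(G_v[(H_{v,u})_u])_v] \;=\; F[(G_v)_v]\bigl[(H_{v,u})_{v,u}\bigr]. \]
The direction ``$F_v \leq G_v$ for all $v$ implies $\Phi((F_v)) \leq \Phi((G_v))$'' is then immediate: write $F_v = G_v[(H_{v,u})_u]$ and apply associativity. For the converse, if $\Phi((F_v)) \leq \Phi((G_v))$, decompose $\Phi((F_v)) = \Phi((G_v))[(K_u)_u]$ over the internal vertices of $\Phi((G_v))$, group these $u$'s by the canonical partition established during the injectivity step, set $F_v' := G_v[(K_u)_{u \in V^{\square}(G_v)}]$, and apply associativity together with injectivity of $\Phi$ to conclude $F_v = F_v' \leq G_v$.

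The main obstacle is the injectivity step: the tracing argument must correctly distinguish, among the edges of $F'$ between internal vertices, those that are internal to a single $F_v$ from those arising by merging a leaf of $F_v$ with a leaf of $F_w$ along a star-to-star edge $\{v, w\}$ of $F$. The inductive strategy above circumvents this difficulty by peeling off a ``pendant'' internal vertex $v$ of $F$ whose only internal-neighbor edge in $F$ supplies a distinguished cut edge in $F'$, thereby reducing to a template with one fewer internal vertex. Carrying out the verification that this peeling is well-defined and genuinely reduces to the inductive hypothesis is the remaining technical content.
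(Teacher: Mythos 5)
Your map $\Phi((F_v)_v) = F[(F_v)_v]$ is exactly the correspondence the paper invokes; its own proof is a one-line appeal to the definition of the compositional relation ("forests below $F$ correspond to a choice of local composing forest at each $v \in V^{\square}(F)$"), so in approach you coincide with the paper, and your observations that surjectivity is definitional and that order-preservation reduces to associativity of composition (which is in any case implicitly needed already for transitivity of $\leq$, i.e.\ for $\sF(m)$ to be a poset at all) are sound.

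The one place where your sketch, as written, would not go through is the injectivity step. When the composing forest $F_v$ contains pair components, the leaves of $F$ attached to $v$ are \emph{not} adjacent in $F' = F[(F_v)_v]$ to vertices of $F_v$: a pair in $F_v$ joining the slot of a leaf $i$ of $F$ to the slot of an edge $\{v,w\}$ of $F$ collapses, after the merging in Definition~\ref{def:compositional-poset}, into an edge whose far endpoint lies in $F_w$'s territory (or, if chains of pairs occur across several $\square$ vertices, even further away, possibly at another leaf of $F$). So "tracing each labeled leaf to its unique neighbor in $F'$" does not identify $V^{\square}(F_v)$, and the pendant-vertex peeling has to be formulated differently: what must be recovered from $(F,F')$ is, for each $v$, the partition of the $\deg(v)$ slots into connected components of $F_v$ together with the internal tree on each part, read off from how the leaf-sets of $F$ beyond each slot get joined in $F'$ through $v$'s position. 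You flag this yourself as "the remaining technical content," and it is genuinely the only nontrivial point; the conclusion is true (the assignment $V^{\square}(F') = \bigsqcup_v V^{\square}(F_v)$ and the local forests are indeed determined by $(F,F')$, which is why the paper treats the proposition as immediate), but your proposal does not yet contain a correct argument for it.
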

\noindent
We now complete the proof of the main Lemma.
\begin{proof}[Proof of Lemma~\ref{lem:mobius-fn}]
    Let $\widehat{\mu}(\oslash, \cdot)$ be the putative \Mobius\ function from the statement,
    \begin{equation}
        \widehat{\mu}(\oslash, F) = (-1)^{|V^{\square}(F)| + 1} \prod_{v \in V^{\square}(F)} (\deg(v) - 2)!.
    \end{equation}
    We proceed by induction on $m$.
    For $m = 2$, the result holds by inspection.
    Suppose $m \geq 4$.
    Since $\widehat{\mu}(\oslash, \oslash) = 1$ by definition, and since by Proposition~\ref{prop:mobius-fn-star} we know that $\widehat{\mu}(\oslash, S_m) = \mu(\oslash, S_m)$, it suffices to show that, for all $F \in \sF(m)$ with $F \neq S_m$, we have
    \begin{equation}
        \sum_{F^{\prime} \in [\oslash, F]} \widehat{\mu}(\oslash, F^{\prime}) = 0.
    \end{equation}
    Let $F \in \sF(m)$ with $F \neq S_m$.
    We then compute:
    \begin{align}
      \sum_{F^{\prime} \in [\oslash, F]} \widehat{\mu}(\oslash, F^{\prime})
      &= 1 - \sum_{F^{\prime} \in (\oslash, F]} \prod_{v \in V^{\square}(F^{\prime})} (-(\deg(v) - 2)!) \nonumber \\
      &= 1 - \prod_{v \in V^{\square}(F)}\left(\sum_{F^{\prime} \in (\oslash, S_{\deg(v)}]} \prod_{w \in V^{\square}(F^{\prime})} (-(\deg(w) - 2)!)\right) \tag{by Proposition~\ref{prop:poset-star-factorization}} \\
      &= 1 - \prod_{v \in V^{\square}(F)}\left(-\sum_{F^{\prime} \in (\oslash, S_{\deg(v)}]} \mu(\oslash, F^{\prime})\right) \nonumber \\
      &= 1 - \prod_{v \in V^{\square}(F)} \mu(\oslash, \oslash) \tag{by inductive hypothesis} \nonumber \\
      &= 0,
    \end{align}
    completing the proof.
\end{proof}

\section{Extension Formula: Proof of Theorem~\ref{thm:lifting}}
\label{sec:pf:lifting}

\subsection{Pseudomoment and Contractive Graphical Matrices}
\label{sec:cgm}

We first outline the general approach of our proof and introduce the main objects involved.
By construction, $\tEE$ as given in Definition~\ref{def:lifting} satisfies Conditions 1 and 2 of the pseudoexpectation properties from Definition~\ref{def:pe} (normalization and ideal annihilation); therefore, it suffices to prove positivity.
Moreover, positivity may be considered in any suitable basis modulo the ideal generated by the constraint polynomials, and given any fixed basis positivity may be written in linear-algebraic terms as the positive semidefiniteness of the associated \emph{pseudomoment matrix}.
We state this explicitly below, in an application of standard reasoning in the SOS literature (see, e.g., \cite{Laurent-2009-SOS}).
\begin{proposition}
    \label{prop:positivity-any-basis}
    Let $\tEE: \RR[x_1, \dots, x_N]_{\leq 2d} \to \RR$ be a linear operator satisfying the normalization and ideal annihilation properties of Definition~\ref{def:pe}.
    Let $\sI \subset \RR[x_1, \dots, x_N]$ be the ideal generated by $x_i^2 - 1$ for $i = 1, \dots, N$, and let $p_1, \dots, p_{\binom{N}{\leq d}} \in \RR[x_1, \dots, x_N]_{\leq d}$ be a collection of coset representatives for a basis of $\RR[x_1, \dots, x_N]_{\leq d} \, /\,  \sI$.
    Define the associated pseudomoment matrix $\bZ \in \RR^{\binom{N}{\leq d} \times \binom{N}{\leq d}}$ with entries
    \begin{equation}
        Z_{s, t} = \tEE[p_s(\bx)p_t(\bx)].
    \end{equation}
    Then, $\tEE$ satisfies the positivity property of Definition~\ref{def:pe} if and only if $\bZ \succeq \bm 0$.
\end{proposition}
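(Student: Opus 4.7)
The plan is to reduce the infinite-dimensional positivity condition in Definition~\ref{def:pe} to a finite-dimensional linear-algebra statement by descending to the quotient modulo $\sI$. Specifically, the bilinear form $(p, q) \mapsto \tEE[pq]$ on $\RR[x_1, \dots, x_N]_{\leq d}$ should factor through the quotient $\RR[x_1, \dots, x_N]_{\leq d} / (\sI \cap \RR[x_1, \dots, x_N]_{\leq d})$; once this is established, the matrix of this form in any basis of coset representatives is exactly $\bZ$, and positivity of $\tEE$ is then just the positive semidefiniteness of this quadratic form. For the factorization to be valid, one needs that $\tEE$ annihilates every element of $\sI \cap \RR[x_1, \dots, x_N]_{\leq 2d}$, not merely the single-factor products $(x_i^2 - 1) p$ with $\deg(p) \leq 2d - 2$ that Condition~2 of Definition~\ref{def:pe} gives directly.

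The key small lemma, which I expect is the only non-bookkeeping step, is that any $r \in \sI \cap \RR[x_1, \dots, x_N]_{\leq 2d}$ admits a representation $r = \sum_{i=1}^N (x_i^2 - 1) r_i$ with $\deg(r_i) \leq 2d - 2$. A clean way to see this is via the multilinear reduction map $\mathsf{ML}$ that replaces each occurrence of $x_i^{2k}$ by $1$ and $x_i^{2k+1}$ by $x_i$: this map is degree-nonincreasing, $p - \mathsf{ML}(p)$ always lies in $\sI$, and the reduction can be carried out one factor $(x_i^2 - 1)$ at a time, exhibiting $p - \mathsf{ML}(p)$ as a sum of the required form with each $r_i$ of degree at most $\deg(p) - 2$. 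Applied to our $r$, using that $\mathsf{ML}(r) = 0$ (since multilinear monomials are linearly independent modulo $\sI$ and $r \in \sI$), this gives the desired expression, after which Condition~2 and linearity immediately give $\tEE[r] = 0$.

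With this in hand, both directions follow easily. Given any $p \in \RR[x_1, \dots, x_N]_{\leq d}$, the basis property of the $p_s$ lets us write $p = \sum_s c_s p_s + q$ with $q \in \sI \cap \RR[x_1, \dots, x_N]_{\leq d}$ (the difference automatically has degree $\leq d$). Expanding $p^2$, the cross term $2 (\sum_s c_s p_s) q$ and the term $q^2$ both lie in $\sI \cap \RR[x_1, \dots, x_N]_{\leq 2d}$ and so vanish under $\tEE$ by the lemma, leaving $\tEE[p^2] = \sum_{s, t} c_s c_t Z_{s, t} = \bc^{\top} \bZ \bc$. Since every $\bc \in \RR^{\binom{N}{\leq d}}$ arises by taking $p = \sum_s c_s p_s$, the positivity condition on $\tEE$ is equivalent to $\bc^{\top} \bZ \bc \geq 0$ for all $\bc$, i.e., to $\bZ \succeq \bm 0$. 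The main obstacle, insofar as there is one, is the representation lemma above; once ideal annihilation has been propagated to all of $\sI \cap \RR[x_1, \dots, x_N]_{\leq 2d}$, the rest is a translation between functional and matrix formulations.
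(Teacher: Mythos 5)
Your proof is correct and is essentially the standard argument the paper is alluding to when it cites \cite{Laurent-2009-SOS} (the paper itself states this proposition without proof). The key representation lemma you isolate—that any element of $\sI \cap \RR[x_1,\dots,x_N]_{\leq 2d}$ can be written as $\sum_i (x_i^2-1)r_i$ with $\deg r_i \leq 2d-2$, proved via the multilinear reduction map and the fact that multilinear monomials form a basis modulo $\sI$—is exactly the degree-controlled Gr\"{o}bner-basis-type fact needed to pass from Condition~2 of Definition~\ref{def:pe} to annihilation of the entire low-degree part of $\sI$, and the rest of the equivalence follows as you describe.
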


If we were to take the standard multilinear monomial basis for the $p_s(\bx)$, we would wind up with $\bZ$ being a sum of CGSs of different diagrams in each entry, with the CGS indices corresponding to the set indexing of $\bZ$.
While we will ultimately work in a different basis, this general observation will still hold, so we define the following broad formalism for the matrices that will arise.

The following enhancement of the diagrams introduced in Definition~\ref{def:cgs} is the analogous object to what is called a \emph{shape} in the literature on graphical matrices \cite{AMP-2016-GraphMatrices,BHKKMP-2019-PlantedClique}.
We prefer to reserve the term \emph{diagram} for any object specifying some contractive calculation, to use that term unadorned for the scalar version, and to add \emph{ribbon} to indicate the specification of ``sidedness'' that induces a matrix structure. 
\begin{definition}[Ribbon diagram]
    Suppose $G = (V, E)$ is a graph with two types of vertices, which we denote $\bullet$ and $\square$ visually and whose subsets we denote $V = V^{\bullet} \sqcup V^{\square}$.
    Suppose also that $V^{\bullet}$ is further partitioned into two subsets, which we call ``left'' and ``right'' and denote $V^{\bullet} = \sL \sqcup \sR$.
    Finally, suppose that each of $\sL$ and $\sR$ is equipped with a labelling $\kappa_{\sL}: \sL \to [|\sL|]$ and $\kappa_{\sR}: \sR \to [|\sR|]$.
    We call such $G$ together with the labellings $\kappa_{\sL}$ and $\kappa_{\sR}$ a \emph{ribbon diagram}.
\end{definition}

\begin{definition}[Good forest ribbon diagram]
    We write $\sF(\ell, m)$ for the set of good forests on $\ell + m$ vertices, equipped with a partition of the leaves $V^{\bullet} = \sL \sqcup \sR$ with $|\sL| = \ell$ and $|\sR| = m$.
\end{definition}

\begin{definition}[Contractive graphical matrix]
    \label{def:cgm}
    Suppose $G$ is a ribbon diagram with labellings $\kappa_{\sL}$ and $\kappa_{\sR}$.
    Define $\kappa: V^{\bullet} \to [|V^{\bullet}|]$ by $\kappa(\ell) = \kappa_{\sL}(\ell)$ for $\ell \in \sL$ and $\kappa(r) = |\sL| + \kappa_{\sR}(r)$ for $r \in \sR$.
    With this labelling, we interpret $G$ as a CGS diagram.

    For $\bM \in \RR^{N \times N}_{\sym}$, we then define the \emph{contractive graphical matrix (CGM)} of $G$ to be the matrix $\bZ^G \in \RR^{\binom{[N]}{|\sL|} \times \binom{[N]}{|\sR|}}$ with entries
    \begin{equation}
        Z^G_{S,T} = Z^G_{S,T}(\bM) \colonequals Z^G(\bM; (s_1, \dots, s_{|\sL|}, t_1, \dots, t_{|\sR|}))
    \end{equation}
    where $S = \{s_1, \dots, s_{|\sL|}\}$ and $T = \{t_1, \dots, t_{|\sR|}\}$ with $s_1 < \cdots < s_{|\sL|}$ and $t_1 < \cdots < t_{|\sR|}$.
\end{definition}
\noindent
We note that the restriction to set-valued indices in this definition is rather artificial; the most natural indexing would be by $[N]^{|\sL|} \times [N]^{|\sR|}$.
However, as the set-indexed submatrix of this larger matrix is most relevant for our application, we use this definition in the main text; we present several technical results with the more general tuple-indexed CGMs in Appendix~\ref{app:cgm-tools}.

\begin{remark}[Multiscale spectrum]
    \label{rem:multiscale-spectrum}
    As in calculations involving graphical matrices \cite{AMP-2016-GraphMatrices,RSS-2018-EstimationSOS,BHKKMP-2019-PlantedClique}, the scale of the norm of a CGM may be read off of its ribbon diagram.
    We emphasize the following general principle: if $\|\bM\| = O(1)$ and $F \in \sF(2d)$, then
    \textbf{$\|\bZ^F\| = \omega(1)$ if and only if some connected components of $F$ have leaves in only $\sL$ or only $\sR$}.
    We call such components \emph{sided}.
    CGMs tensorize over connected components (Proposition~\ref{prop:cgm-tt-tensorization}), so the norm of a CGM is the product of the norms of CGMs of its diagram's components.
    In the case of $\bM$ a rescaled random low-rank projection matrix, where $\|\bM\| = O(1)$ and $\|\bM\|_F = \Theta(N^{1/2})$, components that are not sided give norm $O(1)$ (Proposition~\ref{prop:cgm-generic-norm-bound}), while each sided component gives norm roughly $\widetilde{\Theta}(N^{1/2})$, which follows from calculating the sizes of the individual CGM entries assuming square root cancellations.
    Thus the norm of a $CGM$ is $\widetilde{\Theta}(N^{\#\{\text{sided components}\} / 2})$.

    In particular, we will encounter the same difficulty as in other SOS lower bounds that the pseudomoment matrices we work with have a \emph{multiscale spectrum}, meaning simply that the scaling of different $\bZ^G$ with $N$ can be very different.
    For the main part of our analysis we will be able to ignore this, since by working in the multiharmonic basis from Remark~\ref{rem:heuristic-block-diag}, we will be able to eliminate all ribbon diagrams with sided components, leaving us with only terms of norm $O(1)$.
    Unfortunately, this issue returns when handling various error terms, so some of our choices below will still be motivated by handling the multiscale difficulty correctly.
\end{remark}

\subsection{Main and Error Terms}
\label{sec:main-error-terms}

We recall that our pseudoexpectation was constructed in Section~\ref{sec:informal-deriv} as a sum of $Z^F(\bM; S)$ for $S$ a \emph{multiset}, and had the property of being approximately unchanged by adding pairs of repeated indices to $S$.
While in Definition~\ref{def:lifting} we have forced these to be exact equalities to produce a pseudoexpectation satisfying the ideal annihilation constraints exactly, the approximate version of the pseudoexpectation, which is better suited for the diagrammatic reasoning that will justify positivity, will still be very useful.
Therefore, we decompose $\tEE$ into a ``main term,'' which is the actual result of our heuristic calculation but only approximately satisfies the hypercube constraints, and an ``error term'' that implements the remaining correction, as follows.
\begin{definition}[Main and error pseudoexpectations]
    Define $\tEE^{\main}, \tEE^{\err}: \RR[x_1, \dots, x_N] \to \RR$ to be linear operators with values on monomials given by
    \begin{align}
      \tEE^{\main}[\bx^S] &\colonequals \sum_{F \in \sF(|S|)} \mu(F) \cdot Z^F(\bM; S), \\
      \tEE^{\err}[\bx^S] &\colonequals \tEE[\bx^S] - \tEE^{\main}[\bx^S],
    \end{align}
    for all multisets $S \in \sM([N])$.
    Note that for $S$ a multiset, $\tEE[\bx^S] = \tEE[\bx^{S^{\prime}}]$ where $S^{\prime}$ is the (non-multi) set of indices occurring an odd number of times in $S$.
\end{definition}

In the remainder of this section, we show how the presence of the \Mobius\ function in our pseudomoment values implies that $\tEE^{\err}[\bx^S]$ is small.
It is not difficult to see that $\tEE$ and $\tEE^{\main}$ are equal to leading order, since if, for instance, only one index is repeated two times in $S$, then the dominant terms of $\tEE^{\main}$ will be those from diagrams where the two occurrences of this index are paired and there is an arbitrary forest on the remaining indices; various generalizations thereof hold for greater even and odd numbers of repetitions.
This kind of argument shows, for example, that for $\bM$ a rescaled random low-rank projection matrix, we have $\tEE^{\main}[\bx^S] = (1 + O(N^{-1/2}))\tEE[\bx^S]$ as $N \to \infty$.
However, due to the multiscale spectrum of the pseudomoments as discussed in Remark~\ref{rem:multiscale-spectrum}, it turns out that this does not give sufficient control of $\tEE^{\err}$.

We must go further than this initial analysis and take advantage of cancellations among even the sub-leading order terms of $\tEE^{\err}$, a fortunate side effect of the \Mobius\ function coefficients.
These cancellations generalize the following observation used in \cite{KB-2019-Degree4SK-Arxiv} for the degree 4 case.
If we take $S = \{i, i, j, k\}$ (the simple situation mentioned above), then we have
\begin{align}
  \tEE^{\err}[\bx^S]
  &= \underbrace{M_{jk}\vphantom{\sum_{a = 1}^N}}_{\mathclap{\tEE[\bx^S]}} - \Bigg(\underbrace{M_{ii}M_{jk} + M_{ij}M_{ik} + M_{ik}M_{ij} - 2\sum_{a = 1}^N M_{ai}^2 M_{aj}M_{ak}}_{\tEE^{\main}[\bx^S]}\Bigg) \nonumber \\
  &= 2\sum_{a \in [N] \setminus \{i\}} M_{ai}^2 M_{aj}M_{ak},
    \label{eq:err-cancellation-deg-4}
\end{align}
where the term $2 M_{ij}M_{ik}$ in $\tEE^{\main}[\bx^S]$ has cancelled.
For $\bM$ a rescaled random low-rank projection matrix, this makes a significant difference: the term that cancels is $\Theta(N^{-1})$, while the remaining error term after the cancellation is only $\Theta(N^{-3/2})$ (assuming square root cancellations).

Surprisingly, a similar cancellation obtains at all degrees and for any combination of repeated indices.
The general character of the remaining error terms is that, as in the above simple example the $\square$ vertex connecting two equal leaves labelled $i$ was not allowed to have its index equal $i$, so in general the minimal spanning subtree of a collection of leaves with the same label cannot ``collapse'' by having all of its internal vertices have that same label.

The collections of spanning subtrees with respect to which we will study this cancellation are precisely the forests $\mathsf{MaxSpan}(F, \bs)$, as defined earlier in Definition~\ref{def:repetitions-forest}.
Below we record the important properties of the subgraphs that result from this construction.
\begin{proposition}[Properties of $\MaxSpan$]
    \label{prop:MaxSpan-properties}
    For any $F \in \sF(m)$ and $\bs \in [N]^m$, $\MaxSpan(F, \bs)$ satisfies the following.
    \begin{enumerate}
    \item (Components) For every connected component $C$ of $\MaxSpan(F, \bs)$, there is some $i \in [N]$ and $C_j$ a connected component of $F$ such that $|\kappa^{-1}(i) \cap V^{\bullet}(C_j)| \geq 2$ and $C$ is the minimal spanning tree of $\kappa^{-1}(i) \cap V^{\bullet}(C_j)$.
    \item (Maximality) $\MaxSpan(F, \bs)$ is the union of a maximal collection of vertex-disjoint spanning trees of the above kind.
    \item (Independence over connected components) If the connected components of $F$ are $C_1, \dots, C_k$, then $\MaxSpan(F, \bs) = \MaxSpan(C_1, \bs|_{C_1}) \, \sqcup \, \cdots \, \sqcup \, \MaxSpan(C_k, \bs|_{C_k})$. (We write $\bs|_{C_i}$ for the restriction of $\bs$ to the indices that appear as labels of the leaves of $C_i$.)
    \item (Priority of small indices) Whenever $i < j$, $|\kappa^{-1}(i) \cap V^{\bullet}(C_k)| \geq 2$, $|\kappa^{-1}(j) \cap V^{\bullet}(C_k)| \geq 2$, and $\MaxSpan(F, \bs)$ contains the minimal spanning tree of $\kappa^{-1}(j) \cap V^{\bullet}(C_k)$, then it also contains the minimal spanning tree of $\kappa^{-1}(i) \cap V^{\bullet}(C_k)$.
    \end{enumerate}
\end{proposition}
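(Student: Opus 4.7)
The plan is to verify each of the four claims by direct inspection of the greedy procedure in Definition~\ref{def:repetitions-forest}, which essentially encodes all four properties. Specifically, (1) and (3) follow by unpacking the algorithm's steps, (4) follows from the outer loop iterating $i$ in increasing order, and (2), the maximality claim, follows from the observation that the running value of $\MaxSpan(F, \bs)$ only grows monotonically during execution. No inductive argument on anything other than the trivial loop counter is required; the work is bookkeeping.

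For property (1), the only modification to $\MaxSpan(F, \bs)$ in the procedure is the addition of a minimal spanning tree $T$ of the leaves of some component $C_j$ sharing a common label $i \in [N]$, and the conditional check forces each such $T$ to be vertex-disjoint from all previously added trees. The trees accumulated over all iterations are therefore pairwise vertex-disjoint, and since each is connected, they are precisely the connected components of the final graph. Property (3) follows similarly: each candidate $T$ lies entirely within one connected component of $F$, and trees added inside distinct components of $F$ are automatically vertex-disjoint (since the components of $F$ are vertex-disjoint), so whether the algorithm adds a given $T \subseteq C_j$ depends only on its interactions with previously added trees inside the same $C_j$. Running the algorithm on $F$ is therefore equivalent to running it independently on each $C_j$ with the restriction $\bs|_{C_j}$, yielding the claimed decomposition. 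Property (4) reflects the outer loop's iteration in increasing order of $i$: whenever the spanning tree $T_j$ of $\kappa^{-1}(j) \cap V^{\bullet}(C_k)$ belongs to the final $\MaxSpan(F, \bs)$, it was added at iteration $(j, C_k)$, so was vertex-disjoint from every tree previously added. The corresponding spanning tree $T_i$ for $i < j$ was considered strictly earlier, so one traces the cascade of possible obstructions to show that $T_i$ was itself a valid addition at its own iteration.

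Property (2), the main content, is proved by contradiction using monotonicity. Suppose $(i^*, C^*)$ satisfies $|\kappa^{-1}(i^*) \cap V^{\bullet}(C^*)| \geq 2$ and its minimal spanning tree $T^*$ is vertex-disjoint from the final $\MaxSpan(F, \bs)$ but is not one of its connected components. Because the algorithm only adds, the intermediate value of $\MaxSpan(F, \bs)$ at the start of iteration $(i^*, C^*)$ is a subset of the final value, so $T^*$ is also vertex-disjoint from that intermediate value. The conditional check at iteration $(i^*, C^*)$ would then have fired and added $T^*$, contradicting the assumption. The only subtle point in the whole proposition is in property (4), where one must be careful in tracing which earlier trees can and cannot block each candidate so that the priority cascade closes correctly; the other three claims are direct consequences of the greedy loop's structure.
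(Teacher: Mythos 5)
The paper does not actually supply a proof of Proposition~\ref{prop:MaxSpan-properties}; it is asserted as an immediate consequence of Definition~\ref{def:repetitions-forest}, so your ``read the property off the greedy loop'' approach is the only natural one and there is nothing to compare against for the first three items. Your arguments for (1), (2), and (3) are correct. Part (1) follows because the added spanning trees are by construction pairwise vertex-disjoint, hence each is a connected component of the union; part (3) follows because candidate trees from distinct components of $F$ are automatically vertex-disjoint, so the test at each step depends only on the history inside the same $C_j$; part (2) follows from monotonicity exactly as you say.

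Your handling of (4), however, is a genuine gap, and (more importantly) the ``cascade'' you gesture at will not close, because the claim as literally stated appears to be false. Concretely, take $F = C_k$ to be the good tree with two internal vertices $u \sim v$, where $u$ is also adjacent to leaves $a, b, c$ and $v$ to leaves $d, e, f$ (so $\deg u = \deg v = 4$). Choose $\bs$ so that $a, b$ carry index $1$, $c, d$ carry index $2$, and $e, f$ carry index $3$. The $1$-spanning tree is the path $a$--$u$--$b$, which is added first; the $2$-spanning tree is the path $c$--$u$--$v$--$d$, which meets $u$ and is therefore skipped; the $3$-spanning tree is the path $e$--$v$--$f$, which is disjoint from $\{a, u, b\}$ and is added. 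So $\MaxSpan(F,\bs)$ contains the $3$-spanning tree but not the $2$-spanning tree, violating part (4) with $i = 2 < j = 3$. The point is that a candidate with a small index can be blocked by an even smaller index, while a candidate with a larger index escapes that obstruction entirely; there is no downward cascade.

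So you should not try to patch your proof of (4): the statement needs to be reformulated before it can be proved. In fact, even the weaker reading ``if $T_i$ and $T_j$ are vertex-disjoint, $i<j$, and $T_j$ is added, then $T_i$ is added'' fails by a variant of the same construction (take a longer spine so that the $2$- and $3$-spanning trees do not touch). What is actually used downstream, in the proof of Lemma~\ref{lem:E0-Err-decomp}, is a statement about indices that repeat \emph{within a single fragment of the collapsed graph}, not a global priority statement across the whole component; that is a weaker and topologically different claim (two leaves of the same index whose Steiner path meets an added, smaller-index tree get separated when that tree is cut out). You should either prove that weaker claim or flag part (4) as incorrectly stated, rather than present a tracing argument that cannot succeed.
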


We are now prepared to express our generalization of the cancellation that we observed above in \eqref{eq:err-cancellation-deg-4}, which amounts to the cancellation of all summation terms where the entire subgraph $\MaxSpan(F, \bs)$ collapses in the sense discussed previously.
\begin{definition}[Graphical error terms]
    Let $F \in \sF(m)$ and $\bs \in [N]^m$.
    Recall that we say $\ba \in [N]^{V^{\square}(F)}$ is \emph{$(F, \bs)$-tight} if, for all connected components $C$ of $\MaxSpan(F, \bs)$, if $s_{\kappa(x)} = i$ for all leaves $x$ of $C$, then $a_v = i$ for all $v \in V^{\square}(C)$ as well.
    Otherwise, we say that $\ba$ is \emph{$(F, \bs)$-loose}.
    With this, we define
    \begin{equation}
        \Delta^F(\bM; \bs) \colonequals \sum_{\substack{\ba \in [N]^{V^{\square}} \\ \ba \text{ is } (F, \bs)\text{-loose}}} \prod_{\{v, w\} \in E} M_{f_{\bs, \ba}(v)f_{\bs,\ba}(w)}.
    \end{equation}
    As in Definition~\ref{def:cgs}, we also extend the definition to allow sets or multisets in the second argument of $\Delta^F$ by replacing them with the corresponding tuple of elements in ascending order.
\end{definition}
\noindent
The following preliminary definition, building off of the rooted odd trees from Definition~\ref{def:T-root}, will be useful in the argument.
\begin{definition}[Good forest with rooted components]
    \label{def:F-root}
    For $m \geq 2$, let $\sF^{\root}(m)$ be the set of forests on $m$ leaves where every connected component is either a good tree (per Definition~\ref{def:good-forest}) or a rooted odd tree (per Definition~\ref{def:T-root}), and where the leaves are labelled by $[m]$.
    Note that some but not all components of such a forest may have distinguished roots.
    For $F \in \sF^{\root}(m)$, let $\odd(F)$ denote the set of rooted odd tree components of $F$, and let $\mu(F) \colonequals \mu(F^{\prime})$ for $F^{\prime}$ formed by attaching an extra leaf to the root of every tree in $\odd(F)$.
\end{definition}

\begin{lemma}[Graphical error pseudomoments]
    \label{lem:E0-Err-decomp}
    For any $S \in \sM([N])$,
    \begin{equation}
        \tEE^{\err}[\bx^S] = -\sum_{F \in \sF(m)} \mu(F) \cdot \Delta^F(\bM; S).
    \end{equation}
\end{lemma}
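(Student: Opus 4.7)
The plan begins by reducing the lemma to a purely combinatorial identity on forests. Since $\tEE$ (Definition~\ref{def:lifting}) satisfies the hypercube annihilation $\tEE[x_i^2 p(\bx)] = \tEE[p(\bx)]$ and agrees with $\tEE^{\main}$ on multilinear monomials, we have $\tEE[\bx^S] = \tEE^{\main}[\bx^{S'}]$, where $S' \subseteq [N]$ is the set of indices appearing an odd number of times in $S$. Writing $T(F, S)$ for the contribution to $Z^F(\bM; S)$ from $(F, S)$-tight $\ba$, so that $Z^F(\bM; S) = T(F, S) + \Delta^F(\bM; S)$, the stated identity is equivalent to the ``tight identity''
\[
\sum_{F \in \sF(|S|)} \mu(F)\, T(F, S) \;=\; \sum_{F' \in \sF(|S'|)} \mu(F')\, Z^{F'}(\bM; S'). \quad (\star)
\]
The case $|S|$ odd is vacuous since both $\sF(|S|)$ and $\sF(|S'|)$ are empty, so we may assume $|S|$ even.

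To prove $(\star)$, I plan to group terms on the LHS by the monomial they produce in the entries of $\bM$. For tight $(F, \ba)$, every edge internal to a connected component $C$ of $\MaxSpan(F, S)$ contributes $M_{ii} = 1$ where $i$ is the common label of $C$, so $C$ effectively ``collapses'' to a single super-vertex labelled $i$. The monomial produced by $(F, \ba)$ therefore depends only on the edges and free $\square$ vertices of $F$ outside $\MaxSpan(F, S)$, and many different pairs $(F, \ba)$ can produce the same monomial via different underlying Steiner topologies. The goal is to show that within each monomial class the weighted sum of the $\mu(F)$'s either (i)~matches a coefficient contributed on the RHS by a unique $F' \in \sF(|S'|)$, or (ii)~cancels to zero when no $F'$ produces that monomial.

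The cancellations in case (ii), as well as the coefficient matching in case (i), are driven by the compositional-poset structure of Section~\ref{sec:poset}. Lemma~\ref{lem:mobius-fn} identifies $\mu(F) = -\mu_{\sFbar}(\oslash, F)$, and Proposition~\ref{prop:poset-star-factorization} factorizes intervals $(\oslash, F]$ as products of intervals $(\oslash, S_{\deg(v)}]$ above local star trees. For a fixed collapsed shape, the forests above it form such an interval, and the weighted sum of $\mu(F)$'s factors multiplicatively across the local Steiner sites; summing at each site invokes the defining identity $\sum_{F' \in [\oslash, F]} \mu_{\sFbar}(\oslash, F') = 0$ for $F \neq \oslash$, which kills all contributions from non-trivial connective structures. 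What remains are exactly the ``extreme case'' terms where each Steiner tree is a direct matching pair of equal-labelled leaves (contributing $M_{ii} = 1$) on a residual forest with leaves precisely labelled by $S'$; the surviving $\mu(F)$'s sum to $\mu(F')$ by the same poset factorization, reconstituting the RHS.

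The main obstacle is that $\MaxSpan$ constructs its Steiner trees greedily with a priority order and vertex-disjointness constraint (Proposition~\ref{prop:MaxSpan-properties}(2),(4)), so when several distinct repeated labels live in the same connected component of $F$, their Steiner trees may conflict and the fiber of the collapse in Paragraph~2 is not a clean Cartesian product over repeated labels. The \Mobius\ cancellation of Paragraph~3 must therefore be applied in the $\MaxSpan$ priority order rather than independently at each site. The plan is to handle this by induction on the number of distinct repeated labels in $S$: at each step, peel off the smallest repeated label, apply the \Mobius\ identity on its associated star interval to absorb that label into the residual forest, and reduce $(\star)$ to the same statement with one fewer repeated label. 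The base case $S = S'$ is immediate, since then $\MaxSpan(F, S) = \emptyset$ for every $F$, making every $\ba$ vacuously tight and $T(F, S) = Z^F(\bM; S')$, so both sides of $(\star)$ reduce to $\tEE^{\main}[\bx^{S'}]$ by definition.
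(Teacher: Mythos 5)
Your reduction to the ``tight identity'' $(\star)$ is correct and matches the paper's first move (there phrased as proving $\zeta(F',S',S) = \One\{S' = \text{odd-occurrence set}\}\,\mu(F')$), and your high-level plan in the second paragraph---group terms by the collapsed shape and show that each $\mu$-weighted fiber sum either reproduces an RHS coefficient or vanishes---is also the paper's plan. The base case $S = S'$ is handled correctly.

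However, there is a concrete gap in your third paragraph. The fiber of the collapse at a given Steiner site is \emph{not} a compositional-poset interval $(\oslash,F]$, so the raw defining identity $\sum_{F' \in [\oslash,F]} \mu_{\sFbar}(\oslash,F') = 0$ is not the identity that is actually needed. What runs at each site $j$ is the family of good forests on $n_j$ leaves that are \emph{$[n_j - |A_j|]$-dominated}: every $\square$ vertex must lie in the minimal spanning tree of the ``extra'' repeated-label leaves. This is a different constraint, not expressible as an order interval in $\sFbar$. The identity the argument actually needs is
\begin{equation*}
\sum_{\substack{F \in \sF(m) \\ F \text{ is } A\text{-dominated}}} \mu(F) \;=\; \One\{m = |A|\} + \One\{m = |A|+1\},
\end{equation*}
which is the ``Claim'' in the paper's proof. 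When $m=|A|$ the domination constraint is vacuous and this does reduce to the Möbius identity you cite, but for $m > |A|$ it requires its own induction via a peeling map $r_\ell\colon \sF(m) \to \sF^{\root}(m-\ell)$ into forests with rooted odd-tree components; this is the main technical content of the proof and is absent from your plan. In particular, the simple Möbius identity alone cannot produce the parity distinction between $m - |A|$ even (where the sum vanishes) and $m - |A| = 1$ (where it is $1$), which is exactly what makes the leftover index set in $\ind(F,S)$ come out to be the odd-occurrence set. Your proposed induction on the number of distinct repeated labels is parallel to the paper's factorization of $\zeta$ over indices $j$---including the observation that $\MaxSpan$'s priority order lets you drop the cross-site connectivity constraint at each step---but at every step it still silently leans on the unproven $A$-domination identity above. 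Supplying that identity, proved via the rooted-forest induction, would essentially reconstitute the paper's argument.
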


\begin{proof}
    Our result will follow from the following, purely combinatorial, result.
    For $F \in \sF(m)$ and $A \subseteq [m]$, let us say that $F$ is \emph{$A$-dominated} if, for every connected component $C$ of $F$, every $\square$ vertex of $C$ is contained in the minimal spanning tree of the leaves $\kappa^{-1}(A) \cap V^{\bullet}(C)$.
    Then,
    \begin{equation}
        \text{\emph{Claim:}}\sum_{\substack{F \in \mathcal{F}(m) \\ F \text{ is } A\text{-dominated}}} \mu(F) = \begin{cases} 1 & \text{if } m \in \{|A|, |A| + 1\}, \\ 0 & \text{otherwise}. \end{cases}
    \end{equation}

    We first prove the Claim.
    Let $\ell = |A|$; without loss of generality we take $A = [\ell]$.
    Let us write
    \begin{equation}
        c(\ell, m) \colonequals \sum_{\substack{F \in \mathcal{F}(m) \\ F \text{ is } [\ell]\text{-dominated}}} \mu(F).
    \end{equation}
    For each fixed $\ell$, we will proceed by induction on $m \geq \ell$.
    For the base case, we have $c(\ell, \ell) = 1$ by the defining property of the \Mobius\ function, since in this case the summation is over all $F \in \sF(\ell)$.
    
    Let $r_{\ell}: \sF(m) \to \sF^{\root}(m - \ell)$ return the rooted forest formed by deleting the minimal spanning trees of the elements of $[\ell]$ in each connected component, where upon deleting part of a tree, we set any vertex with a deleted neighbor to be the root of the new odd tree connected component thereby formed.
    Then, we have
    \begin{align}
      1
      &= \sum_{F \in \sF(m)} \mu(F) \nonumber \\
      &= \sum_{R \in \sF^{\root}(m - \ell)} \sum_{\substack{F \in \sF(m) \\ r_{\ell}(F) = R}} \mu(F) \nonumber
      \intertext{and, factoring out $\mu(R)$ from $\mu(F)$ with $r_{\ell}(F) = R$, we note that what is left is a sum of $\mu(F)$ over $[\ell]$-dominated forests $F$ on $[\ell + |\odd(R)|]$, whereby}
      &= \sum_{R \in \sF^{\root}(m - \ell)} \mu(R) c(\ell, \ell + |\odd(R)|).
    \end{align}
    Now, we consider two cases.
    First, if $m = \ell + 1$ for $\ell$ odd, then there is only one $R$ in the above summation, having one leaf connected to a root, which has $\mu(R) = 1$.
    Therefore, $c(\ell, \ell + 1) = 1$.

    Otherwise, supposing $m \geq \ell + 2$ and continuing the induction, if we assume the Claim holds for all smaller $m$, then we find
    \begin{equation}
        1 = \sum_{\substack{R \in \sF^{\root}(m - \ell) \\ |\odd(R)| = \One\{\ell \text{ odd}\}}} \mu(R)  + c(\ell, m).
    \end{equation}
    If $\ell$ is even, then the first term is a sum over $R \in \sF(m - \ell)$.
    If $\ell$ is odd, then the sum may be viewed as a sum over $R \in \sF(m - \ell + 1)$ by viewing the single root vertex as an additional leaf.
    In either case, this sum equals 1 by the definition of the \Mobius\ function, whereby $c(\ell, m) = 0$, completing the proof.

    We now return to the proof of the statement.
    Suppose $S \in \sM([N])$, and let $\bs \in [N]^{|S|}$ be the tuple of the elements of $S$ in ascending order.
    Given $F \in \sF(|S|)$, let us write $\ind(F, S)$ for the multiset with one occurrence of each $i \in [N]$ for each connected component $C$ of $\MaxSpan(F, \bs)$ with $s_{\kappa(\ell)} = i$ for all leaves $\ell$ in $C$, and a further occurrence of each $i \in [N]$ for each leaf $\ell$ not belonging to $\MaxSpan(F; \bs)$ with $s_{\kappa(\ell)} = i$.
    And, write $\coll(F, S) \in \sF(|\ind(F, S)|)$ for the good forest obtained by deleting from $F$ each component of $\MaxSpan(F; \bs)$ and replacing each incidence between $\MaxSpan(F; \bs)$ and $F \setminus \MaxSpan(F; \bs)$ with a new leaf, labelled such that
    \begin{equation}
        Z^{\coll(F, S)}(\bM; \ind(F, S)) = \sum_{\substack{\ba \in [N]^{V^{\square}(F)} \\ \ba \text{ } (F, S)\text{-tight}}} \prod_{\{v, w\} \in E(F)} M_{f_{S, \ba}(v)f_{S, \ba}(w)}.
    \end{equation}
    Intuitively, these definitions describe the version of $F$ where all tight subtrees in $\MaxSpan(F, \bs)$ have been collapsed, with extra occurrences of their indices added as labels on leaves in the new ``fragmented'' tree.
    
    Using these definitions, we may rewrite the quantity that we need to compute as follows.
    \begin{align}
      \tEE^{\main}[\bx^S] - \sum_{F \in \sF(|S|)} \mu(F) \cdot \Delta^F(\bM; S)
      &= \sum_{F \in \sF(|S|)} \mu(F) \sum_{\substack{\ba \in [N]^{V^{\square}(F)} \\ \ba \text{ } (F, S)\text{-tight}}} \prod_{\{v, w\} \in E(F)} M_{f_{S, \ba}(v)f_{S, \ba}(w)} \nonumber \\
      &= \sum_{F \in \sF(|S|)} \mu(F) Z^{\coll(F, S)}(\bM; \ind(F, S)) \nonumber \\
      &= \sum_{S^{\prime} \in \sM([N])} \sum_{F^{\prime} \in \sF(|S^{\prime}|)} \Bigg(\underbrace{\sum_{\substack{F \in \sF(|S|) \\ \coll(F, S) = F^{\prime} \\ \ind(F, S) = S^{\prime}}} \mu(F)}_{\equalscolon \, \zeta(F^{\prime}, S^{\prime}, S)}\Bigg) Z^{F^{\prime}}(\bM; S^{\prime}).
    \end{align}
    We claim that the inner coefficient $\zeta(F^{\prime}, S^{\prime}, S)$ is zero unless $S^{\prime}$ is the (non-multi) set of indices occurring an odd number of times in $S$, in which case it is $\mu(F^{\prime})$.
    This will complete the proof, since we will then have that the above equals $\tEE[\bx^S]$ (by definition of the latter).

    Since $\ind(F, S)$ for any $F$ only contains indices occurring in $S$, we will have $\zeta(F^{\prime}, S^{\prime}, S) = 0$ unless $S^{\prime}$ only contains indices also occurring in $S$.
    In other words, we have $\set(S^{\prime}) \subseteq \set(S)$; note, however, that a given index can occur more times in $S^{\prime}$ than in $S$.

    Let $C_1^{\prime}, \dots, C_m^{\prime}$ be the connected components of $F^{\prime}$, let $\kappa^{\prime}$ be the function labelling the leaves of $F^{\prime}$, and let $\bs^{\prime}$ be the tuple of elements of $S^{\prime}$ in ascending order.
    Let $S_i^{\prime} \colonequals \{s^{\prime}_{\kappa^{\prime}(\ell)}: \ell \in V^{\bullet}(C_i^{\prime})\}$, \emph{a priori} a multiset.
    In fact, no index can occur twice in any $S_i^{\prime}$: if $j$ is the least such index, then by construction the minimal spanning tree on all $\ell \in V^{\bullet}(C_i^{\prime})$ with $s^{\prime}_{\kappa^{\prime}(\ell)} = j$ would have been included in $\MaxSpan(F, \bs)$ and would have been collapsed in forming $F^{\prime}$.
    Therefore, each $S_i^{\prime}$ is a set, and $S = S_1^{\prime} + \cdots + S_m^{\prime}$.

    Now, we define the subsets of connected components containing a leaf labelled by each index: for $j \in [N]$, let $A_j = \{i \in [m]: j \in S_i^{\prime}\}$.
    Also, let $n_j$ equal the number of occurrences of $j$ in $S$.
    Then, every $F$ with $\coll(F, S) = F^{\prime}$ and $\ind(F, S) = S^{\prime}$ is obtained by composing with $F^{\prime}$ forests $F_j$ for $j \in [N]$ whose leaves are $\kappa^{\prime^{-1}}(j)$, together with some $n_j - |A_j|$ further leaves $\ell_{j, 1}, \dots, \ell_{j, n_j}$, such that $F_j$ is dominated (in the sense above) by these further leaves, which all have $s_{\kappa(\ell_{j, k})} = j$, and such that $F_j$ does not connect any $C_{i_1}^{\prime}, C_{i_2}^{\prime}$ for $i_1, i_2 \in A_{j^{\prime}}$ with $j^{\prime} < j$.
    It is easier to understand the description in reverse: the $F_j$ are precisely the forests added to $\MaxSpan(F, \bs)$ for index $j$.

    Using this description of the $F$ appearing in $\zeta(F^{\prime}, S^{\prime}, S)$ and the fact that $\mu(F)$ factorizes over $\square$ vertices, we may factorize
    \begin{equation}
        \zeta(F^{\prime}, S^{\prime}, S) = \mu(F^{\prime})\prod_{j = 1}^N \Bigg(\sum_{\substack{F \in \sF(n_j) \\ F \text{ } [n_j - |A_j|]\text{-dominated} \\ F \text{ does not connect } C_{i_1}^{\prime}, C_{i_2}^{\prime} \\ \text{for } i_1, i_2 \in A_{j^{\prime}}, j^{\prime} < j}} \mu(F)\Bigg).
    \end{equation}
    Now, suppose for the sake of contradiction that $\zeta(F^{\prime}, S^{\prime}, S) \neq 0$ for some $F^{\prime}$, and $|A_j| \neq \One\{n_j \text{ odd}\}$ for some $j$ (remembering that $A_j$ are defined in terms of $F^{\prime}$).
    Choose the smallest such $j$.
    Then, the connectivity property on $F$ in the $j$th factor above is vacuous since $|A_{j^{\prime}}| \leq 1$ for all $j < j^{\prime}$, so it may be removed in the summation.
    By the Claim, that factor is then zero, whereby $\zeta(F^{\prime}, S^{\prime}, S) = 0$, unless $|A_j| = \One\{n_j \text{ odd}\}$, so we reach a contradiction.
    Finally, if indeed $|A_j| = \One\{n_j \text{ odd}\}$ for all $j$, then the connectivity condition is vacuous for all terms, so it may always be removed, whereupon by the Claim the product above is 1 and $\zeta(F^{\prime}, S^{\prime}, S) = \mu(F^{\prime})$ as desired.
\end{proof}

Lastly, we prove the following additional result on $\tEE^{\err}$ that will be useful later, showing that it decomposes into a sum over a choice of some ``main term trees'' and some ``error trees'' to apply to subsets of $S$.
\begin{proposition}[Error term factorizes over connected components]
    \label{prop:E-err-factorization}
    For any $S \in \sM([N])$,
    \begin{equation}
        \tEE^{\err}[\bx^{S}] = \sum_{\substack{A \subseteq S \\ A \neq \emptyset}} \tEE^{\main}[\bx^{S - A}] \sum_{\pi \in \Part(A; \even)} \prod_{R \in \pi}\left( -\sum_{T \in \sT(|R|)} \mu(T) \cdot \Delta^T(\bM; R)\right).
    \end{equation}
\end{proposition}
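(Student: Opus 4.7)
The plan is to expand the right-hand side using the definitions of $\tEE^{\main}$ and of the inner sum over trees, recognize that after collecting terms the expression is indexed by a good forest on $S$ together with a distinguished non-empty subset of its connected components, and then collapse the sum using the factorization of CGSs, of tight/loose assignments, and of $\mu$ over connected components, together with Lemma~\ref{lem:E0-Err-decomp}.

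First I would expand $\tEE^{\main}[\bx^{S-A}] = \sum_{F' \in \sF(|S-A|)} \mu(F') \cdot Z^{F'}(\bM; S-A)$ and
\begin{equation}
\prod_{R \in \pi}\!\left(-\sum_{T_R \in \sT(|R|)} \mu(T_R) \cdot \Delta^{T_R}(\bM; R)\right) = (-1)^{|\pi|} \sum_{(T_R)_R} \prod_{R \in \pi} \mu(T_R) \cdot \Delta^{T_R}(\bM; R).
\end{equation}
The resulting multiple sum on the right is indexed by tuples $(A, \pi, F', (T_R)_R)$ with coefficient $(-1)^{|\pi|} \mu(F') \prod_R \mu(T_R)$. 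Such tuples correspond bijectively to pairs $(F, I)$, where $F \in \sF(|S|)$ is the disjoint union $F' \sqcup \bigsqcup_R T_R$ viewed as a good forest on leaves labeled by $S$, and $I$ is the non-empty subset of connected components of $F$ singling out the trees $T_R$. Under this bijection $|I| = |\pi|$, and because $\mu(F)$ factors as a product over $\square$-vertices by Definition~\ref{def:mu-F} and hence over connected components, $\mu(F) = \mu(F') \prod_R \mu(T_R)$.

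Writing the components of $F$ as $C_1, \ldots, C_k$ and using that CGSs factor over connected components (immediate from Definition~\ref{def:cgs}) while tightness of $\ba$ is determined independently per component by property~3 of Proposition~\ref{prop:MaxSpan-properties}, the RHS reorganizes as
\begin{equation}
\sum_{F \in \sF(|S|)} \mu(F) \sum_{\emptyset \neq I \subseteq [k]} \prod_{i \in I}\bigl(-\Delta^{C_i}(\bM; S|_{C_i})\bigr) \prod_{i \notin I} Z^{C_i}(\bM; S|_{C_i}).
\end{equation}
Setting $Z^{C_i,\mathrm{tight}} \colonequals Z^{C_i} - \Delta^{C_i}$, the inner sum over $I$ equals $\prod_i Z^{C_i,\mathrm{tight}} - \prod_i Z^{C_i} = Z^{F,\mathrm{tight}} - Z^F = -\Delta^F(\bM; S)$, where $S|_{C_i}$ denotes the sub-multiset of labels on the leaves of $C_i$. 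The RHS therefore collapses to $-\sum_F \mu(F) \cdot \Delta^F(\bM; S)$, which is $\tEE^{\err}[\bx^S]$ by Lemma~\ref{lem:E0-Err-decomp}.

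The main point requiring some care is the bijection and the handling of pair components. A pair (two leaves joined by an edge) has no $\square$-vertex, so $\Delta^{C_i} = 0$ for such $C_i$; hence on the left the $|R| = 2$ contributions vanish, while on the right pair components can appear only in $F'$ and never in $I$, which is consistent. The boundary case $S - A = \emptyset$ is accommodated by the conventions $\mu(\emptyset) = 1$ and $\tEE^{\main}[1] = 1$. Beyond these routine verifications, the argument is a pure reorganization of the sum in Lemma~\ref{lem:E0-Err-decomp}, and no further combinatorial input beyond the component-wise factorizations is needed.
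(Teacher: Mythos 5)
Your proof is correct and follows essentially the same route as the paper's: both rest on the factorization of $\mu$, $Z^F$, and the tight/loose condition over connected components (Proposition~\ref{prop:MaxSpan-properties}), together with the inclusion-exclusion identity $\sum_{\emptyset \neq I \subseteq [k]} \prod_{i \in I}\bigl(-\Delta^{C_i}\bigr)\prod_{i \notin I}Z^{C_i} = -\Delta^F$, and then invoke Lemma~\ref{lem:E0-Err-decomp}. You run this calculation from the right-hand side back to the lemma, whereas the paper expands $\Delta^F$ via inclusion-exclusion on which component is loose and regroups into the claimed sum, but these are the same argument read in opposite directions.
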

\begin{proof}
    We begin from the definition,
    \begin{align}
      \Delta^F(\bM; \bs)
      &= \sum_{\substack{\ba \in [N]^{V^{\square}} \\ \ba \text{ is } (F, \bs)\text{-loose}}} \prod_{\{v, w\} \in E} M_{f_{\bs, \ba}(v) f_{\bs, \ba}(v)} \nonumber
      \intertext{Now, we observe from Proposition~\ref{prop:MaxSpan-properties} that $\ba$ is $(F, \bs)$-loose if and only if $\ba|_{T}$ is $(F, \bs|_{T})$-loose for some $T \in \conn(F)$. Therefore, by the inclusion-exclusion principle, we may write}
      &= \sum_{\substack{A \subseteq \conn(F) \\ A \neq \emptyset}} (-1)^{|A| - 1}\sum_{\substack{\ba \in [N]^{V^{\square}} \\ \ba|_{T} \text{ is } (F, \bs|_{T})\text{-loose for } T \in A}} \prod_{\{v, w\} \in E} M_{f_{\bs, \ba}(v) f_{\bs, \ba}(v)} \nonumber \\
      &= -\sum_{\substack{A \subseteq \conn(F) \\ A \neq \emptyset}} \prod_{T \in A}\big(-\Delta^{T}(\bM; \bs|_{T})\big) \prod_{T \notin A}Z^{T}(\bM; \bs|_{T}).
    \end{align}
    Now, we use that $\mu(F) = \prod_{i = 1}^k \mu(T_i)$, so by definition of $\tEE^{\err}$, we have
    \begin{align}
      \tEE^{\err}[\bx^{\bs}]
      &= -\sum_{F \in \sF(m)} \mu(F) \cdot \Delta^F(\bM; \bs) \nonumber \\
      &= \sum_{F \in \sF(m)} \sum_{\substack{A \subseteq \conn(F) \\ A \neq \emptyset}} \prod_{T \in A}\big(-\mu(T) \cdot \Delta^{T}(\bM; \bs|_{T})\big) \prod_{T \notin A}\big(\mu(T) \cdot Z^{T}(\bM; \bs|_{T})\big).
    \end{align}
    Reversing the order of summation and then reorganizing the inner sum according to the partition $\pi$ of the leaves of $F$ lying in each connected component then gives the result.
\end{proof}

\subsection{Proof Outline: Spectral Analysis in the Harmonic Basis}

Our basic strategy for proving positivity is to invoke Proposition~\ref{prop:positivity-any-basis} with the multiharmonic basis discussed in Remark~\ref{rem:heuristic-block-diag}.
As our heuristic calculations there suggested, this will attenuate the multiscale spectrum of the pseudomoment matrix written in the standard monomial basis, making the analysis of the spectrum much simpler.
It will also let us use the heuristic Gram matrix expression \eqref{eq:heuristic-gram-mx} as a tool for proving positivity.

In this section, we describe the objects that arise after writing the pseudomoments in this basis, and state the main technical results that lead to the proof of Theorem~\ref{thm:lifting}.
First, we recall the definition of the basis.

\begin{definition}[Multiharmonic basis polynomials]
    For $S \subseteq [N]$, we define
    \begin{align}
      q_S^{\downarrow}(\bx; \bM) &\colonequals \left\{\begin{array}{ll} x_i & \text{if } |S| = 1 \text{ with } S = \{i\}, \\ M_{ij} & \text{if } |S| = 2 \text{ with } S = \{i, j\}, \\ \sum_{a = 1}^N \prod_{i \in S} M_{ia} \cdot x_a & \text{if } |S| \geq 3 \text{ is odd}, \\ \sum_{a = 1}^N \prod_{i \in S} M_{ia} & \text{if } |S| \geq 4 \text{ is even}, \end{array}\right.
                                                                              \\
      h_S^{\downarrow}(\bx; \bM) &\colonequals \sum_{\sigma \in \Part(S)} \prod_{A \in \sigma} (-1)^{|A| - 1}(|A| - 1)!\, q_{A}^{\downarrow}(\bx; \bM).
    \end{align}
    For the sake of brevity, we will usually omit the explicit dependence on $\bM$ below, abbreviating $h_S^{\downarrow}(\bx) = h_S^{\downarrow}(\bx; \bM)$.
\end{definition}
\noindent
Next, we write the pseudomoments in this basis, separating the contributions of the main and error terms.
\begin{definition}[Main and error pseudomoments]
    \label{def:main-error}
    Define matrices $\bZ^{\main}, \bZ^{\err}, \bZ \in \RR^{\binom{[N]}{\leq d} \times \binom{[N]}{\leq d}}$ to have entries
\begin{align}
  Z^{\main}_{S,T} &\colonequals \tEE^{\main}[h_S^{\downarrow}(\bx)h_T^{\downarrow}(\bx)], \\
  Z^{\err}_{S,T} &\colonequals \tEE^{\err}[h_S^{\downarrow}(\bx)h_T^{\downarrow}(\bx)], \\
  Z_{S,T} &\colonequals \tEE[h_S^{\downarrow}(\bx)h_T^{\downarrow}(\bx)] \\
                  &= Z^{\main}_{S,T} + Z^{\err}_{S,T}.
\end{align}
\end{definition}
\noindent
We assign a technical lemma to the analysis of each of the two terms.
\begin{lemma}[Positivity of main term]
    \label{lem:Z-main-positive}
    Under the assumptions of Theorem~\ref{thm:lifting},
    \begin{equation}
        \lambda_{\min}(\bZ^{\main}) \geq \lambda_{\min}(\bM)^d - (6d)^{10d}\|\bM\|^{3d}(\epsilon_{\tree}(\bM; d) + \epsilon_{\pow}(\bM) + \epsilon_{\offdiag}(\bM) + \epsilon_{\corr}(\bM)).
    \end{equation}
\end{lemma}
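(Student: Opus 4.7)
The plan is to realize $\bZ^{\main}$ as, up to a controlled operator-norm error, a Gram matrix in an apolar Hilbert space of polynomials, and then lower-bound the smallest singular value of the Gram factor via a symmetric tensor power argument. The multiharmonic basis was engineered for exactly this purpose in Remark~\ref{rem:heuristic-block-diag}, where it was predicted that $\tEE^{\main}[h_S^{\downarrow}(\bx) h_T^{\downarrow}(\bx)] \approx \langle h_S(\bV^{\top}\bz), h_T(\bV^{\top}\bz) \rangle_{\partial}$.

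First, I would establish this approximate identity rigorously. Expanding $h_S^{\downarrow}$ by its definition and $\tEE^{\main}[\bx^U]$ via its graphical representation as $\sum_{F \in \sF(|U|)} \mu(F) Z^F(\bM; U)$, both sides become sums of CGSs attached to good forests on $|S|+|T|$ leaves. Matching forests against the partition-based expansion of the apolar inner product reveals a term-by-term correspondence; the discrepancy comes exactly from CGM entries in which a contractive sum $\sum_a \prod_{j} M_{aj}$ along a branch differs from its ``collapsed'' diagonal version. Each such discrepancy is bounded entrywise by one of $\epsilon_{\offdiag}, \epsilon_{\corr}, \epsilon_{\pow}, \epsilon_{\tree}$, and by applying the CGM norm tools from Appendix~\ref{app:cgm-tools} one obtains a matrix-level bound $\|\bZ^{\main} - \bm{\Psi}^{*}\bm{\Psi}\| \leq (6d)^{10d}\|\bM\|^{3d}(\epsilon_{\tree}(\bM;d)+\epsilon_{\pow}(\bM)+\epsilon_{\offdiag}(\bM)+\epsilon_{\corr}(\bM))$, where $\bm\Psi$ sends $\be_S \mapsto h_S(\bV^{\top}\bz)$ viewed under the apolar inner product.

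Next, I would lower-bound $\sigma_{\min}(\bm\Psi)$. Since each $h_S(\bV^{\top}\bz)$ is homogeneous of degree $|S|$, the Gram matrix $\bm\Psi^{*}\bm\Psi$ is exactly block-diagonal in $|S|$, so it suffices to treat each size-$m$ block $\bm\Psi_m^{*}\bm\Psi_m$ for $0 \le m \le d$. By apolarity between $V_{\sI}$ and $V_{\sH}$ (Proposition~\ref{prop:apolar-decomp}) and the fact that $h_S - (\bV^{\top}\bz)^S \in V_{\sI}$, we get $\langle h_S(\bV^{\top}\bz), h_T(\bV^{\top}\bz)\rangle_{\partial} = \langle (\bV^{\top}\bz)^S, (\bV^{\top}\bz)^T\rangle_{\partial}$, and a direct computation with $\langle p, q\rangle_{\partial} = p(\bm\partial)q$ evaluates this to the permanent $\mathrm{per}(\bM[S,T])$ of the $m\times m$ submatrix. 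This permanent matrix is the compression of the $m$-th symmetric tensor power $\mathrm{Sym}^m(\bM)$ to the set-indexed subspace; its spectrum consists of products of $m$ (not necessarily distinct) eigenvalues of $\bM$, so $\lambda_{\min}(\bm\Psi_m^{*}\bm\Psi_m) \geq \lambda_{\min}(\bM)^m$. The normalization $M_{ii}=1$ forces $\mathrm{tr}(\bM)=N$ and hence $\lambda_{\min}(\bM) \leq 1$, so the uniform lower bound across blocks is $\lambda_{\min}(\bM)^d$.

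Combining the two steps via Weyl's inequality gives the stated bound. The main technical obstacle is the first step: the pseudomoment matrix in the raw monomial basis has the multiscale spectrum warned about in Remark~\ref{rem:multiscale-spectrum}, and only by pairing the graphical expansion of $\tEE^{\main}$ against the multiharmonic combinations $h_S^{\downarrow}, h_T^{\downarrow}$ do the CGM contributions with sided (and hence large-norm) connected components cancel. Verifying that exactly the sided-component CGMs vanish after taking this pairing, and that the surviving error CGMs admit the claimed $(\|\bM\|^{3d}, \epsilon_*)$-type bounds uniformly in $F \in \sF(2d)$, is the delicate combinatorial core of the argument.
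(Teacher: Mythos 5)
Your first half — showing $\bZ^{\main}$ is close in operator norm to the block-diagonal Gram matrix $\bY$ with $Y_{S,T}=\langle h_S(\bV^\top\bz),h_T(\bV^\top\bz)\rangle_\partial$, with the error budgeted among $\epsilon_{\tree},\epsilon_{\pow},\epsilon_{\offdiag},\epsilon_{\corr}$ — is exactly the paper's route (via the intermediate $\bZ^{\tied}$, Lemma~\ref{lem:tying-stretched-ribbon}, Proposition~\ref{prop:gram-cgm-expansion}, Lemma~\ref{lem:tying-partition-transport-ribbon}, and the coincidence of Lemmas~\ref{lem:stretched-ribbon-combinatorial} and~\ref{lem:partition-transport-ribbon-combinatorial}). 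You correctly identify the sided-component cancellation as the crux and leave it as ``the delicate combinatorial core,'' which is an honest statement of the work remaining.

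The second half, however, contains a genuine error. You claim that because $h_S-(\bV^\top\bz)^S \in V_{\sI}$ and $h_T\in V_{\sH}$, apolarity gives $\langle h_S(\bV^\top\bz),h_T(\bV^\top\bz)\rangle_\partial = \langle(\bV^\top\bz)^S,(\bV^\top\bz)^T\rangle_\partial$. That first step only yields $\langle h_S,h_T\rangle_\partial = \langle(\bV^\top\bz)^S,h_T\rangle_\partial$; to drop the $r_T$ from $h_T$ you would also need $(\bV^\top\bz)^S\in V_{\sH}$, which is false. What is actually true is
\begin{equation}
  \langle h_S,h_T\rangle_\partial \;=\; \langle(\bV^\top\bz)^S,(\bV^\top\bz)^T\rangle_\partial \;-\; \langle r_S,r_T\rangle_\partial,
\end{equation}
because $h_S=P_{\sH}[(\bV^\top\bz)^S]$; equivalently $\bY=\bm\Psi^*P_{\sH}\bm\Psi\preceq\bm\Psi^*\bm\Psi$ where $\bm\Psi$ is the synthesis operator of the raw monomials. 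So not only does your identity fail, the inequality runs in the \emph{wrong} direction: the permanent/symmetric-power Gram is an \emph{upper} bound on $\bY$, and you would obtain no lower bound on $\lambda_{\min}(\bY)$ this way. The paper's Proposition~\ref{prop:Y-positive} avoids this by factoring through coefficient vectors in the abstract-variable multilinear basis: it writes $\bY^{[d,d]}=\bU^\top\bA\bU$ where the columns of $\bU$ are the coefficients of $h_S(\bx)$ over $\sM_d([N])$ and $\bA$ is the full multiset-indexed Gram of the monomials $(\bV^\top\bz)^\alpha$; then $\bA\succeq\sigma_{\min}^{2d}d!\,\bm I$ by orthogonal invariance and $\bU^\top\bU\succeq\bm I$ because the set-indexed rows of $\bU$ form an identity block (each $h_S$'s unique multilinear monomial is $\bx^S$). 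Your step needs to be replaced with something of this form.
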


\begin{lemma}[Bound on error term]
    \label{lem:Z-err-bound}
    Under the assumptions of Theorem~\ref{thm:lifting},
    \begin{equation}
        \|\bZ^{\err}\| \leq (12d)^{32d}\|\bM\|^{5d}\epsilon_{\err}(\bM; 2d).
    \end{equation}
\end{lemma}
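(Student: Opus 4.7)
The plan is to expand each entry $Z^{\err}_{S,T} = \tEE^{\err}[h_S^{\downarrow}(\bx)h_T^{\downarrow}(\bx)]$ into a signed sum of CGSs indexed by ribbon diagrams, and then bound the resulting CGM norms component-by-component, exploiting that every surviving term contains at least one \emph{error component} whose size is controlled by $\epsilon_{\err}(\bM;2d)$.

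First I would substitute the definition of $h_S^{\downarrow}$ as a signed sum over partitions of $S$ of products of $q_A^{\downarrow}$'s. Since each $q_A^{\downarrow}$ is itself (essentially) a CGS corresponding to a star ribbon diagram---with possibly one free $x$-leaf when $|A|$ is odd---the product $h_S^{\downarrow}(\bx)\, h_T^{\downarrow}(\bx)$ becomes a polynomial in $\bx$ whose coefficients are products of CGSs attached to star-forest structures with $\bullet$-leaves on $\sL := S$ and $\sR := T$, plus some $\bullet$-leaves carrying the free variables $x_a$. Applying the factorization of $\tEE^{\err}$ from Proposition~\ref{prop:E-err-factorization} to each such monomial then attaches each free-variable group to either (a) a good-forest component contributing to $\tEE^{\main}$, or (b) an error tree $T \in \sT(|R|)$ carrying the factor $\Delta^T(\bM;R)$, with at least one error tree present. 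This exhibits $\bZ^{\err}$ as a signed sum over a family of ribbon diagrams $F \in \sF(|\sL|+?, |\sR|+?)$ of CGMs with a designated subset of ``error components.''

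Next, using that CGMs tensorize as Kronecker products over connected components (Proposition~\ref{prop:cgm-tt-tensorization}), the spectral norm of each such CGM factors as a product over its components. For components without error constraints (main-term good trees and the $h^{\downarrow}$-star components), I would invoke the generic CGM norm bounds from Appendix~\ref{app:cgm-tools}, each contributing a bounded power of $\|\bM\|$ and a polynomial-in-$d$ factor. For each \emph{error} component---a tree $T \in \sT(2d')$ with leaves split between $\sL$ and $\sR$ whose CGM carries the looseness restriction on $\square$-vertex summations---the entries are bounded in absolute value by $\epsilon_{\err}(\bM;2d)/N^{|\set(\bs)|/2}$, by the very definition of $\epsilon_{\err}$. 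Slicing the corresponding submatrix by the value of $|\set(\bs)|$ and applying a Frobenius-type bound within each slice converts this entrywise bound into an operator-norm bound of order $\epsilon_{\err}(\bM;2d) \cdot (Cd)^{Cd} \cdot \|\bM\|^{O(d)}$: the $N^{|\set(\bs)|/2}$ denominator in the entry bound exactly cancels the $N^{|\set(\bs)|}$ contributed by the number of distinct leaf-labellings in that slice, leaving only $d$-dependent combinatorial overhead.

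Combining the component-wise bounds and summing over the finitely many ribbon-diagram topologies produced by the expansion---whose number is at most $(Cd)^{Cd}$, accounting for the partitions of $S$ and $T$, the contracting-forest structure, and the partition into main and error components---yields the claimed bound after collecting the $(12d)^{32d}\|\bM\|^{5d}$ factors. The main obstacle I anticipate is controlling the operator norm of the error components: because the looseness condition partially freezes $\square$-vertex indices in a tree-dependent way, the standard spectral estimates for CGMs of good trees do not directly apply, and one must carefully match the number of truly free summations to $|\set(\bs)|$---this is exactly the mechanism by which the seemingly ad hoc $N^{|\set(\bs)|/2}$ normalization in the definition of $\epsilon_{\err}$ earns its keep. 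The remaining bookkeeping of $d$- and $\|\bM\|$-dependent constants, while tedious, is routine.
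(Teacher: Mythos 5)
Your plan matches the paper's proof in all essentials: expand $\bZ^{\err}$ in the multiharmonic basis via Proposition~\ref{prop:E-err-factorization} (this is what yields Proposition~\ref{prop:Z-err-harmonic-basis}), split each term into a stretched-forest factor applied to subsets $A\subseteq S$, $B\subseteq T$ and an error-tree factor applied to the complement, tensorize over components, and turn the entrywise $N^{-|\set(\bs)|/2}\epsilon_{\err}$ bound on the $\Delta^T$ values into an operator-norm bound through a Frobenius estimate, with the $N^{-|\set(\bs)|}$ weight cancelling the $\approx N^{|\set(\bs)|}$ labellings of a given collision type.

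Two steps you gesture at are worth being more precise about, because as phrased they overstate what you actually have. First, after the expansion the error part is not the CGM of a single ribbon diagram containing $T$; what appears is the \emph{partition-error matrix} $\bm\Delta^{(\sigma,\tau,T)}$ of Definition~\ref{def:partition-error-matrix}, which wraps the $\Delta^T$ values between the $q^{\downarrow}_A$ structures coming from $h_S^{\downarrow}$ and $h_T^{\downarrow}$. The paper's Proposition~\ref{prop:partition-error-factorization} factorizes it as $\bZ^{\sigma}\,\bF^{(T,n,p)}\,\bZ^{\tau\top}$, and the entrywise bound and Frobenius estimate are applied only to the \emph{inner} matrix $\bF^{(T,n,p)}$ (Proposition~\ref{prop:err-frob-norm-bound}); the outer factors are handled by the generic CGM norm bound and contribute the $\|\bM\|^{O(d)}$ piece. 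Your proposal, as written, suggests the entries of the whole error component's CGM carry the $N^{-|\set(\bs)|/2}\epsilon_{\err}$ bound, which is not true once the outer $q^{\downarrow}$ sums are attached. Second, the error tree $T\in\sT(n+p)$ need not have leaves on both sides: the partition of $(S\setminus A)+(T\setminus B)$ in Proposition~\ref{prop:Z-err-harmonic-basis} allows parts lying entirely in $S$ or entirely in $T$, in which case $n=0$ or $p=0$ and $\bF^{(T,n,p)}$ is a row or column vector. It is exactly this sided case where the Frobenius bound is tight (and would naively blow up in $N$), and where the $N^{|\set(\bs)|/2}$ weight baked into $\epsilon_{\err}$ does the work — this is the phenomenon the paper flags as the ``$\epsilon_{\err}$ obstacle'' in Section~\ref{sec:future}. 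With the factorization step made explicit and the sided case acknowledged, your argument coincides with the paper's.
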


\noindent
Given these statements, it is straightforward to prove our main theorem.

\begin{proof}[Proof of Theorem~\ref{thm:lifting}]
    Since the only multilinear monomial in $h_S^{\downarrow}(\bx)$ is $\bx^S$, the $h_S^{\downarrow}(\bx)$ for $S \in \binom{[N]}{\leq d}$ form a basis for $\RR[x_1, \dots, x_N]_{\leq d} / \sI$ for $\sI$ the ideal generated by $\{x_i^2 - 1\}_{i = 1}^N$.
    Thus by Proposition~\ref{prop:positivity-any-basis} it suffices to show $\bZ \succeq \bm 0$.
    Since $\bZ = \bZ^{\main} + \bZ^{\err}$, we have $\lambda_{\min}(\bZ) \geq \lambda_{\min}(\bZ^{\main}) - \|\bZ^{\err}\|$.
    Substituting the results of Lemmata~\ref{lem:Z-main-positive} and \ref{lem:Z-err-bound} then gives the result.
\end{proof}

\subsection{Approximate Block Diagonalization of the Main Term}

As a first step towards showing the positivity of $\bZ^{\main}$, we show that our choice of writing the pseudomoments of $\tEE^{\main}$ in the multiharmonic basis makes $\bZ^{\main}$ approximately block-diagonal.
This verifies what we expect based on the informal argument leading up to Remark~\ref{rem:heuristic-block-diag}.

\paragraph{Reducing to stretched ribbon diagrams}
We first describe an important cancellation in $\bZ^{\main}$.
Writing the pseudomoments in the multiharmonic basis in fact leaves only the following especially well-behaved type of forest ribbon diagram.
Below we call a $\square$ vertex \emph{terminal} if it is incident to any leaves.
\begin{definition}[Stretched forest ribbon diagram]
    We say that $F \in \sF(\ell, m)$ is \emph{stretched} if it satisfies the following properties:
    \begin{enumerate}
    \item Every terminal $\square$ vertex of $F$ has a neighbor in both $\sL$ and $\sR$.
    \item No connected component of $F$ is a \emph{sided pair}: a pair of connected $\bullet$ vertices both lying in $\sL$ or both lying in $\sR$.
    \item No connected component of $F$ is a \emph{skewed star}: a star with one vertex in $\sL$ and more than one vertex in $\sR$, or one vertex in $\sR$ and more than one vertex in $\sL$.
    \end{enumerate}
\end{definition}
\noindent
A fortunate combinatorial cancellation shows that, in the multiharmonic basis, the pseudomoment terms of stretched forest ribbon diagrams retain their initial coefficients, while non-stretched forest ribbon diagrams are eliminated.
\begin{proposition}
    \label{prop:Z-main-harmonic-basis}
    For any $S, T \subseteq [N]$,
    \begin{equation}
        Z^{\main}_{S,T} = \tEE^{\main}[h_S^{\downarrow}(\bx)h_T^{\downarrow}(\bx)] = \sum_{\substack{F \in \sF(|S|, |T|) \\ F \text{ stretched}}} \mu(F) \cdot Z^F_{S, T}(\bM).
    \end{equation}
\end{proposition}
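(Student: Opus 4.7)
The plan is to expand $\tEE^{\main}[h_S^{\downarrow}(\bx) h_T^{\downarrow}(\bx)]$ combinatorially, regroup the resulting sum according to which forest ribbon diagram is assembled, and then establish the vanishing of the coefficients of non-stretched diagrams by a Möbius-inversion argument. First, I would identify the coefficient $(-1)^{|A|-1}(|A|-1)!$ in the definition of $h_S^{\downarrow}$ as $\mu_{\Part}(\oslash, \{A\})$ via Example~\ref{ex:mobius-part}, so that
\begin{equation*}
h_S^{\downarrow}(\bx) = \sum_{\sigma \in \Part(S)} \mu_{\Part}(\oslash, \sigma) \prod_{A \in \sigma} q_A^{\downarrow}(\bx),
\end{equation*}
and analogously for $h_T^{\downarrow}$. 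After expanding the product and applying Definition~\ref{def:lifting} of $\tEE^{\main}$ to the polynomial argument (whose degree equals the number of odd-sized parts of $\sigma \cup \tau$, each contributing one ``stub'' $x$-factor), the result is a triple sum over $(\sigma, \tau, F_{\text{outer}})$ with $F_{\text{outer}}$ ranging over good forests on the free variables (singletons of $\sigma$, $\tau$ together with stubs).

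Second, each triple assembles into a unique forest ribbon diagram $\bar F \in \sF(|S|, |T|)$ by the following gluing rule: every part $A \in \sigma$ with $|A| \geq 3$ contributes an internal $\square$ vertex incident to the $\sL$-leaves of $A$, with a stub going into $F_{\text{outer}}$ when $|A|$ is odd; size-$2$ parts of $\sigma$ become isolated sided pairs on $\sL$; singletons and stubs become $\bullet$-leaves of $F_{\text{outer}}$; and symmetrically for $\tau$ on $\sR$. Collecting terms by $\bar F$ gives
\begin{equation*}
Z^{\main}_{S, T} = \sum_{\bar F \in \sF(|S|, |T|)} \Biggl(\,\sum_{(\sigma, \tau, F_{\text{outer}}) \mapsto \bar F} \mu_{\Part}(\oslash, \sigma) \, \mu_{\Part}(\oslash, \tau) \, \mu(F_{\text{outer}})\Biggr) \cdot Z^{\bar F}_{S, T}(\bM).
\end{equation*}
Because $\mu(F_{\text{outer}})$ factorizes over the $\square$ vertices in $F_{\text{outer}}$, $\mu_{\Part}$ factorizes over parts, and the gluing is local to connected components of $\bar F$, the bracketed sum factorizes over components of $\bar F$, reducing the claim to a per-component identity.

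For a stretched component $C$, the only compatible decomposition has all $\sigma$- and $\tau$-parts singletons and $F_{\text{outer}} = C$: any non-singleton part would either be a sided pair (forbidden by condition~2) or install a $\square$ with all leaf-neighbors on one side (forbidden by condition~1), and no skewed-star configuration exists (forbidden by condition~3); the coefficient is therefore $1 \cdot 1 \cdot \mu(C) = \mu(C)$. For non-stretched $C$ the cancellation is verified by type: (i) for a sided pair the size-$2$ part contributes $-1$ and the two-singletons-plus-outer-$S_2$ contribution is $+1$; (ii) for a skewed star with one $\sL$-leaf and $k \geq 3$ odd $\sR$-leaves, the single odd $\tau$-part joined by an outer pair contributes $(k-1)!$ while the all-singleton decomposition with outer $S_{k+1}$ contributes $\mu(S_{k+1}) = -(k-1)!$. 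The main technical obstacle is the case of a terminal $\square$ vertex with all leaf-neighbors on one side: here one must simultaneously vary the partitions of that vertex's leaf-neighbors \emph{and} the star-or-outer status of its adjacent $\square$ vertices, and exhibit a sign-reversing involution (or equivalently invoke the partition-Möbius identity $\sum_{\pi \in \Part([k])} \mu_{\Part}(\oslash, \pi) = 0$ together with the star values $\mu(S_m) = -(m-2)!$ from Proposition~\ref{prop:mobius-fn-star}). Making this cancellation factor cleanly despite the coupling between adjacent $\square$ vertices---so that the identity applies uniformly to any component violating condition~1, regardless of how the offending vertex is connected to the rest of the component---is the central combinatorial difficulty of the proof.
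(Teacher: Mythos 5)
Your plan — expand via the Möbius identity of Example~\ref{ex:mobius-part}, regroup by the assembled forest ribbon diagram, and exploit factorization over components — is the same skeleton as the paper's proof, and your verification of the sided-pair and skewed-star cancellations is correct. But the proposal has a real gap exactly where you flag the ``central combinatorial difficulty.'' The paper resolves that step not by a sign-reversing involution across a coupled component, but by observing that the residual coefficient $\eta(F)$ (after factoring out $\mu(F)$) factorizes as a product over \emph{individual} terminal $\square$ vertices and pair/star components, because the compatible partition choices at a terminal $\square$ vertex concern only that vertex's own leaf-set and are independent of the choices at any other vertex; the composing forest $F'$ is then uniquely determined. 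There is no coupling between adjacent $\square$ vertices to disentangle — the result at a bad terminal $\square$ vertex with $k$ odd leaf-neighbors all on one side is simply the local factor $1 + (-1) = 0$ (all-singletons versus one odd part), and the global product kills the whole CGM term. You gesture at this when you describe the stretched case, but you do not establish the factorization per vertex, and without it you really are staring at a coupled sum.

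The identity you reach for is also not the one that applies: $\sum_{\pi \in \Part([k])} \mu_{\Part}(\oslash, \pi) = 0$ ranges over all partitions of $[k]$, but at a terminal $\square$ vertex only two partitions of its leaf-set are compatible with producing a fixed $F$ (all singletons, or the whole set as one part — anything else creates a new $\square$ vertex with the wrong degree or a spurious component). The cancellation there is not internal to $\mu_{\Part}$; it is the interplay between $\mu_{\Part}(\oslash, \{A\}) = (-1)^{|A|-1}(|A|-1)!$ for the one-part option and the loss of $v$'s factor $-(\deg v - 2)!$ from $\mu(F')$ in the all-singleton option. You should also be careful that the paper's enumeration of options at a terminal $\square$ vertex assumes the number $k$ of its sided leaf-neighbors is odd (necessarily the case when $v$ has exactly one $\square$-neighbor); if you try to push your component-level argument through for a terminal $\square$ vertex with all leaf-neighbors on one side and two or more $\square$-neighbors (so $k$ even), you will find only the all-singleton option is compatible and the naive local factor is $1$, not $0$ — this is a case your argument (and the paper's bullet list) need to address explicitly, and it is worth checking rather than absorbing into the ``coupling'' concern.
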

\begin{proof}
    We expand directly:
    \begin{align*}
      &\hspace{-0.2cm}\tEE^{\main}[h_S^{\downarrow}(\bx)h_T^{\downarrow}(\bx)] \\
      &= \tEE^{\main}\Bigg[\Bigg(\sum_{\sigma \in \Part(S)} \prod_{A \in \sigma}(-1)^{|A| - 1}(|A| - 1)! \, q_A^{\downarrow}(\bx)\Bigg) \Bigg(\sum_{\tau \in \Part(T)} \prod_{B \in \tau}(-1)^{|B| - 1}(|B| - 1)! \, q_B^{\downarrow}(\bx)\Bigg) \Bigg] \nonumber
      \intertext{}
      &= \sum_{\substack{\sigma \in \Part(S) \\ \tau \in \Part(T)}} \prod_{R \in \sigma + \tau} (-1)^{|R| - 1}(|R| - 1)!
                                                                                                                                                       \prod_{R \in \sigma[\even] + \tau[\even]} q_R^{\downarrow} \nonumber \\
      &\hspace{1.35cm} \sum_{\substack{\ba \in [N]^{\sigma[\odd; \geq 3]} \\ \bb \in [N]^{\tau[\odd; \geq 3]}}} \prod_{A \in \sigma[\odd; \geq 3]} \prod_{i \in A} M_{a(A), i} \, \cdot \, \prod_{B \in \tau[\odd; \geq 3]} \prod_{j \in B} M_{b(B), j} \, \, \cdot \nonumber \\
      &\hspace{4.1cm}  \tEE^{\main}\left[\prod_{\{i\} \in \sigma[1]} x_i \prod_{A \in \sigma[\odd; \geq 3]} x_{a(A)} \prod_{\{j\} \in \tau[1]} x_j \prod_{B \in \tau[\odd; \geq 3]} x_{b(B)}\right]
        \intertext{Let us write $f_{\ba}: \sigma[\odd] \to [N]$ to map $A = \{i\} \mapsto i$ when $|A| = 1$ and to map $A \mapsto a(A)$ when $|A| \geq 3$, and likewise $g_{\bm b}: \tau[\odd] \to [N]$. Then, expanding the pseudoexpectation, we have}
      &= \sum_{\substack{\sigma \in \Part(S) \\ \tau \in \Part(T)}} \prod_{R \in \sigma + \tau} (-1)^{|R| - 1}(|R| - 1)!
                                                                                                                                                       \prod_{R \in \sigma[\even] + \tau[\even]} q_R^{\downarrow} \nonumber \\
      &\hspace{1.35cm} \sum_{\substack{\ba \in [N]^{\sigma[\odd; \geq 3]} \\ \bb \in [N]^{\tau[\odd; \geq 3]}}} \prod_{A \in \sigma[\odd; \geq 3]} \prod_{i \in A} M_{a(A), i} \, \cdot \, \prod_{B \in \tau[\odd; \geq 3]} \prod_{j \in B} M_{b(B), j} \, \, \cdot \nonumber \\
      &\hspace{1.48cm} \sum_{F \in \sF(|\sigma[\odd]|, |\tau[\odd]|)} \mu(F) \cdot Z^F_{(f_{\ba}(A))_{A \in \sigma[\odd]}, (g_{\bb}(B))_{B \in \tau[\odd]}}
        \intertext{We say that $F \in \sF(|S|, |T|)$ is an \emph{odd merge of $(\sigma, \tau)$ through $F^{\prime} \in \sF(|\sigma[\odd]|, |\tau[\odd]|)$} if $F$ consists of even stars on the even parts of $\sigma$ and $\tau$, and even stars on the odd parts of $\sigma$ and $\tau$ with one extra leaf added to each, composed in the sense of the compositional ordering of Definition~\ref{def:compositional-poset} with $F^{\prime}$.
        See Figure~\ref{fig:odd-merge} for an example.
        When $F$ is an odd merge of $(\sigma, \tau)$ through $F^{\prime}$, then $F^{\prime}$ is uniquely determined by $\sigma, \tau$, and $F$.
        Using this notion, we may rewrite the above as}
      &= \sum_{F \in \sF(|S|, |T|)}\Bigg(\sum_{\substack{\sigma \in \Part(S) \\ \tau \in \Part(T) \\ F^{\prime} \in \sF(|\sigma[\odd]|, |\tau[\odd]|) \\ F \text{ is an odd merge} \\ \text{of } (\sigma, \tau) \text{ through } F^{\prime}}} \prod_{R \in \sigma + \tau} (-1)^{|R| - 1}(|R| - 1)!
                                                                                                                                                                                                                                                                                                                                 \cdot \mu(F^{\prime})\Bigg)Z^F_{S,T}
                                                                                                                                                                                                                                                                                                                                 \intertext{We make two further simplifying observations. First, the factors of $(-1)^{|R|}$ multiply to $(-1)^{|S| + |T|}$, and $|S| + |T|$ must be even in order for $\sF(|S|, |T|)$ to be non-empty, so we may omit the $(-1)^{|R|}$ factors. Second, by the factorization $\mu(F) = \prod_{v \in V^{\square}(F)}(-(\deg(v) - 2)!)$, when $F$ is an odd merge of $(\sigma, \tau)$ through $F^{\prime}$ then we have $\mu(F) = \mu(F^{\prime}) \prod_{|R| \geq 3 \text{ is odd}}(-(|R| - 1)!) \prod_{|R| \geq 4 \text{ is even}}(-(|R| - 2)!)$, where both products are over $R \in \sigma + \tau$ satisfying the given conditions. It may again be helpful to consult Figure~\ref{fig:odd-merge} to see why this formula holds. Using this, we may extract $\mu(F)$ and continue}
      &= \sum_{F \in \sF(|S|, |T|)}\Bigg(\underbrace{\sum_{\substack{\sigma \in \Part(S) \\ \tau \in \Part(T) \\ F \text{  odd merge} \\ \text{of } (\sigma, \tau)}} (-1)^{|\sigma[\leq 2]| + |\tau[\leq 2]|} \prod_{R \in \sigma[\even; \geq 4] + \tau[\even; \geq 4]}(R - 1)}_{\equalscolon \, \eta(F)}
                                                                      \Bigg)\mu(F) \cdot Z^F_{S,T}.
    \end{align*}
    \begin{figure}
    \begin{center}
        \includegraphics[scale=0.7]{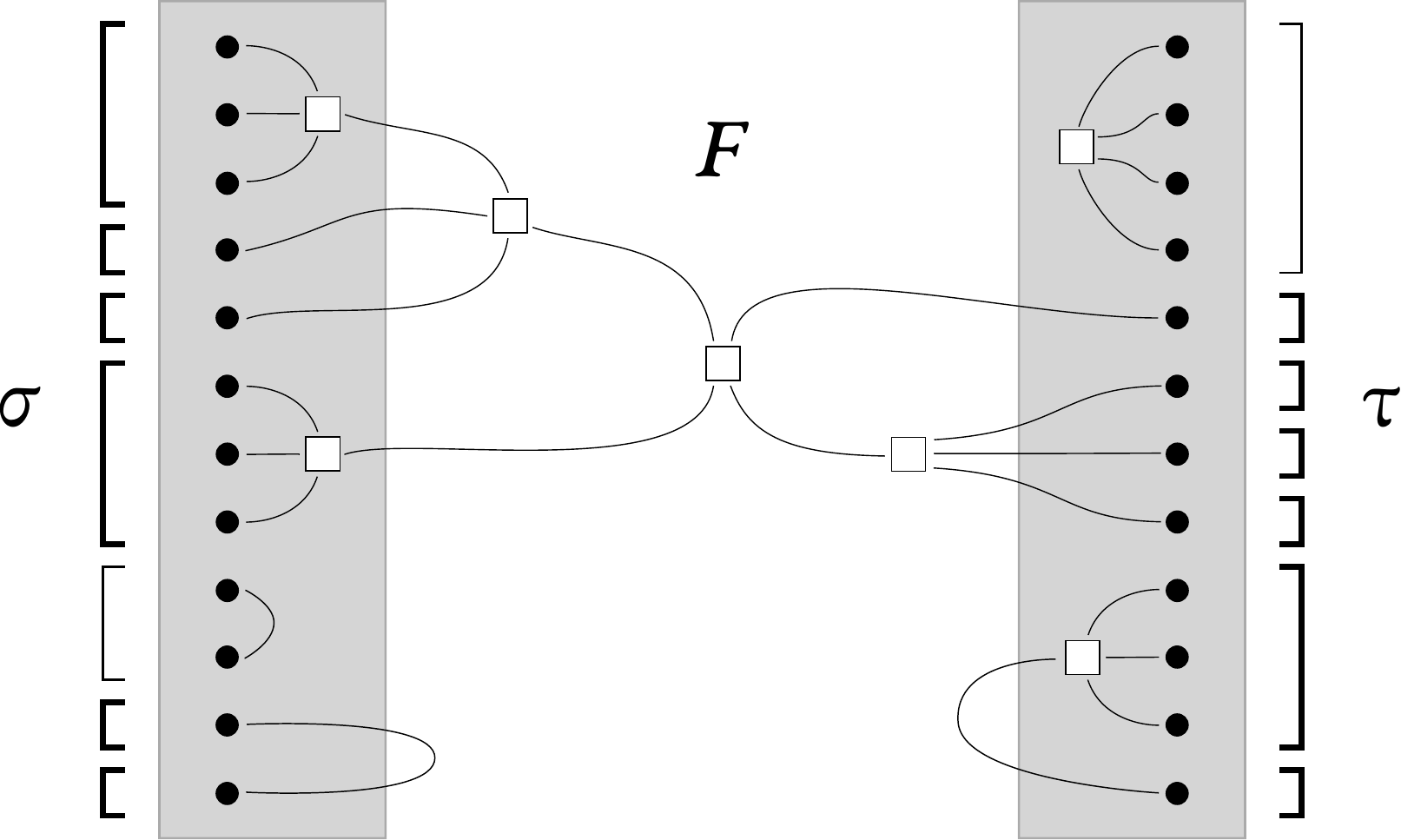}  
    \end{center}
    \caption{\textbf{Odd merge of partitions.} We illustrate an odd merge of two partitions $\sigma$ and $\tau$ through a forest ribbon diagram $F$, as used in the proof of Proposition~\ref{prop:Z-main-harmonic-basis}. The gray boxes show the components of the resulting diagram arising from the partitions (and, one step before, arising from terms in the multiharmonic basis polynomials $h_S^{\downarrow}(\bx)$), while the remainder is the forest ribbon diagram that merges the odd parts of the partitions. The odd parts that are merged by $F$ are highlighted with bold brackets.}
    \label{fig:odd-merge}
\end{figure}

    It remains to analyze the inner coefficient $\eta(F)$.
    To do this, it suffices to enumerate the pairs of partitions $(\sigma, \tau)$ of which $F$ is an odd merge.
    We describe the possible partitions below.
    \begin{itemize}
    \item If $v \in V^{\square}(F)$ is the only $\square$ vertex of a skewed star connected component with leaves $i_1, \dots, i_k \in \sL$ (for $k$ odd) and $j \in \sR$, then $j$ must be a singleton in $\tau$ while $i_1, \dots, i_k$ can either (1)~all be singletons in $\sigma$ or (2)~constitute one part $\{i_1, \dots, i_k\}$ of $\sigma$.
        A symmetric condition holds if there is more than one leaf in $\sR$ and one leaf in $\sL$.
    \item If $v \in V^{\square}(F)$ is the only $\square$ vertex of a sided star connected component with leaves $i_1, \dots, i_k \in \sL$ (for $k$ even), then the $i_1, \dots, i_k$ can either (1)~all be singletons in $\sigma$, (2)~constitute one part $\{i_1, \dots, i_k\}$ of $\sigma$, or (3)~be divided into an odd part $\{i_1, \dots, i_k\} \setminus \{i_{k^{\star}}\}$ and a singleton $\{i_{k^{\star}}\}$ for any choice of $k^{\star} \in 1, \dots, k$.
        A symmetric condition holds if the leaves are all in $\sR$.
    \item If $i_1, i_2 \in \sL$ form a sided pair in $F$, then $i_1, i_2$ can either (1)~both be singletons in $\sigma$, or (2)~constitute one part $\{i_1, i_2\}$ of $\sigma$.
        A symmetric condition holds if the two leaves are in $\sR$.
    \item If $v \in V^{\square}(F)$ is terminal, is not the only $\square$ vertex of its connected component, and has leaf neighbors $i_1, \dots, i_k \in \sL$ (for $k$ odd), then the $i_1, \dots, i_k$ can either (1)~all be singletons in $\sigma$, or (2)~constitute one part $\{i_1, \dots, i_k\}$ of $\sigma$.
        A symmetric condition holds if the leaves are all in $\sR$.
    \item If $v \in V^{\square}(F)$ is terminal, is not the only $\square$ vertex of its connected component, and has leaf neighbors in both $\sL$ and $\sR$, then all leaves attached to $v$ must be singletons in $\sigma$ and $\tau$ (according to whether they belong to $\sL$ or $\sR$, respectively).
    \item If $i \in \sL$ and $j \in \sR$ form a non-sided pair in $F$, then $i$ and $j$ must be singletons in $\sigma$ and $\tau$, respectively.
    \end{itemize}
    Factorizing $\eta(F)$ according to which terminal $\square$ vertex or pair connected component each leaf of $F$ is attached to using these criteria, we find
    \begin{align}
      \eta(F)
      &= \prod_{\substack{C \in \conn(F) \\ C \text{ sided pair}}}(1 - 1) \prod_{\substack{v \text{ terminal in } V^{\square}(F), \\ \text{all leaf neighbors of } v \text{ in } \sL \\ \text{or all leaf neighbors of } v \text{ in } \sR}}(1 - 1) \prod_{\substack{C \in \conn(F) \\ C \text{ sided even star} \\ \text{on } k \geq 4 \text{ leaves}}} ((k - 1) - k + 1) \, \cdot \nonumber \\
      &\hspace{0.3cm}\prod_{\substack{C \in \conn(F) \\ C \text{ skewed star} \\ \text{on } \geq 4 \text{ leaves}}}(1 - 1) \nonumber \\
      &= \One\{F \text{ is stretched}\},
    \end{align}
    completing the proof.
\end{proof} 

\paragraph{Tying stretched ribbon diagrams}
The above does not appear to give the block diagonalization we promised---there exist stretched ribbon diagrams in $\sF(\ell, m)$ even when $\ell \neq m$, so the off-diagonal blocks of $\bZ^{\main}$ are non-zero.
To find that this is an \emph{approximate} block diagonalization, we must recognize that the CGMs of many stretched ribbon diagrams are approximately equal (up to a small error in operator norm) and then observe another combinatorial cancellation in these ``standardized'' diagrams.

Specifically, we will show that all stretched forest ribbon diagrams' CGMs can be reduced to the following special type of stretched forest ribbon diagram.
\begin{definition}[Bowtie forest ribbon diagram]
    \label{def:bowtie-forest-ribbon}
    We call $F \in \sF(\ell, m)$ a \emph{bowtie forest} if every connected component of $F$ is a pair or star with at least one leaf in each of $\sL$ and $\sR$.
    Each connected component of such $F$ is a \emph{bowtie}.
    We call a bowtie or bowtie forest \emph{balanced} if all components have an equal number of leaves in $\sL$ and in $\sR$.
\end{definition}
\noindent
Note that there are no balanced bowtie forests in $\sF(\ell, m)$ unless $\ell = m$; thus, since in our final expression for the pseudomoments below only balanced bowtie forests will remain, we will indeed have an approximate block diagonalization.

Next, we show that any stretched forest ribbon diagram can be ``tied'' to form a bowtie forest ribbon diagram by collapsing every non-pair connected component to have a single $\square$ vertex, while incurring only a small error in the associated CGMs.
\begin{lemma}[Stretched forest ribbon diagrams: tying bound]
    \label{lem:tying-stretched-ribbon}
    Suppose that $\epsilon_{\tree}(\bM; (\ell + m) / 2) \leq 1$.
    Let $F \in \sF(\ell, m)$ be stretched.
    Let $\tie(F) \in \sF(\ell, m)$ be formed by replacing each connected component of $F$ that is not a pair with the bowtie of a single $\square$ vertex attached to all of the leaves of that connected component.
    Then, $\tie(F)$ is a bowtie forest, and
    \[ \|\bZ^F - \bZ^{\tie(F)}\| \leq (\ell + m) (2\|\bM\|)^{\frac{3}{2}(\ell + m)} \epsilon_{\tree}(\bM; (\ell + m) / 2). \]
\end{lemma}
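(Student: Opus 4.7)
The plan is a two-level hybrid argument: first reduce over connected components of $F$, then control a single component's contribution.

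\textbf{Component-level hybrid.} First I would use the tensorization of CGMs over connected components (the set-indexed analog of the tensorization statement developed in the appendix) to write $\bZ^F$ and $\bZ^{\tie(F)}$ as tensor products of single-component CGMs, with pair components unchanged under $\tie$ (since $\tie(C) = C$ when $C$ is a pair). A standard telescoping bound then gives
\[
\|\bZ^F - \bZ^{\tie(F)}\| \leq \sum_{C} \|\bZ^{C} - \bZ^{\tie(C)}\|\,\prod_{C' \neq C} \max\!\left\{\|\bZ^{C'}\|,\, \|\bZ^{\tie(C')}\|\right\},
\]
where the outer sum runs only over non-pair components. Because $F$ is stretched, every component has leaves in both $\sL$ and $\sR$, so Proposition~\ref{prop:cgm-generic-norm-bound} (generic CGM norm bound) applies without $N$-blowup; the bound is a constant-factor power of $\|\bM\|$ whose exponent is linear in the number of edges. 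For a good tree on $k$ leaves, parity and the internal-degree condition force at most $(3k-4)/2 \leq 3k/2$ edges, which gives each component CGM norm at most $(2\|\bM\|)^{3k_C/2}$. Multiplying across components produces the claimed factor $(2\|\bM\|)^{3(\ell+m)/2}$, while the number of non-pair components is at most $(\ell+m)/2$, contributing the $(\ell+m)$ prefactor after a crude upper bound.

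\textbf{Single-component bound.} For a non-pair stretched tree $T$ with $k$ leaves, both $T$ and $\tie(T)$ lie in $\sT(k)$, so the definition of $\epsilon_{\tree}(\bM;k/2)$ yields, for every tuple $\bs \in [N]^k$,
\[
|Z^T(\bM;\bs) - \One\{s_1 = \cdots = s_k\}| \leq \epsilon_{\tree}, \qquad |Z^{\tie(T)}(\bM;\bs) - \One\{s_1 = \cdots = s_k\}| \leq \epsilon_{\tree},
\]
so by the triangle inequality the pointwise difference is at most $2\epsilon_{\tree}$. The nontrivial point is converting this entrywise bound into an operator-norm bound on the set-indexed matrix $\bZ^T - \bZ^{\tie(T)}$ without paying an $N$-factor. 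I would do this by expressing $\bZ^T - \bZ^{\tie(T)}$ as a structured CGM-like object (decomposing the difference along the edges/internal vertices of $T$, so that each ``residual'' sub-diagram inherits the $\epsilon_{\tree}$ smallness while the remaining factors are controlled by Proposition~\ref{prop:cgm-generic-norm-bound}) and then invoking the CGM tools of Appendix~\ref{app:cgm-tools} to turn the sub-diagram's entrywise smallness into an operator-norm smallness at the cost of only a $\poly(\|\bM\|)$ factor, which is already absorbed into the $(2\|\bM\|)^{3(\ell+m)/2}$ budget.

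\textbf{Main obstacle.} The hardest step is the last one: lifting the pointwise $\epsilon_{\tree}$ estimate to an operator-norm estimate with only polynomial-in-$\|\bM\|$ overhead. A naive Frobenius-to-spectral conversion would cost a factor of $N^{(\ell+m)/2}$, which is fatal, so the argument must exploit the specific CGM structure of $\bZ^T - \bZ^{\tie(T)}$---i.e., that this difference is itself a signed sum of CGMs whose diagrams ``inherit'' the collapsing-subtree structure that is controlled by $\epsilon_{\tree}$. This is precisely the role of the CGM-specific machinery in the appendix, and the main technical content of the proof lies in setting up that decomposition correctly.
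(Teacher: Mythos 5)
Your component-level reduction matches the paper's: both telescope over connected components (the paper uses Proposition~\ref{prop:cgm-conn-bd}, a consequence of the tensorization in Proposition~\ref{prop:cgm-tt-tensorization}), bound each component's CGM norm via Proposition~\ref{prop:cgm-generic-norm-bound} and Corollary~\ref{cor:sF-vertex-edge-bound}, and thereby account for the $(\ell+m)(2\|\bM\|)^{3(\ell+m)/2}$ prefactors.

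The single-component step, however, has a genuine gap. You correctly identify the obstacle---that the entrywise $\epsilon_{\tree}$ control on the CGS does not naively transfer to an operator-norm bound on the set-indexed CGM without an $N$-dependent loss---but you do not supply the mechanism that resolves it, and ``invoking the CGM tools of Appendix~\ref{app:cgm-tools} to turn entrywise smallness into operator-norm smallness'' is not a generic feature of those tools. The resolution in the paper is specific and uses the stretched hypothesis in an essential way: factorize $\bZ^T = \bZ^{G[A]}\bZ^{G[B]}\bZ^{G[C]}$ via Proposition~\ref{prop:factorization} with $A = \sL$, $B = V^{\square}(T)$, $C = \sR$. Because $T$ is stretched, every terminal $\square$ vertex has a neighbor in both $\sL$ and $\sR$, so the outer boundaries coincide: $\partial_{\tout}\sL = \partial_{\tout}\sR = U$, the set of terminal $\square$ vertices. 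This forces $\sL(G[B]) = \sR(G[B]) = U$, and then the CGM prefactor $\prod_{i \in \sL\cap\sR}\One\{s(\kappa_{\sL}(i)) = t(\kappa_{\sR}(i))\}$ makes $\bZ^{G[B]}$ a \emph{diagonal} matrix indexed by $[N]^{|U|}$. For a diagonal matrix the operator norm equals the max entry magnitude, so the entrywise $\epsilon_{\tree}$ control transfers at no cost. The same holds for $\tie(T)$, whose inner factor $G'[B]$ simply relabels the edges of $G[B]$ with identity matrices, and the entrywise (hence operator-norm) difference $\|\bZ^{G[B]} - \bZ^{G'[B]}\|$ is bounded by telescoping over an edge-partition of $G[B]$ into subtrees (each a good tree with leaves in $U$), paying $2^n\epsilon_{\tree}$. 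Multiplying by $\|\bZ^{G[A]}\|\|\bZ^{G[C]}\| \le \|\bM\|^{\ell+m}$ completes the component bound. Without identifying that the stretched condition makes the inner factor diagonal, your proposal does not close the gap you yourself flag as the main obstacle.
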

\noindent
We give the proof in Appendix~\ref{app:pf:lem:tying-stretched-ribbon}.
The basic intuition behind the result is that, since every terminal $\square$ vertex of a stretched ribbon diagram is connected to both $\sL$ and $\sR$, the corresponding CGM may be factorized into $\bZ^F = \bA^{\sL} \bD \bA^{\sR}$, where $\bA^{\sL}$ and $\bA^{\sR}$ correspond to the contributions of edges attaching $\sL$ and $\sR$ respectively to the internal vertices, while $\bD$ is CGM of the induced ``inner''  ribbon diagram on the $\square$ vertices.
Thanks to the diagram being stretched, $\bD$ is actually diagonal, and the result essentially states that its only significant entries are those corresponding to all $\square$ vertices in each connected component having the same index.
That is the origin of the $\epsilon_{\tree}$ incoherence quantity here.

We next define the result of tying all of the ribbon diagrams in $\bZ^{\main}$:
\begin{align}
  Z^{\tied}_{S,T}
  &\colonequals \sum_{\substack{F \in \sF(|S|, |T|) \\ F \text{ stretched}}} \mu(F) \cdot Z^{\tie(F)}_{S, T}, \label{eq:def:Z-tied}
  \intertext{where since each bowtie forest can be formed by tying multiple stretched forests, we rewrite to isolate the resulting coefficient of each bowtie forest,}
  &= \sum_{\substack{F \in \sF(|S|, |T|) \\ F \text{ bowtie forest}}} \Bigg(\underbrace{\sum_{\substack{F^{\prime} \in \sF(|S|, |T|) \\ F^{\prime} \text{ stretched} \\ \tie(F^{\prime}) = F}} \mu(F^{\prime})}_{\equalscolon \, \xi(F)} \Bigg) Z^{F}_{S, T}.
\end{align}
The following result gives the combinatorial analysis of the coefficients $\xi(F)$ appearing here, which yields a surprising cancellation that verifies that $\bZ^{\main}$ is approximately block diagonal.

\begin{lemma}[Stretched forest ribbon diagrams: combinatorial reduction]
    \label{lem:stretched-ribbon-combinatorial}
    Let $F \in \sF(\ell, m)$ be a bowtie forest.
    Then,
    \begin{equation}
        \xi(F) \colonequals \sum_{\substack{F^{\prime} \in \sF(|S|, |T|) \\ F^{\prime} \text{ stretched} \\ \tie(F^{\prime}) = F}} \mu(F^{\prime}) = \One\{F \text{ balanced}\} \prod_{\substack{C \in \conn(F) \\ C \text{ balanced bowtie} \\ \text{on } 2k \text{ leaves}}} (-1)^{k - 1} (k - 1)!\, k!.
    \end{equation}
\end{lemma}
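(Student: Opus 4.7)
My first step is to exploit the factorization $\mu(F^\prime) = \prod_{v \in V^{\square}(F^\prime)}\big({-}(\deg(v) - 2)!\big)$ together with the fact that both the stretched condition and the tying operation $\tie$ act componentwise. Hence, if $F^\prime$ is stretched with $\tie(F^\prime) = F$, the connected components of $F^\prime$ biject with those of $F$ with matching leaf sets, and
\[
    \xi(F) \,=\, \prod_{C \in \conn(F)} \xi(C), \qquad \xi(C) \,\colonequals \sum_{\substack{C^\prime \text{ stretched tree on } V^{\bullet}(C) \\ \tie(C^\prime) = C}} \mu(C^\prime).
\]
A cross-pair component contributes $\xi(C) = 1$ trivially, since the cross-pair is the unique stretched tree on its two leaves, matching the $k=1$ case of the claim.

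For a non-pair bowtie $C$ with $\ell \geq 1$ leaves in $\sL$ and $m \geq 1$ in $\sR$ (so $\ell + m \geq 4$ is even), the quantity $\xi(C)$ depends only on $(\ell, m)$ by symmetry in leaf labels; denote it $\xi_{\ell, m}$, with the convention $\xi_{1,1} \colonequals 1$ so that the pair case is absorbed. I would package these into the bivariate exponential generating function
\[
    G(x, y) \,\colonequals \sum_{\ell, m \geq 1} \xi_{\ell, m} \frac{x^\ell y^m}{\ell!\, m!}
\]
and prove the identity $G(x, y) = \log(1 + xy)$. Since $\log(1 + xy) = \sum_{k \geq 1} (-1)^{k-1}(xy)^k/k$, extracting coefficients then gives $\xi_{\ell, m} = 0$ when $\ell \neq m$ and $\xi_{k, k} = (-1)^{k-1}(k-1)!\,k!$, completing the proof. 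Equivalently, by the exponential formula, the identity is $\exp(G) = 1 + xy$, i.e., the $\mu$-weighted count of stretched forests on $\ell$ L- and $m$ R-leaves is $1$ for $(\ell, m) \in \{(0, 0), (1, 1)\}$ and $0$ otherwise---a compact restatement that also suggests a direct sign-reversing-involution attack as a backup strategy.

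To establish the generating-function identity, I would adapt the rooted-tree argument of Proposition~\ref{prop:mobius-fn-star} to the bipartite setting. Distinguish one R-leaf in each stretched tree: its unique $\square$-neighbor $v$ is terminal, and by the stretched condition must carry at least one L-leaf together with some additional R-leaves and some number of $\square$-child subtrees. Each $\square$-child subtree, once its edge to $v$ is cut, is a ``stretched tree rooted at a $\square$-vertex of odd residual degree,'' the bipartite analog of the rooted odd trees $\sT^{\root}$ from Definition~\ref{def:T-root}. Since $\mu$ factorizes over $\square$-vertices and the exponential formula handles the disjoint-union combinatorics of the child subtrees, this decomposition yields a functional equation relating $G$ to the EGF of these rooted objects that, after algebraic manipulations analogous to the $\tanh$ calculation in Section~\ref{sec:poset}, I expect to reduce to
\[
    \partial_y G(x, y) \,=\, \frac{x}{1 + xy}.
\]
Together with $G(x, 0) = 0$, this integrates to $G(x, y) = \log(1 + xy)$.

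\textbf{Main obstacle.} Compared with the univariate argument of Section~\ref{sec:poset}, the essential new difficulty is tracking the stretched condition through the rooted decomposition: whereas there every internal vertex was merely required to have even degree $\geq 4$, here each terminal $\square$-vertex must additionally carry at least one leaf from each side. This constraint propagates into the recursion in a subtle way---the root $v$ inherits sidedness requirements depending on which of its $\square$-children are themselves terminal, and the rooted subtrees must be stratified according to whether their root has leaves attached and, if so, of which sidedness. I expect that after stratifying the rooted objects by the sidedness pattern of the leaves attached directly at the root and applying inclusion-exclusion to handle ``missing-sidedness'' configurations (in a manner reminiscent of the cancellation exploited in the proof of Proposition~\ref{prop:Z-main-harmonic-basis}), the bookkeeping will collapse into the clean derivative identity above; verifying this collapse rigorously is the principal technical task.
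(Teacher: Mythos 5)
Your framework and target are right: reduce to a single tree, set up the bivariate EGF, and show $G(x,y) = \log(1+xy)$; unwinding this gives exactly the claimed $\xi$. But you have correctly identified the obstacle without solving it, and it is genuinely the crux. The paper does not try to track the two-sidedness constraint through a rooted recursion as you propose; it sidesteps that entirely by introducing a \emph{relaxation} (``weakly stretched,'' which also admits sided stars on one side of leaves and skewed stars with a single leaf on one side). Under the relaxation, an internal $\square$-vertex may carry an arbitrary mix of $\sL$-leaves, $\sR$-leaves, and $\square$-children with no coupling between them, so the recursion over rooted subtrees factorizes cleanly through an unconstrained EGF $C$ for the vertex-degree coefficients. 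The stretched case is then recovered at the end by peeling off the sided-star and skewed-star EGF contributions, which are elementary: this is the cases list at the start of the paper's proof, and it is exactly the step that removes the entanglement you flag as ``the principal technical task.''

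As a consequence, the shape of the functional equation is also not what you guess: the paper's rooted-EGF recursion gives a \emph{compositional} equation $\widetilde{G}(x,y) = x + C'(\widetilde{G}(x,y) + y) - C'(y)$, not the first-order ODE $\partial_y G = x/(1+xy)$. Solving it (exponentiate, rearrange, integrate, and compose) yields an EGF $F(x,y) = C(x) + C(y) - x\tanh^{-1}(y) - y\tanh^{-1}(x) + xy + \log(1+xy)$ for the relaxed count, with the extra terms precisely matching the sided and skewed stars that must be subtracted to return to the stretched count. You would also need to treat $\ell = m = 1$ separately, since a distinguished $\sR$-leaf in a cross-pair has no $\square$-neighbor; this is the $+xy$ base term. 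So: right aim, right final identity, consistent high-level strategy with the paper, but the one idea that makes the recursion tractable---relax sidedness first, correct afterward---is absent, and the implicit functional equation it produces is qualitatively different from the clean derivative identity you hope to collapse to.
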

\noindent
We give the proof, a rather involved calculation with exponential generating functions, in Appendix~\ref{app:pf:lem:stretched-ribbon-combinatorial}.
We leave open the interesting problem of finding a more conceptual combinatorial proof of this result, especially in light of the appearance of $\xi(F)$ again in Lemma~\ref{lem:partition-transport-ribbon-combinatorial} below.

Equipped with these results, let us summarize our analysis below, giving a combined error bound between $\bZ^{\main}$ and $\bZ^{\tied}$ as well as the final form of the latter.
\begin{corollary}
    \label{cor:Z-main-Z-tied}
    Suppose that $\epsilon_{\tree}(\bM; d) \leq 1$.
    We have
    \begin{align}
      Z^{\tied}_{S, T}
      &= \sum_{\substack{F \in \sF(|S|, |T|) \\ F \text{ balanced bowtie forest}}} \xi(F) \cdot Z^F_{S,T},
    \end{align}
    and this matrix satisfies
    \begin{equation}
        \|\bZ^{\main} - \bZ^{\tied}\| \leq (6d)^{10d} \|\bM\|^{3d} \epsilon_{\tree}(\bM; d).
    \end{equation}
\end{corollary}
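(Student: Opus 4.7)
The plan is to combine the preceding results in this subsection: Proposition~\ref{prop:Z-main-harmonic-basis} expands $\bZ^{\main}$ over stretched ribbon diagrams, Lemma~\ref{lem:tying-stretched-ribbon} controls the tying error one diagram at a time, and Lemma~\ref{lem:stretched-ribbon-combinatorial} collapses the tied expansion to balanced bowtie forests. What remains is bookkeeping: extracting a closed form for $Z^{\tied}$, reassembling the block-wise bounds into an operator norm, and controlling the combinatorial factor $\sum_F |\mu(F)|$.

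First I would establish the stated formula for $Z^{\tied}_{S,T}$. Starting from the definition in \eqref{eq:def:Z-tied}, I regroup the sum over stretched $F'$ according to their common image $F = \tie(F')$, which is a bowtie forest; this produces exactly the coefficient $\xi(F)$ introduced just after \eqref{eq:def:Z-tied}. Lemma~\ref{lem:stretched-ribbon-combinatorial} then kills every term with $F$ not balanced, leaving the claimed sum over balanced bowtie forests.

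Second, for the norm bound, observe that both $\bZ^{\main}$ and $\bZ^{\tied}$ are naturally block-structured by $(\ell, m) = (|S|, |T|)$, with $\ell, m \in \{0,1,\dots,d\}$. By Proposition~\ref{prop:Z-main-harmonic-basis},
\begin{equation*}
\bZ^{\main}_{\ell,m} - \bZ^{\tied}_{\ell,m} \;=\; \sum_{\substack{F \in \sF(\ell, m) \\ F \text{ stretched}}} \mu(F)\,\big(\bZ^{F} - \bZ^{\tie(F)}\big),
\end{equation*}
so that by the triangle inequality and Lemma~\ref{lem:tying-stretched-ribbon},
\begin{equation*}
\|\bZ^{\main}_{\ell,m} - \bZ^{\tied}_{\ell,m}\| \;\leq\; (\ell+m)(2\|\bM\|)^{\frac{3}{2}(\ell+m)}\,\epsilon_{\tree}(\bM; (\ell+m)/2)\!\!\sum_{\substack{F \in \sF(\ell,m) \\ F \text{ stretched}}} |\mu(F)|.
\end{equation*}
Since $\ell + m \leq 2d$, monotonicity gives $\epsilon_{\tree}(\bM; (\ell+m)/2) \leq \epsilon_{\tree}(\bM; d)$ and $(2\|\bM\|)^{3(\ell+m)/2} \leq (2\|\bM\|)^{3d}$. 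Promoting the block bound to the global operator norm costs only a polynomial factor in $d$, since $\|\bZ^{\main} - \bZ^{\tied}\| \leq \sum_{\ell,m \leq d} \|\bZ^{\main}_{\ell,m} - \bZ^{\tied}_{\ell,m}\|$ by decomposing the difference as a sum over the $(d{+}1)^2$ block positions (each extended by zeros).

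Third, the combinatorial estimate: using $|\mu(F)| = \prod_{v \in V^{\square}(F)}(\deg(v)-2)!$, I would bound the forest sum by $\sum_{F \in \sF(\ell+m)} |\mu(F)| \leq (C d)^{C' d}$ for modest absolute constants, via the standard fact (alluded to near Proposition~\ref{prop:size-F}) that good forests on $\ell+m \leq 2d$ leaves, weighted by $\mu$, grow at most factorially in the degree. Multiplying these contributions together and absorbing the polynomial factors from $(\ell+m)$, the $(d+1)^2$ blocks, and constants yields a bound of the form $(6d)^{10d}\|\bM\|^{3d}\epsilon_{\tree}(\bM;d)$, as desired.

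The main obstacle is the combinatorial estimate in the third step: the tying lemma's entrywise bound already carries a $(2\|\bM\|)^{3(\ell+m)/2}$ factor, so the weighted count of stretched good forests must be controlled tightly enough that the resulting combined constant fits inside $(6d)^{10d}$. This is more a matter of careful accounting than of a genuinely new idea, but it is the only step not handled directly by invoking a prior lemma.
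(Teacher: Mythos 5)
Your proposal follows essentially the same route as the paper's proof: the closed form for $Z^{\tied}_{S,T}$ comes from Lemma~\ref{lem:stretched-ribbon-combinatorial} after regrouping \eqref{eq:def:Z-tied} by $\tie(F')$; the norm bound comes from applying Lemma~\ref{lem:tying-stretched-ribbon} to each diagram in each block, summing with the triangle inequality, and promoting block estimates to the full operator norm via the block-matrix bound (the paper cites Proposition~\ref{prop:block-matrix-norm} for what you describe as "decomposing the difference as a sum over the $(d+1)^2$ block positions"); and the combinatorial factor is closed off by the explicit counts of Propositions~\ref{prop:size-F} and~\ref{prop:bound-mu}, which is the "standard fact" your third step defers to. Nothing genuinely different and no gap.
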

\begin{proof}
    The first formula follows from Lemma~\ref{lem:stretched-ribbon-combinatorial}.
    For the norm bound, we apply Lemma~\ref{lem:tying-stretched-ribbon} to each CGM term in each block of $\bZ^{\main}$ and use the norm bound of Proposition~\ref{prop:block-matrix-norm}, which gives
    \begin{align}
      \|\bZ^{\main} - \bZ^{\tied}\| &\leq \sum_{\ell, m = 0}^d (\ell + m) (2\|\bM\|)^{\frac{3}{2}(\ell + m)} \epsilon_{\tree}(\bM; (\ell + m) / 2) \sum_{F \in \sF(\ell, m)} |\mu(F)| \nonumber \\
      &\leq (2d)(2\|\bM\|)^{3d} \epsilon_{\tree}(\bM; d) \sum_{\ell, m = 0}^d |\sF(\ell, m)| \max_{F \in \sF(\ell, m)} |\mu(F)| \nonumber
        \intertext{and using Propositions~\ref{prop:size-F} and \ref{prop:bound-mu} to bound $|\sF(\ell, m)|$ and $|\mu(F)|$ respectively, we finish}
      &\leq (2d)(2\|\bM\|)^{3d} \epsilon_{\tree}(\bM; d) \cdot d^2 (6d)^{3d} (6d)^{6d},
    \end{align}
    and the remaining bound follows from elementary manipulations.
\end{proof}

\subsection{Positivity of the Main Term: Proof of Lemma~\ref{lem:Z-main-positive}}

To prove a lower bound on $\lambda_{\min}(\bZ^{\main})$, our strategy will be to justify the equality
\begin{equation}
    \tEE^{\main}[h_S^{\downarrow}(\bx)h_T^{\downarrow}(\bx)] \approx \langle h_S(\bV^{\top}\bz), h_T(\bV^{\top}\bz) \rangle_{\partial},
\end{equation}
that was suggested by our calculations in Section~\ref{sec:informal-deriv}.
The right-hand side is block-diagonal (since homogeneous polynomials of different degrees are apolar), so our block-diagonalization of the left-hand side is a useful start.
To continue, we follow the same plan for the right-hand side in this section as we did for the left-hand side in the previous section: we (1) express the Gram matrix as a linear combination of CGMs, (2) show that the corresponding ribbon diagrams may be simplified to the same bowtie forests from Definition~\ref{def:bowtie-forest-ribbon}, and (3) perform a combinatorial analysis of the coefficients attached to each bowtie forest after the simplification.

We first describe the class of ribbon diagrams that will arise in expanding the inner products above, which may be viewed as encoding the following kind of combinatorial object.
\begin{definition}[Partition transport plan]
    For a pair of partitions $\sigma, \tau \in \Part([d])$, we write $\Plans(\sigma, \tau)$ for the set of matrices $\bD \in \NN^{\sigma \times \tau}$ where the sum of each row indexed by $A \in \sigma$ is $|A|$, and the sum of each column indexed by $B \in \tau$ is $|B|$.
\end{definition}
\noindent
We borrow the terms ``transport'' and ``plan'' from the theory of optimal transport \cite{Villani-2008-OT}, since $\bD$ may be seen as specifying a protocol for moving masses corresponding to the part sizes of $\sigma$ and $\tau$.
These same matrices also play a crucial role in the Robinson-Schensted-Knuth correspondence of representation theory and the combinatorics of Young tableaux \cite{Fulton-1997-YoungTableaux}.
It is an intriguing question for future investigation whether this connection can shed light on our use of $\Plans(\sigma, \tau)$.

We encode a pair of partitions and a partition transport plan between them into a ribbon diagram in the following way.
\begin{definition}[Partition transport ribbon diagram]
    Let $\sigma, \tau \in \Part([d])$ and $\bD \in \Plans(\sigma, \tau)$.
    Then, let $G = G(\sigma, \tau, \bD)$ be the associated \emph{partition transport ribbon diagram}, with graphical structure defined as follows:
    \begin{itemize}
    \item $\sL(G)$ and $\sR(G)$ are two disjoint sets, each labelled by $1, \dots, d$.
    \item $V^{\square}(G)$ contains one vertex for each part of $\sigma[\geq 2]$ and each part of $\tau[\geq 2]$.
    \item Whenever $i \in A \in \sigma$, then the vertex labelled $i$ in $\sL$ and the $\square$ vertex corresponding to $A$ have an edge between them.
        Likewise, whenever $j \in B \in \tau$, then the vertex labelled $j$ in $\sR$ and the $\square$ vertex corresponding to $B$ have an edge between them.
    \item When $A \in \sigma[\geq 2]$ and $B \in \tau[\geq 2]$, then there are $D_{A,B}$ parallel edges between the corresponding $\square$ vertices.
    \item When $A = \{i\} \in \sigma[1]$, $B \in \tau[\geq 2]$, and $D_{A, B} = 1$, then there is an edge between the vertex labelled $i$ in $\sL$ and the $\square$ vertex corresponding to $B$.
        Likewise, when $B = \{j\} \in \tau[1]$, $A \in \sigma[\geq 2]$, and $D_{A,B} = 1$, then there is an edge between the vertex labelled $j$ in $\sR$ and the $\square$ vertex corresponding to $A$.
    \item When $A = \{i\} \in \sigma[1]$, $B = \{j\} \in \tau[1]$, and $D_{A,B} = 1$, then there is an edge between the vertex labelled $i$ in $\sL$ and the vertex labelled $j$ in $\sR$.
    \end{itemize}
\end{definition}
\noindent
See Figure~\ref{fig:partition-transport-ribbons} for an example featuring all of these situations that may be clearer than the written description.

\begin{figure}
    \begin{minipage}[c]{0.25\textwidth}
        \vspace{-1em}
        \begin{align*}
          \sigma &= \{\{1, 2, 3, 4\}, \\
                 &\hspace{0.72cm}\{5, 6, 7, 8\}, \\
                 &\hspace{0.72cm}\{9\}, \\
                 &\hspace{0.72cm}\{10, 11, 12\} \} \\[0.5em]
          \tau &= \{ \{1, 2, 3, 4, 5\}, \\
                 &\hspace{0.715cm}\{6, 7, 8\}, \\
                 &\hspace{0.715cm}\{9\}, \\
                 &\hspace{0.715cm}\{10\}, \\
                 &\hspace{0.715cm}\{11, 12\}\} \\[0.5em]
          \bD &= \left[\begin{array}{ccccc}
                         2 & 2 & 0 & 0 & 0 \\
                         3 & 1 & 0 & 0 & 0 \\
                         0 & 0 & 1 & 0 & 0 \\
                       0 & 0 & 0 & 1 & 2\end{array}\right]
        \end{align*}
    \end{minipage}
    \begin{tabular}{c:c}
      & \\ & \\ & \\ & \\ & \\ & \\ & \\ & \\ & \\ & \\ & \\ & \\ & \\ & \\ & \\ & \\ & \\
    \end{tabular}
    \begin{minipage}[c]{0.7\textwidth}
        \vspace{0.5em}
        \includegraphics[scale=0.7]{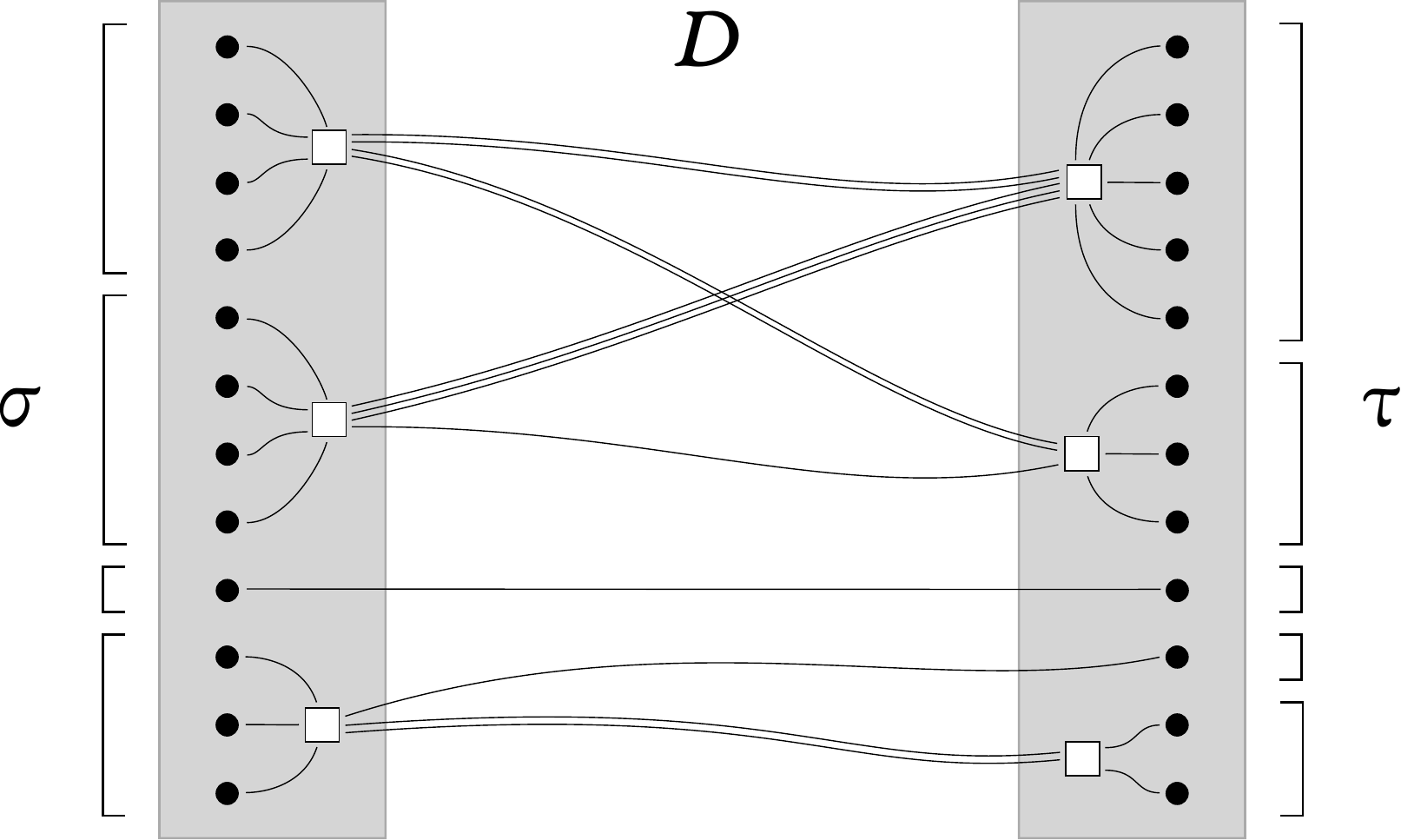}
    \end{minipage}
    \vspace{0.1em}
    \caption{\textbf{Partition transport plan and ribbon diagram.} We illustrate an example of two partitions $\sigma, \tau \in \Part([12])$, a partition transport plan $\bD \in \Plans(\sigma, \tau)$, and the associated partition transport ribbon diagram $G^{(\sigma, \tau, \bD)}$.}
    \label{fig:partition-transport-ribbons}
\end{figure}

This formalism allows us to make the following compact CGM description of the Gram matrix of the non-lowered multiharmonic polynomials.
\begin{proposition}[CGM expansion of multiharmonic Gram matrix]
    \label{prop:gram-cgm-expansion}
    Define $\bY \in \RR^{\binom{[N]}{\leq d} \times \binom{[N]}{\leq d}}$ by
    \begin{equation}
        Y_{S, T} = \langle h_S(\bV^{\top}\bz), h_T(\bV^{\top}\bz) \rangle_{\partial}.
    \end{equation}
    Then, $\bY$ is block-diagonal, with diagonal blocks $\bY^{[d, d]} \in \RR^{\binom{[N]}{d} \times \binom{[N]}{d}}$ given by
    \begin{align}
      \bY^{[d, d]} = \sum_{\sigma, \tau \in \Part([d])} \prod_{R \in \sigma + \tau} (-1)^{|R| - 1}(|R| - 1)! (|R|)! \sum_{\bD \in \Plans(\sigma, \tau)} \frac{1}{\bD!} \bZ^{G(\sigma, \tau, \bD)}(\bM),
    \end{align}
    where $\bD! \colonequals \prod_{A \in \sigma} \prod_{B \in \tau} D_{A,B}!$.
\end{proposition}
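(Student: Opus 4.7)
The plan is to establish the identity by directly expanding $h_S(\bV^{\top}\bz)$ and $h_T(\bV^{\top}\bz)$ in terms of the building blocks $q_A(\bV^{\top}\bz)$, then computing the resulting apolar inner products using the adjointness of multiplication and differentiation from Proposition~\ref{prop:apolar-adjointness}. Since each $q_A(\bV^{\top}\bz)$ is homogeneous of degree $|A|$, the polynomial $h_S(\bV^{\top}\bz) = \sum_{\sigma \in \Part([d])} \prod_{A \in \sigma}(-1)^{|A|-1}(|A|-1)!\, q_{A^S}(\bV^{\top}\bz)$ is homogeneous of degree $|S|$, and consequently $\langle h_S(\bV^{\top}\bz), h_T(\bV^{\top}\bz)\rangle_\partial = 0$ whenever $|S| \neq |T|$; this is the block-diagonal claim.

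Fix now $|S| = |T| = d$ and expand both factors into the partition sums above. For each pair $(\sigma, \tau)$ we must compute $\langle \prod_{A \in \sigma} q_{A^S}(\bV^{\top}\bz), \prod_{B \in \tau} q_{B^T}(\bV^{\top}\bz)\rangle_\partial$. Writing $q_{A^S}(\bV^{\top}\bz) = \sum_{a_A \in [N]} \bigl(\prod_{j \in A^S} M_{a_A, j}\bigr)\langle \bv_{a_A}, \bz\rangle^{|A|}$ for $|A| \geq 2$, and $q_{A^S}(\bV^{\top}\bz) = \langle \bv_i, \bz\rangle$ for $A^S = \{i\}$ (uniformly regarded as a single linear factor with $a_A = i$ fixed), each product becomes a product of $d$ linear forms (with repetition). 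Iterating the product rule $\ell(\bm\partial)(fg) = (\ell(\bm\partial)f)g + f(\ell(\bm\partial)g)$ from Proposition~\ref{prop:apolar-adjointness} yields the permanental identity
\begin{equation}
\langle \ell_1 \cdots \ell_d, m_1 \cdots m_d \rangle_\partial = \sum_{\pi \in S_d} \prod_{i=1}^d \langle \ell_i, m_{\pi(i)} \rangle_\partial,
\end{equation}
so that each bijection $\pi$ between the $d$ slots on each side contributes $\prod M_{a_A, b_B}$ over the paired slots.

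Next, organize the sum over $\pi$ by the matrix $\bD_{A,B} = \#\{\pi \text{ sending a slot of } A \text{ to a slot of } B\} \in \Plans(\sigma, \tau)$. A standard multinomial count shows that exactly $\prod_{A \in \sigma}|A|!\,\prod_{B \in \tau}|B|!\,/\,\bD!$ permutations produce a prescribed $\bD$, each contributing the common factor $\prod_{A, B} M_{a_A, b_B}^{D_{A,B}}$. Combining this with the leaf-attachment factors $\prod_{A}\prod_{j \in A^S} M_{a_A, j}$ and $\prod_{B}\prod_{i \in B^T} M_{b_B, i}$ and summing over the free indices $\ba, \bb$ reproduces precisely the contraction defining $Z^{G(\sigma,\tau,\bD)}_{S,T}(\bM)$: every $A \in \sigma[\geq 2]$ gives a $\square$ vertex attached to the leaves $A^S$ and to the $\square$ vertex for $B$ via $D_{A,B}$ parallel edges, with the singleton cases collapsing into direct pair edges or leaf-to-$\square$ edges as specified by the ribbon diagram's definition. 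Collecting the prefactors $(-1)^{|A|-1}(|A|-1)!\cdot|A|!$ for $A \in \sigma$ and analogously for $\tau$ yields $\prod_{R \in \sigma + \tau}(-1)^{|R|-1}(|R|-1)!(|R|)!$, completing the formula.

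The argument is ultimately a bookkeeping exercise rather than a conceptual obstacle, but the delicate step is the uniform treatment of singleton parts: the $|A| = 1$ piece of $q_{A^S}$ is not literally given by the $\sum_a$ formula, so one must verify that fixing $a_A$ to the singleton's label (rather than summing) is consistent with the multinomial count and with the edge conventions of $G(\sigma, \tau, \bD)$, where singleton parts contribute no $\square$ vertex and instead route edges directly between $\sL$-leaves, $\sR$-leaves, and higher-arity $\square$ vertices.
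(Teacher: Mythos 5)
Your proposal is correct and follows essentially the same route as the paper: the paper also expands both $h_S$ and $h_T$ over partitions, pulls out the $q_A$ factors, and reduces to the inner product $\langle \prod_A \langle\bv_{f(A)},\bz\rangle^{|A|}, \prod_B \langle\bv_{g(B)},\bz\rangle^{|B|}\rangle_\partial$, which it evaluates by the iterated product rule (cited as a Fa\`{a} di Bruno-type formula) to get the same $\prod_R |R|!\sum_{\bD}1/\bD!$ weights you derive via the permanental expansion and multinomial count. The singleton subtlety you flag is handled in the paper by the auxiliary map $f_{\ba}(A)$ which is $a(A)$ for $|A|\geq 2$ and the fixed index $i$ for $A=\{i\}$, exactly matching your "fix $a_A$ to the singleton's label" convention.
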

\begin{proof}
    We expand directly by linearity:
    \begin{align}
      &\langle h_S(\bV^{\top}\bz), h_T(\bV^{\top}\bz) \rangle_{\partial} \nonumber \\
      &= \left\langle \sum_{\sigma \in \mathsf{Part}(S)} \prod_{A \in \sigma} (-1)^{|A| - 1}(|A| - 1)! \, q_{A}(\bV^{\top}\bz), \sum_{\tau \in \mathsf{Part}(T)} \prod_{B \in \tau} (-1)^{|B| - 1}(|B| - 1)! \, q_{B}(\bV^{\top}\bz) \right\rangle_{\partial} \nonumber \\
      &= \sum_{\substack{\sigma \in \mathsf{Part}(S) \\ \tau \in \mathsf{Part}(T)}} \prod_{R \in \sigma + \tau} (-1)^{|R| - 1}(|R| - 1)! \left\langle \prod_{A \in \sigma} q_{A}(\bV^{\top}\bz), \prod_{B \in \tau} q_{B}(\bV^{\top}\bz)\right\rangle_{\partial} \label{eq:cgm-gram-expand-1}
    \end{align}
    The remaining polynomials we may further expand
    \begin{align}
      \prod_{A \in \sigma} q_{A}(\bV^{\top} \bz)
      &= \prod_{A = \{i\} \in \sigma[1]} \langle \bv_{i}, \bz \rangle \cdot \prod_{A \in \sigma[\geq 2]}\left\{\sum_{a = 1}^N \prod_{i \in A} M_{ai} \cdot \langle \bv_a, \bz \rangle^{|A|}\right\} \nonumber \\
      &=\prod_{A = \{i\} \in \sigma[1]} \langle \bv_{i}, \bz \rangle \sum_{a \in [N]^{\sigma[\geq 2]}} \prod_{A \in \sigma[\geq 2]} \prod_{i \in A} M_{i, a(A)} \cdot \langle \bv_{a(A)}, \bz \rangle^{|A|} \nonumber \\
      &= \sum_{\ba \in [N]^{\sigma[\geq 2]}} \prod_{A \in \sigma} \prod_{i \in A} M_{i, f_{\ba}(A)} \cdot  \langle \bv_{f_{\ba}(A)}, \bz \rangle^{|A|}, \label{eq:cgm-gram-expand-2}
    \end{align}
    where we define $f_{\ba}(A) = a(A)$ if $|A| \geq 2$, and $f_{\ba}(A) = i$ if $A = \{i\}$.
    Likewise, for $\bb \in [N]^{\tau[\geq 2]}$, as will arise in $q_{B}$, we set $g_{\bb}(B) = b(B)$ if $|B| \geq 2$, and $g_{\bb}(B) = j$ if $B = \{j\}$.
    Thus we may expand the remaining inner product from before as
    \begin{align}
      &\left\langle \prod_{A \in \sigma} q_{A}(\bV^{\top}\bz), \prod_{B \in \tau} q_{B}(\bV^{\top}\bz)\right\rangle_{\partial} \nonumber \\
      &\hspace{0.5cm} = \sum_{\substack{\ba \in [N]^{\sigma[\geq 2]} \\ \bb \in [N]^{\tau[\geq 2]}}} \prod_{A \in \sigma} \prod_{i \in A} M_{i,f_{\ba}(A)} \cdot \prod_{B \in \tau} \prod_{j \in B} M_{j, g_{\bb}(B)} \cdot \left\langle \prod_{A \in \sigma} \langle \bv_{f_{\ba}(A)}, \bz \rangle^{|A|}, \prod_{B \in \tau} \langle \bv_{f_{\bb}(B)}, \bz \rangle^{|B|}\right\rangle_{\partial}.
      \label{eq:cgm-gram-expand-3}
    \end{align}

    Finally, this remaining inner product we expand by the product rule, executing which gives rise to the summation over partition transport plans that arises in our result (this calculation is easy to verify by induction, or may be seen as an application of the more general Fa\`{a} di Bruno formula; see, e.g., \cite{Hardy-2006-CombinatoricsPartialDerivatives}):
    \begin{align}
      \bigg\langle \prod_{A \in \sigma} &\langle \bv_{f(A)}, \bz \rangle^{|A|}, \prod_{B \in \tau} \langle \bv_{f(B)}, \bz \rangle^{|B|}\bigg\rangle_{\partial} \nonumber \\
      &= \prod_{A \in \sigma} \langle \bv_{f(A)}, \bm \partial  \rangle^{|A|} \prod_{B \in \tau} \langle \bv_{f(B)}, \bz \rangle^{|B|} \nonumber \\
      &= \prod_{B \in \tau}(|B|)! \sum_{\bD \in \mathsf{Plans}(\sigma, \tau)} \prod_{A \in \sigma} \binom{|A|}{ D_{A, B_1} \cdots D_{A, B_{|\tau|}}} \prod_{A \in \sigma}\prod_{B \in \tau} (M_{f(A),f(B)})^{D_{A, B}} \nonumber \\
      &= \prod_{R \in \sigma + \tau}(|R|)! \sum_{\bD \in \mathsf{Plans}(\sigma, \tau)} \prod_{A \in \sigma} \prod_{B \in \tau} \frac{(M_{f(A),f(B)})^{D_{A,B}}}{D_{A,B}!}, \label{eq:cgm-gram-expand-4}
    \end{align}
    where we remark that in the final expression here we restore the symmetry between $\sigma$ and $\tau$, which was briefly broken to perform the calculation.
    The final result then follows from combining the preceding equations \eqref{eq:cgm-gram-expand-1}, \eqref{eq:cgm-gram-expand-3}, and \eqref{eq:cgm-gram-expand-4}, identifying the summation occurring as precisely that defined by the partition transport ribbon diagram corresponding to $(\sigma, \tau, \bD)$.
\end{proof}

We now describe the ``tied'' version of a partition transport ribbon diagram and bound the error in operator norm incurred by the tying procedure.
\begin{lemma}[Partition transport ribbon diagrams: tying bound]
    \label{lem:tying-partition-transport-ribbon}
    Let $\sigma, \tau \in \Part([d])$ and $\bD \in \Plans(\sigma, \tau)$.
    Let $G = G(\sigma, \tau, \bD) \in \sF(d, d)$ be the associated partition transport ribbon diagram.
    Let $\tie(G)$ denote the diagram obtained from $G$ by contracting all connected components that are not pairs to a bowtie.
    Then, $\tie(G)$ is a balanced bowtie forest, and
    \begin{equation}
      \|\bZ^G - \bZ^{\tie(G)}\| \leq d^2 \|\bM\|^{3d}(\epsilon_{\pow}(\bM) + \epsilon_{\offdiag}(\bM) + \epsilon_{\corr}(\bM))
    \end{equation}
\end{lemma}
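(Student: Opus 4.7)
The plan has three stages: verify balance, reduce to a single connected component by CGM tensorization, and then inside one component bound the tying error by iteratively contracting the internal $\square$-$\square$ edges.

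First, I would dispatch the balance claim. A connected component $C$ of $G(\sigma,\tau,\bD)$ is a union of parts of $\sigma$ and $\tau$ linked via the support of $\bD$. The row- and column-sum conditions on $\bD$, restricted to $C$, give $\sum_{A \in C \cap \sigma} |A| = \sum_{(A,B) \in C}D_{A,B} = \sum_{B \in C \cap \tau} |B|$, so the $\sL$- and $\sR$-leaf counts of $C$ coincide, and $\tie(C)$ is a balanced bowtie. Since CGMs tensorize over connected components (Proposition~\ref{prop:cgm-tt-tensorization} and related bounds from Appendix~\ref{app:cgm-tools}), and since pair components are unaffected by $\tie$, a standard telescoping product-norm estimate reduces the problem to bounding $\|\bZ^C - \bZ^{\tie(C)}\|$ for each non-pair component $C$, losing only a benign $\|\bM\|^{O(d)}$ factor in the process.

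Inside such a $C$ with internal $\square$-vertices $v_1,\dots,v_p$ and $\square$-$\square$ multiplicities $k_{ij}$, the entry is
\[
Z^C_{S_L, S_R} = \sum_{\ba \in [N]^p} \Bigl(\prod_i \prod_{\ell \in \mathsf{leaves}(v_i)} M_{a_i, \mathsf{idx}(\ell)}\Bigr)\Bigl(\prod_{\{v_i, v_j\} \in E_{\mathsf{int}}(C)} M_{a_i, a_j}^{k_{ij}}\Bigr),
\]
and $Z^{\tie(C)}$ is the specialization forcing all $a_i$ equal to a single index. I would interpolate via an edge-by-edge contraction. Each multi-edge of multiplicity $k \geq 2$ is removed by replacing $\bM^{\circ k}$ with $\bm I_N$, incurring $\epsilon_{\pow}(\bM)$ in operator norm times $\|\bM\|^{O(d)}$ for the surrounding CGM fragment (bounded using the generic norm bounds of Appendix~\ref{app:cgm-tools}). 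Single internal edges ($k=1$) split into two subcases: when the off-diagonal part of $M_{a_i, a_j}$ can be absorbed directly against the leaf-factor operator at $v_i, v_j$, we pay $\epsilon_{\offdiag}(\bM)$; when two single edges share an intermediate $\square$-vertex whose elimination yields a two-step product, a Cauchy--Schwarz step reduces the cross term to an off-diagonal sum of the form $\sum_k M_{ik}^2 M_{jk}^2$ and is controlled by $\epsilon_{\corr}(\bM)$.

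Summing the contributions of the at most $\binom{d}{2} \leq d^2$ internal edges, each contributing one of $\epsilon_{\pow}(\bM), \epsilon_{\offdiag}(\bM), \epsilon_{\corr}(\bM)$ times $\|\bM\|^{O(d)}$ absorbed into $\|\bM\|^{3d}$, yields the stated bound. The main obstacle is the single-edge reduction: using $\epsilon_{\offdiag}$ alone to bound $|M_{a_i, a_j}|$ off-diagonally is too crude in general, because it can destroy operator-norm control over the surrounding CGM fragment. The genuinely essential ingredient is $\epsilon_{\corr}(\bM)$, paired with a careful Cauchy--Schwarz step that groups neighboring single edges and leverages the partition-transport structure --- in particular that every $\square$-vertex carries at least two incident leaves --- so that each intermediate CGM fragment encountered in the reduction remains controlled by $\|\bM\|^{O(d)}$ via the generic CGM bounds of Appendix~\ref{app:cgm-tools}.
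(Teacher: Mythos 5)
Your high-level outline (balance via row/column sums of $\bD$, tensorize over connected components, then reduce the tying error inside a single component to the three incoherence quantities) matches the paper's first steps. But the core of the lemma is the per-component bound, and there your plan has a genuine gap.

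The paper does not argue by interpolating over an edge-by-edge contraction sequence. It instead runs a careful case analysis on the global structure of $(\sigma,\tau,\bD)$ and, in each case, applies a \emph{single, specifically chosen} factorization $\bZ^G = \bZ^{G[A]}\bZ^{G[B]}\bZ^{G[C]}$ from Proposition~\ref{prop:factorization} so that $G[B]$ isolates a tractable piece. When $\partial^{\square}\sL\cap\partial^{\square}\sR\neq\emptyset$ (equivalently, $\sigma$ or $\tau$ has singletons), it takes $A=\sL,B=V^{\square},C=\sR$; a vertex $v\in\partial^{\square}\sL\cap\partial^{\square}\sR$ lands in $\sL(G[B])\cap\sR(G[B])$, can be pinned via Proposition~\ref{prop:intersection-direct-sum}, and each direct summand picks up a scalar factor bounded by $\epsilon_{\offdiag}$; then it contracts $v$'s $\square$--$\square$ edges and inducts on $|V^{\square}|$. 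When $\partial^{\square}\sL\cap\partial^{\square}\sR=\emptyset$ and some row or column of $\bD$ has a single nonzero, it first removes a parallel multi-edge with $\epsilon_{\pow}$ and falls back to the first case. In the remaining case it takes the asymmetric split $A=\sL\cup(\partial^{\square}\sL\setminus\{v\}),\ B=\partial^{\square}\sR\cup\{v\},\ C=\sR$, which only works because every column of $\bD$ has at least two nonzeros (so every $\tau$-side $\square$-vertex retains a neighbor in $A$ after $v$ is excluded); $G[B]$ becomes a pinned direct sum of column vectors whose $\ell^2$-norms are bounded by $\epsilon_{\corr}$, and again one contracts \emph{all} of $v$'s $\square$--$\square$ edges at once and inducts on $|V^{\square}|$.

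Your proposal replaces this with ``interpolate via an edge-by-edge contraction,'' invoking the generic CGM norm bound at each step. This is where the argument breaks. First, to bound a single contraction step $\|\bZ^{G_i}-\bZ^{G_{i+1}}\|$ by $\epsilon_{\cdot}\cdot\|\bM\|^{O(d)}$ you need a factorization of Proposition~\ref{prop:factorization} type that puts precisely the edge being contracted (and nothing else) into $G[B]$, and such a partition may simply not exist in the middle of the sequence: once you start identifying $\square$-vertices, the bipartite $\square$--$\square$ structure is destroyed and the connectivity hypotheses of Proposition~\ref{prop:cgm-generic-norm-bound} for the surrounding fragment fail (e.g.\ when the $\square$--$\square$ graph contains a cycle, the edge you want to isolate is entangled with others in any admissible $A\sqcup B\sqcup C$). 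Second, the dichotomy between paying $\epsilon_{\offdiag}$ and paying $\epsilon_{\corr}$ is not a local property of an individual edge, as ``absorbed directly against the leaf-factor operator'' suggests; it is a global property of $\bD$ (presence of singleton parts, and whether some row/column of $\bD$ has a unique nonzero), and these determine which factorization is even available. You identify the single-edge case as ``the main obstacle'' but the Cauchy--Schwarz grouping you sketch is precisely the place where the missing structural input lives---namely that one fixes a $\sigma$-side $\square$-vertex $v$ with at least two distinct $\tau$-side $\square$-neighbors, factors so that $G[B]$ carries \emph{all} of $v$'s $\square$--$\square$ edges and nothing else, and only then lands on a pinned direct sum of column vectors with $\|\bv_{\bs}\|_2\le\epsilon_{\corr}$ when $\bs$ is nonconstant. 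Without that specific choice, the step does not close; so as written the proposal does not constitute a proof.
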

\noindent
We provide the proof in Appendix~\ref{app:pf:lem:tying-partition-transport-ribbon}.
The proof considers various cases depending on the graph structure of $G$, but is similar in principle to the proof of Lemma~\ref{lem:tying-stretched-ribbon}, the tying bound for stretched forest diagrams---we again factorize CGMs and argue that the ``inner'' ribbon diagrams may be collapsed without incurring a substantial error in operator bound.

\begin{figure}
    \begin{center}
        \includegraphics[scale=0.7]{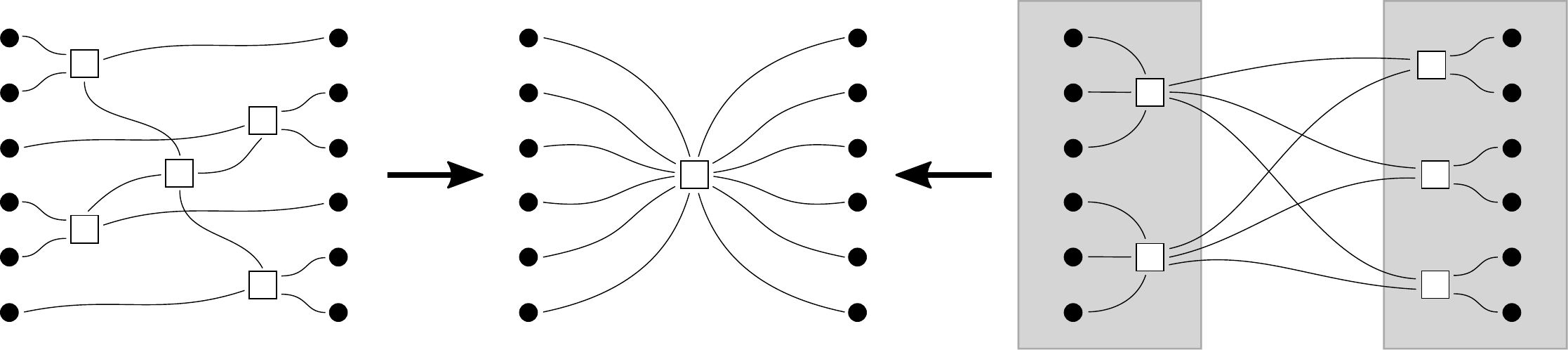}
    \end{center}
    \caption{\textbf{Tying stretched forest and partition transport ribbon diagrams.} We illustrate the key diagrammatic idea of the argument proving Lemma~\ref{lem:Z-main-positive}, that both the stretched forest ribbon diagrams appearing in $\bZ^{\main}$ and the partition transport ribbon diagrams appearing in $\bY$ may be ``tied'' to form the same bowtie forest ribbon diagrams.}
    \label{fig:tying-results}
\end{figure}

Next, we describe the result of tying every ribbon diagram in $\bY$.
As before, this involves a combinatorial calculation to sum over all diagrams that produce a given bowtie forest upon tying.
As we would hope, the resulting coefficients are the same as those arising in $\bZ^{\tied}$.

\begin{lemma}[Partition transport ribbon diagrams: combinatorial reduction]
    \label{lem:partition-transport-ribbon-combinatorial}
    For $d \in \NN$, define $\bY^{\tied[d, d]} \in \RR^{\binom{[N]}{d} \times \binom{[N]}{d}}$ by
    \begin{equation}
        \bY^{\tied[d, d]} \colonequals \sum_{\sigma, \tau \in \Part([d])} \prod_{A \in \sigma + \tau} (-1)^{|A| - 1}(|A| - 1)! (|A|)! \sum_{\bD \in \Plans(\sigma, \tau)} \frac{1}{\bD!} \bZ^{\tie(G(\sigma, \tau, \bD))}(\bM).
    \end{equation}
    Then,
    \begin{equation}
        \bY^{\tied[d,d]} = \sum_{\substack{F \in \sF(d, d) \\ F \text{ balanced bowtie forest}}} \xi(F) \cdot \bZ^F,
    \end{equation}
    where $\xi(F)$ is the same coefficient from Lemma~\ref{lem:stretched-ribbon-combinatorial}, given by
    \begin{equation}
        \xi(F) = \One\{F \text{ balanced}\} \prod_{\substack{C \in \conn(F) \\ C \text{ balanced bowtie} \\ \text{on } 2k \text{ leaves}}} (-1)^{k - 1} (k - 1)!\, k!.
    \end{equation}
    In particular, the direct sum of $\bY^{\tied[d^{\prime}, d^{\prime}]}$ over $0 \leq d^{\prime} \leq d$ equals $\bZ^{\tied}$ as defined in \eqref{eq:def:Z-tied}.
\end{lemma}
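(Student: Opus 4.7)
The plan is to identify the coefficient of each bowtie forest $F$ in $\bY^{\tied[d,d]}$ and show it equals $\xi(F)$. The key preliminary is a balance conservation law for partition transport plans: for any triple $(\sigma, \tau, \bD)$ and any connected component $C$ of the bipartite graph on $\sigma \cup \tau$ induced by the support of $\bD$,
\[ \sum_{A \in \sigma_C} |A| = \sum_{A \in \sigma_C,\, B \in \tau_C} D_{A,B} = \sum_{B \in \tau_C} |B|. \]
Consequently every connected component of $G(\sigma, \tau, \bD)$ has equal numbers of leaves in $\sL$ and $\sR$, so $\tie(G(\sigma, \tau, \bD))$ is automatically a balanced bowtie forest, giving coefficient $0$ to any non-balanced target $F$ and reproducing the $\One\{F \text{ balanced}\}$ factor in $\xi$. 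The statistic
\[ \alpha(\sigma, \tau, \bD) \colonequals \prod_{R \in \sigma + \tau}(-1)^{|R|-1}(|R|-1)!\,(|R|)! \cdot \bD!^{-1} \]
factorizes over connected components of $G$ (since both the product over parts and $\bD!$ respect the block structure on components). So the coefficient of a balanced bowtie forest $F$ equals $\prod_{C \in \conn(F)} A_{k(C)}$, where $k(C)$ is half the number of leaves in $C$ and $A_k$ is the label-independent (by symmetry) connected contribution from a balanced bowtie on $2k$ leaves.

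The next step is to compute $A_k$ via an exponential generating function. Let $B_k \colonequals \sum_{\sigma, \tau \in \Part([k])} \sum_{\bD \in \Plans(\sigma, \tau)} \alpha(\sigma, \tau, \bD)$ denote the unrestricted analogue, and set $\hat{A}(u) \colonequals \sum_{k \geq 1} A_k u^k/(k!)^2$ and $\hat{B}(u) \colonequals \sum_{k \geq 0} B_k u^k/(k!)^2$. The bivariate exponential formula for bi-labeled species, combined with the balance observation forcing the EGFs to live on the diagonal $u = xy$, then yields $\hat{B}(u) = \exp(\hat{A}(u))$. To evaluate $\hat{B}$ in closed form, a multinomial reorganization of $\sum_\bD \prod_A |A|!/\prod_B D_{A,B}!$ as a count of functions $[k] \to \tau$ with fiber sizes $|B|$ gives
\[ \sum_{\bD \in \Plans(\sigma, \tau)} \bD!^{-1} = \frac{k!}{\prod_{R \in \sigma + \tau} |R|!}, \]
factorizing $B_k = k! \cdot s_k^2$ where $s_k \colonequals \sum_{\sigma \in \Part([k])} \prod_{A \in \sigma} (-1)^{|A|-1}(|A|-1)!$. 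By the compositional identity $\exp(\log(1 + x)) = 1 + x$ applied to the EGF $\sum_{k \geq 1}(-1)^{k-1}(k-1)!\, x^k/k! = \log(1+x)$, one has $s_k = \One\{k \leq 1\}$, hence $\hat{B}(u) = 1 + u$. Taking logarithms, $\hat{A}(u) = \log(1+u) = \sum_{k \geq 1}(-1)^{k-1} u^k/k$, so $A_k = (-1)^{k-1}(k-1)!\,k!$, precisely the claimed $\xi$ formula (with the pair case $k=1$ yielding $A_1 = 1$, consistent with the direct contribution from $A = \{i\}$, $B = \{j\}$, $D_{A,B} = 1$).

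The final direct-sum statement is then immediate: $\bZ^{\tied}$ in \eqref{eq:def:Z-tied} only picks up balanced bowtie forests with coefficients $\xi$ and is therefore block-diagonal, with blocks matching $\bY^{\tied[d',d']}$. I expect the main obstacle to be justifying the bi-labeled exponential formula carefully, in particular verifying both the factorization of $\alpha$ over connected components of $G$ and the collapse of the EGF to the univariate diagonal $u = xy$ (which is where the balance lemma enters). Once these combinatorial underpinnings and the closed-form inner-sum identity for $\sum_\bD 1/\bD!$ are in hand, the extraction of $\hat{A}$ from $\hat{B}$ via $\log(1+u)$ is a short manipulation.
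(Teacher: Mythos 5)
Your proof is correct and arrives at the same coefficient $\xi(F)$ by a genuinely different route. The paper performs the connected/disconnected inclusion-exclusion directly via two nested M\"obius inversions: first over $\Part(\sigma + \tau)$ to isolate connected transport plans, then over the product poset $\Part([d]) \times \Part([d])$ to evaluate the remaining sum, which collapses to a count of perfect matchings. You replace that two-stage M\"obius argument with a single application of the exponential formula for bi-labeled species living on the diagonal $(\ell, m) = (k, k)$, after first observing that the unrestricted sum factorizes into $B_k = k!\, s_k^2$ via the multinomial identity $\sum_{\bD} 1/\bD! = k!/\prod_R |R|!$ -- the same identity the paper uses -- and that $s_k = \One\{k \le 1\}$ by $\exp\circ\log$. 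The payoff is that you never have to evaluate the balanced-partition sum $\sum_{\pi}(-1)^{|\pi|-1}(|\pi|-1)!\prod_\rho (\|\rho\|/2)!$ explicitly; the entire cancellation is absorbed into $\hat{B}(u) = 1 + u$. Your route is shorter and arguably more transparent about \emph{why} the answer is $(-1)^{k-1}(k-1)!\,k!$ (it is the $\log$ of something trivially simple), though you should make explicit both the factorization of $\alpha$ over connected components (which follows since both $\prod_R(\cdots)$ and $\bD!$ respect the block structure) and the diagonal form of the bi-labeled exponential formula, which you have correctly identified as the points requiring care. The balance conservation law and the final block-diagonal/direct-sum observation are handled the same way as in the paper.
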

\noindent

\begin{remark}
    We note that the fact that the combinatorial quantities in Lemma~\ref{lem:stretched-ribbon-combinatorial} earlier, sums of \Mobius\ functions of stretched forests, and those in Lemma~\ref{lem:partition-transport-ribbon-combinatorial} above, sums of combinatorial coefficients of partition transport plans, are \emph{equal} is quite surprising.
    We isolate this fact to emphasize its unusual form:
    \begin{equation}
        \sum_{\substack{F \in \sF(\ell, m) \\ F \text{ stretched}}} \mu(F) = \sum_{\substack{\sigma \in \Part([\ell]) \\ \tau \in \Part([m])}} \prod_{A \in \sigma + \tau} (-1)^{|A| - 1}(|A| - 1)! (|A|)! \sum_{\bD \in \Plans(\sigma, \tau)} \frac{1}{\bD!}.
    \end{equation}
    Our proofs, unfortunately, give little insight as to why this should be the case, instead showing, in an especially technical manner for the left-hand side, that both sides equal another quantity.
    It would be interesting to find a combinatorial or order-theoretic argument explaining this coincidence, which is crucial to our argument, more directly.
\end{remark}

We again summarize our findings below.
\begin{corollary}
    \label{cor:Y-Z-tied}
    $\|\bY - \bZ^{\tied}\|
    \leq d^{5d}\|\bM\|^{3d}(\epsilon_{\pow}(\bM) + \epsilon_{\offdiag}(\bM) + \epsilon_{\corr}(\bM))$.
\end{corollary}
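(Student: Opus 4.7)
} The plan is to combine the two preceding lemmas in the obvious way, exploiting the block structure of both matrices. First, by Proposition~\ref{prop:gram-cgm-expansion}, $\bY$ is block-diagonal with blocks $\bY^{[d',d']}$ for $0\le d'\le d$. On the other hand, by the last sentence of Lemma~\ref{lem:partition-transport-ribbon-combinatorial}, $\bZ^{\tied}$ is precisely the block-diagonal matrix with blocks $\bY^{\tied[d',d']}$. Hence the operator-norm difference factors as
\begin{equation}
\|\bY - \bZ^{\tied}\| \;=\; \max_{0\le d'\le d}\, \|\bY^{[d',d']} - \bY^{\tied[d',d']}\|.
\end{equation}
So it suffices to bound each diagonal block separately.

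Second, I would apply the triangle inequality to the CGM expansions for $\bY^{[d',d']}$ (from Proposition~\ref{prop:gram-cgm-expansion}) and $\bY^{\tied[d',d']}$ (from Lemma~\ref{lem:partition-transport-ribbon-combinatorial}), which differ only by replacing $\bZ^{G(\sigma,\tau,\bD)}$ with $\bZ^{\tie(G(\sigma,\tau,\bD))}$ term by term. Lemma~\ref{lem:tying-partition-transport-ribbon} provides the per-term bound
\begin{equation}
\|\bZ^{G(\sigma,\tau,\bD)} - \bZ^{\tie(G(\sigma,\tau,\bD))}\| \;\le\; d^{2}\|\bM\|^{3d}\bigl(\epsilon_{\pow}(\bM)+\epsilon_{\offdiag}(\bM)+\epsilon_{\corr}(\bM)\bigr),
\end{equation}
which is uniform in $(\sigma,\tau,\bD)$. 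So the remaining work is to show that the total combinatorial weight attached to these terms is controlled by an exponential-in-$d$ factor.

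Third, the combinatorial bound. For $\sigma,\tau\in\Part([d'])$, one has $\prod_{R\in\sigma+\tau}(|R|-1)!\,|R|! \le (d'!)^{2}$ using $\prod_{R\in\sigma}|R|!\le d'!$ (the multinomial coefficient $d'!/\prod_R|R|!$ is a positive integer) and the analogous bound for $(|R|-1)!$. For the inner sum, I would drop the column constraints defining $\Plans(\sigma,\tau)$ to get
\begin{equation}
\sum_{\bD\in\Plans(\sigma,\tau)}\frac{1}{\bD!} \;\le\; \prod_{A\in\sigma}\frac{|\tau|^{|A|}}{|A|!} \;\le\; \frac{(d')^{d'}}{\prod_{A\in\sigma}|A|!},
\end{equation}
which follows from the multinomial theorem applied row-by-row. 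Finally, $|\Part([d'])|^2 \le B(d')^2 \le (d')^{2d'}$. Multiplying these pieces and absorbing the $d^{2}$ from the per-term bound gives a combined constant comfortably below $d^{5d}$ for the range of $d$ we care about, yielding the stated inequality.

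The step most likely to require care is the combinatorial counting in the third paragraph: the naive bound $|\Plans(\sigma,\tau)|\le d^{|\sigma||\tau|}$ is far too weak, and one genuinely needs the weighting by $1/\bD!$ together with row-by-row multinomial summation to bring the coefficient under control. Once that bound is in place, however, everything else is bookkeeping, and the argument is essentially parallel in spirit to Corollary~\ref{cor:Z-main-Z-tied}.
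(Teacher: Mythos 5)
Your overall structure is exactly the paper's: exploit block-diagonality of both $\bY$ and $\bZ^{\tied}$ (the latter via the last sentence of Lemma~\ref{lem:partition-transport-ribbon-combinatorial}), reduce to a single block, expand term-by-term over $(\sigma,\tau,\bD)$, apply Lemma~\ref{lem:tying-partition-transport-ribbon} per term, and control the total combinatorial weight. The place you flagged as delicate is indeed where the write-up goes wrong, for two separate reasons. First, the claim $\prod_{R\in\sigma+\tau}(|R|-1)!\,|R|! \le (d'!)^2$ is false: $\sigma+\tau$ is the disjoint (multiset) union, so the product splits into \emph{four} factorial blocks $\prod_{\sigma}(|R|-1)!\cdot\prod_{\sigma}|R|!\cdot\prod_{\tau}(|R|-1)!\cdot\prod_{\tau}|R|!$; taking $\sigma=\tau=\{[d']\}$ already gives $((d'-1)!\,d'!)^2$, which dwarfs $(d'!)^2$. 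The elementary bound, using $\prod_R a_R!\le(\sum_R a_R)!$ and $\sum_{R\in\sigma}|R|=d'$, is $\prod_{R\in\sigma+\tau}(|R|-1)!\,|R|! \le ((d'-1)!\,d'!)^2$.

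Second, and more substantively, dropping the column constraints in the multinomial sum is too lossy. Your row-only bound $\sum_{\bD\in\Plans(\sigma,\tau)}\frac{1}{\bD!}\le \frac{(d')^{d'}}{\prod_{A\in\sigma}|A|!}$ is correct but exceeds the exact value $\sum_{\bD\in\Plans(\sigma,\tau)}\frac{1}{\bD!} = \frac{d'!}{\prod_{R\in\sigma+\tau}|R|!}$ (which is the row-\emph{and}-column multinomial identity, and it appears explicitly in the proof of Lemma~\ref{lem:partition-transport-ribbon-combinatorial}) by a ratio of $\frac{(d')^{d'}\prod_{B\in\tau}|B|!}{d'!}$, which for coarse $\tau$ is $\Theta((d')^{d'})$. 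If you push the computation through carefully with your inequality (correcting point one, and cancelling $\prod_{A\in\sigma}|A|!$ against the denominator), the per-$(\sigma,\tau)$ weight is bounded by $(d'-1)!\cdot(d'-1)!\,d'!\cdot(d')^{d'} \le (d')^{4d'-2}$, which after summing over $d^{2d}$ pairs of partitions and multiplying by the $d^2$ from Lemma~\ref{lem:tying-partition-transport-ribbon} gives $d^{6d}$, not the claimed $d^{5d}$. Substituting the exact identity instead collapses the weight to $d'!\prod_{R\in\sigma+\tau}(|R|-1)! \le d'!\,((d'-1)!)^2$, which recovers $d^{5d}$. So the structure and every lemma you invoke are exactly right; the gap is that you need the row-and-column exact identity for $\sum_{\bD}1/\bD!$ rather than the row-only inequality, which is what the paper uses.
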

\begin{proof}
    Since $\bY$ and $\bZ^{\tied}$ are both block-diagonal, it suffices to consider a diagonal block indexed by $\binom{[N]}{d}$.
    Since by Lemma~\ref{lem:partition-transport-ribbon-combinatorial} this block of $\bZ^{\tied}$ is $\bY^{\tied[d, d]}$, this amounts to bounding $\|\bY^{[d, d]} - \bY^{\tied[d, d]}\|$.
    Applying triangle inequality and Lemma~\ref{lem:tying-partition-transport-ribbon}, we find
    \begin{align}
      \|\bY^{[d, d]} &- \bY^{\tied[d, d]}\| \nonumber \\
      &\leq \sum_{\sigma, \tau \in \Part([d])} \prod_{A \in \sigma + \tau} (|A| - 1)! (|A|)! \sum_{\bD \in \Plans(\sigma, \tau)} \frac{1}{\bD!} \|\bZ^{G(\sigma, \tau, \bD)} - \bZ^{\tie(G(\sigma, \tau, \bD))}\| \nonumber \\
      &\leq d^2\|\bM\|^{3d}(\epsilon_{\pow}(\bM) + \epsilon_{\offdiag}(\bM) + \epsilon_{\corr}(\bM)) \cdot (d - 1)! \, d! \cdot \sum_{\sigma, \tau \in \Part([d])} |\Plans(\sigma, \tau)| \nonumber
        \intertext{and bounding $|\Part([d])|$ and $|\Plans(\sigma, \tau)|$ using Propositions~\ref{prop:size-Part} and \ref{prop:size-Plans} respectively and noting that $d \cdot d! \leq d^d$ we have}
      &\leq \|\bM\|^{3d}(\epsilon_{\pow}(\bM) + \epsilon_{\offdiag}(\bM) + \epsilon_{\corr}(\bM)) \cdot d^{2d} \cdot d^{2d}d^d,
    \end{align}
    and the result follows.
\end{proof}

Finally, what will be the crucial feature of $\bY$ for us is that its spectrum is bounded below.
Since $\bY$ is formed by definition as a Gram matrix we will certainly have $\bY \succeq 0$; however, as we show below, we moreover have $\bY \succeq \lambda \bm I$ for some small but positive $\lambda > 0$, depending on the smallest eigenvalue of $\bM$.
Intuitively, before any technical reasoning we would already expect some quantitative statement of this kind, since whenever $\bM$ is non-singular the $h_S(\bV^{\top}\bz)$ are linearly independent and their conditioning should depend mostly on the conditioning of $(\bV^{\top}\bz)^S$.
\begin{proposition}
    \label{prop:Y-positive}
    $\lambda_{\min}(\bY) \geq \lambda_{\min}(\bM)^d$.
\end{proposition}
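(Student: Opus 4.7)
The plan is to exploit the block-diagonal structure of $\bY$ and establish the bound block by block via a tensor-power factorization. Since $\langle \cdot, \cdot \rangle_{\partial}$ vanishes on polynomials of differing degree, $\bY$ is the direct sum of blocks $\bY^{[d', d']}$ for $0 \leq d' \leq d$, where $\bY^{[d', d']}$ is the Gram matrix of $\{h_S(\bV^{\top}\bz) : S \in \binom{[N]}{d'}\}$ under $\langle \cdot, \cdot \rangle_{\partial}$. First I will prove $\bY^{[d', d']} \succeq \lambda_{\min}(\bM)^{d'} \bm I$ for every $d' \leq d$. Since $\bM$ has unit diagonal, $\lambda_{\min}(\bM) \leq \Tr(\bM)/N = 1$, so $\lambda_{\min}(\bM)^{d'} \geq \lambda_{\min}(\bM)^d$ for $d' \leq d$, and the proposition follows by minimizing over blocks.

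Fix $d' \leq d$. The key step is the factorization
\begin{equation}
    \bY^{[d', d']} = d'!\, \bH_{\bx}^{\top} \bM^{\otimes d'} \bH_{\bx},
\end{equation}
where $\bH_{\bx} : \RR^{\binom{[N]}{d'}} \to \Sym^{d'}(\RR^N)$ sends $\be_S$ to the symmetric tensor $\bh_S$ corresponding under the isometry of Section~\ref{sec:prelim-symtens-poly} to the homogeneous degree-$d'$ polynomial $h_S(\bx)$ in the formal variables $\bx$. To derive this, I observe that $h_S(\bV^{\top}\bz)$ corresponds to the symmetric tensor $\bV^{\otimes d'} \bh_S \in \Sym^{d'}(\RR^r)$, where $\bV^{\otimes d'} : (\RR^N)^{\otimes d'} \to (\RR^r)^{\otimes d'}$; combining the identity $\langle p, q \rangle_{\partial} = d'! \langle p, q \rangle_{\circ}$, the isometry from the apolar to the Frobenius inner product on symmetric tensors, and $(\bV^{\otimes d'})^{\top} \bV^{\otimes d'} = \bM^{\otimes d'}$ gives the formula. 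Since $\bM \succeq \lambda_{\min}(\bM) \bm I_N$ implies $\bM^{\otimes d'} \succeq \lambda_{\min}(\bM)^{d'} \bm I$ on $(\RR^N)^{\otimes d'}$, this yields $\bY^{[d', d']} \succeq d'!\, \lambda_{\min}(\bM)^{d'} \bH_{\bx}^{\top} \bH_{\bx}$.

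It then remains to show $\bH_{\bx}^{\top} \bH_{\bx} \succeq \tfrac{1}{d'!}\, \bm I$, which I will deduce from a simple structural observation about $h_S(\bx)$: in its expansion (Definition~\ref{def:main-error} with the $\downarrow$ removed), the all-singleton partition of $S$ contributes $\prod_{i \in S} x_i = \bx^S$ with coefficient $1$, while every other partition contains a part $A$ with $|A| \geq 2$ whose factor $q_A(\bx) = \sum_a \prod_{j \in A} M_{aj} x_a^{|A|}$ has only non-multilinear monomials. Hence the multilinear part of $h_S(\bx)$ is exactly $\bx^S$. Writing $\bh_S = \bm T_S + \bh_S^{\perp}$ where $\bm T_S \in \Sym^{d'}(\RR^N)$ is the symmetric tensor of $\bx^S$, the correspondence \eqref{eq:correspondence-poly-symtens} yields $\|\bm T_S\|^2 = 1/d'!$ and $\langle \bm T_S, \bm T_T \rangle = 0$ for $S \neq T$, while $\bh_S^{\perp}$ (corresponding to a polynomial with only non-multilinear monomials) is Frobenius-orthogonal to the multilinear subspace $\mathrm{span}\{\bm T_U : U \in \binom{[N]}{d'}\}$. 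Thus for any $\bc \in \RR^{\binom{[N]}{d'}}$,
\begin{equation}
    \|\bH_{\bx}\bc\|^2 = \Big\|\sum_S c_S \bm T_S\Big\|^2 + \Big\|\sum_S c_S \bh_S^{\perp}\Big\|^2 \geq \sum_S c_S^2 \|\bm T_S\|^2 = \|\bc\|^2 / d'!,
\end{equation}
giving the desired bound and completing the argument. I expect no serious obstacle: the argument is essentially pure linear algebra of tensor powers, with the one delicate point being the identification of the multilinear part of $h_S(\bx)$, which follows by direct inspection of the defining formula, and careful bookkeeping of the $d'!$ factor separating $\langle \cdot, \cdot \rangle_{\partial}$ from the Frobenius inner product.
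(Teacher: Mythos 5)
Your proof is correct, and it follows the same basic strategy as the paper: write $\bY^{[d',d']}$ as a congruence $\bX^{\top}\bC\bX$ with $\bC$ a middle factor bounded below by $\lambda_{\min}(\bM)^{d'}$, and then bound the outer Gram factor $\bX^{\top}\bX$ from below by inspecting the multilinear part of $h_S$, which is exactly $\bx^S$. The only difference is in how you reach the middle bound: you factor through $\bM^{\otimes d'}$ directly via the tensor-space isometry, reading off $\lambda_{\min}(\bM^{\otimes d'})=\lambda_{\min}(\bM)^{d'}$ immediately, whereas the paper passes the Gram matrix $\bA_{S,T}=\langle(\bV^{\top}\bz)^S,(\bV^{\top}\bz)^T\rangle_{\partial}$ through an SVD of $\bV$ and orthogonal invariance of the apolar inner product to reach a diagonal matrix. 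Your route is marginally cleaner and bypasses the SVD entirely, and you also make explicit the small point (used implicitly in the paper) that $\lambda_{\min}(\bM)\leq 1$ because $\diag(\bM)=\one$, which is what lets you pass from the blockwise bound $\lambda_{\min}(\bM)^{d'}$ to the global bound $\lambda_{\min}(\bM)^d$.
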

\begin{proof}
    Since $\bY$ is block-diagonal, it suffices to show the result for each $\bY^{[d, d]}$.
    Let $\bU \in \RR^{\sM_d([N]) \times \binom{[N]}{d}}$ have as its columns the monomial coefficients of the polynomials $h_S(\bx)$ (noting that the entries of the columns are indexed by multisets in $[N]$, compatible with this interpretation), and let $\bA \in \RR^{\sM_{d}([N]) \times \sM_{d}([N])}$ have as its entries $A_{S,T} = \langle (\bV^{\top}\bz)^S, (\bV^{\top}\bz)^T \rangle_{\partial}$.
    We then have $\bY^{[d, d]} = \bU^{\top} \bA \bU$.

    Write the singular value decomposition $\bV = \bQ_1 \bm\Sigma \bQ_2^{\top}$ for $\bQ_i \in \sO(N)$ and $\bm\Sigma$ diagonal containing the singular values of $\bV$.
    Then, applying the orthogonal invariance of Proposition~\ref{prop:apolar-orth-invariant}, we have $A_{S, T} = \langle (\bQ_2\bm\Sigma \bz)^S, (\bQ_2\bm\Sigma\bz)^T \rangle_{\partial}$.
    Now, letting $\bU^{\prime} \in \RR^{\sM_{d}([N]) \times \sM_{d}([N])}$ have as its columns the monomial coefficients of $(\bQ_2\bz)^S$ and letting $\bA^{\prime} \in \RR^{\sM_{d}([N]) \times \sM_{d}([N])}$ have entries $A_{S,T}^{\prime} = \langle (\bm\Sigma \bz)^S, (\bm\Sigma \bz)^T \rangle_{\partial}$, we have $\bA = \bU^{\prime\top} \bA^{\prime} \bU^{\prime}$, and $\bA^{\prime}$ is a diagonal matrix with entries equal to $d!$ multiplied by various products of $2d$ of the singular values of $\bV$.
    In particular, letting $\sigma_{\min}$ be the smallest such singular value, we have $\bA^{\prime} \succeq \sigma_{\min}^{2d} \bm I_{\sM_{d}([N])}$.
    Therefore, $\bA \succeq \sigma_{\min}^{2d} \cdot d!\, \bU^{\prime^{\top}} \bU^{\prime}$.
    But the entries of the matrix $d!\,\bU^{\prime^{\top}} \bU^{\prime}$ are the inner products $\langle (\bQ_2 \bz)^S, (\bQ_2\bz)^T \rangle_{\partial} = \langle \bz^S, \bz^T \rangle_{\partial}$, again by orthogonal invariance.
    Thus this matrix is merely $d! \, \bm I_{\sM_{d}([N])}$, so we find $\bA \succeq \sigma_{\min}^{2d} d! \bm I_{\sM_{d}([N])}$.

    Therefore, $\bY^{[d, d]} \succeq \sigma_{\min}^{2d} d! \, \bU^{\top} \bU$.
    Since the only multilinear monomial appearing in $h_S(\bx)$ is $\bx^{\bS}$, and this occurs with coefficient 1, the block of $\bU^{\top}$ indexed by all columns and rows corresponding to multisets with no repeated elements is the identity.
    In particular then, $\bU^{\top}\bU \succeq \bm I_{\binom{[N]}{d}}$, so $\bY^{[d, d]} \succeq \sigma_{\min}^{2d} d! \, \bm I_{I_{\binom{[N]}{d}}}$.
    Finally, since $\bM = \bV^{\top} \bV$, we have $\lambda_{\min}(\bM) = \sigma_{\min}^2$, and the result follows.
\end{proof}

Combining our results, we are now prepared to prove Lemma~\ref{lem:Z-main-positive}.
\begin{proof}[Proof of Lemma~\ref{lem:Z-main-positive}]
    We need only recall the main results from the last two sections:
    \begin{align}
      \|\bZ^{\main} - \bZ^{\tied}\| &\leq (6d)^{10d} \|\bM\|^{3d} \epsilon_{\tree}(\bM; d), \tag{\text{by Corollary~\ref{cor:Z-main-Z-tied}}}  \\
      \|\bY - \bZ^{\tied}\| &\leq d^{5d}\|\bM\|^{3d}(\epsilon_{\pow}(\bM) + \epsilon_{\offdiag}(\bM) + \epsilon_{\corr}(\bM)), \tag{\text{by Corollary~\ref{cor:Y-Z-tied}}}\\
      \lambda_{\min}(\bY) &\geq \lambda_{\min}(\bM)^d, \tag{\text{by Proposition~\ref{prop:Y-positive}}}
    \end{align}
    where we note that the assumption $\epsilon_{\tree}(\bM; d) \leq 1$ in Corollary~\ref{cor:Z-main-Z-tied} follows from the condition of Theorem~\ref{thm:lifting}, which is assumed in the statement.
    The result follows then follows by the eigenvalue inequality $\lambda_{\min}(\bZ^{\main}) \geq \lambda_{\min}(\bY) - \|\bZ^{\main} - \bZ^{\tied}\| - \|\bZ^{\tied} - \bY\|$.
\end{proof}

\subsection{Bound on the Error Term: Proof of Lemma~\ref{lem:Z-err-bound}}

Our first step in analyzing the error term is, as for the main term, to evaluate it in the multiharmonic basis, giving the entries of $\bZ^{\err}$.
We recall that, in Proposition~\ref{prop:E-err-factorization}, we found that $\tEE^{\err}$ decomposes as an application of $\tEE^{\main}$ to some of the input indices and a combination of error trees applied to the other indices.
Thus, part of the result of this calculation will be the familiar stretched forest terms from the calculations in Proposition~\ref{prop:Z-main-harmonic-basis}, while the remainder will consist of the $\Delta^F$ error components from Section~\ref{sec:main-error-terms}, applied to some of the partition components of the multiharmonic basis polynomials.
To make it easier to describe and eventually bound the latter, we define the following type of error matrix.
\begin{definition}[Partition-error matrices]
    \label{def:partition-error-matrix}
    Suppose $\sigma = \{A_1, \dots, A_n\} \in \Part([\ell]; \odd)$, $\tau = \{B_1, \dots, B_p\} \in \Part([m]; \odd)$, and $T \in \sT(n + p)$.
    We then define the \emph{partition-error matrix} $\bm\Delta^{(\sigma, \tau, T)} \in \RR^{\binom{[N]}{\ell} \times \binom{[N]}{m}}$ associated to this triplet to have entries
    \begin{align}
      &\Delta^{(\sigma, \tau, T)}_{\bs\bt} \colonequals \sum_{\substack{\ba \in [N]^{\sigma[\geq 3]} \\ \bb \in [N]^{\tau[\geq 3]}}} \prod_{A \in \sigma}\prod_{i \in A}M_{i,f_{\bs, \ba}(A)} \cdot \prod_{B \in \tau}\prod_{j \in B} M_{j,g_{\bt, \bb}(B)} \, \, \cdot \nonumber \\
      &\hspace{3.55cm} \Delta^T(\bM; (f_{\bs, \ba}(A_1), \dots, f_{\bs, \ba}(A_n), g_{\bt, \bb}(B_1), \dots, g_{\bt, \bb}(B_p))).
    \end{align}
\end{definition}

\begin{proposition}
    \label{prop:Z-err-harmonic-basis}
    For any $S, T \subseteq [N]$,
    \begin{align}
      Z^{\err}_{S,T}
      &= \tEE^{\err}[h_S^{\downarrow}(\bx)h_T^{\downarrow}(\bx)] \nonumber \\
      &= \sum_{\substack{A \subseteq S \\ B \subseteq T \\ \mathclap{A + B \neq S + T}}}
      \Bigg(\sum_{\substack{F \in \sF(|A|, |B|) \\ F \text{ stretched}}}\mu(F) \cdot Z^F_{A,B}\Bigg) \Bigg(\sum_{\pi \in \Part((S \setminus A) + (T \setminus B); \even)} (-1)^{|\pi|}\prod_{R \in \pi}\sum_{\substack{\sigma \in \Part([|R \cap S|]; \odd) \\ \tau \in \Part([|R \cap T|]; \odd)}}  \nonumber \\
      &\hspace{2cm}  \prod_{A \in \sigma + \tau}(-1)^{|A| - 1}(|A| - 1)!\sum_{F \in \sT(|\sigma| + |\tau|)} \mu(F) \cdot \Delta^{(\sigma, \tau, F)}_{R \cap S, R \cap T}\Bigg).
    \end{align}
\end{proposition}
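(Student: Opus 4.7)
The plan is to expand $h_S^{\downarrow}(\bx) h_T^{\downarrow}(\bx)$ in the partition formula for the multiharmonic basis, apply $\tEE^{\err}$ term by term using the key decomposition of Proposition~\ref{prop:E-err-factorization}, and reorganize the resulting sums. First, I would write
\begin{align*}
  h_S^{\downarrow}(\bx) h_T^{\downarrow}(\bx) = \sum_{\substack{\sigma^\star \in \Part(S) \\ \tau^\star \in \Part(T)}} \prod_{R \in \sigma^\star + \tau^\star}(-1)^{|R|-1}(|R|-1)! \prod_{R \in \sigma^\star + \tau^\star} q_R^{\downarrow}(\bx),
\end{align*}
and separate each $q_R^{\downarrow}(\bx)$ into its $\bx$-independent factor (a constant for pairs, a sum over an internal $\square$-index for even parts of size at least $4$) and its $\bx$-dependent factor (a variable $x_i$ for singletons, a sum $\sum_a \prod_{i \in R} M_{ia} x_a$ for odd parts of size at least $3$). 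This writes $h_S^{\downarrow} h_T^{\downarrow}$ as a linear combination of monomials $\bx^U$ where $U$ is a multiset built from the singletons of $\sigma^\star, \tau^\star$ together with the free summation indices introduced by their odd parts of size at least $3$.

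Next, I would apply Proposition~\ref{prop:E-err-factorization} to each $\tEE^{\err}[\bx^U]$, producing a sum over a non-empty ``error submultiset'' $E \subseteq U$ with a factor of $\tEE^{\main}[\bx^{U - E}]$ and an even-partition-with-error-trees factor on $E$. The crucial step is a reindexing: instead of first fixing $(\sigma^\star, \tau^\star)$ and then choosing $E$, I would fix the \emph{main original-index subsets} $A \subseteq S$ and $B \subseteq T$ first, and only afterwards specify how the partitions split. Explicitly, given $(\sigma^\star, \tau^\star, E)$, the original indices routed to $\tEE^{\main}$ form subsets $A \subseteq S$ and $B \subseteq T$ consisting of all parts of $\sigma^\star, \tau^\star$ that carry no $\bx$-variable (pairs and even parts of size at least $4$), all singletons not chosen for $E$, and all odd parts of size at least $3$ whose summation-index variable was not chosen for $E$. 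Conversely, any such $(A, B)$ with $A + B \neq S + T$ arises from a unique $(\sigma^\star|_A, \tau^\star|_B, \sigma^\star|_{S \setminus A}, \tau^\star|_{T \setminus B}, E)$.

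On the main side, summing over the partitions of $A$ and $B$ together with the corresponding $\tEE^{\main}$ values exactly reproduces $\tEE^{\main}[h_A^{\downarrow}(\bx) h_B^{\downarrow}(\bx)]$, which by Proposition~\ref{prop:Z-main-harmonic-basis} equals $\sum_{F \in \sF(|A|, |B|),\, F \text{ stretched}} \mu(F) Z^F_{A, B}$, the first factor in the statement. On the error side, I would identify each block $R$ of the even partition $\pi$ coming from Proposition~\ref{prop:E-err-factorization}: the odd parts of $\sigma^\star|_{S \setminus A}$ and $\tau^\star|_{T \setminus B}$ whose variables land in $R$ furnish odd partitions $\sigma \in \Part([|R \cap S|]; \odd)$ and $\tau \in \Part([|R \cap T|]; \odd)$, while the $M$-factors dressed onto each summation index reproduce exactly the $\prod_A \prod_i M_{i, f_{\bs, \ba}(A)}$ and $\prod_B \prod_j M_{j, g_{\bt, \bb}(B)}$ factors in Definition~\ref{def:partition-error-matrix}. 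The tree $T \in \sT(|R|)$ appearing in Proposition~\ref{prop:E-err-factorization} becomes the tree $F \in \sT(|\sigma| + |\tau|)$ in the partition-error matrix $\bm\Delta^{(\sigma, \tau, F)}$, and the overall sign $(-1)^{|\pi|}$ comes from the $-\mu(T) \Delta^T$ factors in that proposition while the $(-1)^{|A|-1}(|A|-1)!$ coefficients come directly from the multiharmonic basis expansion.

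The main obstacle is the careful bookkeeping in the reindexing step: one must verify that the correspondence between tuples $(\sigma^\star, \tau^\star, E)$ and tuples $(A, B, \sigma_A^\star, \tau_B^\star, \sigma^\star|_{S \setminus A}, \tau^\star|_{T \setminus B}, \pi, \{(\sigma, \tau, F)\}_R)$ is a sign-preserving bijection with matching combinatorial weights, including the correct treatment of pairs and even parts of size at least $4$, which contribute no $\bx$-variables and therefore must be absorbed entirely into the main-side partitions of $A$ or $B$. Once this combinatorial identification is in place the claimed formula follows by collecting terms.
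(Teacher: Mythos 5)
Your proposal is correct and follows essentially the same route as the paper's proof: expand $h_S^{\downarrow}h_T^{\downarrow}$ in the partition basis, apply Proposition~\ref{prop:E-err-factorization} to the inner $\tEE^{\err}$ evaluations, and reorganize by the ``main-side'' original-index subsets (which the paper calls $J, K$ and you call $A, B$) so that the main-side factor is recognized as $\tEE^{\main}[h_A^{\downarrow}h_B^{\downarrow}]$ via Proposition~\ref{prop:Z-main-harmonic-basis} while the error-side factor is assembled into partition-error matrices $\bm\Delta^{(\sigma,\tau,F)}$.
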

\begin{proof}
    As in Proposition~\ref{prop:Z-main-harmonic-basis}, we begin by expanding directly:
    \begin{align*}
      &\tEE^{\err}[h_S^{\downarrow}(\bx)h_T^{\downarrow}(\bx)] \\
      &= \sum_{\substack{\sigma \in \Part(S) \\ \tau \in \Part(T)}} \prod_{R \in \sigma + \tau} (-1)^{|R| - 1}(|R| - 1)!
                                                                                                                                                       \prod_{R \in \sigma[\even] + \tau[\even]} q_R^{\downarrow} \, \, \cdot \nonumber \\
      &\hspace{1.35cm} \sum_{\substack{\ba \in [N]^{\sigma[\odd; \geq 3]} \\ \bb \in [N]^{\tau[\odd; \geq 3]}}} \prod_{A \in \sigma[\odd]} \prod_{i \in A} M_{f_{\ba}(A), i} \prod_{B \in \tau[\odd]} \prod_{j \in B} M_{f_{\bb}(B), j} \cdot \tEE^{\err}\left[ \prod_{A \in \sigma[\odd]}x_{f_{\ba}(A)} \prod_{B \in \tau[\odd]} x_{g_{\bb}(B)}\right]
      \intertext{where $f_{\ba}, g_{\bb}$ are defined as in Proposition~\ref{prop:Z-main-harmonic-basis}. Now, expanding the pseudoexpectation according to Proposition~\ref{prop:E-err-factorization}, we have}
      &= \sum_{\substack{\sigma \in \Part(S) \\ \tau \in \Part(T)}} \prod_{R \in \sigma + \tau} (-1)^{|R| - 1}(|R| - 1)!
                                                                                                                                                       \prod_{R \in \sigma[\even] + \tau[\even]} q_R^{\downarrow} \, \, \cdot \nonumber \\
      &\hspace{1.35cm} \sum_{\substack{\ba \in [N]^{\sigma[\odd; \geq 3]} \\ \bb \in [N]^{\tau[\odd; \geq 3]}}} \prod_{A \in \sigma[\odd]} \prod_{i \in A} M_{f_{\ba}(A), i} \prod_{B \in \tau[\odd]} \prod_{j \in B} M_{f_{\bb}(B), j} \sum_{\substack{\pi \subseteq \sigma[\odd] \\ \rho \subseteq \tau[\odd] \\ \mathclap{|\pi| + |\rho| < |\sigma[\odd]| + |\tau[\odd]|}}} \tEE^{\main}\left[\prod_{A \in \pi}x_{f_{\ba}(A)} \prod_{B \in \rho} x_{g_{\bb}(B)}\right] \\
      &\hspace{1.2cm}  \sum_{\substack{\beta \in \Part((\sigma[\odd] \setminus \pi) + (\tau[\odd] \setminus \rho); \even)}}\prod_{\gamma \in \beta} \left( -\sum_{T \in \sT(|R|)} \mu(T) \cdot \Delta^T(\bM; (f_{\ba}(A))_{A \in \gamma \cap \sigma} \circ (g_{\bb}(B))_{B \in \gamma \cap \tau})\right)
        \intertext{Here, we swap the order of summation and reorganize the sum according to the union of all parts of $\pi$ and $\sigma[\even]$, which we call $J$, and the union of all parts of $\rho$ and $\tau[\even]$, which we call $K$.
        Recognizing after this manipulation the intermediate result from Proposition~\ref{prop:Z-main-harmonic-basis}, we continue
        }
      &= \sum_{\substack{J \subseteq S \\ K \subseteq T \\ J + K \neq S + T}} \tEE^{\main}[h_J^{\downarrow}(\bx)h_K^{\downarrow}(\bx)] \sum_{\substack{\sigma \in \Part(S \setminus J; \odd) \\ \tau \in \Part(T \setminus K; \odd)}} \prod_{R \in \sigma + \tau}(-1)^{|R| - 1}(|R| - 1)! \\
      &\hspace{1.55cm}\sum_{\substack{\ba \in [N]^{\sigma[\odd; \geq 3]} \\ \bb \in [N]^{\tau[\odd; \geq 3]}}} \prod_{A \in \sigma}\prod_{i \in A} M_{f_{\ba}(A),i} \prod_{B \in \tau}\prod_{j \in B} M_{f_{\bb}(B),i} \\
      &\hspace{2.6cm}\sum_{\beta \in \Part(\sigma + \tau; \even)} \prod_{\gamma \in \beta} \left( -\sum_{T \in \sT(|R|)} \mu(T) \cdot \Delta^T(\bM; (f_{\ba}(A))_{A \in \gamma \cap \sigma} \circ (g_{\bb}(B))_{B \in \gamma \cap \tau})\right)
    \end{align*}
    and again exchanging the order of summation and letting $\pi$ be the partition formed by taking the union of the sets in every part of $\beta$, by a similar manipulation to that in Proposition~\ref{prop:Z-main-harmonic-basis} we complete the proof.
\end{proof}

We now develop a few tools to bound the norms of partition-error matrices.
The following is a minor variant of Proposition~\ref{prop:factorization}, a diagrammatic factorization of CGMs that we use at length in the deferred technical proofs.
This shows how $\bm\Delta^{(\sigma, \tau, T)}$ can be factorized into two outer factors that are similar to CGMs with no $\square$ vertices, and an inner factor that consists of values of $\Delta^T$ arranged in a matrix of suitable shape.
\begin{proposition}[Factorizing partition-error matrices]
    \label{prop:partition-error-factorization}
    Let $\sigma = \{A_1, \dots, A_n\} \in \Part([\ell]; \odd)$, $\tau = \{B_1, \dots, B_p\} \in \Part([m]; \odd)$, and $T \in \sT(n + p)$.
    Define $\bZ^{\sigma} \in \RR^{\binom{[N]}{\ell} \times [N]^{n}}$ to have entries
    \begin{equation}
        Z^{\sigma}_{\bs\ba} = \prod_{A_q = \{i\} \in \sigma[1]}\One\{s_i = a_q\}\prod_{A_q \in \sigma[\geq 3]}\prod_{i \in A_q}M_{s_i,a_q},
    \end{equation}
    and similarly $\bZ^{\tau} \in \RR^{\binom{[N]}{m} \times [N]^{p}}$.
    Let $\bF^{(T, n, p)} \in \RR^{[N]^{n} \times [N]^p}$ have entries
    \begin{equation}
        F^{(T, n, p)}_{\ba\bb} = \Delta^T(\bM; (a_1, \dots, a_n, b_1, \dots, b_p)).
    \end{equation}
    Then, $\bm\Delta^{(\sigma, \tau, T)} = \bm Z^{\sigma} \bm F^{(T, n, p)} \bZ^{\tau^{\top}}$.
\end{proposition}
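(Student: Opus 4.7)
The identity is a direct factorization, and my plan is to establish it by expanding the right-hand side entrywise and matching the resulting sum term-by-term against Definition~\ref{def:partition-error-matrix}. Concretely, I would fix index sets $\bs \in \binom{[N]}{\ell}$ and $\bt \in \binom{[N]}{m}$ and compute
\begin{equation}
\bigl(\bZ^{\sigma} \bF^{(T,n,p)} (\bZ^{\tau})^{\top}\bigr)_{\bs,\bt} = \sum_{\ba \in [N]^n} \sum_{\bb \in [N]^p} Z^{\sigma}_{\bs,\ba} \cdot F^{(T,n,p)}_{\ba,\bb} \cdot Z^{\tau}_{\bt,\bb},
\end{equation}
with the goal of showing this equals $\Delta^{(\sigma,\tau,T)}_{\bs,\bt}$.

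The first substantive step is to partition the coordinates of $\ba = (a_1,\dots,a_n)$ and $\bb = (b_1,\dots,b_p)$ according to whether the corresponding part $A_q \in \sigma$ (respectively $B_q \in \tau$) is a singleton or has size at least $3$. By the definition of $Z^{\sigma}_{\bs,\ba}$, every factor $\One\{s_i = a_q\}$ arising from a singleton $A_q = \{i\} \in \sigma[1]$ collapses the sum over $a_q$, pinning it to $a_q = s_i$; the analogous statement holds for singletons in $\tau$. The remaining free summation is then exactly over $\ba|_{\sigma[\geq 3]} \in [N]^{\sigma[\geq 3]}$ and $\bb|_{\tau[\geq 3]} \in [N]^{\tau[\geq 3]}$, matching the range of summation in Definition~\ref{def:partition-error-matrix}. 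After this collapse, the surviving weight contributed by $Z^{\sigma}_{\bs,\ba}$ is $\prod_{A_q \in \sigma[\geq 3]} \prod_{i \in A_q} M_{s_i, a_q}$, and symmetrically for $Z^{\tau}_{\bt,\bb}$.

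To align these weights with the product appearing in Definition~\ref{def:partition-error-matrix}, I would observe that each singleton $A_q = \{i\} \in \sigma[1]$ contributes a factor $M_{s_i, f_{\bs,\ba}(A_q)} = M_{s_i, s_i} = 1$ in the defining formula (using the hypothesis $M_{ii} = 1$ from Theorem~\ref{thm:lifting}), so the singleton terms may be inserted freely without changing the value. At the same time, the pinned entries make the $\Delta^T$-factor read $\Delta^T(\bM; (f_{\bs,\ba}(A_1), \ldots, f_{\bs,\ba}(A_n), g_{\bt,\bb}(B_1), \ldots, g_{\bt,\bb}(B_p)))$, precisely because $f_{\bs,\ba}(A_q) = s_i$ when $A_q = \{i\}$ is a singleton and $f_{\bs,\ba}(A_q) = a_q$ otherwise (and likewise for $g_{\bt,\bb}$). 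Assembling these pieces reproduces $\Delta^{(\sigma,\tau,T)}_{\bs,\bt}$ exactly.

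There is no essential obstacle; the proof is a routine unfolding. The only point that requires careful bookkeeping is fixing an enumeration $\sigma = \{A_1,\dots,A_n\}$ and $\tau = \{B_1,\dots,B_p\}$ so that the coordinate indices of $\ba,\bb$ are matched consistently to parts of $\sigma,\tau$ in both $Z^{\sigma}, Z^{\tau}$ and $F^{(T,n,p)}$, and remembering to invoke $M_{ii} = 1$ when converting the singleton factors back into the form stated in Definition~\ref{def:partition-error-matrix}.
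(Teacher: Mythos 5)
Your proposal is correct and takes exactly the same route as the paper, which disposes of this proposition in one sentence ("expanding the definition of the matrix multiplication and comparing with Definition~\ref{def:partition-error-matrix}"). Your elaboration — collapsing the summation over the singleton-indexed coordinates of $\ba,\bb$ via the indicators in $Z^{\sigma},Z^{\tau}$, matching the residual weights to $\prod_{A\in\sigma[\geq 3]}\prod_{i\in A}M_{s_i,a_A}$, noting that the singleton factors $M_{s_i,s_i}=1$ contributed by Definition~\ref{def:partition-error-matrix} are innocuous under the standing hypothesis $M_{ii}=1$, and observing that the pinning reproduces $f_{\bs,\ba}(A_q)$ and $g_{\bt,\bb}(B_q)$ inside the $\Delta^T$ argument — is precisely the bookkeeping the paper silently delegates to the reader, and it is accurate.
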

\begin{proof}
    The result follows from expanding the definition of the matrix multiplication and comparing with Definition~\ref{def:partition-error-matrix}.
\end{proof}

Next, we show how the norm of the inner matrix can be controlled; in fact, we give a stronger statement bounding the Frobenius norm.
\begin{proposition}[Error matrix Frobenius norm bound]
    \label{prop:err-frob-norm-bound}
    For any $d^{\prime} \leq d$ and $T \in \sT(2d^{\prime})$,
    \begin{equation}
        \left(\sum_{\bs \in [N]^{2d^{\prime}}} (\Delta^T(\bM; \bs))^2\right)^{1/2} \leq (2d)^d \, \epsilon_{\err}(\bM; 2d)
    \end{equation}
\end{proposition}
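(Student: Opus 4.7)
The plan is to reduce the Frobenius norm computation to the pointwise bound on $|\Delta^T(\bM;\bs)|$ built into the definition of $\epsilon_{\err}$, and then carry out the combinatorial counting over tuples organized by their coincidence pattern.

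First, I observe that the definition of $\epsilon_{\err}(\bM; 2d)$ says exactly that for every $d' \leq d$, every $T \in \sT(2d')$, and every $\bs \in [N]^{2d'}$,
\begin{equation}
|\Delta^T(\bM; \bs)| \;\leq\; N^{-|\set(\bs)|/2}\,\epsilon_{\err}(\bM; 2d).
\end{equation}
Squaring and summing over $\bs$, the task reduces to bounding $\sum_{\bs \in [N]^{2d'}} N^{-|\set(\bs)|}$.

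Next, I would partition the sum over $\bs$ according to the set partition $\pi$ of $[2d']$ induced by the equivalence relation $i \sim j \iff s_i = s_j$. For a fixed $\pi$ with $k = |\pi|$ parts, the number of $\bs$ realizing that partition is $N(N-1)\cdots(N-k+1) \leq N^k$, and every such $\bs$ has $|\set(\bs)|=k$. Hence
\begin{equation}
\sum_{\bs \in [N]^{2d'}} N^{-|\set(\bs)|}
\;\leq\; \sum_{\pi \in \Part([2d'])} N^{|\pi|} \cdot N^{-|\pi|}
\;=\; |\Part([2d'])|.
\end{equation}
The number of set partitions of $[2d']$ is the Bell number $B_{2d'}$, which satisfies the crude bound $B_n \leq n^n$ (alternatively cite Proposition~\ref{prop:size-Part}). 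Therefore the above sum is at most $(2d')^{2d'} \leq (2d)^{2d}$.

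Combining the two inequalities and taking square roots gives
\begin{equation}
\left(\sum_{\bs \in [N]^{2d'}} (\Delta^T(\bM;\bs))^2\right)^{1/2}
\;\leq\; \bigl((2d)^{2d}\bigr)^{1/2}\,\epsilon_{\err}(\bM; 2d)
\;=\; (2d)^{d}\,\epsilon_{\err}(\bM; 2d),
\end{equation}
as claimed. There is no real obstacle here: the content of the proposition is essentially the definition of $\epsilon_{\err}$ combined with the elementary fact that $\sum_{\bs} N^{-|\set(\bs)|}$ counts set partitions of $[2d']$ with weight one, so only the routine counting above is required.
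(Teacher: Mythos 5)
Your proof is correct and follows essentially the same route as the paper: apply the pointwise bound $|\Delta^T(\bM;\bs)| \leq N^{-|\set(\bs)|/2}\epsilon_{\err}(\bM;2d)$ coming directly from the definition, then control $\sum_{\bs} N^{-|\set(\bs)|}$ by a Bell-number count. The only cosmetic difference is that you organize the count by the full coincidence partition $\pi$ (arriving cleanly at $|\Part([2d'])| \leq (2d')^{2d'}$ via Proposition~\ref{prop:size-Part}), whereas the paper groups by $k=|\set(\bs)|$ and bounds $\sum_{k=1}^{2d'} k^{2d'}/k!$; both lead to the same $(2d)^d$ factor.
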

\begin{proof}
    Recall that we denote by $\set(\bs)$ the set of distinct indices appearing in $\bs$.
    By definition of $\epsilon_{\err}$, we have
    \begin{equation}
        |\Delta^T(\bM; \bs)| \leq N^{-|\set(\bs)| / 2} \epsilon_{\err}(\bM; 2d).
    \end{equation}
    Therefore, we find
    \begin{align}
      \left(\sum_{\bs \in [N]^{2d^{\prime}}} (\Delta^T(\bM; \bs))^2\right)^{1/2}
      &\leq \left(\sum_{\bs \in [N]^{2d^{\prime}}} N^{-|\set(\bs)|}\right)^{1/2}\epsilon_{\err}(\bM; 2d) \\
      &\leq \left(\sum_{k = 1}^{2d^{\prime}} N^{-k}\cdot\#\{\bs \in [N]^{2d^{\prime}}: |\set(\bs)| = k\}\right)^{1/2}\epsilon_{\err}(\bM; 2d) \\
      &\leq \left(\sum_{k = 1}^{2d^{\prime}} \frac{k^{2d^{\prime}}}{k!}\right)^{1/2}\epsilon_{\err}(\bM; 2d) \\
      &\leq (2d^{\prime})^{d^{\prime}}\epsilon_{\err}(\bM; 2d),
    \end{align}
    and the result follows since $d^{\prime} \leq d$.
\end{proof}

Combining these results with an ancillary result from the Appendix gives the following bound.
\begin{corollary}
    \label{cor:partition-error-norm-bound}
    $\|\bm\Delta^{(\sigma, \tau, T)}\| \leq (2(\ell + m))^{(\ell + m)}\|\bM\|^{\ell + m} \epsilon_{\err}(\bM; 2d)$.
\end{corollary}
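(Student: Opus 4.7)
The plan is to chain together the two preceding results---the matrix factorization of Proposition~\ref{prop:partition-error-factorization} and the entrywise Frobenius control of Proposition~\ref{prop:err-frob-norm-bound}---with a direct operator norm bound on the outer factors $\bZ^\sigma$ and $\bZ^\tau$, which are assembled only from entries of $\bM$.

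First I would apply Proposition~\ref{prop:partition-error-factorization} and submultiplicativity of the spectral norm to obtain
\[
\|\bm\Delta^{(\sigma, \tau, T)}\| \leq \|\bZ^\sigma\|\cdot\|\bF^{(T, n, p)}\|\cdot\|\bZ^\tau\|.
\]
Since $T \in \sT(n + p)$ and each part of $\sigma$ (resp.~$\tau$) has at least one element, $n + p \leq \ell + m$. Bounding the operator norm by the Frobenius norm and invoking Proposition~\ref{prop:err-frob-norm-bound} gives
\[
\|\bF^{(T, n, p)}\| \leq \|\bF^{(T, n, p)}\|_F \leq (n + p)^{(n + p)/2}\epsilon_{\err}(\bM; 2d) \leq (\ell + m)^{(\ell + m)/2}\epsilon_{\err}(\bM; 2d).
\]

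For the outer factors, I would exhibit $\bZ^\sigma$ as a submatrix (restricted to the rows indexed by $\binom{[N]}{\ell}$, an operation that can only decrease the operator norm) of a Kronecker product $\widetilde{\bZ}^\sigma = \bigotimes_{q} \bG^{A_q}$ with one gadget per part of $\sigma$. A singleton part contributes a permutation-type gadget of norm $1$; a part $A_q$ of odd size $k_q \geq 3$ contributes a gadget $\bG^{A_q} \in \RR^{[N]^{k_q}\times[N]}$ whose $a$-th column is the tensor power $(\bM_{:,a})^{\otimes k_q}$. A direct computation gives $(\bG^{A_q})^\top \bG^{A_q} = (\bM^2)^{\circ k_q}$, and since $M_{ii} = 1$ forces $\max_i (\bM^2)_{ii} \leq \|\bM\|^2$, iterating the standard inequality $\|A \circ B\| \leq (\max_i B_{ii})\|A\|$ for PSD matrices yields $\|(\bM^2)^{\circ k_q}\| \leq \|\bM\|^{2k_q}$, hence $\|\bG^{A_q}\| \leq \|\bM\|^{k_q}$. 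Multiplying across the parts gives $\|\bZ^\sigma\| \leq \|\bM\|^\ell$, and symmetrically $\|\bZ^\tau\| \leq \|\bM\|^m$. Combining the three bounds produces $(\ell+m)^{(\ell+m)/2}\|\bM\|^{\ell+m}\epsilon_{\err}(\bM; 2d)$, which is loosely absorbed into $(2(\ell+m))^{\ell+m}\|\bM\|^{\ell+m}\epsilon_{\err}(\bM; 2d)$.

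The main (fairly mild) obstacle is verifying the outer-factor bound. There are two points that need a little care: first, that the set-indexed restriction really corresponds to row selection on the tuple-indexed Kronecker product; and second, that the Hadamard-power norm estimate on $(\bM^2)^{\circ k_q}$ is handled cleanly in the form above rather than through a wasteful Frobenius bound. Both are standard and should be packaged in the CGM tools of Appendix~\ref{app:cgm-tools}, after which assembling the pieces is purely routine.
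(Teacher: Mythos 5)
Your proof takes essentially the same route as the paper: factorize via Proposition~\ref{prop:partition-error-factorization}, apply submultiplicativity, bound the middle factor by its Frobenius norm via Proposition~\ref{prop:err-frob-norm-bound}, and bound the outer factors by $\|\bM\|^\ell$ and $\|\bM\|^m$. The only difference is that the paper gets the outer-factor bounds by directly citing Proposition~\ref{prop:cgm-generic-norm-bound}, whereas you re-derive them via the Kronecker/Hadamard decomposition; note also that the appeal to $M_{ii}=1$ is unnecessary there, since $(\bM^2)_{ii} = \be_i^\top \bM^2 \be_i \le \|\bM^2\| = \|\bM\|^2$ holds for any symmetric $\bM$.
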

\begin{proof}
    By norm submultiplicativity, $\|\bm\Delta^{(\sigma, \tau, T)}\| \leq \|\bZ^{\sigma}\| \cdot \|\bF^{(T, n, p)}\| \cdot \|\bZ^{\tau}\|$.
    By Proposition~\ref{prop:cgm-generic-norm-bound}, we have $\|\bZ^{\sigma}\| \leq \|\bM\|^{\ell}$ and $\|\bZ^{\tau}\| \leq \|\bM\|^m$, and by Proposition~\ref{prop:err-frob-norm-bound}, we have $\|\bF^{(T, n, p)}\| \leq \|\bF^{(T, n, p)}\|_F \leq (2(n + p))^{n + p}\epsilon_{\err}(\bM; n + p)$, and the result then follows after noting $n + p \leq \ell + m$ since $n = |\sigma|$ and $p = |\tau|$.
\end{proof}

\begin{proof}[Proof of Lemma~\ref{lem:Z-err-bound}]
    First, we note that under the assumptions of Theorem~\ref{thm:lifting}, which we have also assumed in the statement of the Lemma, we have $\epsilon_{\err}(\bM; 2d) \leq 1$.

    We then follow the same manipulations as in Lemma~\ref{lem:Z-main-positive}, using Proposition~\ref{prop:block-matrix-norm} to bound the norm of $\bZ^{\err}$ by the sum of all block norms.
    Also, since products over subsets of indices correspond to tensorizations of terms in this sum (see Proposition~\ref{prop:cgm-tt-tensorization}), we may bound such products by corresponding products of matrix norms.
    We therefore find
    \begin{align*}
      \|\bZ^{\err}\|
      &\leq \sum_{\ell, m = 0}^d \sum_{\substack{a \in [\ell] \\ b \in [m] \\ a + b < \ell + m}} \Bigg(\sum_{\substack{F \in \sF(a, b) \\ F \text{ stretched}}} |\mu(F)| \cdot \|\bZ^F\| \Bigg)\\
      &\hspace{1.6cm}\sum_{\pi \in \Part([(\ell - a) + (m - b)]; \even)}\prod_{R \in \pi}\Bigg(\sum_{\substack{\sigma \in \Part(R \cap [\ell - a]; \odd) \\ \tau \in \Part(R \cap \{\ell - a + 1, \dots, (\ell - a) + (m - b)\}; \odd)}} \prod_{A \in \sigma + \tau} (|A| - 1)! \\
      &\hspace{8cm}\sum_{T \in \sT(|\sigma|, |\tau|)} |\mu(T)| \cdot \|\bm \Delta^{(\sigma, \tau, T)}\|\Bigg)
        \intertext{In the sum over stretched forest ribbon diagrams, by Proposition~\ref{prop:bound-mu} we have $|\mu(F)| \leq (3(a + b))! \leq (3(a + b))^{3(a + b)} \leq 3((\ell + m))^{3(\ell + m)} \leq (6d)^{6d}$, by Proposition~\ref{prop:size-F} we have $|\sF(a, b)| \leq (a + b)^{3(a + b)} \leq (2d)^{6d}$, and by Proposition~\ref{prop:cgm-generic-norm-bound} and Corollary~\ref{cor:sF-vertex-edge-bound} we have $\|\bZ^F\| \leq \|\bM\|^{3d}$.
        In the second term, by Corollary~\ref{cor:partition-error-norm-bound} we have $\|\bm\Delta^{(\sigma, \tau, T)}\| \leq (2|R|)^{|R|}\|\bM\|^{|R|}\epsilon_{\err}(\bM; 2d)$, by Proposition~\ref{prop:bound-mu} we have $|\mu(T)| \leq (3|R|)! \leq (3|R|)^{3|R|}$, and by Proposition~\ref{prop:size-F} we have $|\sT(|\sigma|, |\tau|)| \leq |\sF(|\sigma|, |\tau|)| \leq (|\sigma| + |\tau|)^{3(|\sigma| + |\tau|)} \leq (|R|)^{3|R|}$.
        We also have $\prod_{A \in \sigma + \tau}(|A| - 1)! \leq (|\sigma| + |\tau|)! \leq (|R|)! \leq |R|^{|R|}$.
        Finally, by Proposition~\ref{prop:size-Part}, viewing $\sigma$ and $\tau$ taken together as a partition of $R$, the number of choices of $\sigma$ and $\tau$ is at most $|R|^{|R|}$.
        Combining these observations, we continue}
      &\leq (6d)^{12d}\|\bM\|^{3d}\epsilon_{\err}(\bM; 2d)\sum_{\ell, m = 0}^d\sum_{\substack{a \in [\ell] \\ b \in [m] \\ a + b < \ell + m}} \sum_{\pi \in \Part([(\ell - a) + (m - b)]; \even)}\prod_{R \in \pi} (3|R|)^{9|R|} \|\bM\|^{|R|}
      \intertext{where since $\sum_{R \in \pi}|R| \leq 2d$ we continue}
      &\leq (6d)^{30d}\|\bM\|^{5d}\epsilon_{\err}(\bM; 2d) \cdot d^2 \cdot d^2 \cdot |\Part(2d)|
        \intertext{and by Proposition~\ref{prop:size-Part} again we may finish}
      &\leq (12d)^{32d}\|\bM\|^{5d}\epsilon_{\err}(\bM; 2d),
    \end{align*}
    completing the proof.
\end{proof}

\section{Proofs of High-Rank Applications}
\label{sec:pf:applications}

Before proceeding with the individual proofs, we introduce the following generally useful bookkeeping formalism.
\begin{definition}[Pattern diagram]
    Suppose $F = (V^{\bullet} \sqcup V^{\square}, E)$ is a diagram, $\bs \in [N]^{|V^{\bullet}|}$, and $\ba \in [N]^{V^{\square}}$.
    Let the associated \emph{pattern diagram}, denoted $\pat(F, \bs, \ba)$, be the graph $G$ with two types of vertices, $\bullet$ and $\square$ (as for diagrams), formed by starting with $G$ and identifying all $v$ whose value of $f_{\bs, \ba}(v)$ is equal, where if we identify a $\bullet$ vertex with either a $\bullet$ or a $\square$ vertex then the result is a $\bullet$ vertex, but if we identify two $\square$ vertices then the result is again a $\square$ vertex.
    We then remove all self-loops from $\pat(F, \bs, \ba)$ (but allow parallel edges).
    The graph $G$ is also equipped with a natural labelling inherited from $\bs$ and $\ba$, which we denote $f$, sometimes writing $(G, f) = \pat(F, \bs, ba)$.

    Finally, if $F_1, \dots, F_m$ are diagrams, and $\bs_i \in [N]^{|V^{\bullet}(F_i)|}$ and $\ba_i \in [N]^{V^{\square}(F_i)}$, then we let $\pat((F_1, \bs_1, \ba_1), \dots, (F_m, \bs_m, \ba_m))$ be the graph formed by applying the above identification procedure to the disjoint union of the $F_i$, each labelled by $\bs_i$ and $\ba_i$.
\end{definition}
\begin{definition}
    Let $\Pat^{\leq 2d}(m)$ be the set of unlabelled $\pat((T_1, \bs, ba_1), \dots, (T_m, \bs, \ba_m))$ that occur for $T_i \in \sT(2d^{\prime})$, $\bs \in [N]^{2d^{\prime}}$, and $\ba_i \in [N]^{V^{\square}(T)}$ for some choice of $1 \leq d^{\prime} \leq d$.
    We emphasize that we force $\bs$ to be the same in all inputs here.
\end{definition}
\noindent
The way we will use this is by considering the pattern diagram $G$ of any term in any CGS quantity, whose magnitude scales, depending on the behavior of the entries of $\bM$, as $N^{-\gamma|E(G)|}$.
On the other hand, the number of terms sharing a given pattern diagram is essentially $N^{|V^{\square}(G)|}$.
Grouping terms by pattern diagram allows us to take advantage of the tradeoff between these two quantities.

In particular, we will want to use this to analyze the quantities $\epsilon_{\tree}$ and $\epsilon_{\err}$, so we define the following subsets of pattern diagrams.
\begin{definition}
    Let $\Pat_{\tree}^{\leq 2d}(m) \subseteq \Pat^{\leq 2d}(m)$ be the set of unlabelled $\pat((T_1, \bs, \ba_1), \dots, (T_m, \bs, \ba_m))$ that occur for $T_i \in \sT(2d^{\prime})$, $\bs \in [N]^{2d^{\prime}}$, and $\ba_i \in [N]^{V^{\square}(T)}$, such that either the entries of $\bs$ are not all equal, or $s_1 = \cdots = s_{2d^{\prime}} = j$ but not all of the entries of $\ba_i$ equal $j$ for all $i$, for some $1 \leq d^{\prime} \leq d$.
\end{definition}

\begin{definition}
    Let $\Pat_{\err}^{\leq 2d}(m) \subseteq \Pat^{\leq 2d}(m)$ be the set of unlabelled $\pat((T_1, \bs, ba_1), \dots, (T_m, \bs, \ba_m))$ that occur for $T_i \in \sT(2d^{\prime})$, $\bs \in [N]^{2d^{\prime}}$, and $\ba_i \in [N]^{V^{\square}(T)}$, such that $\ba_i$ is $(T_i, \bs)$-loose for all $i$, for some choice of $1 \leq d^{\prime} \leq d$.
\end{definition}

The following two simple facts will be useful throughout; we will introduce other combinatorial properties as needed in our arguments.

\begin{proposition}
    \label{prop:pat-conn}
    All diagrams in $\Pat^{\leq 2d}(m)$ are connected for any $d \geq 1$ and $m \geq 1$.
\end{proposition}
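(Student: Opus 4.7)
The plan is to show connectedness by unwinding the definitions: each $T_i \in \sT(2d')$ is by definition a good \emph{tree} (the connected case of Definition~\ref{def:good-forest}), and the construction of $\pat((T_1,\bs,\ba_1),\dots,(T_m,\bs,\ba_m))$ glues the $T_i$ together along $\bullet$-vertices sharing a common label from the tuple $\bs$, which is the \emph{same} tuple across all $i$. So the entire pattern diagram is built by taking connected pieces and identifying vertices, an operation that can only preserve or increase connectedness.

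Concretely, I would first fix $(G, f) = \pat((T_1,\bs,\ba_1),\dots,(T_m,\bs,\ba_m)) \in \Pat^{\leq 2d}(m)$, and let $v_1 \in V(G)$ be the image of the leaf of $T_1$ labelled by $s_1$ (equivalently, the unique vertex of $G$ whose label under $f$ equals $s_1$). Since $T_1$ is connected, every vertex of $T_1$ maps to a vertex of $G$ reachable from $v_1$ via the image of the edge set of $T_1$ (even after possibly collapsing some edges to self-loops, which are discarded, and collapsing some parallel edges, neither of which destroys paths). In particular, every leaf of $T_1$ maps to a vertex of $G$ lying in the same connected component as $v_1$.

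For each $i \geq 2$, every leaf of $T_i$ carries a label $s_j$ for some $j \in [2d']$, and hence gets identified in $G$ with the vertex coming from the leaf of $T_1$ labelled $s_j$ — which we have just seen is in the same component as $v_1$. Since $T_i$ is also connected, every internal vertex of $T_i$ maps to a vertex of $G$ reachable (via the image of $E(T_i)$) from one of these shared leaf-vertices, and therefore also from $v_1$. Iterating over $i$, every vertex of $G$ lies in the component of $v_1$, so $G$ is connected.

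I do not anticipate a substantive obstacle: the statement is essentially a bookkeeping observation that follows from (i) trees being connected, (ii) all $T_i$ sharing the common $\bullet$-labelling tuple $\bs$, and (iii) the fact that vertex identification and removal of self-loops are connectivity-preserving operations on graphs. The only point that requires any care is to note that $m \geq 1$ and that $\bs$ has at least one entry (which is guaranteed since $d' \geq 1$ and therefore $2d' \geq 2$), so that the anchoring vertex $v_1$ exists.
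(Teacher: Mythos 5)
Your argument is correct, and it supplies the reasoning that the paper treats as self-evident (the paper states Proposition~\ref{prop:pat-conn} without proof). The key observations you identify — that each $T_i$ is a connected tree, that the $T_i$ all share the same index tuple $\bs$ on their leaves so that every leaf of $T_i$ is identified under $f$ with some leaf of $T_1$, and that vertex identification and removal of self-loops preserve connectivity — are exactly the right ones, and your anchoring at $v_1$ together with the iteration over $i$ closes the argument cleanly. Your final caveat about $m \geq 1$ and $d' \geq 1$ guaranteeing a nonempty $\bs$ is also the right thing to check.
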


\begin{proposition}
    \label{prop:size-Pat}
    $|\Pat^{\leq 2d}(m)| \leq (3md)^{9md}$.
\end{proposition}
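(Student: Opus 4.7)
The plan is to prove this by a direct counting argument, bounding the combinatorial complexity of an arbitrary $G \in \Pat^{\leq 2d}(m)$.

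First I would establish a uniform bound on the number of vertices and edges of any pattern diagram. Any $G \in \Pat^{\leq 2d}(m)$ arises as a quotient of a disjoint union $T_1 \sqcup \cdots \sqcup T_m$ with each $T_i \in \sT(2d')$ for some common $d' \in [1, d]$. Since $T_i$ is a good tree, every internal vertex has degree at least $4$; a standard handshake-type calculation (using that $T_i$ on $L + I$ vertices with $L$ leaves and $I$ internal vertices has $\sum \deg = 2(L+I-1)$ and $\sum \deg \geq L + 4I$) yields $I \leq (L - 2)/2 = d' - 1$. Thus each $T_i$ has at most $3d' - 1 < 3d$ vertices and at most $3d' - 2 < 3d$ edges. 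Taking the disjoint union and then quotienting by the equivalence relation induced by $(\bs, \ba_1, \dots, \ba_m)$ can only decrease the vertex count and preserves edges (after discarding self-loops), so any $G \in \Pat^{\leq 2d}(m)$ has at most $n \colonequals 3md$ vertices and at most $e \colonequals 3md$ edges, with vertices of one of two types and parallel edges allowed but no self-loops.

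Next I would count such multigraphs up to isomorphism. Any such unlabelled $G$ is determined (with overcounting) by fixing $d' \in [1, d]$, an ordered tuple $(T_1, \dots, T_m) \in \sT(2d')^m$, and an equivalence relation $\sim$ on the vertex set of $T_1 \sqcup \cdots \sqcup T_m$ (whose size is at most $3md$). Using Proposition~\ref{prop:size-F} to get $|\sT(2d')| \leq |\sF(2d')| \leq (2d')^{6d'}$, and bounding the number of partitions of a set of size at most $3md$ by the Bell-number bound $B_{k} \leq k^{k}$, we get
\begin{equation}
    |\Pat^{\leq 2d}(m)| \;\leq\; \sum_{d' = 1}^d \bigl((2d')^{6d'}\bigr)^m \cdot (3md')^{3md'} \;\leq\; \sum_{d' = 1}^d (3md')^{9md'}.
\end{equation}
The bound $(2d')^{6d'} \leq (3md')^{6d'}$ uses $m \geq 1$. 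An alternative direct encoding (picking vertex count $v \leq 3md$, type function $[v] \to \{\bullet, \square\}$, and a multiset of $\leq 3md$ unordered pairs from $\binom{[v]}{2}$) would give the same order of magnitude via $2^v \cdot v^{2e}$-style bounds.

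The final step is to check that the sum telescopes into $(3md)^{9md}$. The dominant term is $d' = d$, contributing $(3md)^{9md}$, and the remaining terms can be absorbed by a slight slack in the exponent (using $d \leq (3md)^{md}$, valid when $md \geq 1$, or handling the tiny-parameter cases $md \in \{1, 2\}$ by direct inspection---for instance when $m = d = 1$ the set $\Pat^{\leq 2}(1)$ has only a couple of elements, trivially much smaller than $3^9$). I expect the only real obstacle to be bookkeeping of the constants so that the clean form $(3md)^{9md}$ emerges rather than $C \cdot (3md)^{9md}$ for some mild $C$; the combinatorial content itself is routine.
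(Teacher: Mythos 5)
Your approach is in the same general spirit as the paper's---pin down the vertex and edge budget, then count unlabelled bipartite-colored multigraphs within that budget---but you take a more circuitous route. The paper does not reconstruct the tuple $(T_1,\dots,T_m)$ and the gluing equivalence relation at all; it goes straight from the bound of at most $3md$ vertices and $3md$ edges (from Corollary~\ref{cor:sF-vertex-edge-bound}) to a direct count of such colored multigraphs: at most $2^{v}\cdot(v^2)^{e}\leq v^{3e}$ with $v=e=3md$. Your version pays twice for the same information (once in $|\sT(2d')|^m$, once in the Bell-number count of vertex merges) and then sums over $d'$, which leaves extra bookkeeping.

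That bookkeeping is where your writeup currently has a real gap. Your intermediate inequality is $|\Pat^{\leq 2d}(m)|\leq\sum_{d'=1}^d(3md')^{9md'}$, and the dominant term already saturates the target, so the other terms must be absorbed \emph{inside} the exponent $9md$; but the absorption step you propose, $d\leq(3md)^{md}$, only yields $d\cdot(3md)^{9md}\leq(3md)^{10md}$, which overshoots. The fix is not to absorb at the end but to be less lossy earlier: using Proposition~\ref{prop:size-F} directly, $|\sF(2d')|\leq 2(3d')^{3d'}$, so $|\sT(2d')|^m\cdot B_{3md'}\leq 2^m(3md')^{6md'}\leq(3md')^{7md'}$, leaving a spare factor $(3md')^{2md'}$ in the exponent to swallow the sum over $d'$ and land at $(3md)^{9md}$. (Your choice to weaken to $|\sF(2d')|\leq(2d')^{6d'}$ is what burns the slack.) Alternatively, just adopt the paper's direct multigraph count, which avoids the sum over $d'$ and the overcounting entirely.
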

\begin{proof}
    Every $G \in \Pat^{\leq 2d}(m)$ is connected, and has at most $3md$ vertices and $3md$ edges by Corollary~\ref{cor:sF-vertex-edge-bound} since this holds for each $T_i \in \sT(2d^{\prime})$ for any $d^{\prime} \leq d$ and $G$ can have only fewer vertices and edges than the disjoint union of the $T_i$.
    Generally, the number of connected graphs on at most $m \geq 2$ vertices, with at most $n$ edges for $n \geq m$, and equipped with a partition of the vertices into two parts is at most $2^m \cdot (m^2)^{n} \leq m^{3n}$, where we ignore that there may be fewer vertices or edges by allowing ``excess'' vertices and edges to be added to different connected components that we may ignore.
\end{proof}

\subsection{Laurent's Pseudomoments: Proof of Theorem~\ref{thm:appl-laurent-approx}}

We will use the following more detailed bounds on pattern diagrams.
\begin{proposition}
    \label{prop:pat-tree-bound}
    If $G = (V^{\bullet} \sqcup V^{\square}, E) \in \Pat_{\tree}^{\leq 2d}(1)$ for any $d$, then $|E| \geq |V^{\bullet}| + |V^{\square}| - \One\{|V^{\bullet}| > 1\}$.
\end{proposition}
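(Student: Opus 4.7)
The plan is to combine the general connectedness of pattern diagrams (Proposition~\ref{prop:pat-conn}) with a structural argument exploiting that every internal vertex of a good tree has degree at least $4$. Since $G$ is connected, $|E| \geq |V^{\bullet}| + |V^{\square}| - 1$ holds automatically, which already gives the claimed bound whenever $|V^{\bullet}| > 1$. The content of the proposition is therefore the case $|V^{\bullet}| = 1$, where I need to show $|E| \geq 1 + |V^{\square}|$, i.e., that $G$ contains at least one cycle.

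In that case all leaves of $T$ collapse to a single $\bullet$ vertex, so $\bs = (j, \ldots, j)$ for some $j \in [N]$, and the definition of $\Pat_{\tree}^{\leq 2d}(1)$ then forces some internal vertex $v$ of $T$ to satisfy $a_v \neq j$. Write $n = |V(T)|$ and let $C_1, \ldots, C_k$ be the equivalence classes of vertices of $T$ sharing a common label, so $k = |V(G)|$. Each induced subgraph $T[C]$ is a forest with, say, $c(C)$ connected components, hence $|E(T[C])| = |C| - c(C)$. Summing over $C$ counts exactly the edges of $T$ that collapse to self-loops and are deleted when forming $G$, so
\begin{equation}
    |E(G)| \;=\; (n - 1) \;-\; \sum_{C}\bigl(|C| - c(C)\bigr) \;=\; \sum_{C} c(C) \;-\; 1,
\end{equation}
and therefore $|E| - |V| = \sum_{C}(c(C) - 1) - 1$. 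The desired inequality $|E| \geq |V|$ thus reduces to exhibiting a single class $C$ with $c(C) \geq 2$, i.e., one whose induced subgraph in $T$ is disconnected.

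The main step is to establish this for the class $C_j$ of vertices of $T$ labelled $j$. All leaves of $T$ belong to $C_j$, whereas the internal vertex $v$ does not. Deleting $v$ from the tree $T$ splits it into at least two connected subtrees, and the degree-$\geq 4$ constraint forces every such subtree $U$ to contain a leaf of $T$: any leaf of $U$ that were not already a leaf of $T$ would be an internal vertex of $T$ adjacent to $v$ with $\deg_T = \deg_U + 1 = 2$, contradicting the good-forest property. Picking one leaf of $T$ from each of two distinct components of $T \setminus \{v\}$, their unique $T$-path passes through $v \notin C_j$, so they lie in distinct components of $T[C_j]$, yielding $c(C_j) \geq 2$ and completing the proof. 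The main obstacle is precisely this last step, which would fail if internal vertices of $T$ were allowed to have degree $2$ or $3$; it is this rigidity of good trees that buys the extra edge when all leaves get identified.
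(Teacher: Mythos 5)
Your proof is correct and follows exactly the same case split as the paper: connectedness handles $|V^{\bullet}| > 1$, and the content is showing $G$ has a cycle when $|V^{\bullet}| = 1$. The paper simply asserts this last step ("since the initial diagram $\ldots$ is a tree, all leaves of that tree are identified in forming $G$, and $G$ has more than one vertex, $G$ must have a cycle"), whereas you supply the actual argument, including the clean identity $|E(G)| = \sum_C c(C) - 1$ reducing the claim to showing the leaf class $C_j$ is disconnected in $T$.

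One small observation: your appeal to the degree-$\geq 4$ constraint on internal vertices of good trees is not actually needed here. The fact that every component $U$ of $T \setminus \{v\}$ contains a leaf of $T$ holds for any tree and any non-leaf $v$: if $U$ is a single vertex, that vertex has $T$-degree $1$ and is a leaf of $T$; if $|U| \geq 2$, then $U$ has at least two leaves as a tree, and at most one of them is the unique vertex of $U$ adjacent to $v$, so at least one remaining leaf of $U$ has all its $T$-neighbors inside $U$ and is thus a leaf of $T$. You only need $v \notin C_j$ to imply $v$ is internal, hence $\deg_T(v) \geq 2$, giving at least two components of $T \setminus \{v\}$ to pick leaves from. (Your version does prove something slightly stronger — that \emph{every} leaf of $U$ is a $T$-leaf — at the cost of invoking goodness, and it technically skips the $|U|=1$ subcase where $U$ has no degree-$1$ vertex; neither is a real gap.)
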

\begin{proof}
    We consider two cases.
    If $|V^{\bullet}| > 1$, then the result follows since $G$ is connected by Proposition~\ref{prop:pat-conn}.
    If $|V^{\bullet}| = 1$, then since the initial diagram $G$ is formed from by identifying vertices is a tree, all leaves of that tree are identified in forming $G$, and $G$ has more than one vertex, $G$ must have a cycle.
    Therefore, in this case, $|E| \geq |V^{\bullet}| + |V^{\square}|$, completing the proof.
\end{proof}

\begin{proposition}
    \label{prop:pat-err-bound}
    If $G = (V^{\bullet} \sqcup V^{\square}, E) \in \Pat_{\err}^{\leq 2d}(1)$ for any $d$, then $|E| \geq |V^{\bullet}| + |V^{\square}|$.
\end{proposition}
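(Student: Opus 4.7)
The plan is to reduce the inequality $|E| \geq |V^{\bullet}| + |V^{\square}|$ to a counting identity in the underlying tree $T$ and then extract one unit of ``excess'' from the looseness hypothesis. Concretely, fix $T \in \sT(2d')$, $\bs$, and $\ba$ with $G = \pat(T, \bs, \ba)$ and $\ba$ being $(T,\bs)$-loose. Since $T$ is a tree, $|E(T)| = |V(T)| - 1$. Forming $G$ identifies all vertices of $T$ that share a common label (under the combined assignment $(\bs, \ba)$) and deletes self-loops, so the edges lost are exactly those of $T$ both of whose endpoints lie in the same label class.

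The first step I would carry out is the following bookkeeping. Let $L$ be the set of distinct labels appearing, so $|V(G)| = |L| \equalscolon m$. For each label $j \in L$ let $c_j$ be the number of vertices of $T$ carrying label $j$, and let $T_j$ be the subgraph of $T$ induced on those vertices. Since any induced subgraph of a tree is a forest, $T_j$ has $c_j - p_j$ edges, where $p_j$ is the number of connected components of $T_j$. Summing gives the number of discarded self-loops as $\sum_j (c_j - p_j)$, and then a short calculation yields the clean identity
\[ |E(G)| - |V(G)| = \Bigl(\sum_{j \in L} p_j\Bigr) - m - 1. \]
So the claim $|E| \geq |V^{\bullet}| + |V^{\square}|$ is equivalent to $\sum_j p_j \geq m + 1$; and since each $p_j \geq 1$, it suffices to produce a single label $j^*$ with $p_{j^*} \geq 2$.

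The main step, and the only place where the hypothesis enters, is exhibiting such a label using looseness. By Definition~\ref{def:repetitions-forest}, looseness provides a connected component $S$ of $\MaxSpan(T, \bs)$ whose leaves all carry a common label $i$, and an internal vertex $v \in V^{\square}(S)$ with $a_v \neq i$. Because $S$ is the minimal spanning tree of the leaves of its host component with label $i$, it contains at least two such leaves $\ell_1, \ell_2$, and because $v$ is an internal vertex of the tree $S$, its removal from $S$ disconnects it, so $v$ lies on the $T$-path between some pair of $i$-labelled leaves. Since $v$ has label $a_v \neq i$, that path is not contained in $T_i$, so $\ell_1$ and $\ell_2$ lie in distinct connected components of $T_i$; hence $p_i \geq 2$. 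Combined with the identity from the previous paragraph this gives $|E(G)| - |V(G)| \geq 0$, as required.

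I do not expect any real obstacle here: the setup is essentially a handshake-lemma bookkeeping in the tree $T$, and the looseness condition is precisely designed to produce the extra connected component needed in one label class. The only subtle point is ensuring that $v$ truly lies on a path between two same-label leaves of $S$ (rather than, say, in some pendant subtree), which follows from $S$ being a \emph{minimal} spanning tree of its $i$-labelled leaves — a small but essential detail that I would make explicit in the write-up.
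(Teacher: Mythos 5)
Your proof is correct, and it takes a genuinely different route from the paper's. The paper argues by contradiction: from $|E| \leq |V^{\bullet}| + |V^{\square}| - 1$ and Proposition~\ref{prop:pat-conn} it deduces that $G$ is a tree, and it then tries to produce a pair of parallel edges in $G$ by locating a $\square$-vertex $v'$ of the minimal $i$-spanning subtree that has two $T$-neighbours with label $i$ while $a_{v'} \neq i$. Your argument is direct: the counting identity $|E(G)| - |V(G)| = \bigl(\sum_{j} p_j\bigr) - m - 1$ reduces the claim to exhibiting one label class $T_i$ with at least two components, and looseness produces exactly that, because a $\square$-vertex $v$ of the minimal $i$-spanning subtree with $a_v \neq i$ is internal to that subtree (all of its leaves being $i$-labelled $\bullet$-vertices), so removing it separates two $i$-labelled leaves whose unique $T$-path therefore exits the label class $i$. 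This sidesteps the step in the paper where one must find a \emph{single} $\square$-vertex with two same-label neighbours --- a step that can fail to be literal when consecutive $\square$-vertices along the separating path carry distinct non-$i$ labels, in which case the obstruction in $G$ is a longer cycle rather than a parallel edge --- and your handshake-type identity absorbs that distinction automatically. What your approach buys is a contradiction-free, case-free argument with an explicit identity; what the paper's buys is stylistic uniformity with the neighbouring Proposition~\ref{prop:pat-tree-bound}, which is likewise argued through structural properties (connectivity and cycles) of the quotient $G$. The two details you flagged as worth making explicit are indeed the right ones: that $v$ is internal to $C$ because it is a $\square$-vertex while every leaf of the minimal spanning subtree is an $i$-labelled $\bullet$-vertex, and that uniqueness of paths in the tree $T$ is what converts ``the $\ell_1$--$\ell_2$ path passes through $v$'' into ``$\ell_1$ and $\ell_2$ lie in distinct components of $T_i$.''
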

\begin{proof}
    Suppose for the sake of contradiction that this is not the case.
    Since $T$ is connected, by Proposition~\ref{prop:pat-conn} $G$ is connected as well, and if $|E| \leq |V^{\bullet}| + |V^{\square}| - 1$ then in fact equality holds and $G$ is a tree, in particular having no parallel edges.
    On the other hand, if $\ba$ is $(T, \bs)$-loose, then there exists some index $i \in [N]$ and a $\square$ vertex $v$ in the minimal spanning subtree of $\{w \in V^{\bullet}: s_{\kappa(w)} = i\}$ such that $a_v \neq i$.
    Thus there must exist some $\square$ vertex $v^{\prime}$ in this minimal spanning subtree with at least two neighbors $w_1, w_2$ such that $f_{\bs, \ba}(w_1) = f_{\bs, \ba}(w_2) = i$ but $a_{v^{\prime}} \neq i$.
    The $\square$ vertex of $G$ to which $v^{\prime}$ is identified will then be incident with a pair of parallel edges, giving a contradiction.
\end{proof}

\begin{proof}[Proof of Theorem~\ref{thm:appl-laurent-approx}]
    We will set
    \begin{equation}
        \bM \colonequals \left(1 + \frac{1 - \alpha}{N - 1}\right)\bm I_N - \frac{1 - \alpha}{N - 1}\one_N\one_N^{\top}
    \end{equation}
    and take $\tEE = \tEE_{\bM}$.
    We first use Theorem~\ref{thm:lifting} to show that $\tEE$ is a degree $2d$ pseudoexpectation.

    For the simpler incoherence quantities, we directly bound
    \begin{align}
      \epsilon_{\offdiag}(\bM)
      &\leq \frac{1}{N - 1}, \\
      \epsilon_{\corr}(\bM)
      &\leq \left(2\left(\frac{1}{N - 1}\right)^2 + (N - 2) \left(\frac{1}{N - 1}\right)^4\right)^{1/2} \leq \frac{2}{N - 1}, \\
      \epsilon_{\pow}(\bM) &\leq (N - 1) \left(\frac{1}{N - 1}\right)^2 = \frac{1}{N - 1}.
    \end{align}
    
    For $\epsilon_{\tree}$, we group terms according to their pattern diagram.
    A given $G = ((V^{\bullet}, V^{\square}), E)$ can occur in at most $N^{|V^{\square}|}$ terms, and each term contributes at most $(N - 1)^{-|E|}$.
    We then have
    \begin{align}
      \epsilon_{\tree}(\bM; 2d)
      &= \max_{0 \leq d^{\prime} \leq d} \max_{T \in \sT(2d^{\prime})} \max_{\bs \in [N]^{2d^{\prime}}} \left|Z^T(\bM; \bs) - \One\{s_1 = \cdots = s_N\}\right| \nonumber \\
      &\leq \sum_{\substack{G = ((V^{\bullet}, V^{\square}), E) \in \Pat_{\tree}^{\leq 2d}(1)}} N^{|V^{\square}|} (N - 1)^{-|E|} \nonumber \\
      &\leq \sum_{\substack{G = ((V^{\bullet}, V^{\square}), E) \in \Pat_{\tree}^{\leq 2d}(1)}} N^{|V^{\square}|} (N - 1)^{-|V^{|\square|} - |V^{\bullet}| + \One\{|V^{\bullet}| > 1\}} \tag{by Proposition~\ref{prop:pat-tree-bound}} \\
      &\leq \frac{2^{3d}}{N - 1}\sum_{\substack{G = ((V^{\bullet}, V^{\square}), E) \in \Pat_{\tree}^{\leq 2d}(1)}} 1 \tag{by Corollary~\ref{cor:sF-vertex-edge-bound}} \\
      &\leq \frac{(3d)^{12d}}{N - 1}. \tag{by Proposition~\ref{prop:size-Pat}}
    \end{align}

    For $\epsilon_{\err}$, we follow the same strategy.
    The main additional observation is that $|\set(\bs)|$ is always simply the number of $\bullet$ vertices in $\pat(T, \bs, \ba)$, regardless of $T$ or $\ba$.
    Thus we find
    \begin{align}
      \epsilon_{\err}(\bM; 2d)
      &= \max_{0 \leq d^{\prime} \leq d} \max_{T \in \sT(2d^{\prime})} \max_{\bs \in [N]^{2d^{\prime}}} N^{|\set(\bs)| / 2} \Bigg|\sum_{\substack{\ba \in [N]^{V^{\square}} \\ \ba \text{ } (T, \bs)\text{-loose}}} \prod_{(v, w) \in E(T)} M_{f_{\bs, \ba}(v)f_{\bs, \ba}(w)}\Bigg| \nonumber \\
      &\leq \sum_{G = (V^{\bullet} \sqcup V^{\square}, E) \in \Pat^{\leq 2d}_{\err}(1)}N^{|V^{\bullet}| / 2} \cdot N^{|V^{\square}|} (N - 1)^{-|E|} \nonumber \\
      &\leq \sum_{G = (V^{\bullet} \sqcup V^{\square}, E) \in \Pat^{\leq 2d}_{\err}(1)} N^{|V^{\bullet}| / 2 + |V^{\square}|} (N - 1)^{-|V^{\bullet}| - |V^{\square}|} \tag{by Proposition~\ref{prop:pat-err-bound}} \\
      &\leq \frac{2^{6d}}{\sqrt{N}} \sum_{G = (V^{\bullet} \sqcup V^{\square}, E) \in \Pat^{\leq 2d}_{\err}(1)} 1 \tag{by Corollary~\ref{cor:sF-vertex-edge-bound}} \\
      &\leq \frac{(3d)^{15d}}{\sqrt{N}}. \tag{by Proposition~\ref{prop:size-Pat}}
    \end{align}
    (This may be sharpened to $O(N^{-1})$ by considering the case $k = 1$ separately, but that would not change the final result significantly.)

    Combining these results, we find $\epsilon(\bM; 2d) \leq 2(3d)^{15d} N^{-1/2}$.
    Thus, since $\lambda_{\min}(\bM) \geq \alpha$ and $\|\bM\| \leq 2$, the result will follow so long as $\alpha \geq 64 (12d)^{47} N^{-1/2d}$.
    If $d = \log N / 100 \log\log N$, then we have
    \begin{equation}
        64(12d)^{47} N^{-1 / 2d} = 64\left(\frac{12 \log N}{100 \log\log N}\right)^{47} (\log N)^{-100 / 2} \leq 64\left(\frac{3}{20 \log\log N}\right)^{50},
    \end{equation}
    so with $\alpha = (\log\log N)^{-50}$ it follows that $\tEE_{\bM}$ is a degree $2d$ pseudoexpectation.
    
    It remains to verify \eqref{eq:laurent-approx-1}, which gives the leading order behavior of the pseudomoments:
    \begin{equation}
        \tEE\left[\prod_{i \in S} x_i\right] = \One\{|S| \text{ even}\} \cdot \left(\frac{(-1)^{|S| / 2}(|S| - 1)!!}{(N / (1 - \alpha))^{|S| / 2}} + O_{|S|}\left(\frac{1}{N^{|S| / 2 +  1}}\right)\right).
    \end{equation}
    We claim that the leading order part is exactly the sum of the terms $\mu(F) \cdot Z^F(\bM; S)$ for $F$ a forest where every connected component is two $\bullet$ vertices connected by an edge (a ``pair''), since there are $(|S| - 1)!!$ such perfect matchings.
    Thus it suffices to bound the contributions of all other terms.

    We use pattern graphs once again, now noting that, since we are assuming $S$ is a \emph{set}, no $\bullet$ vertices will be identified with one another.
    Suppose $F$ is a good forest and $G = \pat(F, \bs, \ba)$ where all indices of $\bs$ are distinct.
    We then have $|E(G)| \geq 2|V^{\square}(G)| + \frac{1}{2}|V^{\bullet}(G)|$, because $|E(G)| = \frac{1}{2}\sum_{v} \deg(v)$, and every $\square$ vertex in $G$ after the identification procedure will still have degree at least 4 and every $\bullet$ vertex will still have degree at least 1 since no $\bullet$ vertices are identified.
    Moreover, if $|V^{\square}(G)| = 0$, then the above inequality is tight if and only if $F$ is a perfect matching to begin with.
    Therefore, if $F$ is \emph{not} a perfect matching, then, writing for the moment $\Pat_{\sF}^{2d}$ for the pattern graphs arising as any $\pat(F, \bs, \ba)$ for $F \in \sF(2d)$ (with forests rather than trees), we find
    \begin{align}
      |Z^F(\bM; S)|
      &\leq \sum_{\substack{G = (V^{\bullet} \sqcup V^{\square}, E) \in \Pat_{\sF}^{|S|} \\ |V^{\bullet}| = |S|}} N^{|V^{\square}|} (N - 1)^{-|E|} \nonumber \\
      &\leq \sum_{\substack{G = (V^{\bullet} \sqcup V^{\square}, E) \in \Pat_{\sF}^{|S|} \\ |V^{\bullet}| = |S|}} N^{|V^{\square}|} (N - 1)^{-2|V^{\square}| - |V^{\bullet}| / 2 - \One\{|V^{\square}| = 0\}} \nonumber \\
      &\leq N^{-|V^{\bullet}| / 2 - 1}\sum_{\substack{G = (V^{\bullet} \sqcup V^{\square}, E) \in \Pat_{\sF}^{|S|} \\ |V^{\bullet}| = |S|}} 1.
    \end{align}
    Finally, since the remaining counting coefficient, $\max_{F \in \sF(|S|)} |\mu(F)|$, and $|\sF(|S|)|$ all depend only on $|S|$, the result follows.
    (Bounding the remaining combinatorial coefficient and using Propositions~\ref{prop:bound-mu} and \ref{prop:size-F} here can give a weak quantitative dependence on $|S|$ as well.)
\end{proof}

\subsection{Random High-Rank Projectors: Proof of Theorem~\ref{thm:appl-high-rank}}

To handle the random case, we will need some more involved tools that we introduce now.
The key analytic tool for controlling the more complicated incoherence quantities is the family of \emph{hypercontractive} concentration inequalities, which state (in the case we will use) that low-degree polynomials of independent gaussian random variables concentrate well.
The underlying fact is the following norm inequality for these polynomials.
\begin{proposition}[Gaussian hypercontractivity, Theorem 5.10 of \cite{Janson-1997-GaussianHilbertSpaces}]
    \label{prop:gauss-hc}
    Let $p \in \RR[x_1, \dots, x_N]$ be a polynomial with $\deg(p) \leq D$.
    Then, for all $q \geq 2$,
    \begin{equation}
        (\EE |p(\bg)|^q)^{1/q} \leq (q - 1)^{D / 2} \cdot (\EE |p(\bg)|^2)^{1/2}
    \end{equation}
\end{proposition}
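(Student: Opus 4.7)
The plan is to deduce this from Nelson's $(2,q)$-hypercontractive inequality for the Ornstein--Uhlenbeck semigroup, via the standard Hermite-decomposition trick that converts hypercontractivity (a bound on $\|P_t f\|_q$ by $\|f\|_2$) into a bound on $\|f\|_q$ alone whenever $f$ has bounded Hermite degree. This is the classical route taken in Janson's book, and I would reproduce it here.

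First I would set up the Ornstein--Uhlenbeck semigroup on $L^2(\RR^N, \gamma_N)$, defined by
\begin{equation}
    (P_t f)(\bx) \colonequals \EE\big[f(e^{-t}\bx + \sqrt{1 - e^{-2t}}\,\bg)\big],
\end{equation}
and recall its spectral decomposition: the multivariate Hermite polynomials $\{H_{\boldsymbol\alpha}\}$ form an orthonormal basis of $L^2(\gamma_N)$, and $P_t$ acts diagonally as $P_t H_{\boldsymbol\alpha} = e^{-|\boldsymbol\alpha| t} H_{\boldsymbol\alpha}$. Then I would invoke Nelson's theorem, which states that for $t \geq 0$ and $1 < p \leq q < \infty$ with $q - 1 \leq e^{2t}(p - 1)$, one has $\|P_t f\|_q \leq \|f\|_p$ for all $f \in L^p(\gamma_N)$.

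Next comes the key trick. Write the polynomial $p$ in the Hermite basis as $p = \sum_{k = 0}^{D} p_k$ where $p_k$ is the projection onto the $k$-th Hermite chaos, and define
\begin{equation}
    g \colonequals \sum_{k = 0}^{D} e^{kt} p_k, \qquad \text{so that } P_t g = p.
\end{equation}
Choosing $t$ so that $e^{2t} = q - 1$ (with $p_{\text{Nelson}} = 2$), Nelson's inequality gives $\|p\|_q = \|P_t g\|_q \leq \|g\|_2$. By orthogonality of the Hermite chaoses,
\begin{equation}
    \|g\|_2^2 = \sum_{k = 0}^{D} e^{2kt} \|p_k\|_2^2 \leq e^{2Dt} \sum_{k = 0}^D \|p_k\|_2^2 = (q - 1)^D \|p\|_2^2,
\end{equation}
yielding exactly the claimed bound $\|p\|_q \leq (q-1)^{D/2} \|p\|_2$.

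The only nontrivial ingredient is Nelson's inequality itself. The main obstacle, were one to reprove this from scratch rather than cite it, would be proving that foundational inequality: the standard route reduces to the one-dimensional case (using the fact that the multivariate OU semigroup is a tensor product of one-dimensional semigroups, together with Minkowski-type interchange-of-norms arguments à la Neveu) and then treats $N = 1$ either by an explicit two-point computation plus the central limit theorem, or by a direct differential inequality for $\frac{d}{dt}\|P_t f\|_{q(t)}^{q(t)}$ along a suitable trajectory $q(t)$. Since the statement in the excerpt merely cites Janson, I would simply invoke Nelson's theorem as a black box and present only the Hermite-decomposition reduction above.
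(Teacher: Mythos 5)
Your proof is correct, and it is essentially the standard argument: the paper does not prove this statement but cites it directly as Theorem~5.10 of Janson's \emph{Gaussian Hilbert Spaces}, where precisely this Hermite-chaos reduction from Nelson's $(2,q)$-hypercontractive inequality is the route taken. The one fact you invoke implicitly and should perhaps flag is that a polynomial of total degree $\leq D$ has a Wiener chaos expansion supported on levels $0,\dots,D$, which is what lets the sum terminate; everything else follows as you wrote it.
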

\noindent
The consequence we will be interested in is the following very convenient tail bound, which reduces analyzing the concentration of a polynomial to computing its second moment.
\begin{corollary}
    \label{cor:gauss-hc-tail}
    Let $p \in \RR[x_1, \dots, x_N]$ be a polynomial with $\deg(p) \leq D$.
    Then, for all $t \geq (2e^2)^{D / 2}$,
    \begin{equation}
        \PP\left[ \|\bp(\bg)\|_2^2 \geq t \cdot \EE \|\bp(\bg)\|^2_2\right] \leq \exp\left(-\frac{D}{e^2}t^{2/D}\right).
    \end{equation}
\end{corollary}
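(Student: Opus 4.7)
The plan is to derive the tail bound from the moment bound of Proposition~\ref{prop:gauss-hc} via Markov's inequality applied to a high moment, and then optimize the choice of moment as a function of $t$. This is the standard Borell-style argument for deducing tails from hypercontractivity.

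First I would write, for any $q \geq 2$,
\begin{equation}
\PP\!\left[|p(\bg)|^2 \geq t \cdot \EE|p(\bg)|^2\right] = \PP\!\left[|p(\bg)|^q \geq t^{q/2}(\EE|p(\bg)|^2)^{q/2}\right] \leq \frac{\EE|p(\bg)|^q}{t^{q/2}(\EE|p(\bg)|^2)^{q/2}}
\end{equation}
by Markov. Proposition~\ref{prop:gauss-hc} then supplies the bound
\begin{equation}
\EE|p(\bg)|^q \leq (q-1)^{qD/2}(\EE|p(\bg)|^2)^{q/2},
\end{equation}
so combining the two yields $\PP[\cdots] \leq ((q-1)^D/t)^{q/2}$. (When the statement is applied to a vector-valued polynomial $\bp$, the same argument goes through after first using $\EE\|\bp(\bg)\|_2^4 \leq 3^D (\EE\|\bp(\bg)\|_2^2)^2$, which follows from hypercontractivity applied componentwise together with Cauchy--Schwarz.)

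The next step is to choose $q$ optimally. Setting the logarithmic derivative $\frac{d}{dq}\bigl[\frac{q}{2}(D\log(q-1) - \log t)\bigr] = 0$ and dropping the lower-order term $qD/(2(q-1))$, the optimizer satisfies $D\log(q-1) \approx \log t$, which up to a multiplicative constant in the argument gives $q - 1$ of order $t^{2/D}/e^2$. Plugging this back, the bracket $D\log(q-1) - \log t$ becomes a negative constant of order $-D$, and multiplying by $q/2$ produces an exponent of order $-Dt^{2/D}/e^2$ after absorbing the constants into the shape of the bound. The precise constants $e^{-2}$ and the prefactor $D$ come out of setting $q - 1 = t^{2/D}/e^2$ exactly and computing $(q-1)^{qD/2}/t^{q/2} = e^{-qD}$ directly.

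Finally I would verify that the threshold $t \geq (2e^2)^{D/2}$ is exactly the condition $t^{2/D}/e^2 \geq 2$, equivalently $q \geq 3$, which ensures both that $q \geq 2$ (the regime where Proposition~\ref{prop:gauss-hc} is available) and that the extra $-D$ term surviving from $-D(1 + t^{2/D}/e^2)$ can be absorbed by slack to give the clean bound stated. The only real calculation obstacle is bookkeeping on constants in the optimization step; everything else is mechanical. There is no conceptual difficulty once the correct parameterization $q - 1 \sim t^{2/D}$ (rather than $t^{1/D}$) is used, but an off-by-one in that parameterization is the easiest mistake to make and is what needs to be checked most carefully.
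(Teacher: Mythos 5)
The strategy is the paper's own — Markov at a power $q$, the moment bound of Proposition~\ref{prop:gauss-hc}, then optimize $q$ as a function of $t$ — but the closing numerical claim is false, and the error is exactly the off-by-one in the exponent of $t$ that you flagged as the danger spot. Under the event you wrote, $|p(\bg)|^2 \geq t\,\EE|p(\bg)|^2$, Markov gives $(q-1)^{qD/2}/t^{q/2}$, and substituting $q-1 = t^{2/D}/e^2$ yields
\begin{equation}
\frac{(q-1)^{qD/2}}{t^{q/2}} \;=\; \frac{\bigl(t^{2/D}/e^2\bigr)^{qD/2}}{t^{q/2}} \;=\; \frac{t^{q}\,e^{-qD}}{t^{q/2}} \;=\; t^{q/2}\,e^{-qD},
\end{equation}
not $e^{-qD}$; the spurious factor $t^{q/2}$ grows without bound in $t$, so you do not obtain a decaying tail. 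The reason the paper's stated exponent $\exp(-Dt^{2/D}/e^2)$ appears accessible is a mismatch in the corollary itself: the paper's proof actually controls $\PP[\,|p(\bg)| \geq t\,(\EE|p(\bg)|^2)^{1/2}\,]$ (equivalently $|p(\bg)|^2 \geq t^2\,\EE|p(\bg)|^2$), not the squared event printed in the statement. With that unsquared event Markov yields $(q-1)^{qD/2}/t^{q} \leq (q^{D/2}/t)^{q}$, and $q = t^{2/D}/e^2$ makes $q^{D/2}/t = e^{-D}$ so the bound is $e^{-Dq} = \exp(-Dt^{2/D}/e^2)$, with $q \geq 2$ being precisely $t \geq (2e^2)^{D/2}$; the downstream use in Theorem~\ref{thm:appl-high-rank} also invokes the corollary with this unsquared reading. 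Taking the stated squared event literally, as you did, the optimal choice is $q \approx t^{1/D}/e^2$ and the achievable exponent is $\exp(-Dt^{1/D}/e^2)$ — a power of $t^{1/D}$, not $t^{2/D}$ — so you must either adopt the proof's event interpretation or accept the weaker exponent; the mix in your writeup is internally inconsistent. Two smaller points: the paper first bounds $(q-1)^{D/2} \leq q^{D/2}$ before substituting, which lets the constant drop out as exactly $e^{-D}$ instead of carrying $q-1$ through; and your aside about vector-valued $\bp$ is unnecessary, since the corollary and its applications concern a scalar polynomial and the $\|\cdot\|_2$ in the statement is incidental.
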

\begin{proof}
    By Proposition~\ref{prop:gauss-hc}, for any $q \geq 2$,
    \begin{align}
      \PP[|p(\bg)| \geq t (\EE|p(\bg)|^2)^{1/2}]
      &= \PP[|p(\bg)|^q \geq t^q(\EE|p(\bg)|^2)^{q/2}] \nonumber \\
      &\leq t^{-q} (\EE|p(\bg)|^2)^{-q/2}\EE |p(\bg)|^q \nonumber \\
      &\leq t^{-q} (q - 1)^{qD / 2} \nonumber \\
      &\leq (q^{D / 2} / t)^{q}, \nonumber 
        \intertext{and setting $q \colonequals t^{2/D} / e^2 \geq 2$ we have}
      &= \exp\left(-\frac{D}{e^2}t^{2/D}\right),
    \end{align}
    completing the proof.
\end{proof}

We will also need some combinatorial preliminaries, which describe how to compute expectations of gaussian polynomials like those that will wind up associated with pattern graphs in our calculations.
\begin{definition}
    \label{def:cycle-cover}
    A \emph{cycle cover} of a graph $G$ is a partition of the edges into edge-disjoint cycles.
    We denote the number of cycles in the largest cycle cover of $G$ by $c_{\max}(G)$.
\end{definition}

\begin{proposition}
    \label{prop:cycle-cover}
    Let $G = (V, E)$ be a graph, and for each $v \in V$ draw $\bg_v \sim \sN(\bm 0, \bm I_k)$ independently.
    Then,
    \begin{equation}
        \EE\left[\prod_{(v, w) \in E} \langle \bg_v, \bg_w \rangle\right] = \sum_{C \text{ cycle cover of } G} k^{|C|} \leq |E|^{|E|} k^{c_{\max}(G)}.
    \end{equation}
\end{proposition}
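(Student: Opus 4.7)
The plan is to compute the expectation explicitly by expansion and Wick's theorem, then interpret the resulting combinatorial sum as an enumeration of cycle covers. First I would expand the inner products coordinatewise, $\langle \bg_v, \bg_w \rangle = \sum_{i = 1}^k g_{v,i} g_{w,i}$, so that after distributing,
\begin{equation}
\EE\left[\prod_{(v,w) \in E} \langle \bg_v, \bg_w \rangle \right] = \sum_{\bi : E \to [k]} \prod_{v \in V} \EE\left[\prod_{e \ni v} g_{v, i_e}\right],
\end{equation}
using independence of the $\bg_v$ across $v \in V$ to factorize the expectation over vertices.

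Next I would compute each vertex factor by Isserlis' theorem: the moment at $v$ equals the sum over perfect matchings $\pi_v$ of the edge set incident to $v$ of $\prod_{\{e, e'\} \in \pi_v} \delta_{i_e, i_{e'}}$, which in particular vanishes unless $\deg(v)$ is even. Swapping the order of summation, the expectation becomes
\begin{equation}
\sum_{(\pi_v)_{v \in V}} \sum_{\bi : E \to [k]} \prod_{v} \prod_{\{e, e'\} \in \pi_v} \delta_{i_e, i_{e'}} \,=\, \sum_{(\pi_v)_{v \in V}} k^{N(\pi)},
\end{equation}
where $N(\pi)$ is the number of equivalence classes of edges under the relation generated by the pairings.

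The key combinatorial step is to identify tuples $(\pi_v)_v$ with cycle covers of $G$. Starting from any edge and alternately following the pairings at its two endpoints yields a closed trail that uses each edge exactly once; the collection of such trails partitions $E$ into edge-disjoint cycles in the sense of Definition~\ref{def:cycle-cover}, and the number of cycles equals $N(\pi)$. Conversely, a cycle cover determines pairings at each vertex by matching, at each visit of each cycle, the incoming and outgoing edges. This bijection gives the claimed equality $\EE[\cdots] = \sum_{C} k^{|C|}$.

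For the inequality, I would bound each summand by $k^{c_{\max}(G)}$ and count cycle covers by the number of pairings, $\prod_v (\deg(v)-1)!! \leq \prod_v \deg(v)^{\deg(v)/2} \leq (2|E|)^{|E|}$ using $\sum_v \deg(v) = 2|E|$; a slightly sharper double-factorial bound via Stirling absorbs the $2^{|E|}$ into the stated $|E|^{|E|}$. The main obstacle is the bijective argument in the third step: one must carefully check that the notion of ``cycle'' employed (closed trail, possibly with repeated vertices but not edges) matches Definition~\ref{def:cycle-cover} and that both the number of cycles and the multiplicity of each cycle cover are produced correctly by the decomposition, rather than, say, double-counting due to orientation or rotational symmetries of trails.
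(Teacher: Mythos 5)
Your Wick/Isserlis calculation is essentially correct and is genuinely different from the paper's proof: the paper simply cites Section~4 of \cite{MR-2011-GraphIntegralCircuit} for the equality, so you are supplying a self-contained derivation. The expansion over index maps $\bi:E\to[k]$, the factorization over vertices by independence, the application of Isserlis' theorem at each vertex, and the simplification to $\sum_{(\pi_v)_v}k^{N(\pi)}$ (where $N(\pi)$ counts equivalence classes of edges under the relations imposed by the local pairings) are all sound.

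The gap you yourself flag at the end is, however, real, and should be resolved rather than left as a caveat. The map from a tuple $(\pi_v)_v$ to the induced decomposition of $E$ into closed trails is \emph{not} injective, so ``bijection'' is the wrong word if a ``cycle cover'' means a partition of the edge set. Consider the figure-eight graph: a central vertex $v$ of degree $4$ adjacent to $a,b,c,d$, with further edges $\{a,b\}$ and $\{c,d\}$. The pairing $\{\{va,vb\},\{vc,vd\}\}$ at $v$ gives two triangles ($N(\pi)=2$), while $\{\{va,vc\},\{vb,vd\}\}$ and $\{\{va,vd\},\{vb,vc\}\}$ each produce a single Eulerian circuit through all six edges ($N(\pi)=1$). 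These last two yield the \emph{same} partition of $E$, yet one checks directly that $\EE[\|\bg_v\|^4]=k^2+2k$, so both must be counted separately for the equality to hold; a sum over partitions of $E$ into Eulerian (or simple-cycle) parts would give only $k^2+k$ (or $k^2$). The fix is to define a cycle cover as carrying the transition data at each vertex --- i.e.\ exactly a tuple $(\pi_v)_v$, or equivalently a set of circuits where, at each visit to a vertex, the incoming and outgoing edges of that visit are recorded --- and then your correspondence is a bijection on the nose with no orientation or rotational overcounting, since cyclic rotations and reversal of a circuit yield the same $(\pi_v)_v$. For the inequality, once this is pinned down you can avoid the Stirling bookkeeping: $(\deg(v)-1)!!\le\deg(v)^{\deg(v)/2}$, and since $G$ has no self-loops $\max_v\deg(v)\le|E|$, so $\prod_v(\deg(v)-1)!!\le\prod_v|E|^{\deg(v)/2}=|E|^{|E|}$. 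Note also that the paper's own inequality argument --- ``at most the number of edge partitions'' --- does not strictly apply under the transition-system reading, since several distinct $(\pi_v)_v$ may induce the same edge partition, so your pairing count is actually the more robust bound.
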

\begin{proof}
    The first equality is proved in Section 4 of \cite{MR-2011-GraphIntegralCircuit}.
    The inequality follows from the fact that the number of cycle covers of $G$ is at most the number of partitions of the edges of $G$, which is at most $|E|^{|E|}$ by Proposition~\ref{prop:size-Part}.
\end{proof}

\begin{proposition}
    \label{prop:cycle-cover-edges}
    Suppose $G = (V, E)$ is a connected graph with no self-loops, but possibly with parallel edges.
    Then, $|V| + c_{\max}(G) - 1 \leq |E|$, with equality if and only if $G$ is an \emph{inflated tree}---a tree where every edge has been replaced with a cycle.
\end{proposition}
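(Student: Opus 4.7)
The plan is to derive the inequality from the dimension of the $\mathbb{F}_2$-cycle space $Z(G) \subseteq \mathbb{F}_2^{E}$, which for a connected graph has dimension equal to the cyclomatic number $|E| - |V| + 1$. Given any cycle cover $C_1, \dots, C_c$ of $G$, the constituent cycles are pairwise edge-disjoint, so any nontrivial $\mathbb{F}_2$-combination $\sum_{i \in I} C_i$ has support equal to the disjoint union $\bigsqcup_{i \in I} E(C_i)$, which is nonempty. Hence the $C_i$ are linearly independent in $Z(G)$, giving $c \leq |E| - |V| + 1$, which rearranges to $|V| + c - 1 \leq |E|$.

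For the easy direction of the equality characterization, if $G$ is an inflated tree with base tree $T$ having $k$ edges and replacement cycles of lengths $\ell_1, \dots, \ell_k$, a direct count yields $|V(G)| = (k+1) + \sum_i (\ell_i - 2)$ and $|E(G)| = \sum_i \ell_i$, while the $k$ replacement cycles form a cycle cover of size $c = k$; substitution then gives $|V| + c - 1 = \sum_i \ell_i = |E|$.

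For the converse, assume $c_{\max}(G)$ achieves the cyclomatic number, so that there exists a cycle cover $C_1, \dots, C_{\mu(G)}$ which is simultaneously an $\mathbb{F}_2$-basis of $Z(G)$. The key structural claim is that the $C_i$ exhaust the simple cycles of $G$: for any simple cycle $D$, write $D = \sum_{i \in I} C_i$; edge-disjointness forces the support of the right-hand side to be $\bigsqcup_{i \in I} E(C_i)$. If $|I| \geq 2$, any vertex lying on two of the $C_i$ ($i \in I$) would have degree at least $4$ in $D$, which is impossible for a simple cycle, while pairwise vertex-disjoint $C_i$'s would make $D$ disconnected, also impossible. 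Hence $|I| = 1$ and $D = C_i$. Consequently every edge of $G$ lies in a unique simple cycle, so each block of $G$ is a cycle and $G$ is a cactus without bridges, which is precisely the inflated-tree structure: the base tree is recovered from the block--cut decomposition, with each block (a cycle) playing the role of an inflated edge.

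The main obstacle is the last step of the converse, namely translating the combinatorial conclusion ``every block is a cycle'' into the inflated-tree picture as described. The cleanest route is via the block--cut decomposition of $G$: in a connected, bridgeless graph whose blocks are all cycles, contracting each cycle-block to an edge produces a genuine tree, and re-expanding each edge back into its cycle recovers $G$. A little care is required to confirm that this matches the ``replace each edge of a tree by a cycle'' description in all configurations, in particular when several of the replacement cycles share a common cut vertex.
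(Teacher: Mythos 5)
Your proof is correct and takes a genuinely different route from the paper's. For the inequality, the paper removes one edge from each cycle of a maximum cover and notes the resulting spanning subgraph is still connected, which gives $|E| - c_{\max}(G) \geq |V| - 1$ directly; you instead observe that the edge-disjoint cycles of a cover are linearly independent in the $\mathbb{F}_2$-cycle space $Z(G)$, whose dimension is the cyclomatic number $|E| - |V| + 1$. The two arguments are close cousins, but yours makes the equality case transparent: equality means the cover is a basis of $Z(G)$. Your equality analysis is also cleaner than the paper's, which deduces that any two cover-cycles meet in at most one vertex and that no other cycle straddles two of them and then invokes an informal ``collapse each cycle to an edge''; you show instead that every simple cycle coincides with a cover-cycle, so every block is a cycle of the cover. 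For what it's worth, the worry you flag at the end is real but you point at the wrong spot: several replacement cycles sharing a common cut vertex is unproblematic (that is simply a base-tree vertex of degree $\geq 3$); the genuine mismatch with ``replace each tree edge by one cycle'' occurs when a single cycle-block carries three or more cut vertices, as for a hexagon with a triangle glued to each of three alternating vertices. That configuration achieves equality and is a bridgeless cactus but is not literally an inflation of one tree, and the paper's own proof has the same looseness at its final step. What the paper uses downstream of this proposition---minor-closure up to discarding self-loops, the absence of a triple-edge minor, the existence of a vertex of degree $2$---all hold for bridgeless cacti, and you establish that characterization correctly.
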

\begin{proof}
    Let $C$ be a maximum cycle cover of $G$.
    Let $G^{\prime}$ be the graph formed by removing an arbitrary edge from every cycle in $C$.
    Then, $|E(G^{\prime})| = |E| - c_{\max}(G)$ and $|V(G^{\prime})| = |V|$.
    Moreover, $G^{\prime}$ is connected, since there is a path in $G^{\prime}$ between the endpoints of each edge that was removed (along the remaining edges of the corresponding cycle).
    Thus, $|E(G^{\prime})| \geq |V(G^{\prime})| - 1$, and substituting gives $|E| - c_{\max}(G) \geq |V| - 1$.

    Equality holds if and only if $G^{\prime}$ is a tree.
    If $G$ is an inflated tree, this will clearly be the case.
    Suppose now that $G^{\prime}$ is a tree; we want to show that $G$ is an inflated tree.
    If two edge-disjoint cycles intersect in more than one vertex, then after one edge is removed from each cycle, there still exists a cycle among their edges.
    Therefore, if $G^{\prime}$ is a tree, then any two cycles of $C$ can intersect in at most one vertex.
    Moreover, again because $G^{\prime}$ is a tree, there can exist no further cycle of $G$ including edges from more than one of the cycles of $C$.
    Therefore, the graph formed by collapsing each cycle of $C$ to an edge must be a tree, whereby $G$ is an inflated tree.
\end{proof}

\begin{proof}[Proof of Theorem~\ref{thm:appl-high-rank}]
    Let $\bg_1, \dots, \bg_n \sim \sN(\bm 0, \bm I_N)$ be a collection of independent gaussian vectors coupled to $V^{\perp}$ such that $V^{\perp} = \mathsf{span}(\bg_1, \dots, \bg_n)$.
    Let us define
    \begin{equation}
        \alpha \colonequals (\log\log N)^{-32},
    \end{equation}
    which will play a similar role here to that of $\alpha$ in the proof of Theorem~\ref{thm:appl-laurent-approx}.
    Define $\bM^{(0)} \colonequals (1 - \alpha / 2)\frac{1}{N}\sum_{i = 1}^n \bg_i\bg_i^{\top}$, let $\bD$ be the diagonal matrix with $\diag(\bD) = \diag(\bM^{(0)})$, and define $\bM \colonequals \bm I + \bD - \bM^{(0)}$.
    We will then take $\tEE = \tEE_{\bM}$ for this choice of $\bM$ (which we note satisfies $\diag(\bM) = \one$ by construction).

    We first establish a preliminary asymptotic on the eigenvalues of $\bM^{(0)}$.
    Let $\lambda_1(\bM^{(0)}) \geq \lambda_2(\bM^{(0)}) \geq \cdots \geq \lambda_N(\bM^{(0)}) \geq 0$ be the ordered eigenvalues of $\bM^{(0)}$.
    Then, $\lambda_{n + 1}(\bM^{(0)}) = \cdots = \lambda_N(\bM^{(0)}) = 0$ almost surely.
    We note that $\sqrt{n / N} \ll 1 / \sqrt{\log N} \ll \alpha$, whereby the concentration inequality of Proposition~\ref{prop:rectangular-gaussian-conc} implies that, with high probability as $N \to \infty$,
    \begin{equation}
        1 - \alpha \leq \lambda_n(\bM^{(0)}) \leq \cdots \leq \lambda_1(\bM^{(0)}) \leq 1 - \frac{1}{3}\alpha.
    \end{equation}

    We next control the entries of $\bD$.
    These are
    \begin{equation}
        D_{ii} = \left(1 - \frac{\alpha}{2}\right) \frac{1}{N}\sum_{j = 1}^n (\bg_j)_i^2,
    \end{equation}
    where the law of the inner sum is $\chi^2(n)$.
    By the concentration inequality of Proposition~\ref{prop:chi-squared-conc}, we have that $\PP[D_{ii} \geq 4\frac{n}{N}] \leq \exp(-n)$, and since $n \gg \log N$ by assumption, upon taking a union bound we have that, with high probability, $\bm 0 \preceq \bD \preceq 4\frac{n}{N}\bm I_N$.
    
    We next establish the projection-like behavior of $\bM$.
    Suppose first that $\bv \in V$.
    Since the row space of $\bM^{(0)}$ is $V^{\perp}$, we have, on the event that the bound for $\bD$ above holds,
    \begin{equation}
      \|\bv\|^2 \leq \bv^{\top}\bM\bv = \|\bv\|^2 + \bv^{\top}\bD \bv \leq \left(1 + 4\frac{n}{N}\right)\|\bv\|^2
    \end{equation}
    Now, suppose $\bv \in V^{\perp}$.
    Then, on the event that the bound for $\bM^{(0)}$ above holds, we have that since $\bv$ is in the subspace spanned by the top $n$ eigenvectors of $\bM^{(0)}$,
    \begin{equation}
        \bv^{\top}\bM\bv = \|\bv\|^2 + \bv^{\top}\bD \bv - \bv^{\top}\bM^{(0)}\bv^{\top} \leq \|\bv\|^2 + 4\frac{n}{N}\|\bv\|^2 - (1 - \alpha)\|\bv\|^2 = \left(4\frac{n}{N} + \alpha\right)\|\bv\|^2.
    \end{equation}

    We now take up the main task of showing that $\tEE_{\bM}$ is a pseudoexpectation of the required degree.
    Note that the above results imply that $\lambda_{\min}(\bM) \geq 1 - \lambda_{\max}(\bM^{(0)}) \geq \alpha / 3$, giving the necessary control of the smallest eigenvalue.
    It remains to control the incoherence quantities.
    
    Writing $\bh_1, \dots, \bh_N \in \RR^n$ for the vectors $\bh_i = ((\bg_j)_i)_{j = 1}^N$, we note that, for $i \neq j$, we have $M_{ij} = -(1 - \alpha)\frac{1}{N}\langle \bh_i, \bh_j \rangle$, and the $\bh_i$ are independent and identically distributed with law $\sN(\bm 0, \bm I_n)$.
    Since for any fixed $i \neq j$ the law of $\langle \bh_i, \bh_j \rangle$ is the same as that of $\|\bh_i\|_2(\bh_j)_1$ (by orthogonal invariance of gaussian vectors), using Proposition~\ref{prop:chi-squared-conc} again we may bound
    \begin{align}
      \PP\left[\frac{1}{N}|\langle \bh_i, \bh_j \rangle| \geq t\right]
      &\leq \PP[\|\bh_i\|_2 \geq 2\sqrt{n}] + \PP\left[|(\bh_j)_1| \geq \frac{tN}{2\sqrt{n}}\right] \nonumber \\
      &\leq \exp\left(-\frac{n}{2}\right) + \exp\left(-\frac{t^2N^2}{8n}\right)
    \end{align}
    Recall that we have assumed $n \gg \log N$.
    Therefore, taking a union bound over these events for $\{i, j\} \in \binom{[N]}{2}$ we find that, with high probability as $N \to \infty$, the simpler incoherence quantities will satisfy
    \begin{align}
      \epsilon_{\offdiag}(\bM)
      &\leq 5\sqrt{\frac{n \log N}{N^2}}, \\
      \epsilon_{\corr}(\bM)
      &\leq 25\left(2\frac{n \log N}{N^2} + (N - 2) \frac{n^2\log^2 N}{N^4}\right)^{1/2} \leq 50 \sqrt{\frac{n \log N}{N^2}}.
    \end{align}
    For $\epsilon_{\pow}$, we observe that by the above reasoning with high probability $\bM \succeq \bm 0$, and thus $|M_{ij}| \leq 1$ for all $i \neq j$.
    On this event, we have
    \begin{equation}
        \epsilon_{\pow}(\bM) \leq \max_{i \in [N]} \sum_{j \neq i} M_{ij}^2 \leq \frac{1}{N^2} \max_{i \in [N]} \bh_i^{\top}\left(\sum_{j \neq i} \bh_j\bh_j^{\top}\right)\bh_i \leq \frac{1}{N^2} \left\|\sum_{i = 1}^N \bh_i\bh_i^{\top}\right\|\max_{i \in [N]} \|\bh_i\|^2_2.
    \end{equation}
    By the calculations above, with high probability we have both $\|\bh_i\|_2^2 \leq 4n$ for all $i \in [N]$ and $\|\sum_{i = 1}^N\bh_i\bh_i^{\top}\| = \|\sum_{i = 1}^n \bg_i\bg_i^{\top}\| = \frac{N}{1 - \alpha}\|\bM^{(0)}\| \leq 2N$.
    Thus we find that, with high probability,
    \begin{equation}
        \epsilon_{\pow}(\bM) \leq 8\frac{n}{N}.
    \end{equation}

    Finally, for $\epsilon_{\tree}$ and $\epsilon_{\err}$ we will use pattern diagrams together with hypercontractivity.
    We begin with $\epsilon_{\tree}$.
    Examining one term in the maximization, for a given $T \in \sT(2d^{\prime})$ and $\bs \in [N]^{2d^{\prime}}$, we have
    \begin{align}
      Z^T(\bM; \bs) &- \One\{s_1 = \cdots = s_N\} \nonumber \\
      &= \sum_{\substack{\ba \in [N]^{V^{\square}} \text{with } \bs, \ba \text{ not all equal}}} \prod_{(v, w) \in E(T)} M_{f_{\bs, \ba}(v)f_{\bs, \ba}(w)} \nonumber \\
      &= \sum_{\substack{\ba \in [N]^{V^{\square}} \text{with } \bs, \ba \text{ not all equal} \\ (G, f) = \pat(T, \bs, \ba)}} \prod_{(v, w) \in E(G)} M_{f(v),f(w)} \nonumber 
      \intertext{and since the pattern diagram is constructed to have all edges between vertices with different indices, we may expand this in terms of the $\bh_i$,}
      &= \sum_{\substack{\ba \in [N]^{V^{\square}} \text{with } \bs, \ba \text{ not all equal} \\ (G, f) = \pat(T, \bs, \ba)}} \left(-\frac{1 - \alpha}{N}\right)^{|E(G)|}\prod_{(v, w) \in E(G)} \langle \bh_{f(v)}, \bh_{f(w)} \rangle.
    \end{align}
    Towards applying the hypercontractive inequality, we compute the second moment:
    \begin{align}
      \EE[&(Z^T(\bM; \bs) - \One\{s_1 = \cdots = s_N\})^2] \nonumber  \\
      &= \sum_{\substack{\ba_1, \ba_2 \in [N]^{V^{\square}} \text{with } \bs, \ba_1 \text{ not all equal} \\ \text{ and } \bs, \ba_2 \text{ not all equal} \\ (G, f) = \pat((T, \bs, \ba_1), (T, \bs, \ba_2))}} \left(-\frac{1 - \alpha}{N}\right)^{|E(G)|}\EE\left[\prod_{(v, w) \in E(G)} \langle \bh_{f(v)}, \bh_{f(w)} \rangle\right] \nonumber
      \intertext{and simplifying the remaining expectation using Proposition~\ref{prop:cycle-cover} and bounding the first term,}
                        &\leq (6d)^{6d}\sum_{\substack{\ba_1, \ba_2 \in [N]^{V^{\square}} \text{with } \bs, \ba_1 \text{ not all equal} \\ \text{ and } \bs, \ba_2 \text{ not all equal} \\ G = \pat((T, \bs, \ba_1), (T, \bs, \ba_2))}} \One\{G \text{ has a cycle cover}\}N^{-|E(G)|} n^{c_{\max}(G)}, \nonumber
      \intertext{where we note that the expression does not depend on the labelling $f$ of the vertices of $G$ anymore.
      Now, as before, we group terms according to the graph $G$, using that each occurs at most $N^{|V^{\square}(G)|}$ times in the sum, and that each $G$ arising is connected, contains at most $6d$ vertices and $6d$ edges, and at least one $\bullet$ vertex:}
                        &\leq (6d)^{6d}\sum_{G = (V^{\bullet} \sqcup V^{\square}, E) \in \Pat^{\leq 2d}_{\tree}(2)} \One\{G \text{ has a cycle cover}\} N^{|V^{\square}| - |E|} n^{c_{\max}(G)} \nonumber \\
                        &\leq (6d)^{6d} \frac{n}{N} \sum_{G = (V^{\bullet} \sqcup V^{\square}, E) \in \Pat^{\leq 2d}_{\tree}(2)} N^{|V^{\square}| + c_{\max}(G) - |E|}. \nonumber
                          \intertext{Now, by Proposition~\ref{prop:cycle-cover-edges} we have $|V^{\bullet}| + |V^{\square}| + c_{\max}(G) - |E| - 1 \leq 0$. If $|V^{\bullet}| \geq 2$, then this yields $|V^{\square} + c_{\max}(G) - |E| \leq -1$.
                          If $|V^{\bullet}| = 1$, we argue slightly more carefully and note that in this case, since all $\bullet$ vertices in both underlying trees collapsed to a single vertex, in fact all vertices in $G$ have degree at least 4, so $G$ cannot be an inflated tree as in the only case of equality for Proposition~\ref{prop:cycle-cover-edges}.
                          Therefore, in this case we have $|V^{\square}| + c_{\max}(G) - |E| \leq -1$ again, whereby}
      &\leq (6d)^{6d} \frac{n}{N^2} \sum_{G = (V^{\bullet} \sqcup V^{\square}, E) \in \Pat^{\leq 2d}_{\tree}(2)} 1 \nonumber \\
      \intertext{and concluding with Proposition~\ref{prop:size-Pat}, we find}
          &\leq (6d)^{24d} \frac{n}{N^2}. \nonumber \\
      &\leq \frac{(6d)^{24d}}{N}.
    \end{align}
    (Here we have been slightly more precise than strictly necessary, in anticipation of referring to our results when discussing the SK Hamiltonian below.)
    
    Now, we observe that $Z^T(\bM; \bs) - \One\{s_1 = \cdots = s_{2d^{\prime}}\}$ is a polynomial of degree at most $2|E(T)| \leq 6d$ (by Corollary~\ref{cor:sF-vertex-edge-bound}) in the entries of the $\bh_i$, which are i.i.d.\ standard gaussians.
    Thus we can apply the hypercontractive tail bound of Corollary~\ref{cor:gauss-hc-tail} to find, taking $t = N^{1/4} \geq (2e^2)^{3d}$ for $N$ sufficiently large,
    \begin{align}
      \PP\left[|Z^T(\bM; \bs) - \One\{s_1 = \cdots = s_{2d^{\prime}}\}| \geq (6d)^{12d} N^{-1/4}\right] \leq \exp(-6dN^{1 / 12d}).
    \end{align}
    Taking a union bound, since the number of choices of $d^{\prime}$, $T$, and $\bs$ is at most $d \cdot 2(3d)^{3d} \cdot N^{2d} \leq N^{6d}$ for $N$ sufficiently large, we have
    \begin{align}
      \PP\left[\epsilon_{\tree}(\bM; 2d) \geq (6d)^{12d} N^{-1/4}\right]
      &\leq N^{6d} \exp(-6dN^{1 / 12d}) \nonumber \\
      &\leq \exp\left(6d(\log N - N^{1 / 12d})\right),
    \end{align}
    and recalling that $d \leq \frac{1}{300}\log (N / n) / \log\log N$ from our assumption we find that the event above holds with high probability.
    Also, from this same assumption we find $(6d)^{12d} \leq N^{1/8}$ for $N$ sufficiently large, whereby with high probability
    \begin{equation}
        \epsilon_{\tree}(\bM; 2d) \leq N^{-1/8}.
    \end{equation}

    We now perform the same analysis for $\epsilon_{\err}(\bM; 2d)$.
    Again examining one term with a given $T \in \sT(2d^{\prime})$ and $\bs \in [N]^{2d^{\prime}}$, manipulating as before, and computing the second moment, we find
    \begin{align}
      \EE\bigg[\Bigg(\sum_{\substack{\ba \in [N]^{V^{\square}} \\ \ba \text{ } (T, \bs)\text{-loose}}} &\prod_{(v, w) \in E(T)} M_{f_{\bs, \ba}(v)f_{\bs, \ba}(w)}\Bigg)^2\Bigg] \nonumber \\
      &= \sum_{\substack{\ba_1, \ba_2 \in [N]^{V^{\square}} \\ \ba_1, \ba_2 \text{ } (T, \bs)\text{-loose} \\ (G, f) = \pat((T, \bs, \ba_1), (T, \bs, \ba_2))}} \left(-\frac{1 - \alpha}{N}\right)^{|E(G)|}\tEE\left[\prod_{(v, w) \in E(G)} \langle \bh_{f(v)}, \bh_{f(w)} \rangle\right] \nonumber \\
      &\leq (6d)^{6d} \sum_{\substack{\ba_1, \ba_2 \in [N]^{V^{\square}} \\ \ba_1, \ba_2 \text{ } (T, \bs)\text{-loose} \\ G = \pat((T, \bs, \ba_1), (T, \bs, \ba_2))}} \One\{G \text{ has a cycle cover}\}N^{-|E(G)|} n^{c_{\max}(G)} \nonumber \\
      &\leq (6d)^{6d} \sum_{G = (V^{\bullet} \sqcup V^{\square}, E) \in \Pat^{\leq 2d}_{\err}(2)} \One\{G \text{ has a cycle cover}\}N^{|V^{\square}|-|E|} n^{c_{\max}(G)} \nonumber
      \intertext{We recall that $|V^{\bullet}| = |\set(\bs)|$ and $|V^{\bullet}| + |V^{\square}| = |V|$, so we may rewrite this by ``forgetting'' the vertex types as}
      &= N^{-|\set(\bs)|} (6d)^{6d} \sum_{G = (V, E) \in \Pat^{\leq 2d}_{\err}(2)} \One\{G \text{ has a cycle cover}\}N^{|V|-|E|} n^{c_{\max}(G)} \nonumber \\
      &\leq N^{-|\set(\bs)|} (6d)^{6d} \frac{n}{N}\sum_{G = (V, E) \in \Pat^{\leq 2d}_{\err}(2)} N^{|V| + c_{\max}(G) - |E|}.
    \end{align}
    Now, by Proposition~\ref{prop:cycle-cover-edges}, the inner term is at most 1 unless $G$ is an inflated tree.
    We claim that, when $G = \pat((T, \bs, \ba_1), (T, \bs, \ba_2))$ where $\ba_1$ and $\ba_2$ are both $(T, \bs)$-loose, then $G$ cannot be an inflated tree.
    To prove this, we consider two cases.

    \emph{Case 1: $|\set(\bs)| = 1$.} In this case, as we have argued above, since the total number of $\bullet$ vertices in the two initial trees taken together is at least 4 and neither of these trees is a pair, every vertex in $G$ will have degree at least 4, whereby $G$ cannot be an inflated tree.

    \emph{Case 2: $|\set(\bs)| > 1$.} Suppose, more specifically, that $G = \pat((T_1, \bs, \ba_1), (T_2, \bs, \ba_2))$.
    Since $\ba_1$ is $(T_1, \bs)$-loose, there exists some $i \in [N]$ and some $v \in V^{\square}(T_1)$ such that $v$ belongs to the minimal spanning tree of leaves $\ell$ with $s_{\kappa_{T_1}(\ell)} = i$, but $(a_1)_v \neq i$.
    In particular, there must exist two such leaves $\ell_1, \ell_2$ such that $v$ is along the path from $\ell_1$ to $\ell_2$.
    In $G$, the vertex that $v$ is identified to---call it $x$---is different from the vertex that $\ell_1$ and $\ell_2$ are identified to---call it $y$.
    Since $|\set(\bs)| > 1$, there is some $j \neq i$ and a leaf $\ell^{\prime}$ with $s_{\kappa_{T_1}(\ell^{\prime})} = j$.
    Suppose $\ell^{\prime}$ is identified to a vertex $z$ in $G$.
    Then, there is a path from $x$ to $z$ in $G$, so there are two different paths from $y$ to $z$ consisting only of edges coming from $T_1$.

    On the other hand, since $T_2$ is a tree and $\ba_2$ is $(T_2, \bs)$-loose, there is another path in $G$ from $y$ to $z$ and consisting of edges different from the first two paths, coming from $T_2$.
    Therefore, in $G$ there exist three different paths between $y$ and $z$; put differently, a triple edge can be obtained as a minor of $G$ (after discarding self-loops).
    On the other hand, any minor of an inflated tree is still an inflated tree (again after discarding self-loops), and a triple edge is not an inflated tree.
    Thus, $G$ cannot be an inflated tree.
    This concludes the proof of our intermediate claim.
    
    We then conclude the main argument using Proposition~\ref{prop:size-Pat}:
    \begin{align}
      \EE\bigg[\Bigg(\sum_{\substack{\ba \in [N]^{V^{\square}} \\ \ba \text{ } (T, \bs)\text{-loose}}} \prod_{(v, w) \in E(T)} M_{f_{\bs, \ba}(v)f_{\bs, \ba}(w)}\Bigg)^2\Bigg] &\leq N^{-|\set(\bs)|} (6d)^{6d} \frac{n}{N}\sum_{G = (V, E) \in \Pat^{\leq 2d}_{\err}(2)} 1 \nonumber \\
      &\leq N^{-|\set(\bs)|} (6d)^{24d} \frac{n}{N}.
    \end{align}
    Similarly to before, we apply Corollary~\ref{cor:gauss-hc-tail} with $t = (N / n)^{1/4}$, finding
    \begin{align}
      &\PP\Bigg[N^{|\set(\bs)| / 2}\Bigg|\sum_{\substack{\ba \in [N]^{V^{\square}} \\ \ba \text{ } (T, \bs)\text{-loose}}} \prod_{(v, w) \in E(T)} M_{f_{\bs, \ba}(v)f_{\bs, \ba}(w)}\Bigg| \geq (6d)^{12d}(n / N)^{1/4}\Bigg] \nonumber \\
      &\hspace{2cm}\leq \exp(-6d(N / n)^{1/12d}),
    \end{align}
    and performing the same union bound calculation over all choices of $d^{\prime}$, $T$, and $\bs$ shows that, with high probability,
    \begin{equation}
        \epsilon_{\err}(\bM; 2d) \leq (n / N)^{1/8}.
    \end{equation}
    
    Thus, combining the results on the incoherence quantities, with high probability we have
    \begin{equation}
        \epsilon(\bM; 2d) \leq 55 \sqrt{\frac{n \log N}{N^2}} + 8\frac{n}{N} + \left(\frac{1}{N}\right)^{1/8} + \left(\frac{n}{N}\right)^{1/8} \leq 65 \left(\frac{n}{N}\right)^{1/8}.
    \end{equation}
    On this event, we work with the condition of Theorem~\ref{thm:lifting}, for $N$ sufficiently large:
    \begin{equation}
        (12d)^{32}\|\bM\|^5\epsilon(\bM; 2d)^{1/d} \leq 64\left(\frac{12d}{\log N}\right)^{32} \leq \frac{1}{3}(\log\log N)^{-32} = \frac{1}{3}\alpha \leq \lambda_{\min}(\bM),
    \end{equation}
    concluding the proof.
\end{proof}

\section{Extension to Rank $N - \Theta(N)$: Obstacles and Strategies}
\label{sec:future}

A careful reading of the proof of Theorem~\ref{thm:appl-high-rank} reveals the two major obstacles at hand in attempting to execute our strategy on low-rank projection matrices.
These correspond to the two incoherence quantities that are no longer $o(1)$ as $N \to \infty$ once $n = \Omega(N)$: $\epsilon_{\pow}$ and $\epsilon_{\err}$ (we note that a more naive proof technique that would still work in the setting of Theorem~\ref{thm:appl-high-rank} would have also led us to include $\epsilon_{\tree}$ on this list, but that is simple to circumvent with a small adjustment as we have made in that argument).
As we describe below, the former seems to represent a fundamental barrier requiring us to reconsider our derivation of the pseudomoments, while the latter can probably be handled without much difficulty.

\subsection{Failure of Order-2 Tensor Orthonormality: The $\epsilon_{\pow}$ Obstacle}
\label{sec:eps-pow-obstacle}

The more substantial issue arises, in the course of our proof of Theorem~\ref{thm:lifting}, in the collapse of partition transport diagrams.
As illustrated in Figure~\ref{fig:2-2-issue}, when $n = \Omega(N)$ then we no longer have merely $\bM^{\circ 2} \approx \bm I_N$, but rather $\bM^{\circ 2} \approx \bm I_N + t\one_N\one_N^{\top}$ for $t = \Theta(N^{-1})$, whereby one particular diagram contributes a non-negligible new term.
(To see explicitly that this occurs, we may compute $N^{-1}\one_N^{\top}\bM^{\circ 2}\one_N = \|\bM\|_F^2 / N = \Theta(1)$ in this scaling.)
This happens already at degree $2d = 4$, where this diagram has coefficient $+2$, and therefore our approximate Gramian factorization $\bY$ of $\bZ^{\main}$ has an extra positive term of the form of the right-most diagram in the Figure, two sided pairs with edges labelled by $\bM^2$ instead of $\bM$.
The associated CGM, after multiplying by $t$, has spectral norm $O(1)$ (see the scaling discussed in Remark~\ref{rem:multiscale-spectrum}), so we have ``$\bY \gg \bZ^{\main}$'' in psd ordering---in this case, our approximate Gramian factorization is simply false.
Moreover, when $\bM^2 \approx \lambda \bM$ (as for $\bM$ a rescaled projector), this additional diagram is the same as the diagram that is ``orthogonalized away'' by writing the pseudomoments in the multiharmonic basis.
Therefore, the other diagrams in $\bZ^{\main}$ cannot compensate for this negative term, whereby $\bZ^{\main} \not \succeq \bm 0$ and it is not merely the Gramian approximation that fails but rather the pseudomoment construction itself.

Some technical tricks can work around this issue at low degrees: in \cite{KB-2019-Degree4SK-Arxiv}, we made an adjustment to the pseudomoments before passing to the multiharmonic basis, and also adjusted the basis so that $\bZ^{\main}$ written in this basis has a similar extra term, whereby we can restore the equality $\bZ^{\main} \approx \bY$.\footnote{The change of basis is expressed there as a Schur complement, which turns out to be equivalent.}
In Section~\ref{sec:pf:low-rank-lifting} we will take a different approach, simply adding extra terms to $\tEE_{\bM}$ itself that achieve the same thing.
This is slightly more flexible, working up to $2d = 6$.

It seems, however, that to resolve this issue for arbitrarily high degrees would require rethinking much of our derivation to include a second-order correction.
Namely, our very initial construction of the approximate Green's function to the system of PDEs $\langle \bv_i, \bm\partial \rangle^2 p = 0$ in Section~\ref{sec:informal-deriv} already has built in the premise that the matrices $\bv_i\bv_i^{\top}$ are nearly orthonormal, when compared to the actual Green's function derivations in similar situations by \cite{Clerc-2000-KelvinTransform}.
Since this same assumption eventually leads our construction astray, it seems plausible that the correct resolution will come from building a more nuanced approximate Green's function and deriving the prediction anew.

\begin{figure}
    \begin{center}
        \includegraphics[scale=0.7]{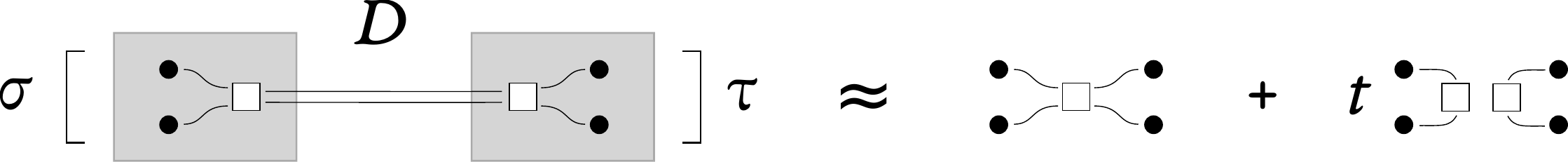}
    \end{center}
    \caption{\textbf{Extra term from $\bD = [\, 2 \, ]$ partition transport ribbon diagram.} We illustrate the issue discussed in Section~\ref{sec:eps-pow-obstacle}, that a partition transport ribbon diagram with associated plan $\bD = [\, 2 \, ]$ produces an extra term consisting of two sided pairs when collapsed. The two parallel edges in the diagram on the left represent the matrix $\bM^{\circ 2}$, and this expansion corresponds to an approximation $\bM^{\circ 2} \approx \bm I_N + t \one_N\one_N^{\top}$.}
    \label{fig:2-2-issue}
\end{figure}

\subsection{Sided Error Terms: The $\epsilon_{\err}$ Obstacle}

Another, milder difficulty that arises when $n = \Omega(N)$ is that we have $\epsilon_{\err}(\bM) = \Theta(1)$.
This is less of a problem because we have used a very coarse approach to bounding the error contribution $\bZ^{\err}$, bounding related matrix norms by Frobenius norms in the proof of Lemma~\ref{lem:Z-err-bound}.
Since, assuming randomly behaving entries, we only expect such a bound to be tight when a matrix is actually a vector, this should not be a problem except in situations where the diagram-like objects of error terms appearing in the proof of Lemma~\ref{lem:Z-err-bound} are \emph{sided}, applying only to left or right indices in a diagram.
To be more specific, we recall that, per Proposition~\ref{prop:E-err-factorization}, the full pseudoexpectation may be written as a sum over diagrams where ``main term trees'' are applied to some subsets of indices and ``error term trees'' are applied to others.
The multiharmonic basis eliminates sided main term trees, but does not have any useful effect on sided error term trees.

Below, we give a simple result showing that, in quite general pseudoexpectations like this, one may build a basis that eliminates sided terms of any kind.
It is possible to combine this construction with the multiharmonic basis to eliminate sided error terms, though this complicates other parts of the analysis and has no direct benefit in our applications, so we do not pursue it here.

\begin{proposition}
    Let $\tEE: \RR[x_1, \dots, x_N] \to \RR$ be linear and given by
    \begin{equation}
        \tEE[\bx^S] = \sum_{\pi \in \mathsf{Part}(S)} \prod_{A \in \pi} E_A
    \end{equation}
    for all $S \in \sM([N])$, where $E_A$ are arbitrary multiset-indexed quantities.
    Define the polynomials
    \begin{equation}
        p_S(\bx) \colonequals \sum_{T \subseteq S} \left(\sum_{\pi \in \Part(S - T)} (-1)^{|\pi|}\prod_{A \in \pi} E_A\right)\bx^T.
    \end{equation}
    for all $S \in \sM([N])$.
    The pseudomoments written in this basis are then
    \begin{equation}
        \tEE[p_S(\bx)p_T(\bx)] \sum_{\substack{\pi \in \mathsf{Part}(S + T) \\ A \cap S \neq \emptyset, A \cap T \neq \emptyset \\ \text{for all } A \in \pi}}\prod_{A \in \pi} E_A.
    \end{equation}
\end{proposition}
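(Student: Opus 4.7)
The plan is to expand everything by linearity and then reorganize the resulting triple sum according to a single partition of $S + T$, which will reveal a clean inclusion-exclusion cancellation on blocks lying entirely inside $S$ or entirely inside $T$.

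First, I would write $p_S(\bx) = \sum_{U \subseteq S} c_{S,U}\, \bx^U$ with $c_{S,U} \colonequals \sum_{\pi \in \Part(S - U)} (-1)^{|\pi|} \prod_{A \in \pi} E_A$, and similarly for $p_T$. By linearity,
\begin{equation}
\tEE[p_S(\bx) p_T(\bx)] = \sum_{U \subseteq S} \sum_{V \subseteq T} c_{S,U}\, c_{T,V} \sum_{\rho \in \Part(U + V)} \prod_{A \in \rho} E_A,
\end{equation}
so the full expansion is a quadruple sum over $(U, V, \pi_1, \pi_2, \rho)$ where $\pi_1 \in \Part(S - U)$, $\pi_2 \in \Part(T - V)$, $\rho \in \Part(U + V)$, weighted by $(-1)^{|\pi_1| + |\pi_2|}$ times $\prod E_A$ over all blocks in $\pi_1 + \pi_2 + \rho$.

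The key observation is that $\pi_1 + \pi_2 + \rho$ is a partition $\sigma$ of $S + T$, together with some extra data specifying, for each block $A$ of $\sigma$, whether $A$ came from $\pi_1$, $\pi_2$, or $\rho$. The constraints on this extra data are exactly: blocks in $\pi_1$ must lie entirely in $S$, blocks in $\pi_2$ must lie entirely in $T$, and blocks in $\rho$ may be arbitrary. I would then swap the order of summation, fix a target partition $\sigma \in \Part(S + T)$, and enumerate the ways to decorate $\sigma$. Classify each block $A \in \sigma$ as an $S$-block (if $A \subseteq S$), a $T$-block (if $A \subseteq T$), or a straddling block (otherwise). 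A straddling block is forced to originate in $\rho$, contributing a factor of $1$. An $S$-block can independently come from $\pi_1$ (sign $-1$) or from $\rho$ via $U$ (sign $+1$), contributing a factor of $-1 + 1 = 0$; the same holds for $T$-blocks.

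The product of these per-block factors is therefore $0$ whenever $\sigma$ has any $S$-block or $T$-block, and $1$ otherwise, which is precisely the indicator that every block of $\sigma$ meets both $S$ and $T$. Summing over the surviving $\sigma$ yields the claimed formula. The argument has no real obstacle: the only point requiring care is checking that the decorated enumeration is in bijection with the original quintuple $(U, V, \pi_1, \pi_2, \rho)$, which amounts to noting that $U = \bigcup \{A \in \rho : A \subseteq S\}$ and $V = \bigcup \{A \in \rho : A \subseteq T\}$ are recovered from the decoration, so the bijection is immediate.
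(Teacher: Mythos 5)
Your argument is correct and is essentially the same reorganization the paper carries out: fix the resulting partition $\sigma$ of $S+T$, observe that every straddling block is forced into $\rho$, and that every sided block can independently be assigned either to a coefficient partition (sign $-1$) or to the pseudoexpectation expansion via $U$ or $V$ (sign $+1$), so each sided block contributes a factor $(-1)+1=0$; the paper writes this as the inner alternating sums $\sum_{\sigma'\subseteq\sigma}(-1)^{|\sigma'|}\sum_{\tau'\subseteq\tau}(-1)^{|\tau'|}$ vanishing unless $\sigma,\tau$ are empty. One small slip in your closing bijection remark: $U$ is not $\bigcup\{A\in\rho : A\subseteq S\}$ but rather $(\bigcup\rho)\cap S$, since straddling blocks of $\rho$ also deposit elements into $U$ and $V$; this does not affect the argument, as the decorated partition is still invertible.
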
    
\begin{proof}
    We calculate directly:
    \begin{align}
\tEE[p_S(\bx)p_T(\bx)]
&= \sum_{\substack{S^{\prime} \subseteq S \\ T^{\prime} \subseteq T}}\sum_{\substack{\sigma \in \mathsf{Part}(S^{\prime}) \\ \tau \in \mathsf{Part}(T^{\prime}) \\ \pi \in \mathsf{Part}((S - S^{\prime}) + (T - T^{\prime})) \\ A \cap S \neq \emptyset, A \cap T \neq \emptyset \\ \text{for all } A \in \pi}}\Bigg(\sum_{\substack{\sigma^{\prime} \subseteq \sigma \\ \tau^{\prime} \subseteq \tau}}(-1)^{|\sigma^{\prime}| + |\tau^{\prime}|}\Bigg) \prod_{A \in \sigma + \tau + \pi} E_A \nonumber \\
&= \sum_{\substack{S^{\prime} \subseteq S \\ T^{\prime} \subseteq T}}\sum_{\substack{\sigma \in \mathsf{Part}(S^{\prime}) \\ \tau \in \mathsf{Part}(T^{\prime}) \\ \pi \in \mathsf{Part}((S - S^{\prime}) + (T - T^{\prime})) \\ A \cap S \neq \emptyset, A \cap T \neq \emptyset \\ \text{for all } A \in \pi}}\Bigg(\sum_{\sigma^{\prime} \subseteq \sigma} (-1)^{|\sigma^{\prime}|}\Bigg)\Bigg(\sum_{\tau^{\prime} \subseteq \tau}(-1)^{|\tau^{\prime}|}\Bigg) \prod_{A \in \sigma + \tau + \pi} E_A,
    \end{align}
    and the remaining coefficients are 1 if $|S^{\prime}| = |T^{\prime}| = 0$ and 0 otherwise, completing the proof.
\end{proof}
Here we have used implicitly the simple \Mobius\ function of the subset poset from Example~\ref{ex:mobius-subset} in building our basis.
There would appear to be an analogy between this feature and the appearance of the \Mobius\ function of partitions from Example~\ref{ex:mobius-part} in the multiharmonic basis.
It would be interesting to develop more general techniques for ``orthogonalizing away'' the terms of pseudoexpectations that contribute to a multiscale spectrum in the pseudomoment matrix using bases that incorporate poset combinatorics.

\section{Proofs of Low-Rank Applications}
\label{sec:pf:applications-sk}

Using the ideas above, especially from Section~\ref{sec:eps-pow-obstacle}, we now prove our applications to extending low-rank matrices, which are restricted to degree~6.

\subsection{Degree 6 Low-Rank Extension: Proof of Theorem~\ref{thm:low-rank-lifting}}
\label{sec:pf:low-rank-lifting}

\begin{proof}[Proof of Theorem~\ref{thm:low-rank-lifting}]
    The construction of $\tEE$ is a minor variant of $\tEE_{\bM}$ from Theorem~\ref{thm:lifting}, modified to counteract the negative terms discussed in Section~\ref{sec:eps-pow-obstacle}.
    \begin{align}
      \tEE^{\pairs}_{\bM}\left[\prod_{i \in S} x_i\right] &\colonequals \One\{|S| \in \{4, 6\}\} \sum_{\substack{F \in \sF(|S|) \\ F \text{ all pairs}}} Z^F(\bM^2; S) \text{ for all } S \subseteq [N], \\
      \tEE^{\id}\left[\prod_{i \in S} x_i\right] &\colonequals \One\{S = \emptyset\}, \\
      \tEE &\colonequals (1 - c)\left(\tEE_{\bM} + 2t_{\pow}\tEE^{\pairs}_{\bM}\right) + c\, \tEE^{\id}.
    \end{align}
    We remark that here we use a combination of the two strategies for attenuating the spectrum of the error terms that were discussed in Remark~\ref{rem:nudging}.
    We also emphasize the detail that the matrix used in the CGSs in $\tEE^{\pairs}$ is the \emph{square} of $\bM$.
    (For $\bM$ a rescaled random projection we expect $\bM^2 \approx \lambda \bM$ for some $\lambda$, but we do not require such a relation to hold, nor is this taken into account in the incoherence quantities used in the statement.)

    Let us moreover decompose $\tEE^{\pairs}_{\bM}$ into three terms, as follows.
    Note that the first two are bilinear operators on polynomials of degree at most $d$, in the sense of Definition~\ref{def:bilinear-pe}, while the last has the additional symmetry making it a linear operator on polynomials of degree at most $2d$.
    \begin{align}
      \tEE^{\pairs:\main:1}_{\bM}(\bx^S, \bx^T) &\colonequals \One\{|S| + |T| \in \{4, 6\}\} \sum_{\substack{F \in \sF(|S|, |T|) \\ F \text{ all pairs} \\ \mathclap{F \text{ has 2 sided pairs}}}} Z^F_{S,T}(\bM^2) \text{ for all } S, T \in \sM([N]), \\
      \tEE^{\pairs:\main:2}_{\bM}(\bx^S, \bx^T) &\colonequals \One\{|S| + |T| \in \{4, 6\}\} \sum_{\substack{F \in \sF(|S|, |T|) \\ F \text{ all pairs} \\ \mathclap{F \text{ has } \leq 1 \text{ sided pair}}}} Z^F_{S,T}(\bM^2) \text{ for all } S \in \sM([N]), \\
      \tEE^{\pairs:\err}_{\bM}(\bx^S, \bx^T) &\colonequals \tEE^{\pairs}_{\bM}[\bx^{S + T}] - \tEE^{\pairs:\main:1}_{\bM}(\bx^S, \bx^T) - \tEE^{\pairs:\main:2}_{\bM}(\bx^S, \bx^T) \\
      &= \tEE^{\pairs:\err}[\bx^{S + T}]. \nonumber
    \end{align}
    The point here is that, since we expect $t_{\pow} \|\bM^2\|_F^2 = O(1)$, only the ribbon diagrams with two sided pairs, those in $\tEE^{\pairs:\main:1}$, will contribute significantly.
    The further decomposition between $\tEE^{\pairs:\main:1} + \tEE^{\pairs:\main:2}$ and $\tEE^{\pairs:\err}$ is precisely the same as that between $\tEE^{\main}$ and $\tEE^{\err}$ in Theorem~\ref{thm:lifting}, the former being simpler to work with in terms of CGMs and the latter being a small correction.

    Our result will then follow from the following three claims:
    \begin{align}
      \tEE_{\bM} + 2t_{\pow}\tEE^{\pairs:\main:1}_{\bM} &\succeq \bm 0, \label{eq:low-rank-lifting-main} \\
      2(1 - c)t_{\pow}\tEE^{\pairs:\main:2}_{\bM} + \frac{c}{2} \tEE^{\id} &\succeq \bm 0, \label{eq:low-rank-lifting-err-1} \\
      2(1 - c)t_{\pow}\tEE^{\pairs:\err}_{\bM} + \frac{c}{2} \tEE^{\id} &\succeq \bm 0. \label{eq:low-rank-lifting-err-2}
    \end{align}
    For \eqref{eq:low-rank-lifting-main} we will argue by adjusting the proof of Theorem~\ref{thm:lifting}, arguing for positivity in the harmonic basis, and using that the additional term counteracts the negative terms discussed in Section~\ref{sec:eps-pow-obstacle}.
    For \eqref{eq:low-rank-lifting-err-1} and \eqref{eq:low-rank-lifting-err-2}, we will make simpler arguments in the standard monomial basis.

    \vspace{1em}
    
    \emph{Proof of \eqref{eq:low-rank-lifting-main}:}
    We will be quite explicit about the calculations in this section, essentially recapitulating this special case of Theorem~\ref{thm:lifting} with adjustments as needed.
    We notice first that the only cases where $\tEE^{\pairs:\main:1}_{\bM}(\bx^S, \bx^T) \neq 0$ are where either $|S| = |T| = 2$ or $|S| = |T| = 3$.
    In the former case there is only a single diagram, with one sided pair in each $\sL$ and $\sR$, while in the latter case there are 9 such diagrams, with one additional non-sided pair (there are $3 \cdot 3 = 9$ ways to choose the leaves belonging to this pair).

    Let us enumerate explicitly the multiharmonic basis polynomials $h_S^{\downarrow}(\bx)$ for $|S| \leq 3$:
    \begin{align}
      h_{\emptyset}^{\downarrow}(\bx) &= 1, \\
      h_{\{i\}}^{\downarrow}(\bx) &= x_i, \\
      h_{\{i, j\}}^{\downarrow}(\bx) &= x_ix_j - M_{ij}, \\
      h_{\{i, j, k\}}^{\downarrow}(\bx) &= x_ix_jx_k - M_{ij}x_k - M_{ik}x_j - M_{jk}x_i + 2\sum_{a = 1}^N M_{ai}M_{aj}M_{ak} x_a.
    \end{align}
    We see therefore that the only cases with $\tEE^{\pairs:\main:1}(h_{S}^{\downarrow}(\bx), h_T^{\downarrow}(\bx) \neq 0$ will again be those with either $|S| = |T| = 2$ or $|S| = |T| = 3$, and in these two cases we just have $\tEE^{\pairs:\main:1}(h_{S}^{\downarrow}(\bx), h_T^{\downarrow}(\bx)) = \tEE^{\pairs:\main:1}(\bx^S, \bx^T)$; thus, the more complicated terms in the harmonic basis polynomials are in fact entirely ``invisible'' to the corrective term $\tEE^{\pairs:\main:1}$.
    That this does not happen anymore once $2d \geq 8$ seems to be one of the main obstructions to applying a similar adjustment technique there.

    Following the proof of Theorem~\ref{thm:lifting} but adding an extra detail, we define the pseudomoment matrices $\bZ^{\main:1}, \bZ^{\main:2}, \bZ^{\err}, \bZ \in \RR^{\binom{[N]}{\leq 3} \times \binom{[N]}{\leq 3}}$ to have entries
    \begin{align}
      Z^{\main:1}_{S,T} &\colonequals \tEE^{\main}_{\bM}[h_S^{\downarrow}(\bx)h_T^{\downarrow}(\bx)], \\
      Z^{\main:2}_{S,T} &\colonequals 2t_{\pow}\tEE^{\pairs:\main:1}_{\bM}[h_S^{\downarrow}(\bx)h_T^{\downarrow}(\bx)], \\
      Z^{\err}_{S,T} &\colonequals \tEE^{\err}_{\bM}[h_S^{\downarrow}(\bx)h_T^{\downarrow}(\bx)], \\
      \bZ &\colonequals \bZ^{\main:1} + \bZ^{\main:2} + \bZ^{\err}.
    \end{align}
    It then suffices to prove $\bZ \succeq \bm 0$.

    By Proposition~\ref{prop:Z-main-harmonic-basis}, $\bZ^{\main:1}$ is block-diagonal (note that in this small case there are no stretched forest ribbon diagrams with unequal numbers of leaves in $\sL$ and $\sR$, so the block diagonalization is exact).
    Define $\bZ^{\tied}$ as in Corollary~\ref{cor:Z-main-Z-tied}.
    By Corollary~\ref{cor:Z-main-Z-tied}, we have
    \begin{align}
      \|\bZ^{\main:1} - \bZ^{\tied}\| &\leq 10^{38}\|\bM\|^9 \epsilon_{\tree}(\bM; 3), \nonumber 
                                        \intertext{where we can note that $\epsilon_{\tree}(\bM; 3) = \epsilon_{\tree}(\bM; 2) = \epsilon_{\offdiag}(\bM)$, since the only good tree on two leaves is a pair, allowing us to eliminate $\epsilon_{\tree}$,}
                                        &= 10^{38}\|\bM\|^9 \epsilon_{\offdiag}(\bM). 
        \label{eq:low-rank-pf-Z-main-1-Z-tied}
    \end{align}
    (The constants can be improved with more careful analysis and diagram counting specific to $2d = 6$, which we do not pursue here.)

    We claim that the same approximate Gram factorization that held for $\bZ^{\tied}$ in Theorem~\ref{thm:lifting} holds for $\bZ^{\tied} + \bZ^{\main:2}$ in this case.
    Namely, as in Theorem~\ref{thm:lifting}, we define $\bY \in \RR^{\binom{[N]}{\leq 3} \times \binom{[N]}{\leq 3}}$ to have entries $Y_{S,T} = \langle h_S(\bV^{\top}\bz), h_T(\bV^{\top}\bz) \rangle_{\partial}$.
    By Proposition~\ref{prop:Y-positive}, we have $\bY \succeq \lambda_{\min}(\bM)^3$, and we will bound $\|\bZ^{\tied} + \bZ^{\main:2} - \bY\|$ below.

    By construction $\bY$ is block diagonal.
    We let $\bY^{[d, d]} \in \RR^{\binom{[N]}{d} \times \binom{[N]}{d}}$ be the diagonal block indexed by sets of size $d$.
    Likewise, $\bZ^{\tied}$ and $\bZ^{\main:2}$ are block diagonal, by Corollary~\ref{cor:Z-main-Z-tied} and our remark above, respectively.
    Denote $\bZ^{\tied[d, d]}$ and $\bZ^{\main:2[d, d]}$ for their respective diagonal blocks.
    We have $\bZ^{\main:2[0, 0]}$ and $\bZ^{\main:2[1, 1]}$ both identically zero, and $\bZ^{\tied[0, 0]} = \bY^{[0, 0]} = [1]$ and $\bZ^{\tied[1, 1]} = \bY^{[1, 1]} = \bM$, so it suffices to bound $\|\bZ^{\tied[d, d]} + \bZ^{\main:2[d, d]} - \bY^{[d, d]}\|$ for $d \in \{2, 3\}$.

    For $d = 2$, as mentioned above, there is only one diagram consisting of two sided pairs occurring in $\bZ^{\main:2[2, 2]}$. 
    There are three bowtie forest ribbon diagrams in $\bZ^{\tied[2, 2]}$: two consist of two pairs, while the third consists of all four leaves connected to a $\square$ vertex.
    For a given choice of $\sigma, \tau \in \Part([2])$, if either $|\sigma| = 1$ or $|\tau| = 1$ then there is a unique $\bD \in \Plans(\sigma, \tau)$; otherwise, there are two plans corresponding to the two matchings of two copies of $\{1, 2\}$.
    Enumerating all of these terms, we find that most of them cancel in $\bZ^{\tied[2, 2]} + \bZ^{\main:2[2, 2]} - \bY^{[2, 2]}$: see Figure~\ref{fig:low-rank-d-2} for a graphical depiction of the calculation.
    We are left only with the final diagrams illustrated there.
    The sum of their CGMs' norm we can bound using the factorization of Proposition~\ref{prop:factorization} and the norm bound of Proposition~\ref{prop:cgm-generic-norm-bound}:
    \begin{equation}
        \|\bZ^{\tied[2, 2]} + \bZ^{\main:2[2, 2]} - \bY^{[2, 2]}\| \leq 2\|\bM\|^4\|\bM + t_{\pow}\one\one^{\top} - \bM^{\circ 2}\| = 2\|\bM\|^4 \widetilde{\epsilon}_{\pow}(\bM).
        \label{eq:low-rank-pf-d2}
    \end{equation}

    \begin{figure} 
        \begin{center}
            \includegraphics[scale=0.7]{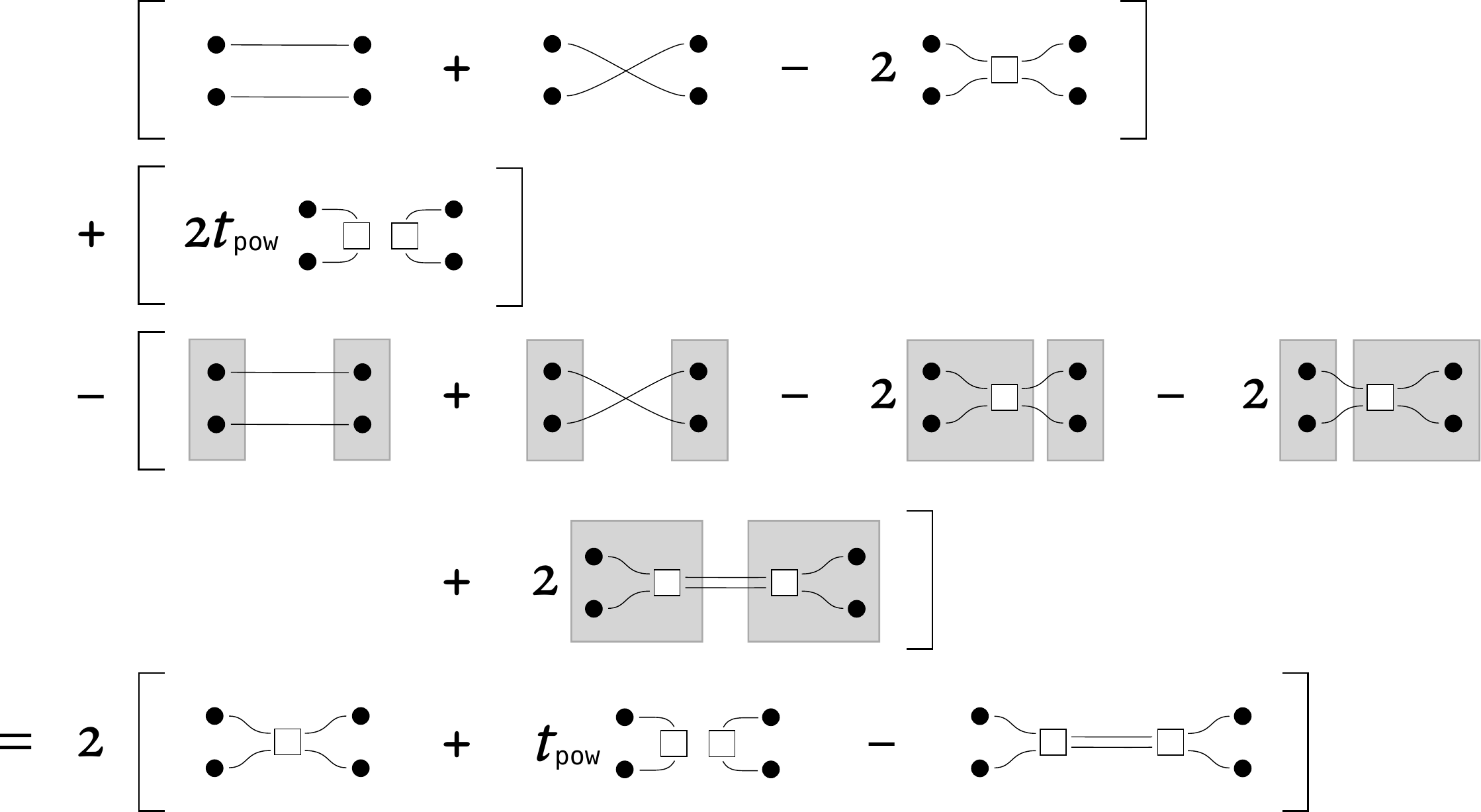}
        \end{center}
        \caption{\textbf{Diagrammatic manipulation of degree~4 adjustment.} We show the calculation of $\bZ^{\tied[2, 2]} + \bZ^{\main:2[2, 2]} - \bY{[2, 2]}$ (listed here in the same order they appear in the figure) and the role of the additional diagram of $\bZ^{\main:2[2, 2]}$ in adjusting the result. The reader may compare with Figure~\ref{fig:2-2-issue} to see why the right-hand side is a desirable outcome whose norm we are able to bound.} 
        \label{fig:low-rank-d-2}
    \end{figure}
    
    For $d = 3$, again as mentioned before, there are 9 diagrams in $\bZ^{\main:2[3, 3]}$, each consisting of two sided pairs and one non-sided pair.
    There are 16 bowtie forest ribbon diagrams in $\bZ^{\tied[3, 3]}$: 6 where every connected component is a pair, 9 where one connected component is a pair and another is a star on 4 leaves, and 1 star on all 6 leaves.
    We first apply Lemma~\ref{lem:tying-partition-transport-ribbon}, which controls the norm error incurred in tying any partition transport ribbon diagram.
    We use the following minor variant of this bound, which follows upon examining the proof in Appendix~\ref{app:pf:lem:tying-partition-transport-ribbon} for the particular case $d = 3$: for $\sigma, \tau \in \Part([3])$ and $\bD \in \Plans(\sigma, \tau)$, if $\bD$ does not equal a permutation of the matrix $\bD^{\star} \colonequals \left[\begin{array}{cc} 1 & 0 \\ 0 & 2 \end{array}\right]$, then
    \begin{equation}
        \|\bZ^{G(\sigma, \tau, \bD)} - \bZ^{\tie(G(\sigma, \tau, \bD))}\| \leq 9\|\bM\|^9(\epsilon_{\offdiag}(\bM) + \widetilde{\epsilon}_{\pow}(\bM)).
    \end{equation}
    This follows simply by observing that, in this case, either $\sigma$ or $\tau$ contains a singleton whereby Case~2 of that proof applies (giving the first term above), or $\sigma = \tau = \{\{1, 2, 3\}\}$, in which case $\bD = [3]$ and Case 3 applies and immediately yields the tied diagram after the first step, incurring error of only $\|\bM\|^6\|\bM^{\circ 3} - \bm I_N\|$.
    (We do this carefully to avoid incurring a cost of $\|\bM^{\circ 2} - \bm I_N\|$, which we are no longer assuming we have good control over.)

    Applying this bound to all partition transport ribbon diagrams in $\bY$ whose matrix $\bD$ is not a permutation of $\bD^{\star}$ as above, we may form $\bY^{\tied[3, 3]}$ where all of these diagrams are tied which, by the same counting as in Corollary~\ref{cor:Y-Z-tied}, will satisfy
    \begin{equation}
        \|\bY^{[3, 3]} - \bY^{\tied[3, 3]}\| \leq 10^8 \|\bM\|^9 (\epsilon_{\offdiag}(\bM) + \widetilde{\epsilon}_{\pow}(\bM)).
    \end{equation}
    Now, in $\bZ^{\tied[3, 3]} + \bZ^{\main:2[3, 3]} - \bY^{\tied[3, 3]}$, all ribbon diagrams cancel (as in the $d = 2$ case) except for 9 copies of the situation illustrated in Figure~\ref{fig:low-rank-d-2}, each with an extra non-sided pair.
    Therefore, applying the same argument, we obtain the same bound, multiplied by 9 for the number of diagrams and by $\|\bM\|$ for the extra non-sided pair.
    Thus:
    \begin{equation}
        \|\bZ^{\tied[3, 3]} + \bZ^{\main:2[3, 3]} - \bY^{\tied[3, 3]}\| \leq 18\|\bM\|^5 \widetilde{\epsilon}_{\pow}(\bM),
        \label{eq:low-rank-pf-d3}
    \end{equation}
    and by triangle inequality, combining this with the previous inequality we find
    \begin{equation}
        \|\bZ^{\tied[3, 3]} + \bZ^{\main:2[3, 3]} - \bY^{[3, 3]}\| \leq 10^9 \|\bM\|^9 (\epsilon_{\offdiag}(\bM) + \widetilde{\epsilon}_{\pow}(\bM)).
    \end{equation}

    Combining \eqref{eq:low-rank-pf-Z-main-1-Z-tied}, \eqref{eq:low-rank-pf-d2}, and \eqref{eq:low-rank-pf-d3}, we have:
    \begin{align}
      \lambda_{\min}(\bZ^{\main:1} + \bZ^{\main:2})
      &\geq \lambda_{\min}(\bY) - \lambda_{\max}(\bZ^{\tied} + \bZ^{\main:2} - \bY) - \lambda_{\max}(\bZ^{\main:1} - \bZ^{\tied}) \\
      &\geq \lambda_{\min}(\bM)^3 - 10^{39}\|\bM\|^9\big(\widetilde{\epsilon}_{\pow}(\bM) + \epsilon_{\offdiag}(\bM)\big). \label{eq:low-rank-pdf-pos-term}
    \end{align}

    It remains to bound $\|\bZ^{\err}\|$.
    Here we again use a small improvement on the general strategy, this time that of Corollary~\ref{cor:partition-error-norm-bound} for bounding the inner error matrices $\bm\Delta^{(\sigma, \tau, T)}$, specific to degree~6.
    Recall that here $\sigma \in \Part([\ell]; \odd), \tau \in \Part([m]; \odd)$, and $T \in \sT(|\sigma| + |\tau|)$, for $0 \leq \ell, m \leq 3$.
    $\bm\Delta^{(\sigma, \tau, T)}$ is indexed by $\binom{[N]}{\ell} \times \binom{[N]}{m}$, and contains terms $\Delta^T(\bM; \cdot)$.
    We note that if $|\sigma| + |\tau| < 4$, then $\Delta^T(\bM; \bs) = 0$ identically, so $\bm\Delta^{(\sigma, \tau, T)} = \bm 0$ for any $T$ in this case.
    But, when $\ell, m \leq 3$, the only way this can be avoided and $|\sigma|$ and $|\tau|$ can have the same parity is in one of three cases: (1) $\sigma$ and $\tau$ both consist of two singletons, (2) $\sigma$ and $\tau$ both consist of three singletons, or (3) one has a single part of size 3 and the other consists of three singletons.
    In any case, the matrix $\bF$ from Proposition~\ref{prop:partition-error-factorization} may be viewed as \emph{sparse}, along whichever of the rows or columns is indexed by a partition consisting only of singletons.
    Therefore, our norm bound which naively bounded $\|\bF\| \leq \|\bF\|_F$ can be improved by applying Proposition~\ref{prop:rectangular-gershgorin}.
    Repeating the argument of Corollary~\ref{cor:partition-error-norm-bound} in this way, we find
    \begin{align}
      \|\bm\Delta^{(\sigma, \tau, T)}\|
      &\leq \|\bM\|^6\sqrt{\left(N^{-1}\epsilon_{\err}(\bM; 6) + N^{-3/2}\epsilon_{\err}(\bM; 6) \cdot N\right)^2} \nonumber \\
      &\leq 2\|\bM\|^6N^{-1/2}\epsilon_{\err}(\bM; 6), \text{ if } \sigma = \tau = \{\{1\}, \{2\}\}, \\
      \|\bm\Delta^{(\sigma, \tau, T)}\| &\leq \|\bM\|^6\sqrt{\left(N^{-3/2}\epsilon_{\err}(\bM; 6) + N^{-4/2}\epsilon_{\err}(\bM; 6) \cdot N + N^{-5/2}\epsilon_{\err}(\bM; 6) \cdot N^2\right)^2} \nonumber \\
      &\leq 3\|\bM\|^6N^{-1/2}\epsilon_{\err}(\bM; 6), \text{ if } \sigma = \tau = \{\{1\}, \{2\}, \{3\}\}, \\
      \|\bm\Delta^{(\sigma, \tau, T)}\|
      &\leq \sqrt{(N^{-3/2}\epsilon_{\err}(\bM; 6) \cdot N)(N^{-3/2}\epsilon_{\err}(\bM; 6) \cdot 3)} \nonumber \\
      &\leq 3\|\bM\|^6N^{-1/2}\epsilon_{\err}(\bM; 6), \text{ if } \sigma = \{\{1, 2, 3\}\}, \tau = \{\{1\}, \{2\}, \{3\}\} \text{ or vice-versa}.
    \end{align}

    In effect, we are able to scale $\epsilon_{\err}(\bM; 6)$ down by an additional factor of $N^{-1/2}$ using this technique.
    Following the remainder of the proof of Lemma~\ref{lem:Z-err-bound} with this improvement then gives
    \begin{equation}
        \|\bZ^{\err}\| \leq 10^{150}\|\bM\|^{15} N^{-1/2}\epsilon_{\err}(\bM; 6).
    \end{equation}
    Combining this with \eqref{eq:low-rank-pdf-pos-term} then gives
    \begin{align}
      \lambda_{\min}(\bZ)
      &\geq \lambda_{\min}(\bZ^{\main:1} + \bZ^{\main:2}) - \|\bZ^{\err}\| \nonumber \\
      &\geq \lambda_{\min}(\bM)^3 - 10^{150}\|\bM\|^{15}\big(\widetilde{\epsilon}_{\pow}(\bM) + \epsilon_{\offdiag}(\bM) + N^{-1/2}\epsilon_{\err}(\bM; 6)\big) \nonumber \\
      &\geq 0
    \end{align}
    by our assumption in the statement.
    Thus, $\bZ \succeq \bm 0$.
    
    \vspace{1em}
    
    \emph{Proof of \eqref{eq:low-rank-lifting-err-1}:}
    We consider the pseudomoment matrix of the left-hand side, written in the standard monomial basis.
    The term arising from $\tEE^{\id}$ is then simply the identity matrix.
    Let us write $\widehat{\bZ}^{\pairs:\main:2} \in \RR^{\binom{[N]}{\leq 3} \times \binom{[N]}{\leq 3}}$ for the matrix arising from $\tEE^{\pairs:\main:2}$ (the hat serving as a reminder that this is a pseudomoment in the standard basis rather than the harmonic basis).
    Then, $\widehat{\bZ}^{\pairs:\main:2}$ is blockwise a sum of CGM terms corresponding to diagrams $F$, each of which consists only of pairs and has at most one sided pair.
    By and Propositions~\ref{prop:norm-tensorization} and \ref{prop:cgm-generic-norm-bound}, the norm of such a CGM is at most $\|\bM^2\|^3\|\bM^2\|_F \leq \|\bM\|^6 \|\bM^2\|_F$, the operator norm terms accounting for the non-sided pairs and the Frobenius norm term for the sided pair.
    The total number of such CGM terms across all blocks is 14: $2! = 2$ from the block with $|S| = |T| = 2$, 3 from the block with $|S| = 1, |T| = 3$, 3 from the block with $|S| = 3, |T| = 1$, $3! = 6$ from the block with $|S| = |T| = 3$.
    Therefore, we have
    \begin{equation}
        \|2(1 - c)t_{\pow}\widehat{\bZ}^{\pairs:\main:2}\| \leq 28 t_{\pow}\|\bM\|^6\|\bM^2\|_F \leq \frac{c}{2}
    \end{equation}
    by our choice of $c$, concluding the proof of the claim.

    \vspace{1em}
    
    \emph{Proof of \eqref{eq:low-rank-lifting-err-2}:}
    We again consider the pseudomoment matrix of the left-hand side, written in the standard monomial basis, and the arising from $\tEE^{\id}$ is again the identity matrix.
    Let us write $\widehat{\bZ}^{\pairs:\err} \in \RR^{\binom{[N]}{\leq 3} \times \binom{[N]}{\leq 3}}$ for the matrix arising from $\tEE^{\pairs:\err}$.
    The entries of this matrix are as follows.
    First, $\widehat{Z}^{\pairs:\err}_{S,T} = 0$ whenever $|S| + |T| < 4$, $|S|$ and $|T|$ have different parity, or $|S \cap T| = 0$.
    Also, $\widehat{Z}^{\pairs:\err}$ is symmetric.
    The remaining entries are given, writing $\bH = \bM^2$ here to lighten the notation, by
    \begin{align}
      \widehat{Z}^{\pairs:\err}_{\{i\}\{i, j, k\}} = \widehat{Z}^{\pairs:\err}_{\{i,j\}\{i, k\}} &= -H_{jk} - 2H_{ij}H_{ik}, \\
      \widehat{Z}^{\pairs:\err}_{\{i, j, k\},\{i, \ell, m\}} &= -2H_{ij}H_{ik}H_{\ell m} - 2H_{ij}H_{i\ell}H_{km} - 2H_{ij}H_{im}H_{k\ell} \nonumber\\
                                                                                                 &\hspace{1.1em} - 2H_{ik}H_{i\ell}H_{jm} - 2H_{ik}H_{im}H_{j\ell} - 2H_{i\ell}H_{im}H_{jk} \nonumber \\
                                                                                                 &\hspace{2em} \text{ for } j, k, \ell, m \text{ distinct}, \\
      \widehat{Z}^{\pairs:\err}_{\{i,j, k\}\{i, j,\ell\}} &= -H_{k\ell} - 2H_{ij}^2H_{k\ell} - 2H_{ik}H_{i\ell} - 2H_{jk}H_{j\ell} \nonumber \\
      &\hspace{1.1em} - 4H_{ij}H_{ik}H_{j\ell}- 4H_{ij}H_{i\ell}H_{jk}. 
    \end{align}
    Accordingly, we find the entrywise bounds
    \begin{align}
      |\widehat{Z}^{\pairs:\err}_{\{i\}\{i,j,k\}}| &\leq 3\epsilon_{\offdiag}(\bM^2) \\
      |\widehat{Z}^{\pairs:\err}_{\{i, j\}\{i,k\}}| &\leq 3\epsilon_{\offdiag}(\bM^2) \text{ if } j, k \text{ distinct}, \\
      |\widehat{Z}^{\pairs:\err}_{\{i, j\}\{i,j\}}| &\leq 3 \\
      |\widehat{Z}^{\pairs:\err}_{\{i, j, k\},\{i, \ell, m\}}| &\leq 12\epsilon_{\offdiag}(\bM^2)^3 \text{ if } j, k, \ell, m \text{ distinct}, \\
      |\widehat{Z}^{\pairs:\err}_{\{i, j, k\},\{i, j, \ell\}}| &\leq 15\epsilon_{\offdiag}(\bM^2) \text{ if } k,\ell \text{ distinct}, \\
      |\widehat{Z}^{\pairs:\err}_{\{i, j, k\},\{i, j, k\}}| &\leq 15.
    \end{align}
    Let us write $\widehat{\bZ}^{\pairs:\err[\ell, m]}$ for the submatrix of $\widehat{\bZ}^{\pairs:\err}$ indexed by $|S| = \ell$ and $|T| = m$.
    By the Gershgorin circle theorem, we then find the bounds
    \begin{align}
      \|\widehat{\bZ}^{\pairs:\err[2, 2]}\| &\leq 3 + 6N\epsilon_{\offdiag}(\bM^2) \\
      \|\widehat{\bZ}^{\pairs:\err[3, 3]}\| &\leq 15 + 45 N \epsilon_{\offdiag}(\bM^2) + 36 N^2 \epsilon_{\offdiag}(\bM^2)^3,
                                              \intertext{and by the ``rectangular Gershgorin'' bound we prove in Proposition~\ref{prop:rectangular-gershgorin}, we find}
     \|\widehat{\bZ}^{\pairs:\err[1, 3]}\| &\leq \sqrt{3N^2\epsilon_{\offdiag}(\bM^2) \cdot 3\epsilon_{\offdiag}(\bM^2)} = 3N\epsilon_{\offdiag}(\bM^2).
    \end{align}
    Finally, by Proposition~\ref{prop:block-matrix-norm}, we combine these bounds to find
    \begin{align}
      \|\widehat{\bZ}^{\pairs:\err}\|
      &\leq \|\widehat{\bZ}^{\pairs:\err[2, 2]}\| + \|\widehat{\bZ}^{\pairs:\err[3, 3]}\| + 2\|\widehat{\bZ}^{\pairs:\err[2, 2]}\| \nonumber \\
      &\leq 18 + 57 N \epsilon_{\offdiag}(\bM^2) + 36N^2 \epsilon_{\offdiag}(\bM^2)^3. 
    \end{align}
    Therefore,
    \begin{equation}
        \|2(1 - c)t_{\pow}\widehat{\bZ}^{\pairs:\err}\| \leq 114 t_{\pow}(1 + N \epsilon_{\offdiag}(\bM^2) + N^2 \epsilon_{\offdiag}(\bM^2)^3) \leq \frac{c}{2}
    \end{equation}
    by the definition of $c$, concluding the proof of the claim.
\end{proof}

\subsection{Sherrington-Kirkpatrick Hamiltonian: Proof of Theorem~\ref{thm:appl-sk}}

\begin{proof}[Proof of Theorem~\ref{thm:appl-sk}]
    We start out following similar steps to Theorem~\ref{thm:appl-high-rank}.
    Fix some small $\delta > 0$, and let $V$ be the eigenspace spanned by the $\delta N$ leading eigenvectors of $\bW$.
    Let us define $r \colonequals \delta N$, following the notation from earlier.
    Let $\bg_1, \dots, \bg_r \in \sN(\bm 0, \bm I_N)$ be a collection of independent gaussian vectors coupled to $V$ such that $V = \mathsf{span}(\bg_1, \dots, \bg_r)$.
    Define $\bM^{(0)} \colonequals (1 - \alpha / 2)\frac{\delta^{-1}}{N}\sum_{i = 1}^r \bg_i\bg_i^{\top}$.
    Let $\bD$ be the diagonal matrix with $\diag(\bD) = \diag(\bM^{(0)})$, and define $\bM \colonequals \bm I - \bD + \bM^{(0)}$.

    We first control the entries of $\bD$.
    These are
    \begin{equation}
        D_{ii} = \left(1 - \frac{\alpha}{2}\right) \frac{\delta^{-1}}{N}\sum_{j = 1}^r (\bg_j)_i^2 = \left(1 - \frac{\alpha}{2}\right) \frac{1}{r}\sum_{j = 1}^r (\bg_j)_i^2.
    \end{equation}
    Applying the concentration inequality of Proposition~\ref{prop:chi-squared-conc} and a union bound, we find that with high probability $(1 - \alpha)\bm I_N \preceq \bD \preceq (1 - \alpha / 3) \bm I_N$.
    Since $\bM^{(0)} \succeq \bm 0$, on this event we have $\bM \succeq (\alpha / 3)\bm I_N$, and also $\|\bM - \bM^{(0)}\| = \|\bm I_N - \bD\| \leq \alpha$.

    We now show that $\bM$ satisfies the conditions of Theorem~\ref{thm:low-rank-lifting}.
    We note that, again writing $\bh_1, \dots, \bh_N \in \RR^r$ for the vectors $\bh_i = ((\bg_j)_i)_{j = 1}^N$, for $i \neq j$ we have $M_{ij} = (1 - \alpha / 2)\frac{1}{r}\langle \bh_i, \bh_j \rangle$.
    Following the same calculations as in Theorem~\ref{thm:appl-high-rank} (noting that we may follow them truly verbatim, since the setting is identical except for the constant in front of each $M_{ij}$ with $i \neq j$), we find that we have, with high probability
    \begin{align}
      \epsilon_{\offdiag}(\bM) &\leq K\sqrt{\frac{\log N}{N}}, \\
      \epsilon_{\err}(\bM; 6) &\leq K.
    \end{align}
    Here and in the rest of this proof, we adopt the convention that $K = K(\delta) > 0$ is a constant that may change from line to line.

    We now control $\epsilon_{\pow}(\bM)$.
    For $k \geq 3$, by the Gershgorin circle theorem and substituting in $\epsilon_{\offdiag}(\bM)$, we note that we have
    \begin{equation}
        \|\bM^{\circ k} - \bm I_N\| \leq N \left(K\sqrt{\frac{\log N}{N}}\right)^3 \leq K^3 \frac{\log^2 N}{\sqrt{N}}.
    \end{equation}
    We will take $t_{\pow} = (1 - \alpha / 2)^2\frac{1}{r}$.
    Let us define vectors $\ba_i \colonequals \isovec(\bh_i\bh_i^{\top} - \bm I_r)$.
    We note that
    \begin{equation}
        \langle \ba_i, \ba_j \rangle = \langle \bh_i, \bh_j \rangle^2 - \|\bh_i\|^2 - \|\bh_j\|^2 + r.
    \end{equation}
    Thus, writing $\bA$ for the matrix with the $\ba_i$ as its columns and $\bb$ for the vector with entries $\|\bh_i\|^2$, we have
    \begin{align}
      \bM^{\circ 2} - \bm I_N
      &= \frac{(1 - \alpha / 2)^2}{r^2} \bigg(\bA^{\top}\bA + \bb\one_N^{\top} + \one_N\bb^{\top} - r\one_N\one_N^{\top} - \diag(\bb)^2\bigg) \\
      &= \frac{(1 - \alpha / 2)^2}{r^2} \bigg(\bA^{\top}\bA + (\bb - r\one_N)\one_N^{\top} + \one_N(\bb - r\one_N)^{\top} + r\one_N\one_N^{\top} - \diag(\bb)^2\bigg),
    \end{align}
    whereby with our choice of $t_{\pow}$ we may bound
    \begin{align}
      &\|\bM^{\circ 2} - \bm I_N - t_{\pow}\one_N\one_N^{\top}\| \nonumber \\
      &\hspace{2cm} \leq \frac{(1 - \alpha / 2)^2}{r^2} \bigg(\|\bA^{\top}\bA - r^2\bm I_N\| + 2N^{1/2}\|\bb - r\one_N\|_2 + \| r^2\bm I_N - \diag(\bb)^2\| \bigg) \nonumber 
        \intertext{By Proposition~\ref{prop:chi-squared-conc}, with high probability for all $i \in [N]$ we have $|\|\bh\|_i^2 - r| \leq r^{3/4}$, on which event we have $\|\bb - r\one_N\|_2 \leq r^{3/4}N^{1/2}$ and $\|r^2\bm I_N - \diag(\bb)^2\| \leq 3r^{7/4}$, which leaves only the more interesting term,}
      &\hspace{2cm} \leq \left\|\frac{1}{r^2}\bA^{\top}\bA - \bm I_N\right\|^2 + (3 + \delta^{-1})r^{-1/4}
        \intertext{Finally, to bound the remaining quantity, we use the general result of Theorem~3.3 of \cite{ALPTJ-2011-RIPIndependentColumns} which controls low-rank projection-like matrices like $\bA\bA^{\top}$ with independent random columns, subject to modest control of their distribution.
        A very similar situation to ours is also treated as an application of the above result by \cite{FJ-2019-RIPKhatriRao}.
        Applied in our setting, this shows that the remaining term is at most $K\sqrt{\log^2 N / N}$ with high probability, whereby we find with high probability}
      &\hspace{2cm}\leq KN^{-1/4}.
    \end{align}
    Combining these results, we see that $\widetilde{\epsilon}(\bM, t_{\pow}) \leq KN^{-1/4}$.\footnote{I thank Ramon van Handel for suggesting that the line of work in \cite{ALPTJ-2011-RIPIndependentColumns} would be applicable here.} 
    Since by our earlier calculations $\lambda_{\min}(\bM) \geq \alpha / 3$, the condition of Theorem~\ref{thm:low-rank-lifting} will hold with high probability.

    Now, we consider the constant $c$ appearing in the theorem.
    Recall that we chose $t_{\pow} \leq KN^{-1}$.
    We have $\|\bM\| \leq 1 + \|\bM^{(0)}\| \leq K$ with high probability by Proposition~\ref{prop:rectangular-gaussian-conc}, and $\|\bM^2\|_F \leq \sqrt{ r \|\bM^2\|^2} \leq r^{1/2} \|\bM\|^2 \leq KN^{1/2}$.
    The only quantity it remains to control is $\epsilon_{\offdiag}(\bM^2)$.
    We have
    \begin{equation}
        \bM^2 = (\bm I - \bD)^2 + (\bm I - \bD)\bM^{(0)} + \bM^{(0)}(\bm I - \bD) + \bM^{(0)^2},
    \end{equation}
    and thus, for $i \neq j$,
    \begin{align}
      |(\bM^2)_{ij}| &\leq |2 - D_{ii} - D_{jj}| \, |M^{(0)}_{ij}| + |(\bM^{(0)^2})_{ij}| \\
      &\leq K\sqrt{\frac{\log N}{N}} + |(\bM^{(0)^2})_{ij}| 
    \end{align}
    with high probability for all $i \neq j$ by our previous reasoning.
    For the remaining term, let $\bG \in \RR^{N \times r}$ have the $\bg_i$ as its columns and the $\bh_i$ as its rows, so that $\bM^{(0)} = (1 - \alpha / 2) \frac{\delta^{-1}}{N}\bG\bG^{\top}$.
    Then, we may write
    \begin{align}
      (\bM^{(0)^2})_{ij}
      &= (1 - \alpha / 2)^2 \frac{\delta^{-2}}{N^2}\be_i^{\top}\bG\bG^{\top}\bG\bG^{\top}\be_j \\
      &= (1 - \alpha / 2)^2\delta^{-2}\frac{1}{N^2} \bh_i^{\top}\left(\sum_{k = 1}^N \bh_k\bh_k^{\top}\right)\bh_j \\
      &= (1 - \alpha / 2)^2\delta^{-2}\frac{1}{N^2} \left((\|\bh_i\|^2_2 + \|\bh_j\|_2^2) \langle \bh_i, \bh_j \rangle + \bh_i^{\top}\left(\sum_{k \in[N] \setminus \{i, j\}}\bh_k\bh_k^{\top}\right)\bh_j\right).
    \end{align}
    In the last factor, by our previous reasoning with high probability the first summand is, in magnitude, at most $KN^{3/2}\log N$ for all $i \neq j$.
    In the second summand, note that for each fixed $i \neq j$, the vectors $\bh_i, \bh_j$, and the matrix $\sum_{k \in [N] \setminus \{i, j\}} \bh_k\bh_k^{\top} \equalscolon \bB^{\sim \{i, j\}}$ are independent.
    By Proposition~\ref{prop:rectangular-gaussian-conc}, for all $i \neq j$, $\|\bB^{\sim \{i, j\}}\| \leq KN$, and therefore also $\|\bB^{\sim\{i, j\}}\|_F^2 \leq N^3$, with high probability.
    Conditioning on the value of this matrix and applying the Hanson-Wright inequality \cite{RV-2013-HansonWright} for a fixed $i \neq j$ then shows, after a union bound, that with high probability, $|\bh_i^{\top} \bB^{\sim\{i, j\}} \bh_j| \leq KN^{13/8}$ for all $i \neq j$ (indeed, this will hold with any exponent larger than $3/2$).
    Thus we find $\epsilon_{\offdiag}(\bM^2) \leq KN^{-3/8}$ with high probability (this, in turn, will hold with any exponent smaller than $1/2$, which coincides with our expectation that we should have $\epsilon_{\offdiag}(\bM^2) \lesssim K \epsilon_{\offdiag}(\bM)$ for $\bM$ close to a rescaled random projection matrix).

    Therefore, the constant $c$ from the statement of Theorem~\ref{thm:low-rank-lifting} will satisfy
    \begin{equation}
        c \leq \frac{K}{N}\left(\sqrt{N} + N \cdot N^{-3/8} + N^2 \cdot N^{-9/8}\right) \leq KN^{-1/8}.
    \end{equation}
    The theorem produces $\tEE$ a degree 6 pseudoexpectation with $\tEE[\bx\bx^{\top}] = (1 - c)\bM + c\bm I_N$.
    Suppose we have chosen $\delta$ small enough that, with high probability, the largest $\delta N$ eigenvalues of $\bW$ are at least $2 - \epsilon / 2$.
    Then we have, with high probability,
    \begin{align}
      N^{-1}\SOS_{6}(\bW)
      &\geq N^{-1}\langle \bW, (1 - c)\bM + c\bm I_N \rangle \\
      &\geq (1 - c)N^{-1}\langle \bW, \bM^{(0)} \rangle - \alpha \|\bW\| - c|\Tr(\bW)|
        \intertext{and since we have, with high probability, $\|\bW\| \leq 2 + \alpha$, $|\Tr(\bW)| \leq \log N$, and $\langle \bW, \bM^{(0)} \rangle \geq \lambda_{r}(\bM^{(0)}) \lambda_{\delta N}(\bW) \geq (1 - \alpha)(2 - \epsilon / 2)$, we find on this event that}
      &\geq (1 - KN^{-1/8})(1 - \alpha)(2 - \epsilon / 2) - \alpha(2 + \alpha) - KN^{-1/8}\log N 
        \intertext{and choosing $\alpha$ sufficiently small, depending on our choice of $\delta$ above, we will have for sufficiently large $N$}
        &\geq 2 - \epsilon,
    \end{align}
    completing the proof.
\end{proof}

\section*{Acknowledgements}

I thank Afonso Bandeira for many discussions and comments on an early version of the manuscript, Alex Wein for helpful discussions about hypercontractivity and tensor networks, and Ramon van Handel and Aida Khajavirad for suggesting references mentioned in the text.

\bibliographystyle{alpha}
\bibliography{main}

\clearpage
\appendix

\section{Linear Algebra Tools}

The following matrix norm inequality, a simple application of the Gershgorin circle theorem, is quite effective for sparse matrices, as we describe after the statement.
Recall that the $\infty$-norm of a matrix is defined as $\|\bA\|_{\infty} = \max_{\bx \neq 0} \|\bA\bx\|_{\infty} / \|\bx\|_{\infty} = \max_i \sum_j |A_{ij}|$.
\begin{proposition}
    \label{prop:rectangular-gershgorin}
    Let $\bA \in \RR^{m \times n}$.
    Then,
    \begin{equation}
        \|\bA\| \leq \sqrt{\|\bA\|_{\infty}\|\bA^{\top}\|_{\infty}} = \sqrt{\left(\max_{i = 1}^m \sum_{j = 1}^n |A_{ij}|\right)\left(\max_{j = 1}^n \sum_{i = 1}^m |A_{ij}|\right)}.
    \end{equation}
\end{proposition}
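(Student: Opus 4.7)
The plan is to reduce to the square, symmetric case and apply the ordinary Gershgorin circle theorem, which explains the ``rectangular Gershgorin'' name. Since $\|\bA\|^2 = \|\bA^\top \bA\|$ and $\bA^\top\bA$ is symmetric (indeed, positive semidefinite), Gershgorin gives $\|\bA^\top\bA\| \leq \|\bA^\top \bA\|_\infty$, so it suffices to show $\|\bA^\top \bA\|_\infty \leq \|\bA\|_\infty \|\bA^\top\|_\infty$. This is immediate from submultiplicativity of the induced $\ell^\infty \to \ell^\infty$ norm, which is itself a direct application of the triangle inequality.

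Alternatively, the same inequality can be obtained by a one-line Cauchy-Schwarz argument, which may be cleaner to present since it avoids invoking two separate facts. Namely, for any $\bx \in \RR^n$, one writes
\begin{equation}
    \|\bA\bx\|_2^2 = \sum_{i=1}^m \Bigl|\sum_{j=1}^n A_{ij} x_j\Bigr|^2 \leq \sum_{i=1}^m \Bigl(\sum_{j=1}^n |A_{ij}|\Bigr)\Bigl(\sum_{j=1}^n |A_{ij}|\,x_j^2\Bigr),
\end{equation}
where the inequality splits $|A_{ij}| = |A_{ij}|^{1/2}\cdot |A_{ij}|^{1/2}$ and applies Cauchy-Schwarz in the $j$ index. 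Bounding the first parenthesized factor by $\|\bA\|_\infty$ and then swapping the order of summation yields
\begin{equation}
    \|\bA\bx\|_2^2 \leq \|\bA\|_\infty \sum_{j=1}^n x_j^2 \sum_{i=1}^m |A_{ij}| \leq \|\bA\|_\infty \|\bA^\top\|_\infty \|\bx\|_2^2,
\end{equation}
which is the stated bound after taking square roots and a supremum over $\bx$.

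There is no real obstacle here; both routes are short. I would present the Cauchy-Schwarz version for self-containedness and brevity, and note parenthetically the Gershgorin interpretation that motivates the name of the proposition.
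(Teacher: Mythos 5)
Both of your routes are correct, and your first one is essentially the paper's own proof—the paper uses $\|\bA\|^{2} = \lambda_{\max}(\bA\bA^{\top})$ whereas you work with $\bA^{\top}\bA$, but otherwise the steps (Gershgorin on a PSD Gram matrix, then submultiplicativity of the induced $\ell^{\infty}$ operator norm) are identical. Your Cauchy--Schwarz argument is a genuinely different route: it is the Schur test specialized to the $L^{1}$/$L^{\infty}$ weights being the row and column absolute sums, and it has the advantage of being entirely self-contained, avoiding any appeal to Gershgorin or to the fact that $\|\cdot\|_{\infty}$ is an induced (hence submultiplicative) norm. The paper's version is arguably more transparent about \emph{why} the bound is called a Gershgorin-type bound, which is pedagogically relevant given how it's named in the text; your version is shorter if one is willing to give up that motivation. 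Either would be acceptable; if you present the Cauchy--Schwarz version, keeping the parenthetical remark about the Gershgorin interpretation, as you propose, is a good compromise.
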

\begin{proof}
    We have $\|\bA\| = \sigma_{\max}(\bA) = \sqrt{\lambda_{\max}(\bA\bA^{\top})}$.
    By the Gershgorin circle theorem, we may bound $\lambda_{\max}(\bA\bA^{\top}) \leq \|\bA\bA^{\top}\|_{\infty}$.
    Since the $\infty$-norm on matrices is induced as an operator norm by the $\ell^{\infty}$ vector norm, it is submultiplicative, whereby $\|\bA\bA^{\top}\|_{\infty} \leq \|\bA\|_{\infty} \|\bA^{\top}\|_{\infty}$.
\end{proof}
\begin{remark}
    This inequality is tight for any $\bA \in \{0, 1\}^{m \times n}$ where every row has exactly one 1, but every column can have arbitrary numbers of 1's.
    The extremes in this class of matrices are the identity on the one hand, and $\bm 1 \bm e_k^\top$ on the other (for a standard basis vector $\bm e_k$).
\end{remark}

The following other relative of the Gershgorin circle theorem gives a straightforward bound on block matrix norms.
\begin{proposition}
    \label{prop:block-matrix-norm}
    Suppose $\bA \in \RR^{m \times n}$ is partitioned into blocks $\bA^{[k, \ell]} \in \RR^{m_k \times n_{\ell}}$ where $\sum m_k = m$ and $\sum n_{\ell} = n$.
    Then, $\|\bA\| \leq \sum_{k, \ell} \|\bA^{[k, \ell]}\|$.
\end{proposition}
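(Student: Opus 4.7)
The plan is to express $\bA$ as a sum of ``singly-blocked'' matrices and invoke the triangle inequality for the operator norm. Concretely, for each pair $(k,\ell)$, let $\widetilde{\bA}^{(k,\ell)} \in \RR^{m \times n}$ denote the matrix that agrees with $\bA$ on the block in position $(k,\ell)$ and is zero elsewhere. Then $\bA = \sum_{k,\ell} \widetilde{\bA}^{(k,\ell)}$ by construction, so the triangle inequality for the operator norm yields
\begin{equation}
    \|\bA\| \leq \sum_{k, \ell} \|\widetilde{\bA}^{(k,\ell)}\|.
\end{equation}

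The only remaining step is to identify $\|\widetilde{\bA}^{(k,\ell)}\|$ with $\|\bA^{[k,\ell]}\|$. This is immediate from two observations: first, any input vector $\bx \in \RR^n$ may be replaced by its projection onto the coordinate subspace corresponding to the $\ell$th column block without decreasing $\|\widetilde{\bA}^{(k,\ell)}\bx\|$, since the other coordinates are annihilated; second, the output always lies in the coordinate subspace of the $k$th row block, so the image's Euclidean norm equals that of its restriction to those coordinates. Thus the singular values of $\widetilde{\bA}^{(k,\ell)}$ are exactly those of $\bA^{[k,\ell]}$ (padded by zeros), giving equality of operator norms.

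There is really no obstacle here — this is a one-line argument once the decomposition $\bA = \sum_{k,\ell} \widetilde{\bA}^{(k,\ell)}$ is written down. The only potentially finicky point is spelling out the elementary fact that padding a matrix by zero rows and zero columns preserves its largest singular value, but this follows from the variational characterization $\|\bM\| = \sup_{\|\bx\| = \|\by\| = 1} \by^{\top}\bM\bx$ restricted to the nontrivial coordinate blocks. Alternatively, one could give a one-shot proof by picking unit vectors $\bx, \by$ attaining $\|\bA\|$, splitting them as $\bx = (\bx_\ell)_\ell$, $\by = (\by_k)_k$, and estimating
\[ |\by^{\top}\bA\bx| \leq \sum_{k,\ell} \|\by_k\| \|\bA^{[k,\ell]}\| \|\bx_\ell\| \leq \sum_{k,\ell} \|\bA^{[k,\ell]}\|, \]
using $\|\bx_\ell\|, \|\by_k\| \leq 1$; I expect to present whichever version reads more cleanly.
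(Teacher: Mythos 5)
Your proposal is correct; there is no gap in either variant. Your fallback ``one-shot'' estimate is essentially the paper's proof: the paper also splits unit vectors into blocks and bounds the bilinear form by $\sum_{k,\ell}\|\bx_k\|\,\|\by_\ell\|\,\|\bA^{[k,\ell]}\|$, the only difference being that it routes through the intermediate matrix $\bA'$ of block norms (showing $\|\bA\|\leq\|\bA'\|$ and then bounding $\|\bA'\|$ by the sum of its entries), whereas you shortcut this by using $\|\bx_\ell\|,\|\by_k\|\leq 1$ directly, which is valid since the blocks of a unit vector each have norm at most $1$. Your primary route --- writing $\bA=\sum_{k,\ell}\widetilde{\bA}^{(k,\ell)}$ with zero-padded blocks, applying the triangle inequality, and noting that padding by zero rows and columns preserves the operator norm --- is a genuinely different (if equally elementary) decomposition-based argument; it buys a slightly more modular statement (each term is handled by a standalone fact about zero-padding) at the cost of having to articulate that fact, while the paper's bilinear-form argument keeps everything in one display. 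Either version is a faithful and complete proof of the proposition.
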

\begin{proof}
    If $\bx \in \RR^m$ and $\by \in \RR^n$ are partitioned into vectors $\bx_k$ and $\by_{\ell}$ with compatible sizes to the blocks of $\bA$, then we have
    \begin{equation}
        |\bx^{\top}\bA \by| = \left|\sum_{k, \ell} \bx_k^{\top}\bA^{[k, \ell]} \by_{\ell}\right| \leq \sum_{k, \ell}\|\bx_k\|_2 \|\by_{\ell}\|_2 \|\bA^{[k, \ell]}\|.
    \end{equation}
    Thus, letting $\bA^{\prime}$ be the matrix of norms of the blocks of $\bA$, we have $\|\bA\| \leq \|\bA^{\prime}\|$.
    The result then follows since if $\|\bv\| = \|\bw\| = 1$, then $\bv^{\top}\bA^{\prime}\bw \leq (\max_k |v_k|)(\max_{\ell}|w_{\ell}|)(\sum_{k,\ell}|A^{\prime}_{k\ell}|) \leq \sum_{k,\ell}|A^{\prime}_{k\ell}|$.
\end{proof}

\section{Probability Tools}

We use the following standard concentration results on gaussian random vectors and matrices repeatedly in our applications.
\begin{proposition}[Lemma 1 of \cite{LM-2000-ChiSquaredConc}]
    \label{prop:chi-squared-conc}
    Let $\bg \sim \sN(\bm 0, \bm I_n)$.
    Then,
    \begin{equation}
        \PP\left[ \big|\|\bg\|_2 - \sqrt{n}\big| \geq t \right] \leq 2\exp\left(-\frac{t^2}{2}\right).
    \end{equation}
\end{proposition}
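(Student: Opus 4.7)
The plan is to use the Borell--Tsirelson--Ibragimov--Sudakov (Gaussian Lipschitz) concentration inequality, which is the cleanest route to the stated bound with its sharp constant $1/2$ in the exponent.

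First, I would verify that $f(\bg) = \|\bg\|_2$ is a $1$-Lipschitz function of $\bg$ with respect to the Euclidean norm, which is immediate from the reverse triangle inequality. Applying Borell--TIS to this $f$ yields
\begin{equation*}
\PP\big[\,\big|\|\bg\|_2 - m_n\big| \geq t\,\big] \leq 2\exp(-t^2/2),
\end{equation*}
where $m_n$ may be taken as either $\EE\|\bg\|_2$ or the median of $\|\bg\|_2$.

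Second, I would replace $m_n$ by $\sqrt{n}$. By Jensen's inequality, $\EE\|\bg\|_2 \leq \sqrt{\EE \|\bg\|_2^2} = \sqrt{n}$, so for the upper tail the event $\{\|\bg\|_2 \geq \sqrt{n} + t\}$ is contained in $\{\|\bg\|_2 \geq \EE\|\bg\|_2 + t\}$ and inherits the bound $e^{-t^2/2}$ without loss. For the lower tail one must bound the gap $\sqrt{n} - \EE\|\bg\|_2$; a short computation using $\EE\|\bg\|_2 = \sqrt{2}\,\Gamma((n+1)/2)/\Gamma(n/2)$ combined with Stirling gives $\sqrt{n} - \EE\|\bg\|_2 = O(1/\sqrt{n})$, which is negligible for the relevant scales of $t$ and is absorbed by the factor $2$ in front of the bound (or by working with the median, whose gap to $\sqrt{n}$ is similarly controlled).

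The main obstacle, and the only place that requires some care, is matching the constant $1/2$ in the exponent exactly for the lower tail, since the naive replacement of $\EE\|\bg\|_2$ by $\sqrt{n}$ produces a shift in the argument of the exponential rather than a multiplicative constant. An alternative route that sidesteps this issue is the direct Chernoff argument on the chi-squared moment generating function $\EE \exp(\lambda \|\bg\|_2^2) = (1-2\lambda)^{-n/2}$ for $\lambda \in (0, 1/2)$: after using the expansion $-\frac{n}{2}\log(1-2\lambda) - \lambda n = \sum_{k \geq 2} n(2\lambda)^k/(2k)$ and optimizing in $\lambda$, one obtains the Laurent--Massart bounds $\PP[\|\bg\|_2^2 \geq n + 2\sqrt{nu} + 2u] \leq e^{-u}$ and $\PP[\|\bg\|_2^2 \leq n - 2\sqrt{nu}] \leq e^{-u}$, which convert to the statement by the substitution $u = t^2/2$ together with the elementary inclusions $\{\|\bg\|_2 \geq \sqrt{n}+t\} \subseteq \{\|\bg\|_2^2 \geq n + \sqrt{2n}\,t + t^2\}$ (upper tail) and, for $t \leq (2-\sqrt{2})\sqrt{n}$, $\{\|\bg\|_2 \leq \sqrt{n}-t\} \subseteq \{\|\bg\|_2^2 \leq n - \sqrt{2n}\,t\}$ (lower tail, with the complementary band $t > (2-\sqrt{2})\sqrt{n}$ absorbed into the constant $2$).
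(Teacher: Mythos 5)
The paper never proves this proposition---it is imported directly as Lemma~1 of Laurent--Massart---so your attempt is judged on its own terms; the overall strategy is standard and nearly complete, but both of your routes lean on an ``absorbed by the factor $2$'' step that is not a valid move. The factor $2$ in the statement is exactly the union bound over the two tails: once the upper tail has used up $e^{-t^2/2}$, you must show the lower tail is itself at most $e^{-t^2/2}$, and neither route does this. In the Borell--TIS route (in the two-sided form you quote, which gives prefactor $1$ per tail), writing $\delta_n = \sqrt{n} - \EE\|\bg\|_2 > 0$, the lower-tail bound you actually get is $e^{-(t-\delta_n)^2/2} = e^{-t^2/2}\,e^{t\delta_n - \delta_n^2/2}$, which exceeds $e^{-t^2/2}$ for every $t > \delta_n/2$, so the sum of your two bounds strictly exceeds $2e^{-t^2/2}$ (by as much as a factor $\approx 1.3$ when $t$ is of order $\sqrt{n}$); passing to the median does not help by itself, since the median of $\|\bg\|_2$ also lies below $\sqrt{n}$. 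In the Laurent--Massart route the same problem reappears as the uncovered band $(2-\sqrt{2})\sqrt{n} < t \le \sqrt{n}$, where your inclusion fails and no bound is supplied; again the constant $2$ has no slack left to absorb it.

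The gap is genuine but short to fill, because the lower tail is in fact better than what you need: for $0 < t \le \sqrt{n}$ set $s = t/\sqrt{n}$ and use the Chernoff bound $\PP[\chi^2_n \le an] \le \exp\left(-\tfrac{n}{2}(a - 1 - \log a)\right)$ for $a \in (0,1)$, with $a = (1-s)^2$; since $-\log(1-s) \ge s + s^2/2$, the exponent is at least $\tfrac{n}{2}\left(s^2 - 2s + 2s + s^2\right) = ns^2 = t^2$, so $\PP[\|\bg\|_2 \le \sqrt{n} - t] \le e^{-t^2} \le e^{-t^2/2}$ on the entire range, and for $t > \sqrt{n}$ the lower-tail event is empty. (Alternatively, the isoperimetric form $\PP[\|\bg\|_2 \le m_n - u] \le \Phi(-u) \le \tfrac12 e^{-u^2/2}$ around the median $m_n$, combined with a quantitative bound such as $\sqrt{n} - m_n \le \tfrac{1}{3\sqrt{n}}$, does leave enough slack---but that is a sharper input than the two-sided inequality you invoked.) With either repair in place of the absorption claims, your argument gives the stated bound; note also that your Chernoff computation is essentially a re-derivation of the very Laurent--Massart lemma the paper cites.
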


\begin{proposition}[Corollary 7.3.3 and Exercise 7.3.4 of \cite{Vershynin-2018-HDP}]
    \label{prop:rectangular-gaussian-conc}
    Let $\bG \in \RR^{m \times n}$ with $m \geq n$ have i.i.d.\ entries distributed as $\sN(0, 1)$.
    Let $\sigma_{1}(\bG) \geq \cdots \geq \sigma_n(\bG) \geq 0$ denote the ordered singular values of $\bG$.
    Then,
    \begin{equation}
        \PP\left[ \sqrt{m} - \sqrt{n} - t \leq \sigma_n(\bG) \leq \sigma_1(\bG) \leq \sqrt{m} + \sqrt{n} + t\right] \geq 1 - 4\exp(-Ct^2)
    \end{equation}
    for a universal constant $C > 0$.
\end{proposition}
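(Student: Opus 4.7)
The plan is to prove the two-sided deviation bound via the standard combination of (i) a mean bound for the extreme singular values of a Gaussian matrix, and (ii) Gaussian concentration of measure applied to the singular values viewed as Lipschitz functions of the matrix entries.

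First I would establish the mean estimates
\begin{equation}
    \sqrt{m} - \sqrt{n} \;\leq\; \EE[\sigma_n(\bG)] \;\leq\; \EE[\sigma_1(\bG)] \;\leq\; \sqrt{m} + \sqrt{n}.
\end{equation}
The middle inequality is trivial. For the outer bounds I would write the extreme singular values as suprema/infima of a bilinear Gaussian process,
\begin{equation}
    \sigma_1(\bG) = \sup_{\bu \in \SS^{m-1},\, \bv \in \SS^{n-1}} \bu^{\top}\bG\bv, \qquad \sigma_n(\bG) = \inf_{\bv \in \SS^{n-1}} \sup_{\bu \in \SS^{m-1}} \bu^{\top}\bG\bv,
\end{equation}
and apply Gordon's comparison inequality, comparing this process with the simpler Gaussian process $\bu^{\top}\bg + \bv^{\top}\bh$ where $\bg \sim \sN(\bm 0, \bm I_m)$ and $\bh \sim \sN(\bm 0, \bm I_n)$ are independent. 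The increments of the latter dominate those of the former in the sense required by Gordon, and the supremum (resp.\ inf-sup) of the comparison process is $\|\bg\| + \|\bh\|$ (resp.\ $\|\bg\| - \|\bh\|$), whose expectation is bounded by $\sqrt{m} \pm \sqrt{n}$ via Jensen.

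Second, I would invoke Gaussian concentration. The key deterministic fact is that $\bG \mapsto \sigma_k(\bG)$ is $1$-Lipschitz from $(\RR^{m \times n}, \|\cdot\|_F)$ to $\RR$ for every $k$; this is immediate from Weyl's perturbation inequality $|\sigma_k(\bG) - \sigma_k(\bG')| \leq \|\bG - \bG'\|_{\mathrm{op}} \leq \|\bG - \bG'\|_F$. Since the entries of $\bG$ are i.i.d.\ standard Gaussians, the standard Gaussian concentration inequality for $1$-Lipschitz functions yields
\begin{equation}
    \PP\!\left[\, |\sigma_k(\bG) - \EE \sigma_k(\bG)| \geq t \,\right] \;\leq\; 2\exp(-t^2/2),
\end{equation}
for any $k$, in particular for $k = 1$ and $k = n$.

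Combining the two ingredients by a union bound gives the stated event with probability at least $1 - 4\exp(-t^2/2)$, so the constant $C = 1/2$ works. The only nontrivial step is Gordon's comparison inequality used in the mean bound; the rest is bookkeeping. Since this is a well-documented textbook result, I would simply cite \cite{Vershynin-2018-HDP} rather than reproduce the full argument.
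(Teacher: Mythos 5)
Your proposal is correct and is exactly the standard argument underlying the cited result: the paper offers no proof of this proposition, simply quoting Corollary 7.3.3 and Exercise 7.3.4 of \cite{Vershynin-2018-HDP}, and your sketch (Gordon's comparison for the means, $1$-Lipschitzness of $\sigma_k$ via Weyl plus Gaussian concentration, then a union bound) is the textbook route, with $C = 1/2$ indeed admissible. One small caveat: Jensen alone gives only the upper bounds $\EE\|\bg\| \leq \sqrt{m}$ and $\EE\|\bh\| \leq \sqrt{n}$, which suffices for $\EE\sigma_1(\bG) \leq \sqrt{m} + \sqrt{n}$ but not for $\EE\|\bg\| - \EE\|\bh\| \geq \sqrt{m} - \sqrt{n}$; for the lower mean bound on $\sigma_n(\bG)$ you additionally need, e.g., that $k \mapsto \sqrt{k} - \EE\|\bg_k\|$ is nonincreasing (or an equivalent explicit estimate on $\EE\|\bg_k\|$), which is the content of the accompanying exercise in the cited source.
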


\section{Combinatorial Bounds}

We prove several coarse bounds on combinatorial quantities arising in our arguments.
We begin with bounds on the coefficients of forests that arise in our calculations.
The only tool required for these is that $(a + b)! \geq a! \, b!$, which follows from observing that $\binom{a + b}{a} \geq 1$.
\begin{proposition}
    \label{prop:bound-mu}
    For any $F \in \sF(d)$, $|\mu(F)| \leq (3d)!$.
\end{proposition}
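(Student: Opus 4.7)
The plan is to reduce the product $|\mu(F)| = \prod_{v \in V^{\square}}(\deg(v) - 2)!$ to a single factorial via the inequality $a!\, b! \leq (a+b)!$, and then bound the resulting sum of degrees using the handshake lemma together with the forest structure.

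Concretely, I would proceed as follows. First, iterating $a!\, b! \leq (a+b)!$ over the internal vertices gives
\begin{equation}
    |\mu(F)| = \prod_{v \in V^{\square}}(\deg(v) - 2)! \; \leq \; \Bigg(\sum_{v \in V^{\square}}(\deg(v) - 2)\Bigg)!.
\end{equation}
Second, I would bound the exponent. Every leaf of a good forest has degree exactly $1$ (leaves are not isolated, and in a forest a non-internal vertex adjacent to anything has degree $1$), so the handshake lemma gives
\begin{equation}
    \sum_{v \in V^{\square}} \deg(v) \; = \; 2|E(F)| - d.
\end{equation}
Since $F$ is a forest on $d + |V^{\square}|$ vertices, $|E(F)| \leq d + |V^{\square}| - 1$. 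Substituting,
\begin{equation}
    \sum_{v \in V^{\square}}(\deg(v) - 2) \; = \; 2|E(F)| - d - 2|V^{\square}| \; \leq \; d - 2 \; \leq \; 3d.
\end{equation}
Combining the two inequalities yields $|\mu(F)| \leq (3d)!$, as required.

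There is no real obstacle here; the only choices to make are how much to sacrifice in the final factorial (I would retain the sharper $(d-2)!$ at intermediate stages in case a stronger bound were ever useful elsewhere, but collapse to $(3d)!$ at the end to match the statement). One minor point worth noting in the writeup is that every $\bullet$ vertex really does have degree exactly $1$: leaves of a forest have degree at most $1$, and the ``no isolated vertex'' condition in Definition~\ref{def:good-forest} rules out degree $0$, so the handshake computation above is valid with no further case analysis.
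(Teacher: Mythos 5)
Your argument is correct and is close in spirit to the paper's proof, but you do the bookkeeping more carefully and as a result get a strictly stronger intermediate bound. Both you and the paper start from the same factorial inequality, iterating $a!\,b! \leq (a+b)!$ to move the product into a single factorial. The paper then crudely replaces $\sum_{v \in V^{\square}}(\deg(v)-2)$ by $\sum_{v \in V}\deg(v) = 2|E|$ and invokes Corollary~\ref{cor:sF-vertex-edge-bound} ($|E| \leq \tfrac{3}{2}d$, itself a consequence of Proposition~\ref{prop:tree-vertex-bound}) to conclude $(3d)!$. You instead keep the $-2$ per internal vertex, use the handshake lemma plus the raw forest bound $|E| \leq |V| - 1$, and land on $(d-2)!$, which you then slacken to $(3d)!$. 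Your route is more elementary (no dependence on the auxiliary vertex-count corollary) and in fact sharper; the paper's is shorter to write because the corollary is already available from other parts of the argument.

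One small edge case you should dispose of explicitly: for $d = 0$ the unique $F \in \sF(0)$ is the empty forest, for which $|V| = |E| = 0$ and the inequality $|E| \leq |V| - 1$ you invoke is false. There the product over $V^{\square}$ is empty so $|\mu(F)| = 1 = (3\cdot 0)!$ directly, and the same observation handles any odd $d$, where $\sF(d) = \emptyset$. For even $d \geq 2$ the forest is nonempty and your chain of inequalities goes through without modification.
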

\begin{proof}
    We have
    \begin{align}
      |\mu(F)|
      &= \prod_{v \in V^{\square}} (\deg(v) - 2)! \nonumber \\
      &\leq \left(\sum_{v \in V} \deg(v)\right)! \nonumber \\
      &\leq \left(2 \cdot \frac{3}{2}d\right)!,
    \end{align}
    the last step following by Corollary~\ref{cor:sF-vertex-edge-bound}.
\end{proof}

\begin{proposition}
    \label{prop:bound-xi}
    For any $F \in \sF(d, d)$ a balanced bowtie forest ribbon diagram, $|\xi(F)| \leq (d - 1)! \, d! \leq d^{2d}$.
\end{proposition}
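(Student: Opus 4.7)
The plan is to simply read off the closed form for $\xi(F)$ from Lemma~\ref{lem:stretched-ribbon-combinatorial} and apply the elementary factorial inequality $(a+b)! \geq a!\, b!$ (which underlies Proposition~\ref{prop:bound-mu}) to collapse the product over components.

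First I would write $F$ as a disjoint union of its balanced bowtie components, with the $i$-th component having $2k_i$ leaves, so that $\sum_i k_i = d$ (the balanced condition forces the total number of leaves on each side to be $d$). Let $n$ denote the number of components. By Lemma~\ref{lem:stretched-ribbon-combinatorial},
\begin{equation}
    |\xi(F)| = \prod_{i=1}^{n}(k_i - 1)! \, k_i!.
\end{equation}

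Next I would apply the inequality $(a+b)! \geq a!\, b!$ iteratively to the two products separately. This yields
\begin{equation}
    \prod_{i=1}^{n} (k_i - 1)! \leq \Bigl(\sum_{i=1}^{n}(k_i - 1)\Bigr)! = (d - n)!
    \qquad \text{and} \qquad
    \prod_{i=1}^{n} k_i! \leq \Bigl(\sum_{i=1}^{n} k_i\Bigr)! = d!.
\end{equation}
For $n \geq 1$ (which holds whenever $d \geq 1$, since $F$ is balanced and hence has at least one component in this case) we get $(d - n)! \leq (d-1)!$, giving $|\xi(F)| \leq (d-1)!\, d!$. The degenerate case $d = 0$ gives $\xi(F) = 1$ by the empty product convention, and the bound still holds.

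Finally, the bound $(d-1)!\, d! \leq d^{2d}$ follows from the crude estimates $(d-1)! \leq d^{d-1}$ and $d! \leq d^d$. There is no real obstacle here; the only thing worth being careful about is the degenerate $d=0$ case and the fact that the balanced condition forces all $k_i$ to come from a \emph{single} total of $d$ (rather than $d$ on each side being counted separately), which is guaranteed by Definition~\ref{def:bowtie-forest-ribbon} together with the balanced hypothesis.
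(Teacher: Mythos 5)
Your proof is correct and follows essentially the same route as the paper: expand $\xi(F)$ as the product over components from Lemma~\ref{lem:stretched-ribbon-combinatorial}, apply $(a+b)! \geq a!\,b!$ separately to the $(k_i-1)!$ and $k_i!$ factors to get $(d - |\conn(F)|)!\,d!$, and bound $(d - |\conn(F)|)! \leq (d-1)!$. Your additional remarks on the $d=0$ case and the closing estimate $(d-1)!\,d! \leq d^{2d}$ are fine but not needed.
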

\begin{proof}
    We have
    \begin{align}
      |\xi(F)|
      &= \prod_{\substack{C \in \conn(F) \\C \text{ balanced bowtie} \\ \text{on } 2k \text{ leaves}}} (k - 1)! k! \nonumber \\
      &\leq \Bigg(\sum_{\substack{C \in \conn(F) \\C \text{ balanced bowtie} \\ \text{on } 2k \text{ leaves}}} (k - 1)\Bigg)! \, \Bigg(\sum_{\substack{C \in \conn(F) \\C \text{ balanced bowtie} \\ \text{on } 2k \text{ leaves}}} k\Bigg)! \nonumber \\
      &= (d - |\conn(F)|)!\,  d! \nonumber \\
      &\leq (d - 1)!\, d!,
    \end{align}
    completing the proof.
\end{proof}

We next give some bounds on the cardinalities of various sets of combinatorial objects arising in our analysis.

\begin{proposition}
    \label{prop:size-Part}
    $|\Part([d])| \leq d^d$.
\end{proposition}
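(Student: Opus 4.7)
The plan is to exhibit an injection from $\Part([d])$ into the set of functions $[d] \to [d]$, whose cardinality is exactly $d^d$. For any partition $\pi \in \Part([d])$ and any $i \in [d]$, let $B_\pi(i) \in \pi$ denote the unique block containing $i$, and define $f_\pi : [d] \to [d]$ by $f_\pi(i) \colonequals \min B_\pi(i)$, the smallest element of that block.

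To see that $\pi \mapsto f_\pi$ is injective, I would recover $\pi$ from $f_\pi$ by noting that two elements $i, j \in [d]$ lie in the same block of $\pi$ if and only if $f_\pi(i) = f_\pi(j)$; the blocks of $\pi$ are precisely the fibers of $f_\pi$. Hence $\pi$ is determined by $f_\pi$, giving $|\Part([d])| \leq |\{f : [d] \to [d]\}| = d^d$.

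There is no real obstacle here; the only thing to verify is that the minimum-of-block labelling is well-defined and that its fibers reconstruct the original partition, both of which are immediate.
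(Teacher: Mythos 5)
Your proof is correct, but it differs from the paper's argument. The paper proceeds by a ``stars and bars'' overcount: every partition is obtainable by ordering $1,\dots,d$ and then choosing some subset of the $d-1$ gaps to be block boundaries, giving $|\Part([d])| \leq 2^d d!$; this is then compared to $d^d$ using $d! \leq (d/2)^d$, which requires $d \geq 6$, so the small cases $d \leq 5$ are checked separately. Your injection $\pi \mapsto f_\pi$ with $f_\pi(i) = \min B_\pi(i)$ (the standard ``canonical representative'' map) is cleaner and more uniform: it proves the bound directly for every $d \geq 1$ with no case split and no auxiliary inequality. Both arguments correctly identify the coarse scaling $\log|\Part([d])| = \Theta(d\log d)$, which is all the paper needs, but yours is the tighter and simpler of the two.
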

\begin{proof}
    For $d \leq 5$, the inequality may be verified directly.
    Assuming $d \geq 6$, we begin with a ``stars and bars'' argument: all partitions of $[d]$ may be obtained by writing the numbers $1, \dots, d$ in some order, and then choosing some subset of the $d - 1$ possible positions between two numbers where a boundary between parts of the partition may be placed.
    Therefore, $|\Part([d])| \leq 2^dd!$.
    For $d \geq 6$, we have $d! \leq (d / 2)^d$, and the result follows.
\end{proof}
\noindent
We note that the numbers $|\Part([d])|$ are known as the \emph{Bell numbers}, for which many more precise asymptotics are known \cite{BT-2010-BoundsBellNumbers}.
We prefer to give a simple hands-on proof here, which matches the correct coarse scaling $\log |\Part([d])| = \Theta(d\log d)$.

\begin{proposition}
    \label{prop:size-Plans}
    For any $\sigma, \tau \in \Part([d])$, $|\Plans(\sigma, \tau)| \leq d!$.
\end{proposition}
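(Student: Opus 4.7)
The plan is to construct an explicit surjection from the symmetric group $S_d$ onto $\Plans(\sigma, \tau)$, which immediately yields $|\Plans(\sigma, \tau)| \leq |S_d| = d!$.

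First I would define the map $\Phi: S_d \to \Plans(\sigma, \tau)$ by $\Phi(\pi) = \bD^{\pi}$, where
\begin{equation}
    D^{\pi}_{A,B} \colonequals |\{k \in A : \pi(k) \in B\}| \text{ for } A \in \sigma, \, B \in \tau.
\end{equation}
The row sum at $A$ is $\sum_{B \in \tau} |\{k \in A : \pi(k) \in B\}| = |A|$ because the parts of $\tau$ partition $[d]$ and $\pi$ sends $A$ somewhere in $[d]$; similarly, the column sum at $B$ is $|\pi^{-1}(B)| = |B|$ since $\pi$ is a bijection. So $\bD^{\pi}$ has the required row and column sums, and $\Phi$ is well-defined.

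For surjectivity, given any $\bD \in \Plans(\sigma, \tau)$ I would construct a preimage as follows. For each $A \in \sigma$, partition $A$ arbitrarily into pieces $\{A_{B}\}_{B \in \tau}$ with $|A_{B}| = D_{A,B}$; this is possible since $\sum_{B} D_{A,B} = |A|$. For each $B \in \tau$, symmetrically partition $B$ into pieces $\{B_{A}\}_{A \in \sigma}$ with $|B_{A}| = D_{A,B}$. Then choose any bijection $\pi_{A,B}: A_{B} \to B_{A}$ for each pair with $D_{A,B} > 0$, and let $\pi$ be the union of these bijections. Since the $A_{B}$ partition $[d]$ and the $B_{A}$ partition $[d]$, $\pi$ is a permutation of $[d]$, and by construction $\Phi(\pi) = \bD$.

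There is no serious obstacle: the argument is a direct combinatorial construction, and the only thing to verify carefully is that the constructed $\pi$ is a well-defined bijection of $[d]$, which follows from the row- and column-sum constraints. The bound is not tight (the fibers of $\Phi$ have size $\prod_{A \in \sigma, B \in \tau} D_{A,B}! \cdot \prod_{A} |A|! / \prod_{A,B} D_{A,B}!$ or similar, depending on how one counts), but $d!$ suffices for our applications.
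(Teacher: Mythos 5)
Your proof is correct and takes essentially the same approach as the paper: the paper's argument is precisely that every $\bD \in \Plans(\sigma,\tau)$ arises as $D_{A,B} = \#\{i \in A : f(i) \in B\}$ for some bijection $f: [d] \to [d]$, which is the surjectivity of your map $\Phi$. You have simply spelled out the verification in more detail.
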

\begin{proof}
    For every $\bD \in \Plans(\sigma, \tau)$, there exists a bijection $f: [d] \to [d]$ for which $D_{A, B} = \#\{i \in A: f(i) \in B\}$.
    Therefore, the total number of such $\bD$ is at most the total number of bijections of $[d]$ with itself, which is $d!$.
\end{proof}

\noindent
The following simple and general result gives a bound on the number of vertices in a tree if the degrees of internal vertices are bounded below.
\begin{proposition}
    \label{prop:tree-vertex-bound}
    Suppose $T = (V, E)$ is a tree with $\ell$ leaves.
    Assume that, for any internal vertex $v$ of $T$, $\deg(v) \geq k \geq 3$.
    Then,
    \begin{equation}
        |V| < \frac{k - 1}{k - 2}\ell.
    \end{equation}
\end{proposition}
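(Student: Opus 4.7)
The plan is to use the standard handshake lemma argument for trees, splitting the degree sum into contributions from leaves and internal vertices. Let $\ell$ denote the number of leaves and $m \colonequals |V| - \ell$ the number of internal vertices, so $|V| = \ell + m$.

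First I would invoke the tree identity $|E| = |V| - 1 = \ell + m - 1$, together with the handshake lemma $\sum_{v \in V} \deg(v) = 2|E|$. Splitting the vertex sum into leaves (each contributing exactly $1$) and internal vertices (each contributing at least $k$ by hypothesis), I get the inequality
\begin{equation}
    \ell + km \leq 2(\ell + m - 1).
\end{equation}
Rearranging yields $(k-2)m \leq \ell - 2$, and since $k \geq 3$ the factor $k-2$ is positive, so $m \leq (\ell - 2)/(k-2)$.

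Then I would simply add $\ell$ to both sides and combine over a common denominator:
\begin{equation}
    |V| = \ell + m \leq \ell + \frac{\ell - 2}{k - 2} = \frac{(k-1)\ell - 2}{k - 2} < \frac{k-1}{k-2}\ell,
\end{equation}
where the strict inequality comes from the $-2$ in the numerator. There is no real obstacle here; the entire argument is a one-line handshake computation, and the strictness of the final inequality is automatic from the $-2$ term that arises from $|E| = |V| - 1$ in a tree (as opposed to $|E| = |V|$ in a unicyclic graph, which is where the bound would become tight). The only mild subtlety is remembering to check that $k \geq 3$ is actually used, namely to ensure $k - 2 > 0$ so that the division preserves the direction of the inequality.
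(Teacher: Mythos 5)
Your proof is correct and follows essentially the same route as the paper's: both count edges via the handshake lemma, split the degree sum into leaves (degree $1$) and internal vertices (degree $\geq k$), and use $|E| = |V| - 1$, with the $-2$ in the numerator supplying the strict inequality. The only cosmetic difference is that you isolate the number of internal vertices $m$ and then add $\ell$, while the paper solves directly for $|V|$; the arithmetic is otherwise identical.
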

\begin{proof}
    We count the number of edges $|E|$ in two ways, and then apply the degree bound:
    \begin{align}
      |E|
      &= |V| - 1 \nonumber \\
      &= \frac{1}{2}\sum_{v \in V} \deg(v) \nonumber \\
      &\geq \frac{1}{2}\left(\ell + k(|V| - \ell)\right) \nonumber \\
      &= \frac{k}{2}|V| - \frac{k - 1}{2}\ell.
    \end{align}
    Solving for $|V|$, we have
    \begin{equation}
        |V| \leq \frac{k - 1}{k - 2}\ell - \frac{2}{k - 2},
    \end{equation}
    and the result follows.
\end{proof}
\noindent

\begin{corollary}
    \label{cor:sF-vertex-edge-bound}
    For any $F \in \sF(d)$, the number of vertices and edges in $F$ are both at most $\frac{3}{2}d$, and the number of $\square$ vertices is at most $\frac{1}{2}d$.
\end{corollary}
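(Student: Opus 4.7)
The plan is to reduce the corollary to the single-tree bound in Proposition~\ref{prop:tree-vertex-bound}, applied componentwise. Let $F \in \sF(d)$ and let $C_1, \ldots, C_c$ be its connected components, with $\ell_i$ leaves and $|V(C_i)|$ vertices each, so that $\sum_i \ell_i = d$.

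First I would handle the vertex bound. By definition of a good forest, no vertex is isolated, so each $C_i$ is a tree with at least two vertices, hence with $\ell_i \geq 2$ leaves; and every internal vertex of $C_i$ has degree at least $4$. For components that are simply a pair (two leaves connected by an edge, no internal vertex), we have $|V(C_i)| = 2 = \ell_i \leq \frac{3}{2}\ell_i$ directly. For components with at least one internal vertex, Proposition~\ref{prop:tree-vertex-bound} applied with $k = 4$ yields $|V(C_i)| \leq \tfrac{3}{2}\ell_i - 1 < \tfrac{3}{2}\ell_i$. Summing over components gives
\begin{equation}
    |V(F)| = \sum_{i = 1}^{c} |V(C_i)| \leq \frac{3}{2}\sum_{i = 1}^c \ell_i = \frac{3}{2}d.
\end{equation}

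The edge bound is then immediate: since $F$ is a forest with $c$ components, $|E(F)| = |V(F)| - c \leq |V(F)| \leq \frac{3}{2}d$. The bound on $\square$ vertices is also immediate from the decomposition $V(F) = V^{\bullet}(F) \sqcup V^{\square}(F)$ with $|V^{\bullet}(F)| = d$, giving $|V^{\square}(F)| = |V(F)| - d \leq \frac{3}{2}d - d = \frac{1}{2}d$.

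There is no real obstacle here; the only mild subtlety is that Proposition~\ref{prop:tree-vertex-bound} assumes an internal vertex exists (its proof uses the degree-$k$ lower bound), so the pair components must be treated separately, as above. Everything else is bookkeeping.
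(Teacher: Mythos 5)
Your proof is correct and matches the paper's intent: the corollary is stated immediately after Proposition~\ref{prop:tree-vertex-bound} with no separate proof, precisely because it is meant to follow by applying that proposition with $k=4$ to each connected component and summing, which is what you do. The only small remark is that your separate handling of pair components is careful but not strictly necessary — the proposition's hypothesis on internal vertices is vacuously satisfied for a two-vertex tree, and the conclusion $|V| < \frac{3}{2}\ell$ holds there too.
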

\noindent
This allows us to bound the number of good forests, as follows.

\begin{proposition}
    \label{prop:size-F}
    For $d$ even, $(d / 2)! \leq |\sF(d)| \leq 2(\frac{3}{2}d)^{\frac{3}{2}d}$.
\end{proposition}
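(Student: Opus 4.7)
The two inequalities are combinatorially independent, and I would prove them separately.

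For the lower bound, the natural strategy is to exhibit an easily-counted subfamily of $\sF(d)$. Perfect matchings of $[d]$ are in $\sF(d)$ (each connected component is a pair of connected leaves, so every condition from Definition~\ref{def:good-forest} is vacuously satisfied), and there are $(d-1)!!=d!/(2^{d/2}(d/2)!)$ of them. It therefore suffices to show $(d-1)!! \geq (d/2)!$, equivalently $\binom{d}{d/2} \geq 2^{d/2}$. I would verify this using the factorization
\[ \binom{d}{d/2} = \prod_{k=1}^{d/2} \frac{d/2+k}{k}, \]
noting that each factor is at least $2$ (since $k \leq d/2$).

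For the upper bound, I would use Corollary~\ref{cor:sF-vertex-edge-bound} to bound the total vertex count of any $F \in \sF(d)$ by $n \colonequals \frac{3}{2}d$. Grouping by the number $s \in \{0, 1, \dots, d/2\}$ of internal ($\square$) vertices, each $F \in \sF(d)$ with $s$ internal vertices lifts to exactly $s!$ fully-labeled forests on the vertex set $[d] \sqcup \{d+1, \dots, d+s\}$ (the leaves keeping their canonical labels, the internal vertices being labeled arbitrarily by $\{d+1,\dots,d+s\}$). Therefore
\[ |\sF(d)| \leq \sum_{s=0}^{d/2} \frac{1}{s!}\cdot \#\{\text{labeled forests on } [d+s]\}. \]
A standard bound on labeled forests (for instance via the rooted parent-function encoding, which represents each labeled forest on $[m]$ by a function $\pi\colon[m]\to[m]$ sending each non-root to its parent and each root to itself) gives at most $m^m$ forests on $m$ vertices, and a sharper count (e.g.\ via $m^{m-1}$ for unrooted forests, or by imposing the degree constraints on internal vertices and invoking Cayley-type formulas) readily improves this by a factor of at least $e/2$. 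Summing $(d+s)^{d+s-1} \leq n^{n-1} \leq n^n/n$ over $s$ and using $\sum_{s \geq 0} 1/s! = e$ yields
\[ |\sF(d)| \leq \frac{e}{n}\cdot n^n \ \leq \ 2 n^n = 2(\tfrac{3}{2}d)^{\frac{3}{2}d}, \]
as soon as $n \geq e$, i.e.\ $d \geq 2$.

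The main (and only) obstacle is the constant-factor tightening in the upper bound: the naive $m^m$ labeled-forest count combined with $\sum 1/s! = e$ gives the bound off by a factor of $e/2$, so one must use the slightly sharper $m^{m-1}$-type count for labeled forests (which is the content of standard extensions of Cayley's formula). I do not anticipate any difficulty beyond this bookkeeping, and the degree-$\geq 4$ constraint on internal vertices provides substantial additional slack if needed.
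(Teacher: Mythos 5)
Your lower bound is correct but slightly more circuitous than necessary: you exhibit all $(d-1)!!$ perfect matchings of $[d]$ and then prove $(d-1)!!\geq (d/2)!$ via the factorization of $\binom{d}{d/2}$. The paper takes the shorter route of counting only the $(d/2)!$ matchings that pair $\{1,\dots,d/2\}$ with $\{d/2+1,\dots,d\}$, which already hits the target exactly. Your ``$s!$-to-1 lift'' claim, incidentally, is correct (not $s!/|\mathrm{Aut}|$-to-1 as one might fear) because a good forest has no nontrivial leaf-fixing automorphism: every internal vertex has degree $\geq 4$, so from any internal vertex $v \neq w$ there is a leaf $\ell$ whose path to $w$ passes through $v$, giving $d(v,\ell)<d(w,\ell)$ and thus distinguishing $v$ from $w$.

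The genuine gap is in the upper bound, specifically in the claim that the number of labeled forests on $[m]$ is at most $m^{m-1}$ (or, equivalently, that the naive $m^m$ ``improves by a factor of at least $e/2$''). This is true as a matter of fact --- the actual count is $\sim\sqrt{e}\,m^{m-2}$ by a result of R\'enyi --- but it is not an off-the-shelf consequence of Cayley's formula, and you do not supply a proof. Cayley's formula gives $m^{m-2}$ for labeled \emph{trees}; the natural parent-function encoding gives $\leq m^m$ for \emph{rooted} forests; and neither yields $\leq m^{m-1}$ for unrooted forests without additional work (the fiber-size argument for passing from rooted to unrooted forests only helps when all components are large). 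The paper in fact does this work: it invokes R\'enyi's explicit formula for the number $f_{n,k}$ of labeled forests with $k$ components, proves $f_{n,k}\leq 2n^{n-1}$ termwise, and sums over $k$ to get $\sum_k f_{n,k}\leq 2n^n$. Note that this weaker bound of $2n^n$ is not enough to close \emph{your} argument: substituting it into your sum gives $\sum_{s=0}^{d/2}\frac{1}{s!}\cdot 2(d+s)^{d+s}\leq 2e\,n^n$, off by a factor of $e$ from the stated $2n^n$. This is precisely why the paper avoids the sum over $s$ altogether: it pads every $F$ with isolated vertices up to a fixed total of $\tfrac{3}{2}d$ vertices, yielding a single injection $\sF(d)\hookrightarrow\{\text{labeled forests on }[\tfrac{3}{2}d]\}$ with no $s$-dependent multiplicity to absorb. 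To fix your proof you would either need to justify the $m^{m-1}$ bound (e.g.\ via R\'enyi's formula and a more careful estimate of $\sum_k f_{n,k}$), or switch to the fixed-size padding trick, or exploit the degree-$\geq 4$ constraint on internal vertices, which you gesture at but do not carry out.
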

\begin{proof}
    For the lower bound, we simply note that any matching of $1, \dots, d / 2$ with $d / 2 + 1, \dots, d$ corresponds to a distinct element of $\sF(d)$ consisting of the corresponding forest all of whose connected components are pairs.
    Since there are $(d / 2)!$ such matchings, the bound follows.

    For the upper bound, Theorem 4.1 of \cite{Moon-1970-CountingLabelledTrees}, attributed there to \Renyi, gives the following explicit formula for the number of labelled forests on $n$ nodes with $k$ connected components, which we denote $f_{n, k}$ (this is a generalization of Cayley's well-known formula counting labelled trees, which is the case $k = 1$):
    \begin{align}
      f_{n, k}
      &= \binom{n}{k} \sum_{i = 0}^k \left(-\frac{1}{2}\right)^i (k + i) \,  i!\binom{k}{i}\binom{n - k}{i} n^{n - k - i - 1}
        \intertext{from which by coarse bounds we find}
      &\leq \frac{n^k}{k!} \sum_{i = 0}^k \frac{k + i}{2^i} \frac{k!}{(k - i)!} \frac{n^i}{i!} n^{n - k - i - 1} \\
      &= n^{n - 1} \sum_{i = 0}^k \frac{k + i}{2^i(k - i)!\, i!} \\
      &\leq 2n^{n - 1},
    \end{align}
    where the final inequality may be checked by verifying by hand for $k \leq 4$, and for $k \geq 5$ bounding the inner term by $2k / (\lfloor k / 2 \rfloor)! (\lceil k / 2 \rceil)! \leq 2$.
    Thus the number of labelled forests on $n$ nodes with any number of connected components, which equals $\sum_{k = 1}^n f_{n, k}$, is at most $2n^n$.
    By Corollary~\ref{cor:sF-vertex-edge-bound}, the total number of vertices in $F \in \sF(d)$ is at most $\frac{3}{2}d$.
    Since there are no isolated vertices in such $F$, we may always view $F$ as embedded uniquely into a fully labelled forest on exactly $\frac{3}{2}d$ vertices by adding isolated vertices and labelling internal vertices with any deterministic procedure.
    Thus, $|\sF(d)|$ is at most the number of labelled forests on $\frac{3}{2}d$ vertices, and the result follows.
\end{proof}
\noindent
We include the lower bound above to emphasize that the upper bound correctly identifies the coarse behavior $\log |\sF(d)| = \Theta(d\log d)$.
This suggests that, without much more careful proof techniques, the $d^d$ behavior of the leading factor in the condition of Theorem~\ref{thm:lifting} cannot be improved.

\section{Gaussian Conditioning Calculations: Proof of Lemma~\ref{lem:poly-cond}}
\label{app:pf:lem:poly-cond}

\begin{proof}[Proof of Lemma~\ref{lem:poly-cond}]
    Recall that we must compute the distribution of $g^{(d)}(\by) \sim \sG_d^{\poly}(N, \sigma_d^2)$, given $g^{(0)}(\by) = 1$, conditional on (a) having, if $d \geq 2$, for all $i \in [N]$ and $S^{\prime} \in \sM_{d - 2}([N])$ that $\langle g^{(d)}, \by^{S^{\prime}}y_i^2 \rangle_{\circ} = \langle g^{(d - 2)}, \by^{S^{\prime}} \rangle_{\circ}$, and (b) having for all $S^{\prime} \in \sM_{d - 1}([N])$ and $i \in [N - r]$ that $\langle g^{(d)}, \by^{S^{\prime}} \langle \bw_i, \by \rangle_{\circ} = 0$.
    
Working first with Property (a), we see after extending by linearity that it is equivalent to $\langle g^{(d)}, q(\by)y_i^2\rangle_{\circ} = \langle g^{(d - 2)}, q(\by)\rangle_{\circ}$ for all $i \in [N]$ and $q \in \RR[y_1, \dots, y_N]^{\hom}_{d - 2}$.
On the other hand, by the adjointness property from Proposition~\ref{prop:apolar-adjointness}, we have $\langle g^{(d)}, q(\by)y_i^2\rangle_{\circ} = \frac{1}{d(d - 1)} \langle \partial_{y_i}^2g^{(d)}, q(\by) \rangle_{\circ}$, and thus Property (a) is equivalent to the simpler property:
\begin{enumerate}
    \item[(a$^{\prime}$)] If $d \geq 2$, then for all $i \in [N]$, $\partial_{y_i}^2 g^{(d)}(\by) = d(d - 1) \cdot g^{(d - 2)}(\by)$.
\end{enumerate}

Similarly, we see after extending Property~(b) by linearity that it is equivalent to having $\langle g^{(d)}, q(\by) \langle \bw, \by \rangle\rangle_{\circ} = 0$ for all $q \in \RR[y_1, \dots, y_N]^{\hom}_{d - 1}$ and $\bw \in \ker(\bM)$.
Again by Proposition~\ref{prop:apolar-adjointness}, $\langle g^{(d)}, q(\by) \langle \bw, \by \rangle\rangle_{\circ} = d\langle \langle \bw, \bm\partial \rangle g^{(d)}, q(\by) \rangle_{\circ}$, so Property (b) is equivalent to:
\begin{enumerate}
    \item[(b$^{\prime}$)] For all $\bw \in \ker(\bM)$, $\langle \bw, \bm\partial \rangle g^{(d)}(\by) = 0$.
\end{enumerate}
Polynomials $p \in \RR[y_1, \dots, y_N]^{\hom}_d$ with $\langle \bw, \bm\partial \rangle p = 0$ for all $\bw \in \ker(\bM)$ form a linear subspace, which admits the following simple description.
Changing monomial basis to one which extends $z_i = (\what{\bV} \by)_i$ for $i \in [r]$ and invoking Proposition~\ref{prop:apolar-orth-invariant}, we may define
\begin{equation}
    V_{\sB} \colonequals \left\{p \in \RR[y_1, \dots, y_N]_d^{\hom}: \text{there exists } q \in \RR[z_1, \dots, z_r]^{\hom}_d \text{ such that } p(\by) = q(\what{\bV} \by) \right\},
\end{equation}
and with this definition Property~(b) is equivalent to:
\begin{enumerate}
    \item[(b$^{\prime\prime}$)] $g^{(d)} \in V_{\sB}$.
\end{enumerate}

Now, let $\bQ \in \sO(N)$ be an orthogonal matrix formed by adding rows to $\what{\bV}$ (see Section~\ref{sec:deg2-assumptions} for why this is possible under our assumptions on $\bM$).
Letting $a_{S} \sim \sN(0, \sigma_d^2\binom{d}{\freq(S)})$ for each $S \in \sM_d([N])$ independently, we set $g^{(d)}_0(\bm y) \colonequals \sum_{S \in \sM_d([N])} a_{S}\cdot (\bQ\bm y)^{S}$; then, the law of $g^{(d)}_0$ is $\sG_d^{\poly}(N, \sigma_d^2)$ by Proposition~\ref{prop:G-poly-orth-invariant}.
Thus we may form the law of $g^{(d)}$ by conditioning $g^{(d)}_0$ on Properties~(a$^{\prime}$) and (b$^{\prime\prime}$).

Conveniently, conditioning $g^{(d)}_0$ on Property~(b$^{\prime\prime}$) amounts to merely setting those $a_{S}$ with $S \cap \{r + 1, \ldots, N\} \neq \emptyset$ to equal zero.
Denoting the resulting random polynomial by $g^{(d)}_1$, we see that $g^{(d)}_1(\by) = \sum_{S \in \sM_d([r])} a_{S}\cdot (\bQ\bm y)^{S} = \sum_{S \in \sM_d([r])} a_{S}\cdot (\what{\bV}\bm y)^{S}$.
To extract the coefficients of $g^{(d)}_1$ in the standard monomial basis, we compute
\begin{align}
  \langle g_1^{(d)}, \by^{S} \rangle_{\circ}
  &= \sum_{T \in M_d([r])} a_{T} \langle (\what{\bV}\by)^{T}, \by^{S} \rangle_{\circ} \nonumber \\
  &= \sum_{T \in M_d([r])} a_{T} \langle \by^{T}, (\bQ^{\top}\by)^{S}\rangle_{\circ} \tag{\text{by Proposition~\ref{prop:apolar-orth-invariant}}}
    \intertext{and noting that no term involving $y_{r + 1}, \dots, y_N$ will contribute, we may define the truncation $\bz = (y_1, \dots, y_r)$ and continue}
  &= \sum_{T \in M_d([r])} a_{T} \langle \bz^{T}, (\what{\bV}^{\top}\bz)^{S}\rangle_{\circ} \nonumber \\
  &= \left\langle \sum_{T \in M_d([r])} a_{T}\bz^{T}, (\what{\bV}^{\top}\bz)^{S}\right\rangle_{\circ}.
\end{align}
Letting $h_0^{(d)}(\bz) \colonequals \sum_{T \in M_d([r])} a_{T}\bz^{T}$, we see that the law of $h_0^{(d)}$ is $\sG^{\poly}_d(r, \sigma_d^2)$.

Thus we may rewrite the result of the remaining conditioning on Property~(a$^{\prime}$) by letting $h^{(d)}$ have the law of $h_0^{(d)}$ conditioned on $\langle \bv_i, \bm\partial \rangle^2h^{(d)}(\bz) = \delta^{-1} d(d - 1) \cdot h^{(d - 2)}(\bz)$.
Then,
\begin{align}
  \tEE(\bx^{S}, \bx^{T})
  &= \EE\left[\langle h^{(d)}, (\what{\bV}^\top \bz)^{S}\rangle_{\circ}\, \langle h^{(d)}, (\what{\bV}^\top \bz)^{T}\rangle_{\circ}\right] \nonumber \\
  &= \delta^d\cdot \EE\left[\langle h^{(d)}, (\bV^{\top} \bz)^{S}\rangle_{\circ}\, \langle h^{(d)}, (\bV^{\top} \bz)^{T}\rangle_{\circ}\right]
    \label{eq:tEE-prediction-poly-h}
\end{align}

The result of this remaining conditioning is simple to write down in a more explicit form, since now we have just a single family of linear constraints to condition on and $h_0^{(d)}$ is isotropic with respect to the apolar inner product (per Definition~\ref{def:isotropic-poly}).
We define
\begin{align}
  V_{\mathcal{I}} &\colonequals \left\{\sum_{i = 1}^N \langle \bv_i, \bz \rangle^2 q_i(\bz) : q_i \in \RR[z_1, \dots, z_r]^{\hom}_{d - 2}\right\}, \\
  V_{\mathcal{H}} &\colonequals \left\{q \in \RR[z_1, \dots, z_r]^{\hom}_d: \langle \bv_i, \bm\partial \rangle^2 q = 0 \text{ for all } i \in [N]\right\},
\end{align}
instantiations of the ``ideal subspace'' and ``harmonic subspace'' constructions from Proposition~\ref{prop:apolar-decomp} for the specific polynomials $\{\langle \bv_i, \bx \rangle^2\}_{i = 1}^N$.
We let $P_{\sI}$ and $P_{\sH}$ be the orthogonal projections to $V_{\sI}$ and $V_{\sH}$, respectively.
We also define the ``least-squares lifting operator'' $L_{\sI}: \RR[z_1, \dots, z_r]_{d - 2}^{\hom} \to V_{\sI}$ by
\begin{equation}
  L_{\mathcal{I}}[h] \colonequals \mathrm{argmin}\left\{\|f\|_{\circ}^2 : f \in V_{\mathcal{I}}, \langle \bv_i, \bm\partial\rangle^2 f = h \text{ for all } i \in [N]\right\}.
\end{equation}
We then obtain that the law of $h^{(d)}$ is
\begin{equation}
    h^{(d)} \eqd \delta^{-1} d(d - 1) \cdot L_{\mathcal{I}}[h^{(d - 2)}] + P_{\mathcal{H}}[h_0^{(d)}]
\end{equation}
Note that the first summand above belongs to $V_{\sI}$ and the second to $V_{\sH}$, so this is also a decomposition for $h^{(d)}$ precisely of the kind provided by Proposition~\ref{prop:apolar-decomp}.

Now, we work towards substituting this into \eqref{eq:tEE-prediction-poly-h}.
To do that, we must compute inner products of the form $\langle h^{(d)}, (\bV^{\top}\bz)^{S} \rangle_{\circ}$.
We introduce a few further observations to do this: for each $S$, the projection of $(\bV^{\top}\bz)^{S}$ to $V_{\sI}$ is a linear combination of multiples of the $\langle \bv_i, \bz \rangle^2$.
Moreover, the $\langle \bv_i, \bz \rangle$ are an overcomplete system of linear polynomials, so \emph{any} polynomial in $\bz$ may be written as a polynomial in these variables instead.
Combining these facts, we see that there exists a polynomial $r_{S} \in \RR[x_1, \dots, x_N]_d^{\hom}$ such that
\begin{align}
  P_{\sI}[(\bV^{\top}\bz)^{S}] &= r_{S}(\bV^{\top}\bz),  \label{eq:r-poly-cond-1} \\
  r_{S}(\bx) &= \sum_{i = 1}^N x_i^{2d_i} r_{S, i}(\bx) \text{ for } d_i \geq 1, r_{S, i} \in \RR[x_1, \dots, x_N]_{d - 2d_i}^{\hom}.
  \label{eq:r-poly-cond-2}
\end{align}
The polynomial $r_{S}(\bx)$ is not unique, since the vectors $\bv_i$ and therefore the polynomials $\langle \bv_i, \bz \rangle$ are linearly dependent.
Nor is the decomposition of $r_S$ into the $x_i^{2d_i}r_{S, i}$ unique, since for instance if some $d_i \geq 2$ then we may move a factor of $x_i^2$ into $r_{S, i}$.
However, any choice of $r_{S}$ satisfying \eqref{eq:r-poly-cond-1} and $r_{S, i}$ satisfying \eqref{eq:r-poly-cond-2} suffices for our purposes.
In the main text we also give a heuristic calculation of a polynomial that approximately satisfies these conditions for each $S$, so we leave the requirements loosely specified here.
Note also that we reuse the pseudodistribution variables $\bx = (x_1, \dots, x_N)$ intentionally here because of the role $r_{S}(\bx)$ will play below.

With this definition, we compute
\begin{align}
  \langle h^{(d)}, (\bV^{\top}\bz)^{S} \rangle_{\circ}
  &= \delta^{-1}d(d - 1) \cdot \langle L_{\sI}[h^{(d - 2)}], P_{\sI}[(\bV^{\top}\bz)^{S}] \rangle_{\circ} + \langle P_{\sH}[h_0^{(d)}], (\bV^{\top}\bz)^{S} \rangle_{\circ} \nonumber \\
  &= \delta^{-1}d(d - 1) \sum_{i = 1}^N \langle L_{\sI}[h^{(d - 2)}], \langle \bv_i, \bm\bz\rangle^{2d_i} r_{S, i}(\bV^{\top}\bz) \rangle_{\circ} + \langle h_0^{(d)}, P_{\sH}[(\bV^{\top}\bz)^{S}] \rangle_{\circ} \nonumber \\
  &= \delta^{-1} \left\langle h^{(d - 2)}, \sum_{i = 1}^N\langle \bv_i, \bm\bz\rangle^{2d_i - 2} r_{S, i}(\bV^{\top}\bz) \right\rangle_{\circ} + \langle h_0^{(d)}, P_{\sH}[(\bV^{\top}\bz)^{S}] \rangle_{\circ}
\end{align}
Finally, we note that $h^{(d - 2)}$ and $h_0^{(d)}$ are independent, and $h_0^{(d)}$ is isotropic with variance $\sigma_d^2$.
Therefore, we may finally substitute into \eqref{eq:tEE-prediction-poly-h}, obtaining
\begin{align}
  &\tEE(\bx^{S}, \bx^{T}) \nonumber \\
  &= \delta^d\cdot \EE\left[\langle h^{(d)}, (\bV^{\top} \bz)^{S}\rangle_{\circ}\langle h^{(d)}, (\bV^{\top} \bz)^{T}\rangle_{\circ}\right] \nonumber \\
  &= \delta^{d - 2} \cdot \EE\left[\left\langle h^{(d - 2)}, \sum_{i = 1}^N\langle \bv_i, \bm\bz\rangle^{2d_i - 2} r_{S, i}(\bV^{\top}\bz) \right\rangle_{\circ}\left\langle h^{(d - 2)}, \sum_{i = 1}^N\langle \bv_i, \bm\bz\rangle^{2d_i - 2} r_{T, i}(\bV^{\top}\bz) \right\rangle_{\circ}\right] \nonumber \\
  &\hspace{2cm} + \delta^d \cdot \EE\left[\langle h_0^{(d)}, P_{\sH}[(\bV^{\top}\bz)^{S}]\rangle_{\circ}\langle h_0^{(d)}, P_{\sH}[(\bV^{\top}\bz)^{T}]\rangle_{\circ}\right]. \nonumber
    \intertext{While this expression appears complicated, each term simplifies substantially: the first term is an evaluation of $\tEE$ at degree $2d - 4$, per \eqref{eq:tEE-prediction-poly-h}, while the second term, by the isotropy of $h_0^{(d)}$, is an apolar inner product:}
  &= \tEE\left(\sum_{i = 1}^Nx_i^{2d_i - 2} r_{S, i}(\bx), \sum_{i = 1}^Nx_i^{2d_i - 2} r_{T, i}(\bx)\right) + \sigma_d^2 \delta^d \cdot \left\langle P_{\sH}[(\bV^{\top}\bz)^{S}], P_{\sH}[(\bV^{\top}\bz)^{T}]\right\rangle_{\circ}. \nonumber
    \intertext{Finally, we note that the factors $x_i^{2d_i - 2}$ are irrelevant in the evaluation of $\tEE$ (by the pseudoexpectation ideal property from Definition~\ref{def:pe} at degree $2d - 4$), so we obtain our final expression}
  &= \tEE\left(\sum_{i = 1}^N r_{S, i}(\bx), \sum_{i = 1}^Nr_{T, i}(\bx)\right) + \sigma_d^2 \delta^d \cdot \left\langle P_{\sH}[(\bV^{\top}\bz)^{S}], P_{\sH}[(\bV^{\top}\bz)^{T}]\right\rangle_{\circ},
\end{align}
which is the result in the statement.
\end{proof}

\section{Tools for Contractive Graphical Matrices}
\label{app:cgm-tools}

In this appendix, we present some general tools for working with contractive graphical matrices (henceforth CGMs, as in the main text).
In fact, to make some technicalities easier to work around, we use a more general definition, which allows different edges of a ribbon diagram to be labelled with different matrices and also allows for the left and right index subsets to overlap.

\begin{definition}[Generalized ribbon diagram]
    Let $G = ((\sL \cup \sR) \sqcup V^{\square}, E)$ be a graph with two types of vertices, $\bullet$ and $\square$, whose subsets are $V^{\bullet} = \sL \cup \sR$ and $V^{\square}$.
    Suppose that $G$ is equipped with labellings $\kappa_{\sL}: \sL \to [|\sL|]$ and $\kappa_{\sR}: \sR \to [|\sR|]$.
    Suppose that for every $e = \{x, y\} \in E$, we have a matrix $\bM^{(x, y)} \in \RR^{N(x) \times N(y)}$, for some $N: V \to \NN$, which satisfies $\bM^{(y, x)} = \bM^{(x, y)^{\top}}$.
    Note that $N$ is determined by the collection of matrices $\bM^{(x, y)}$, provided that their dimensions satisfy the appropriate equalities.
    We call such a collection of matrices a \emph{compatible matrix labelling} of the edges of $G$, noting that a matrix is associated to each \emph{oriented} edge in such a labelling.
    For the course of this appendix, we call such labelled $G$ a \emph{ribbon diagram}, instead of the weaker definition from the main text.
\end{definition}

\begin{definition}[Generalized CGM]
    Let $G$ be a ribbon diagram.
    Given $\ba \in \prod_{v \in V^{\square}}[N(v)]$, $\bs \in \prod_{i \in [|\sL|]} [N(\kappa_{\sL}^{-1}(i))]$, and $\bt \in \prod_{j \in |\sR|} [N(\kappa_{\sR}^{-1}(j))]$ such that, for all $i \in \sL \cap \sR$, $s(\kappa_{\sL}(i)) = t(\kappa_{\sR}(j))$, define $f_{\ba, \bs, \bt}: V \to \NN$ by
    \begin{equation}
        f_{\ba, \bs, \bt}(x) = \left\{\begin{array}{ll} s(\kappa_{\sL}(x)) & \text{if } x \in \sL, \\ t(\kappa_{\sR}(x)) & \text{if } x \in \sR, \\ a(x) & \text{if } x \in V^{\square}.\end{array}\right.
    \end{equation}
    Note that we always have $f_{\ba, \bs, \bt}(x) \in [N(x)]$.
    Then, the \emph{contractive graphical matrix} associated to the ribbon diagram $G$ labelled by the matrices $\bM^{(x, y)}$ has entries
    \begin{equation}
        Z^{G}_{\bs, \bt} = \prod_{i \in \sL \cap \sR} \One\{s(\kappa_{\sL}(i)) = t(\kappa_{\sR}(i))\} \sum_{\ba \in \prod_{v \in V^{\square}}[N(v)]} \prod_{\{x, y\} \in E} M^{(x, y)}_{f_{\ba, \bs, \bt}(x), f_{\ba, \bs, \bt}(y)}.
        \label{eq:app-cgm-def}
    \end{equation}
\end{definition}
\noindent
We note that, for the purposes of this appendix, we always will think of CGMs as being labelled by tuples rather than sets.
Since set-indexed CGMs are submatrices of tuple-indexed ones, all norm bounds will immediately be inherited by the set-indexed CGMs encountered in the main text.

The general goal we pursue in the following sections is to develop some general tools for connecting the graphical structure of a ribbon diagram and the matrix structure of its CGM.

\subsection{Connected Components and Tensorization}

We first consider the effect of a diagram being disconnected on the CGM.
In this case, it is simple to see that the expression \eqref{eq:app-cgm-def} factorizes, and therefore the CGM decomposes as a tensor product.
We give a precise statement below, taking into account the ordering of indices.
\begin{proposition}
    \label{prop:cgm-tt-tensorization}
    Let $G = ((\sL \cup \sR) \sqcup V^{\square}, E)$ be a ribbon diagram with connected components $G_1, \dots, G_m$, where $V(G_{\ell}) \cap \sL \equalscolon \sL_{\ell}$ and $V(G_{\ell}) \cap \sR \equalscolon \sR_{\ell}$.
    Define $\kappa_{\sL_{\ell}}: \sL_{\ell} \to [|\sL_{\ell}|]$ and $\kappa_{\sR_{\ell}}: \sR_{\ell} \to [|\sR_{\ell}|]$ to be the labellings inherited from $\kappa_{\sL}$ and $\kappa_{\sR}$, i.e.,
    \begin{align}
  \kappa_{\sL_{\ell}}(i) &\colonequals \#\{i^{\prime} \in \sL_{\ell}: \kappa_{\sL}(i^{\prime}) \leq \kappa_{\sL}(i)\}, \\
  \kappa_{\sR_{\ell}}(j) &\colonequals \#\{j^{\prime} \in \sR_{\ell}: \kappa_{\sR}(j^{\prime}) \leq \kappa_{\sR}(j)\}.
\end{align}
Equipped with these labellings, view $G_1, \dots, G_m$ as ribbon diagrams.
Let $\pi_{\sL} \in \mathsf{Sym}(|\sL|)$ and $\pi_{\sR} \in \mathsf{Sym}(|\sR|)$ be the permutations with
\begin{align*}
(\pi_{\sL}(1), \cdots, \pi_{\sL}(|\sL|)) &= \bigg(\kappa_{\sL}(\kappa_{\sL_1}^{-1}(1)), \dots, \kappa_{\sL}(\kappa_{\sL_1}^{-1}(|\sL_1|)), \dots, \kappa_{\sL}(\kappa_{\sL_m}^{-1}(1)), \dots, \kappa_{\sL}(\kappa_{\sL_m}^{-1}(|\sL_m|))\bigg) \\
(\pi_{\sR}(1), \cdots, \pi_{\sR}(|\sR|)) &= \bigg(\kappa_{\sR}(\kappa_{\sR_1}^{-1}(1)), \dots, \kappa_{\sR}(\kappa_{\sR_1}^{-1}(|\sR_1|)), \dots, \kappa_{\sR}(\kappa_{\sR_m}^{-1}(1)), \dots, \kappa_{\sR}(\kappa_{\sR_m}^{-1}(|\sR_m|))\bigg)
\end{align*}
Let $\sigma_{\sL} \in \mathsf{Sym}([N]^{|\sL|})$ map $(a_1, \dots, a_{|\sL|}) \mapsto (a_{\pi_{\sL}^{-1}(1)}, \dots, a_{\pi_{\sL}^{-1}(|\sL|)})$ and $\sigma_{\sR} \in \mathsf{Sym}([N]^{|\sR|})$ map $(a_1, \dots, a_{|\sR|}) \mapsto (a_{\pi_{\sR}^{-1}(1)}, \dots, a_{\pi_{\sR}^{-1}(|\sR|)})$.
Finally, let $\Pi_{\sL} \in \mathbb{R}^{[N]^{|\sL|} \times [N]^{|\sL|}}$ and $\Pi_{\sR} \in \mathbb{R}^{[N]^{|\sR|} \times [N]^{|\sR|}}$ be the permutation matrices of $\sigma_{\sL}$ and $\sigma_{\sR}$, respectively.
Then,
\[ \bZ^{G} = \Pi_{\sL}\bigg( \bigotimes_{\ell = 1}^m \bZ^{G_{\ell}} \bigg) \Pi_{\sR}^{\top}. \]
\end{proposition}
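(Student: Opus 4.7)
The plan is to unwind the definition \eqref{eq:app-cgm-def} of $Z^G_{\bs, \bt}$ and observe that both the product over edges and the sum over internal vertex assignments factorize over connected components; the permutation matrices in the statement are then simply the bookkeeping needed to reconcile the global labellings $\kappa_{\sL}, \kappa_{\sR}$ with the concatenation of the component-wise labellings $\kappa_{\sL_\ell}, \kappa_{\sR_\ell}$.

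First I would fix $\bs \in [N]^{|\sL|}$ and $\bt \in [N]^{|\sR|}$ (suppressing the more general dimension function $N(x)$, which plays no role in this argument) and partition the data index-by-index: for each $\ell \in [m]$, let $\bs^{(\ell)} \in [N]^{|\sL_\ell|}$ be defined by $s^{(\ell)}_i = s(\kappa_{\sL}(\kappa_{\sL_\ell}^{-1}(i)))$, and analogously for $\bt^{(\ell)}$. Since each edge in $E$ lies in exactly one $E(G_\ell)$ and each $v \in V^\square$ lies in exactly one $V^\square(G_\ell)$, the sum in \eqref{eq:app-cgm-def} splits as an independent sum over $\ba_\ell \in \prod_{v \in V^\square(G_\ell)}[N(v)]$ and the product over edges splits correspondingly. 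The indicator factor enforcing agreement on $\sL \cap \sR$ likewise splits, because any such shared leaf belongs to a single component. This yields the key entrywise identity
\begin{equation}
Z^G_{\bs, \bt} \;=\; \prod_{\ell = 1}^m Z^{G_\ell}_{\bs^{(\ell)}, \bt^{(\ell)}}.
\end{equation}

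Next I would compare this with the standard entry of a tensor product. If we write the row-index of $\bigotimes_\ell \bZ^{G_\ell}$ as a tuple $(\bu^{(1)}, \ldots, \bu^{(m)})$ with $\bu^{(\ell)} \in [N]^{|\sL_\ell|}$, then by the usual definition of the Kronecker product,
\begin{equation}
\Bigl(\bigotimes_{\ell = 1}^m \bZ^{G_\ell}\Bigr)_{(\bu^{(1)}, \ldots, \bu^{(m)}),\, (\bv^{(1)}, \ldots, \bv^{(m)})} \;=\; \prod_{\ell = 1}^m Z^{G_\ell}_{\bu^{(\ell)}, \bv^{(\ell)}}.
\end{equation}
Thus the entry of $\bigotimes_\ell \bZ^{G_\ell}$ at the row labelled by the concatenation $(\bs^{(1)}, \ldots, \bs^{(m)})$ and the column labelled by $(\bt^{(1)}, \ldots, \bt^{(m)})$ coincides with $Z^G_{\bs, \bt}$. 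It remains only to check that the map $\bs \mapsto (\bs^{(1)}, \ldots, \bs^{(m)})$, viewed as a permutation of $[N]^{|\sL|}$, is precisely $\sigma_{\sL}$, and similarly for $\bt$. Unwinding the definition, $\sigma_{\sL}$ sends a tuple $(a_1, \ldots, a_{|\sL|})$ to the tuple whose $k$th entry is $a_{\pi_{\sL}^{-1}(k)}$, and by construction of $\pi_{\sL}$ the position $\pi_{\sL}^{-1}(k)$ is exactly the global $\kappa_{\sL}$-index of the leaf that becomes the $k$th entry in the concatenation. This gives $(\bs^{(1)}, \ldots, \bs^{(m)}) = \sigma_{\sL}(\bs)$, so the concatenated tuple is the row-index reached from $\bs$ by applying $\Pi_{\sL}^\top$ (equivalently, $\bs$ is recovered by applying $\Pi_{\sL}$). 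Translating into matrix form yields $\bZ^G = \Pi_{\sL}\bigl(\bigotimes_\ell \bZ^{G_\ell}\bigr)\Pi_{\sR}^\top$.

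The only step I expect to require real care is the permutation bookkeeping in the last paragraph: the convention for how $\pi_{\sL}$ acts on positions versus values, and whether it is $\Pi_{\sL}$ or $\Pi_{\sL}^\top$ that appears on the left, is easy to get backwards. I would verify the direction on a minimal example (say $m = 2$ with $|\sL_1| = |\sL_2| = 1$) to pin down the convention before writing the final identification.
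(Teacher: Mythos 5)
Your approach is exactly what the paper intends; the paper states this proposition without proof, preceding it only with the remark that the factorization of \eqref{eq:app-cgm-def} over connected components is ``simple to see,'' and your entrywise identity $Z^G_{\bs,\bt} = \prod_\ell Z^{G_\ell}_{\bs^{(\ell)},\bt^{(\ell)}}$ together with the observation that the intersection indicator also splits component-by-component is precisely how that is made explicit.

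The one thing worth pinning down rather than leaving to a minimal-example check: by the definition of $\pi_{\sL}$, the $k$th entry of the concatenation $(\bs^{(1)},\dots,\bs^{(m)})$ is $s_{\pi_{\sL}(k)}$, so the concatenation equals $\sigma_{\sL}^{-1}(\bs)$, \emph{not} $\sigma_{\sL}(\bs)$ as written in your penultimate paragraph (the sentence ``the position $\pi_{\sL}^{-1}(k)$ is exactly the global $\kappa_{\sL}$-index\ldots'' has the inverse on the wrong side). Under the convention that the permutation matrix $\Pi_\sigma$ acts on basis vectors by $\Pi_\sigma \be_{\bb} = \be_{\sigma(\bb)}$, one then has $(\Pi_{\sL})_{\bs,\bu} = \One\{\bu = \sigma_{\sL}^{-1}(\bs)\}$ and $(\Pi_{\sR}^\top)_{\bv,\bt} = \One\{\bv = \sigma_{\sR}^{-1}(\bt)\}$, so $(\Pi_{\sL}(\bigotimes_\ell \bZ^{G_\ell})\Pi_{\sR}^\top)_{\bs,\bt} = (\bigotimes_\ell \bZ^{G_\ell})_{\sigma_{\sL}^{-1}(\bs),\sigma_{\sR}^{-1}(\bt)}$, which is exactly your entrywise identity; this confirms the proposition as stated, and fixes the direction without further experimentation.
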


This fact will be most useful when bounding the difference in operator norm incurred by replacing each connected component $G_i$ by some other diagram in terms of the differences of the smaller CGMs corresponding to each connected component taken in isolation.
\begin{proposition}
    \label{prop:cgm-conn-bd}
    Let $G$ be a ribbon diagram with connected components $G_1, \dots, G_m$, where $V(G_{\ell}) \cap \sL = \sL_{\ell}$ and $V(G_{\ell}) \cap \sR = \sR_{\ell}$.
    Suppose $H_1, \dots, H_m$ are other ribbon diagrams on $(\sL_{\ell}, \sR_{\ell}, V^{\square}(G_{\ell}))$, and write $H$ for the union diagram of the $H_i$.
    Then,
    \begin{equation}
        \|\bZ^{G} - \bZ^{H}\| \leq \sum_{\ell = 1}^m \|\bZ^{G_{\ell}} - \bZ^{H_{\ell}}\| \prod_{\ell^{\prime} = 1}^{\ell - 1}\|\bZ^{H_{\ell}}\| \prod_{\ell^{\prime} = \ell + 1}^m \|\bZ^{G_{\ell}}\|.
    \end{equation}
\end{proposition}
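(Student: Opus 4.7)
The plan is to reduce the statement to a standard telescoping bound on differences of tensor products, after peeling off the permutation matrices provided by Proposition~\ref{prop:cgm-tt-tensorization}. First I would apply that proposition to both $G$ and $H$. Since the connected components of $H$ are built on the same partitions $(\sL_\ell, \sR_\ell)$ as those of $G$, the induced labellings $\kappa_{\sL_\ell}, \kappa_{\sR_\ell}$ and hence the permutation matrices $\Pi_\sL, \Pi_\sR$ coincide for the two diagrams. Therefore
\begin{equation}
    \bZ^G - \bZ^H = \Pi_\sL\bigg(\bigotimes_{\ell=1}^m \bZ^{G_\ell} - \bigotimes_{\ell=1}^m \bZ^{H_\ell}\bigg) \Pi_\sR^\top,
\end{equation}
and since $\Pi_\sL, \Pi_\sR$ are permutation (in particular orthogonal) matrices they preserve operator norm, giving $\|\bZ^G - \bZ^H\| = \|\bigotimes_\ell \bZ^{G_\ell} - \bigotimes_\ell \bZ^{H_\ell}\|$.

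Next I would use the standard telescoping identity. Setting $A_\ell \colonequals \bZ^{G_\ell}$ and $B_\ell \colonequals \bZ^{H_\ell}$, define the hybrid tensors $C_\ell \colonequals B_1 \otimes \cdots \otimes B_\ell \otimes A_{\ell+1} \otimes \cdots \otimes A_m$, so that $C_0 = \bigotimes_\ell A_\ell$ and $C_m = \bigotimes_\ell B_\ell$. Then
\begin{equation}
    \bigotimes_{\ell=1}^m A_\ell - \bigotimes_{\ell=1}^m B_\ell = \sum_{\ell=1}^m (C_{\ell-1} - C_\ell) = \sum_{\ell=1}^m B_1 \otimes \cdots \otimes B_{\ell-1} \otimes (A_\ell - B_\ell) \otimes A_{\ell+1} \otimes \cdots \otimes A_m.
\end{equation}

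Finally, I would apply the triangle inequality together with the multiplicativity of the operator norm under tensor products, $\|X \otimes Y\| = \|X\|\|Y\|$, to each summand, which yields
\begin{equation}
    \|\bZ^G - \bZ^H\| \leq \sum_{\ell=1}^m \|A_\ell - B_\ell\| \prod_{\ell' < \ell}\|B_{\ell'}\| \prod_{\ell' > \ell}\|A_{\ell'}\|,
\end{equation}
which is exactly the asserted bound after substituting back the CGM notation. No step here presents a real obstacle: the only subtlety worth checking carefully is that $G$ and $H$ genuinely share the permutations $\Pi_\sL, \Pi_\sR$, which is immediate from the hypothesis that $H_\ell$ has the same leaf partition $(\sL_\ell, \sR_\ell)$ as $G_\ell$, so that the inherited labellings $\kappa_{\sL_\ell}, \kappa_{\sR_\ell}$ defined in Proposition~\ref{prop:cgm-tt-tensorization} are identical for the two diagrams.
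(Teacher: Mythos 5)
Your proof is correct and follows essentially the same route as the paper's: apply Proposition~\ref{prop:cgm-tt-tensorization} to both $G$ and $H$ (with the same permutation matrices, as you rightly note), telescope the difference of tensor products, and conclude by the triangle inequality and multiplicativity of the operator norm over tensor products. The paper simply writes the telescoping sum directly without spelling out the hybrid-tensor bookkeeping, but the argument is the same.
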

\begin{proof}
    Following the notation of Proposition~\ref{prop:cgm-tt-tensorization}, we can write a telescoping sum,
    \begin{equation}
        \bZ^{G} - \bZ^{H} =  \Pi_{\sL}\left(\sum_{\ell = 1}^m \bigotimes_{\ell^{\prime} = 1}^{\ell - 1} \bZ^{H_{\ell^{\prime}}} \otimes (\bZ^{G_{\ell}} - \bZ^{H_{\ell}}) \otimes \bigotimes_{\ell^{\prime} = \ell + 1}^m \bZ^{G_{\ell^{\prime}}}\right)\Pi_{\sR}
    \end{equation}
    The bound then follows by the triangle inequality and the tensorization of the operator norm.
\end{proof}

\subsection{Splitting}
\label{app:splitting}

We describe two operations on a ribbon diagram, which we call ``splittings,'' that add edges to the diagram without changing the associated CGM.
Later this will allow us to perform some regularizing operations on a ribbon diagram's graph structure when making arguments about its CGM.

The first type of splitting lets us expand a diagram and thereby eliminate the intersection of $\sL$ and $\sR$ by adding redundant vertices and suitable adjacencies.

\begin{proposition}[Intersection splitting]
    Let $G = ((\sL \cup \sR) \sqcup V^{\square}, E)$ be a ribbon diagram with $(\bM^{(x, y)})$ a compatible labelling of its edges.
    Write $\sL \cap \sR = \{v_1, \dots, v_n\}$.
    Let $G^{\prime}$ be another labelled ribbon diagram, formed by adding new vertices $\{v_1^{\prime}, \dots, v_n^{\prime}\}$ to $G$, setting $V^{\square}(G^{\prime}) = V^{\square}(G)$, $\sL(G^{\prime}) = \sL(G)$, and $\sR(G^{\prime}) = \sR(G) \setminus \sL(G) \cup \{v_1^{\prime}, \dots, v_n^{\prime}\}$, and adding edges $\{v_i, v_i^{\prime}\}$ labelled by the matrix $\bm I_{N(v_i)}$ for each $i \in [n]$.
    Then, $\bZ^{G} = \bZ^{G^{\prime}}$.
\end{proposition}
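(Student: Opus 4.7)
The plan is to verify the equality $\bZ^G = \bZ^{G'}$ entrywise directly from the definition of a generalized CGM, observing that each newly added identity-labelled edge precisely reproduces one of the explicit constraint indicators that appears out in front of the summation in the formula for $Z^G$.

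First, I would fix a labelling $\kappa_{\sR(G')}$ of $\sR(G')$ that agrees with $\kappa_{\sR}$ on $\sR(G) \setminus \sL(G)$ and that sends each new vertex $v_i'$ to the position that $v_i$ originally occupied in $\kappa_{\sR}$. Since by construction $|\sR(G')| = |\sR(G)| - n + n = |\sR(G)|$, this yields two CGMs of identical shape, so it suffices to check entrywise agreement at an arbitrary $(\bs,\bt)$. I would also extend $N$ so that $N(v_i') = N(v_i)$, which is forced anyway by the dimension of $\bm I_{N(v_i)}$ on the new edge.

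Next, I would evaluate $Z^{G'}_{\bs,\bt}$ using \eqref{eq:app-cgm-def}. Because $\sL(G') \cap \sR(G') = \emptyset$ after splitting, the prefactor of indicators vanishes, and one is left with a sum over $\ba \in \prod_{v \in V^{\square}(G')}[N(v)] = \prod_{v \in V^{\square}(G)}[N(v)]$ of a product over $E(G')$. This edge product factors as a product over $E(G)$ — which reproduces exactly the edge product appearing in the definition of $Z^G_{\bs,\bt}$ — times the $n$ new factors
\[
\prod_{i=1}^n \bigl(\bm I_{N(v_i)}\bigr)_{f_{\ba,\bs,\bt}(v_i),\, f_{\ba,\bs,\bt}(v_i')} = \prod_{i=1}^n \One\bigl\{ s(\kappa_{\sL}(v_i)) = t(\kappa_{\sR(G')}(v_i'))\bigr\}.
\]
By the choice of $\kappa_{\sR(G')}$ above, these are exactly the indicators $\One\{s(\kappa_{\sL}(v_i)) = t(\kappa_{\sR}(v_i))\}$ appearing in the definition of $Z^G_{\bs,\bt}$. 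Crucially, they do not depend on $\ba$, so I would pull them out in front of the summation and recover verbatim the expression for $Z^G_{\bs,\bt}$.

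There is no real obstacle here: the proof is essentially definitional, using only that an identity-labelled edge imposes the same equality constraint on the two index tuples at its endpoints as is imposed by identifying the endpoints. The only subtlety is the one-time bookkeeping of $\kappa_{\sR(G')}$, which can be absorbed at the start so that no index-permutation appears in the final comparison.
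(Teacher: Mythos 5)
Your argument is correct and is essentially the direct definitional verification one would expect; the paper itself states this proposition without proof, treating it as evident from Definition \eqref{eq:app-cgm-def}. The key points you make are exactly the ones that need to be checked: (i) after splitting, $\sL(G')\cap\sR(G')=\emptyset$ so the indicator prefactor in $Z^{G'}$ is trivial, (ii) the identity-labelled edge $\{v_i,v_i'\}$ contributes $\One\{s(\kappa_{\sL}(v_i))=t(\kappa_{\sR(G')}(v_i'))\}$ inside the sum over $\ba$, and (iii) these factors are independent of $\ba$ and hence can be pulled in front, reconstituting the prefactor in $Z^G$. The only point worth being a touch more explicit about is that, once the consistency indicator is nonzero, the restriction of $f_{\ba,\bs,\bt}$ (as defined for $G'$) to $V(G)$ coincides with the map $f_{\ba,\bs,\bt}$ for $G$ on each $v_i$, since $s(\kappa_{\sL}(v_i))=t(\kappa_{\sR}(v_i))$ makes both candidate values agree and in $G'$ the vertex $v_i$ is assigned $s(\kappa_{\sL}(v_i))$; this is what makes the edge product over $E(G)$ inside $Z^{G'}$ literally equal the corresponding product in $Z^G$, including on any edges between two intersection vertices $v_i,v_j$. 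You gesture at this but do not spell it out. With that caveat filled in, the proof is complete.
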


The second type of splitting lets us represent a factorization of a matrix labelling an edge by subdividing that edge with intermediate vertices.

\begin{proposition}[Edge splitting]
    \label{prop:edge-splitting}
    Let $G = ((\sL \cup \sR) \sqcup V^{\square}, E)$ be a ribbon diagram with $(\bM^{(x, y)})$ a compatible labelling of its edges.
    Suppose $x \sim z$ in $G$, and there exist matrices $\bA \in \RR^{N(x) \times n}, \bB \in \RR^{n \times N(z)}$ such that $\bM^{(x, z)} = \bA \bB$.
    Let $G^{\prime}$ be another labelled ribbon diagram, with $\sL(G^{\prime}) = \sL(G)$, $\sR(G^{\prime}) = \sR(G)$, $V^{\square}(G^{\prime}) = V^{\square}(G) \cup \{y\}$ for a new vertex $y$, and $E(G^{\prime}) = E(G) \cup \{\{x, y\}, \{y, z\}\}$.
    Let $\{x, y\}$ in $G^{\prime}$ be labelled with the matrix $\bM^{(x, y)} = \bA$, and $\{y, z\}$ be labelled with the matrix $\bM^{(y, z)} = \bB$.
    Then, $\bZ^{G} = \bZ^{G^{\prime}}$.
\end{proposition}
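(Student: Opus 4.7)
The plan is to verify the equality entrywise directly from the definition of the CGM in equation \eqref{eq:app-cgm-def}, isolating the contribution of the newly introduced vertex $y$ and collapsing the sum over its index via the factorization $\bM^{(x,z)} = \bA\bB$. Since intersection splitting and the other combinatorial gadgets are not involved here, no graph-theoretic bookkeeping beyond the added vertex and the two new edges is needed.

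Concretely, I would fix admissible index tuples $\bs$ and $\bt$ and write out $Z^{G'}_{\bs,\bt}$. The indicator prefactor enforcing $s(\kappa_{\sL}(i)) = t(\kappa_{\sR}(i))$ for $i \in \sL \cap \sR$ is identical in $G$ and $G'$, since $\sL$ and $\sR$ are unchanged. The summation for $G'$ ranges over $\ba' \in \prod_{v \in V^{\square}(G')}[N(v)] = \prod_{v \in V^{\square}(G)}[N(v)] \times [n]$, where the last factor corresponds to $a(y)$. I would split $\ba' = (\ba, a(y))$ with $\ba \in \prod_{v \in V^{\square}(G)}[N(v)]$, and note that the edge set of $G'$ differs from that of $G$ only in that $\{x,z\}$ has been removed and $\{x,y\}, \{y,z\}$ have been added. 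All other factors in the product over edges agree with those in $G$ under the identification $f_{\ba',\bs,\bt}|_{V(G)} = f_{\ba,\bs,\bt}$, since $y$ appears only in the two new edges.

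The key step is then to isolate the sum over $a(y)$:
\begin{equation}
\sum_{a(y) \in [n]} M^{(x,y)}_{f(x),a(y)} M^{(y,z)}_{a(y),f(z)} = \sum_{a(y) \in [n]} A_{f(x),a(y)} B_{a(y),f(z)} = (\bA\bB)_{f(x),f(z)} = M^{(x,z)}_{f(x),f(z)},
\end{equation}
where $f = f_{\ba,\bs,\bt}$. Substituting this back in, the sum over $\ba'$ in the definition of $Z^{G'}_{\bs,\bt}$ factorizes into the sum over $\ba$ of the product over $E(G)$, which is exactly $Z^{G}_{\bs,\bt}$.

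There is no real obstacle here; the only things to be careful about are (i) ensuring the labelling $f_{\ba',\bs,\bt}$ on $G'$ restricts correctly to $f_{\ba,\bs,\bt}$ on $V(G)$ so that every unchanged edge contributes the same factor, and (ii) confirming that $y \notin \sL \cup \sR$ so the indicator prefactor and the sets $\sL \cap \sR$ do not change. Both are immediate from the construction of $G'$. The statement should therefore follow in a few lines of direct computation.
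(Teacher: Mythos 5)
Your proof is correct, and it is the natural one; the paper in fact states this proposition without proof, so there is nothing to compare against. The only subtlety is that the paper's literal edge set $E(G') = E(G) \cup \{\{x,y\},\{y,z\}\}$ would leave the old edge $\{x,z\}$ in place, which makes the claim false as written (the collapsed sum over $a(y)$ would reproduce $M^{(x,z)}_{f(x),f(z)}$ a second time); you have silently and correctly repaired this by declaring that $\{x,z\}$ is removed, which is the intended reading and is consistent with every downstream use (e.g.\ Proposition~\ref{prop:pinned-vx-rank-one} and the factorizations in the proof of Lemma~\ref{lem:tying-stretched-ribbon}). With that reading, your isolation of the sum over $a(y)$, the identity $\sum_{a(y)} A_{f(x),a(y)} B_{a(y),f(z)} = M^{(x,z)}_{f(x),f(z)}$, and the observation that $y \notin \sL \cup \sR$ so the indicator prefactor is unchanged, together give the result.
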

\noindent
Note that, as an especially useful special case, we may always take $n = N(x)$, $\bA = \bm I_{N(x)}$, and $\bm B = \bM^{(x, z)}$.
This particular technique allows us to adjust the graph of the ribbon diagram without needing to find any special factorization of $\bM^{(x, z)}$.

\subsection{Pinning, Cutting, and Direct Sum Decomposition}
\label{app:pinning}

We next explore special operations that may be performed on the following special type of $\square$ vertex in a ribbon diagram.
\begin{definition}[Pinned vertex]
    In a ribbon diagram $G = ((\sL \cup \sR) \sqcup V^{\square}, E)$ with dimension labels $N: V \to \NN$, we call a vertex $v \in V^{\square}$ \emph{pinned} if $N(v) = 1$.
    In graphical representations, we show a pinned vertex with an unfilled circle.
\end{definition}
\noindent
Any edge one of whose endpoints is a pinned vertex must be labelled with a vector, and in the formula \eqref{eq:app-cgm-def} there is effectively no summation corresponding to a pinned vertex, since $[N(v)] = \{1\}$ whereby the vertex's assigned index is always the same---this is the reason for the term ``pinned.''

In terms of manipulations of the ribbon diagram $G$, the important special property of a pinned vertex is that it allows the diagram to be ``cut'' at that vertex without changing the resulting CGM.
\begin{proposition}[Cutting]
    \label{prop:pinned-vx-cutting}
    Let $G = ((\sL \cup \sR) \sqcup V^{\square}, E)$ be a ribbon diagram, and let $v \in V^{\square}$ be pinned.
    Suppose $\deg(v) = m$, enumerate the neighbors of $v$ as $w_1, \dots, w_m$, and suppose these edges are labelled with vectors $\bm m_i \in \RR^{N(w_i)}$ for $i = 1, \ldots, m$.
    Let $G^{\prime}$ be another ribbon diagram, formed by removing $\bv$ from $G$ and adding new vertices $\bv_1^{\prime}, \dots, \bv_m^{\prime}$ to $V^{\square}$, with $N(v_{i}^{\prime}) = 1$, $v_i^{\prime}$ adjacent to only $\bw_i$, and this edge labelled by $\bm m_i$.
    Then, $\bZ^{G} = \bZ^{G^{\prime}}$.
\end{proposition}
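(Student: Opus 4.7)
The plan is to verify the equality directly from the CGM formula \eqref{eq:app-cgm-def}, showing that the contribution of the pinned vertex $v$ to the entry $Z^G_{\bs,\bt}$ factorizes into a product of independent contributions, each of which matches the contribution of one of the new pinned vertices $v_1', \ldots, v_m'$ to $Z^{G'}_{\bs,\bt}$.

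First I would fix arbitrary indices $\bs, \bt$ that are compatible in the sense of the definition, and write out $Z^G_{\bs,\bt}$. Because $v$ is pinned, the summation index $a(v)$ ranges over the singleton $\{1\}$, so effectively no sum is performed and the product over edges incident to $v$ contributes exactly the scalar $\prod_{i=1}^m M^{(v,w_i)}_{1,\,f_{\ba,\bs,\bt}(w_i)} = \prod_{i=1}^m (\bm m_i)_{f_{\ba,\bs,\bt}(w_i)}$. Factoring the product in \eqref{eq:app-cgm-def} as edges incident to $v$ versus all other edges, one obtains
\begin{equation}
Z^G_{\bs,\bt} = \prod_{i \in \sL \cap \sR} \One\{s(\kappa_\sL(i)) = t(\kappa_\sR(i))\} \sum_{\ba' \in \prod_{u \in V^\square \setminus \{v\}}[N(u)]} \Bigg(\prod_{i=1}^m (\bm m_i)_{f_{\ba',\bs,\bt}(w_i)}\Bigg) \prod_{\{x,y\} \in E \setminus E(v)} M^{(x,y)}_{f_{\ba',\bs,\bt}(x),f_{\ba',\bs,\bt}(y)},
\end{equation}
where $E(v)$ is the set of edges incident to $v$ and $\ba'$ is the restriction of $\ba$ to $V^\square \setminus \{v\}$.

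Next I would write out $Z^{G'}_{\bs,\bt}$ using the same tuple of indices. Since each $v_i'$ is pinned, the summation over $\prod_{u \in V^\square(G')}[N(u)]$ has exactly the same effective range as $\prod_{u \in V^\square \setminus \{v\}}[N(u)]$ in $G$ (the extra factors are all singletons). The edges of $G'$ are $E \setminus E(v)$ together with the new edges $\{v_i', w_i\}$, labelled by $\bm m_i$, contributing $(\bm m_i)_{f_{\ba',\bs,\bt}(w_i)}$ exactly as before. Hence the summands coincide termwise with those for $G$, and $Z^G_{\bs,\bt} = Z^{G'}_{\bs,\bt}$.

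There is no genuine obstacle here; the argument is essentially bookkeeping. The only mild subtlety is to check that the compatibility indicator over $\sL \cap \sR$ and the labellings $\kappa_\sL, \kappa_\sR$ are literally unchanged, so that no spurious reindexing is introduced by the surgery at $v$. This follows from the construction, since the pinned vertices $v, v_1', \ldots, v_m'$ all lie in $V^\square$ and the sets $\sL, \sR$ and their labellings are preserved.
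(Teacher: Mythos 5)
Your proof is correct and is exactly the direct unwinding of the CGM definition that the paper has in mind; indeed, the paper states this proposition without proof, implicitly treating the termwise factorization you carry out as immediate. Your check that the $\sL \cap \sR$ indicator and the labellings $\kappa_{\sL}, \kappa_{\sR}$ are untouched is the right thing to note and closes the only point of potential concern.
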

\noindent
Note that, after splitting, every pinned vertex has degree 1.
In our case, when we work with tree ribbon diagrams, this means that every pinned vertex is a leaf of the resulting forest, a property that will be important in our analysis.

Finally, we show two ways that pinned vertices arise naturally from matrix-labelled ribbon diagrams where no vertex has dimension label 1 to begin with.
The first, simpler situation is where an edge is labelled with a rank 1 matrix.
\begin{proposition}
    \label{prop:pinned-vx-rank-one}
    Let $G = ((\sL \cup \sR) \sqcup V^{\square}, E)$ be a ribbon diagram, and suppose $\{v, w\} \in E$ is labelled with $\bM^{(v, w)} = \bx\by^{\top}$.
    Let $G^{\prime}$ be formed by adding a vertex $x$ along this edge between $v$ and $w$, setting $N(x) = 1$, and setting $\bM^{(v, x)} = \bx$ and $\bM^{(x, w)} = \by$.
    Then, $\bZ^{G} = \bZ^{G^{\prime}}$.
\end{proposition}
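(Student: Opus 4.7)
The plan is to derive this as an immediate specialization of Proposition~\ref{prop:edge-splitting} (edge splitting) to the rank-one case. First I would observe that a rank-one outer product $\bM^{(v,w)} = \bx \by^{\top}$ is literally a factorization $\bM^{(v,w)} = \bA \bB$ of the form appearing in edge splitting, with inner dimension $n = 1$, taking $\bA = \bx \in \RR^{N(v) \times 1}$ and $\bB = \by^{\top} \in \RR^{1 \times N(w)}$. Instantiating Proposition~\ref{prop:edge-splitting} with this factorization produces a ribbon diagram in which the edge $\{v, w\}$ is subdivided by a new $\square$ vertex $x$ with dimension label $N(x) = n = 1$, and with edge labels $\bM^{(v, x)} = \bA = \bx$ and $\bM^{(x, w)} = \bB = \by^{\top}$. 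This is exactly the diagram $G^{\prime}$ described in the statement, and the equality $\bZ^G = \bZ^{G^{\prime}}$ follows directly from the conclusion of Proposition~\ref{prop:edge-splitting}. Since $N(x) = 1$, the inserted vertex $x$ is pinned by definition, which is the form in which we want to record the conclusion.

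There is essentially no obstacle to surmount: the only bookkeeping I would need to check is that the edge-labelling convention $\bM^{(y, x)} = \bM^{(x, y)^{\top}}$ is consistent with labelling the two new edges by a column vector and its transposed row vector, which is automatic when $N(x) = 1$ forces every matrix incident to $x$ to be either a column or a row vector. The conceptual content of the proposition is therefore not in its proof but in the observation that the rank-one case is one of the two mechanisms by which pinned vertices naturally arise in a ribbon diagram; combined with Proposition~\ref{prop:pinned-vx-cutting}, this will let us ``cut'' any rank-one labelled edge, decoupling its two endpoints into separate subdiagrams joined only by a matched pair of pinned leaves, which is the structural feature one would exploit in downstream norm-factorization arguments.
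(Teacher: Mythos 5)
Your proposal is exactly the paper's own proof: the paper also derives this as an immediate application of Proposition~\ref{prop:edge-splitting} to the rank-one factorization $\bM^{(v,w)} = \bx\by^{\top}$ with inner dimension $n = 1$. The additional observations you include (that $N(x)=1$ forces pinning, and the orientation/transpose bookkeeping) are correct and harmless, but the core step is identical.
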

\begin{proof}
    The result follows from applying Proposition~\ref{prop:edge-splitting} to the edge $\{v, w\}$ using the given rank one factorization.
\end{proof}

The second, perhaps more natural, situation is that the CGM of any ribbon diagram with $\sL \cap \sR \neq \emptyset$ may be written as a direct sum of CGMs with those vertices pinned (that is, as a block-diagonal matrix with these CGMs as the diagonal blocks).
\begin{proposition}[Direct sum decomposition]
    \label{prop:intersection-direct-sum}
    Let $G = ((\sL \cup \sR) \sqcup V^{\square}, E)$ be a ribbon diagram, and enumerate $\sL \cap \sR = \{v_1, \dots, v_m\}$.
    Given $\ba \in \prod_{i = 1}^m [N(v_i)]$, let $G[\ba]$ be the diagram formed by moving each $v_i$ to $V^{\square}$, letting $N(v_i) = 1$, and labelling each edge incident with $v_i$, say $\{v_i, x\}$, with the vector $\bm^{(v_i, x)}$ equal to the $a_i$th row of $\bM^{(v_i, x)}$.
    If there is an edge between $v_i$ and $v_j$, then in $G[\ba]$ it is labelled with the constant $M^{(v_i, v_j)}_{a_ia_j}$.
    Then, there exist permutation matrices $\Pi_{\sL}$ and $\Pi_{\sR}$ such that
    \begin{equation}
        \bZ^{G} = \Pi_{\sL}\left(\bigoplus_{\ba \in \prod_{i = 1}^m [N(v_i)]} \bZ^{G(a)}\right)\Pi_{\sR}^{\top}.
    \end{equation}
\end{proposition}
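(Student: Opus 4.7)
The plan is to argue directly from the defining formula \eqref{eq:app-cgm-def} by partitioning the index set according to the values assigned to vertices in $\sL \cap \sR$, and then recognizing each resulting block as a pinned CGM.

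First, I would observe that by the product of indicator factors $\prod_{i \in \sL \cap \sR} \One\{s(\kappa_\sL(i)) = t(\kappa_\sR(i))\}$ in \eqref{eq:app-cgm-def}, the entry $Z^G_{\bs, \bt}$ vanishes unless, for every $v_i \in \sL \cap \sR$, the row index $\bs$ and column index $\bt$ assign the same value, call it $a_i$, to $v_i$. So the nonzero pattern of $\bZ^G$ is concentrated on the pairs $(\bs, \bt)$ that agree on $\sL \cap \sR$. I would then introduce the permutations $\Pi_\sL$ and $\Pi_\sR$ that reorder the rows and columns lexicographically first by the tuple $\ba = (s(\kappa_\sL(v_1)), \ldots, s(\kappa_\sL(v_m)))$ of indices assigned to the intersection vertices, and then by the remaining ``free'' coordinates coming from $\sL \setminus \sR$ and $\sR \setminus \sL$ respectively. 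After this reordering, the matrix is block-diagonal with one block per choice of $\ba$, which is exactly the direct-sum structure in the statement.

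Next, I would identify each diagonal block with $\bZ^{G[\ba]}$. Fix $\ba$ and restrict attention to $(\bs, \bt)$ with the prescribed values on $\sL \cap \sR$; the remaining free indices naturally label $\sL(G[\ba]) = \sL \setminus \sR$ and $\sR(G[\ba]) = \sR \setminus \sL$. In \eqref{eq:app-cgm-def} applied to $G$, the summation variable $\bm a'$ ranges over $\prod_{v \in V^\square(G)} [N(v)]$, while $f_{\ba', \bs, \bt}$ takes the prescribed value $a_i$ at each $v_i$; therefore each factor $M^{(v_i, x)}_{a_i, f(x)}$ (respectively $M^{(v_i, v_j)}_{a_i, a_j}$) appearing in the edge product is precisely the $a_i$-th entry of the vector labelling the corresponding edge of $G[\ba]$ (respectively the scalar labelling the edge between pinned vertices). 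Since in $G[\ba]$ the vertices $v_i$ are pinned with $N(v_i) = 1$, the summation in \eqref{eq:app-cgm-def} for $G[\ba]$ is over $\prod_{v \in V^\square(G)} [N(v)]$ as well, with the $v_i$-coordinates trivial; matching factor-for-factor gives the desired block identification.

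The only step requiring genuine care is the bookkeeping for the permutations $\Pi_\sL$ and $\Pi_\sR$: one must check that the same reordering of the ``common'' coordinates is used on both sides so that the blocks line up, and that the free coordinates on each side are ordered consistently with the induced labellings $\kappa_{\sL(G[\ba])}$ and $\kappa_{\sR(G[\ba])}$ of $G[\ba]$. This is the main (though entirely routine) obstacle; once the ordering convention is fixed, the identification of blocks is immediate from the calculation above, and the proposition follows.
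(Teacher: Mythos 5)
Your argument is correct, and it is exactly the direct-from-definitions proof that the paper implicitly has in mind (the proposition is stated without a written proof in the appendix, evidently because it is considered a routine consequence of \eqref{eq:app-cgm-def}). Both the observation that the indicator factors in \eqref{eq:app-cgm-def} force $\bs$ and $\bt$ to agree on $\sL\cap\sR$, and the identification of each fixed-$\ba$ block with $\bZ^{G[\ba]}$ by matching edge factors, are exactly what the statement requires.
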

\noindent
While formally a pinned vertex is a $\square$ vertex, in this setting we see that it behaves more like a $\bullet$ vertex whose index is fixed instead of varying with the matrix indices---this ambiguity is the reason for the unfilled circle notation.

\subsection{Factorization}
We now arrive at perhaps the most useful and important manipulation of CGMs via ribbon diagrams.
Namely, certain graphical decompositions of ribbon diagrams correspond to factorizations of CGMs into products of simpler CGMs.

\begin{proposition}
    \label{prop:factorization}
    Let $G = (V, E)$ be a ribbon diagram with $V = (\sL \cup \sR) \sqcup V^{\square}$.
    Suppose that $V$ admits a partition $V = A \sqcup B \sqcup C$, such that the following properties hold:
    \begin{enumerate}
    \item $\sL \subseteq A$.
    \item $\sR \subseteq C$.
    \item $\partial_{\tout} A, \partial_{\tout}C \subseteq B$.
    \end{enumerate}
    (Here $\partial_{\tout}$ denotes the ``outer boundary'' of a set, those vertices not in the set but with a neighbor in the set.)
    Suppose also that the edges within $B$ admit a partition $E(B, B) = E^{A} \sqcup E^{B} \sqcup E^C$, where $E^{A} \subseteq E(\partial_{\tout}A, \partial_{\tout}A)$ and $E^{C} \subseteq E(\partial_{\tout}C, \partial_{\tout}C)$.
    Define the following ancillary ribbon diagrams:
    \begin{enumerate}
    \item $G[A]$ with vertex triple $(\sL, \partial_{\tout}A, A \setminus \sL)$ and edges $E(A, A \cup \partial_{\tout}A) \cup E^A$;
    \item $G[B]$ with vertex triple $(\partial_{\tout}A, \partial_{\tout}C, B \setminus \partial_{\tout}A \setminus \partial_{\tout}C)$ and edges $E^B$; and
    \item $G[C]$ with vertex triple $(\partial_{\tout}C, \sR, C \setminus \sR)$ and edges $E(C, C \cup \partial_{\tout}C) \cup E^C$.
    \end{enumerate}
    In these diagrams, $\partial_{\tout}A$ and $\partial_{\tout}C$ are given arbitrary labellings, but the same labelling each time that they appear in different diagrams.
    Then,
    \begin{equation}
        \bZ^{G} = \bZ^{G[A]} \bZ^{G[B]} \bZ^{G[C]}.
    \end{equation}
\end{proposition}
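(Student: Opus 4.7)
The plan is to expand both sides of the claimed identity directly from the CGM definition and match them term-by-term, with the graphical partition inducing matching partitions of the sum over assignments and the product over edges.

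First, I would verify two disjoint decompositions implied by the hypotheses. From $\partial_{\tout}A, \partial_{\tout}C \subseteq B$ and the pairwise disjointness of $A, B, C$, no edge of $G$ can join $A$ to $C$, so the edges of $G$ decompose as
\begin{equation*}
    E(G) = E(A,A) \sqcup E(A, \partial_{\tout}A) \sqcup E^A \sqcup E^B \sqcup E^C \sqcup E(\partial_{\tout}C, C) \sqcup E(C,C),
\end{equation*}
where the first three groups constitute $E(G[A])$, the middle $E^B$ constitutes $E(G[B])$, and the last three constitute $E(G[C])$. Similarly, since $\sL \subseteq A$ and $\sR \subseteq C$ are disjoint from each other's containing blocks, we have $V^{\square}(G[A]) = A \setminus \sL$, $V^{\square}(G[C]) = C \setminus \sR$, and $V^{\square}(G[B]) = B \setminus \partial_{\tout}A \setminus \partial_{\tout}C$; these three sets together with $\partial_{\tout}A$ and $\partial_{\tout}C$ partition $V^{\square}(G)$.

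Next I would expand the matrix product. By matrix multiplication,
\begin{equation*}
    (\bZ^{G[A]} \bZ^{G[B]} \bZ^{G[C]})_{\bs,\bt} = \sum_{\bm u,\, \bm v} Z^{G[A]}_{\bs,\bm u}\, Z^{G[B]}_{\bm u,\bm v}\, Z^{G[C]}_{\bm v,\bt},
\end{equation*}
where $\bm u$ and $\bm v$ index assignments to $\partial_{\tout}A$ and $\partial_{\tout}C$ under the common labellings fixed in the hypotheses. Applying \eqref{eq:app-cgm-def} inside each factor produces an internal summation over assignments to that sub-diagram's $\square$ vertices and a product over its edges. Merging the external sums over $\bm u, \bm v$ with the three internal sums yields, by the partition of $V^{\square}(G)$, a single summation over $\ba \in \prod_{v \in V^{\square}(G)} [N(v)]$; and merging the three edge products yields, by the partition of $E(G)$, a single product over $\{x,y\} \in E(G)$. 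Because the labellings of $\partial_{\tout}A$ and $\partial_{\tout}C$ are shared across sub-diagrams, the value $f_{\ba,\bs,\bt}(v)$ at every vertex is unambiguously determined, so each factor $M^{(x,y)}_{f(x),f(y)}$ in the combined expression matches its counterpart in \eqref{eq:app-cgm-def} applied to $G$ itself, giving $Z^G_{\bs,\bt}$.

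The main bookkeeping subtlety I anticipate concerns the edges in $E^A$ and $E^C$. These sit entirely inside $B$ but, by construction, are routed to $G[A]$ and $G[C]$, respectively, where both endpoints appear as $\sR$- or $\sL$-vertices (in $\partial_{\tout}A$ or $\partial_{\tout}C$). One must check that the matrix-product summations over $\bm u$ and $\bm v$ correctly absorb these "boundary-side" edges into the $G[A]$ and $G[C]$ factors rather than leaving them in $G[B]$. The role of the stipulated partition $E(B,B) = E^A \sqcup E^B \sqcup E^C$ is exactly to ensure that each such edge is routed to a unique factor, after which the remainder of the argument is a mechanical identification of summations and products.
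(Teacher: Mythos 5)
Your approach is the same as the paper's: expand the CGM definition on both sides, merge the matrix-product sums with the internal sums over $\square$-vertices, and merge the edge products using the claimed decomposition of $E(G)$. The bookkeeping is mostly correct, and the anticipated subtlety you flag about routing $E^A, E^C$ to the outer factors is real and handled the way you describe. However, there is a genuine gap in one assertion.

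You write that $A\setminus\sL$, $C\setminus\sR$, $B\setminus\partial_{\tout}A\setminus\partial_{\tout}C$, $\partial_{\tout}A$, and $\partial_{\tout}C$ ``partition $V^{\square}(G)$,'' and rely on this to merge the external sums over assignments to $\partial_{\tout}A$ and $\partial_{\tout}C$ with the internal sums into a single summation over $\prod_{v \in V^{\square}(G)}[N(v)]$. This is false in general: nothing in the hypotheses forces $\partial_{\tout}A \cap \partial_{\tout}C = \emptyset$, and indeed the downstream applications (Lemmas~\ref{lem:tying-stretched-ribbon} and~\ref{lem:tying-partition-transport-ribbon}) exploit this overlap heavily — e.g., in the first one $\sL(G[B]) = \sR(G[B]) = U$ so that $\bZ^{G[B]}$ is diagonal. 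When the boundaries overlap, your matrix-product expansion introduces \emph{two} independent summation indices for each $v \in \partial_{\tout}A \cap \partial_{\tout}C$ (one from $\bm u$, one from $\bm v$), whereas $Z^G$ has only one. The resolution lives in the generalized CGM definition~\eqref{eq:app-cgm-def} applied to $G[B]$: because $\sL(G[B]) = \partial_{\tout}A$ and $\sR(G[B]) = \partial_{\tout}C$ may intersect, $Z^{G[B]}_{\bm u, \bm v}$ carries the factor $\prod_{x \in \partial_{\tout}A \cap \partial_{\tout}C} \One\{u(x) = v(x)\}$, which collapses the double sum back to a single one. The paper's proof makes this explicit; your proposal omits it, and without it the term-by-term matching you describe does not go through in the overlapping case.
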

\begin{proof}
    The proof is a direct verification by expanding the matrix multiplications and definitions of the CGMs involved.
    Note that, by assumption, since $A$ and $C$ are disjoint and $\sL \subseteq A$ and $\sR \subseteq C$, we must have $\sL \cap \sR = \emptyset$.
    We have
    \begin{equation}
        (\bZ^{G[A]} \bZ^{G[B]} \bZ^{G[C]})_{\bs, \bt} = \sum_{\substack{\ba \in [N]^{\partial_{\tout}A} \\ \bc \in [N]^{\partial_{\tout}C}}} Z^{G[A]}_{\bs, \ba} Z^{G[B]}_{\ba, \bc} Z^{G[C]}_{\bc, \bt}.
    \end{equation}

    Given $\ba \in [N]^{\partial_{\tout}A}, \ba^{\prime} \in [N]^{A \setminus \sL}, \bm b \in [N]^{B \setminus \partial_{\tout} A \setminus \partial_{\tout} C}, \bc \in [N]^{\partial_{\tout}C}, \bc^{\prime} \in [N]^{C \setminus \sR}$ such that for $x \in \partial_{\tout}A \cap \partial_{\tout} C$ we have $a(x) = c(x)$, let us define $g = g_{\ba, \ba^{\prime}, \bm b, \bc, \bc^{\prime}, \bs, \bt}: V^{\square} \to [N]$ by
    \begin{equation}
        g(x) \colonequals\left\{\begin{array}{ll}s(\kappa_{\sL}(x)) & \text{if } x \in \sL, \\ t(\kappa_{\sR}(x)) & \text{if } x \in \sR, \\ a(x) & \text{if } x \in \partial_{\tout}A, \\ a^{\prime}(x) & \text{if } x \in A \setminus \sL, \\ b(x) & \text{if } x \in B \setminus \partial_{\tout} A \setminus \partial_{\tout} C, \\ c(x) & \text{if } x \in \partial_{\tout} C, \\ c^{\prime}(x) & \text{if } x \in C \setminus R.  \end{array}\right.
    \end{equation}
    We then compute
    \begin{align}
      (\bZ^{G[A]} \bZ^{G[B]} \bZ^{G[C]})_{\bs, \bt} &= \sum_{\substack{\ba \in [N]^{\partial_{\tout}A} \\ \ba^{\prime} \in [N]^{A \setminus \sL} \\ \bc \in [N]^{\partial_{\tout}C} \\ \bc^{\prime} \in [N]^{C \setminus \sR}}} \prod_{\{x, y\} \in E(A, A \cup \partial_{\tout} A) \cup E^A} M^{(x, y)}_{g(x), g(y)} \prod_{\{x, y\} \in E(C, C \cup \partial_{\tout}C) \cup E^C} M^{(x, y)}_{g(x), g(y)}\nonumber \\ &\hspace{2.73cm} \prod_{x \in \partial_{\tout}A \cap \partial_{\tout}C} \One\{a(x) = c(x)\} \prod_{\{x, y\} \in E^{B}} M^{(x, y)}_{g(x), g(y)} \nonumber \\
                              &= \sum_{\ba \in [N]^{V^{\square}}} \prod_{\{x, y\} \in E} M^{(x, y)}_{f_{\ba, \bs, \bt}(x),f_{\ba, \bs, \bt}(y)} \nonumber \\
                              &= Z^{G}_{\bs, \bt},
    \end{align}
    completing the proof.
\end{proof}

\subsection{General-Purpose Norm Bounds}

Our first application is to prove general-purpose bounds on the norms of CGMs based on ribbon diagram structure and the norms of constituent labelling matrices.

First, we show that norms multiply over connected components, as we have alluded to in Remark~\ref{rem:multiscale-spectrum} in the main text.
This is a direct application of Proposition~\ref{prop:cgm-tt-tensorization}.
\begin{proposition}
    \label{prop:norm-tensorization}
    Let $G$ be a ribbon diagram with connected components $G_1, \dots, G_{m}$, as in Proposition~\ref{prop:cgm-tt-tensorization}.
    Then, $\bZ^{G} = \prod_{\ell = 1}^m \|\bZ^{G_{\ell}}\|$.
\end{proposition}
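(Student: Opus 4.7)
The plan is to reduce the statement directly to Proposition~\ref{prop:cgm-tt-tensorization} together with two standard facts about the operator norm: its invariance under left/right multiplication by orthogonal matrices, and its multiplicativity over Kronecker products. So this is a short deduction, not a new combinatorial or spectral argument.

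Concretely, I would first invoke Proposition~\ref{prop:cgm-tt-tensorization} to write
\begin{equation}
    \bZ^{G} = \Pi_{\sL}\Bigl(\bigotimes_{\ell=1}^m \bZ^{G_{\ell}}\Bigr)\Pi_{\sR}^{\top},
\end{equation}
where $\Pi_{\sL}$ and $\Pi_{\sR}$ are the permutation matrices constructed there. Since permutation matrices are orthogonal, in particular $\|\Pi_{\sL}\| = \|\Pi_{\sR}^{\top}\| = 1$, and moreover they have orthogonal inverses, so conjugating by them preserves singular values exactly. Hence
\begin{equation}
    \|\bZ^{G}\| = \Bigl\|\bigotimes_{\ell=1}^m \bZ^{G_{\ell}}\Bigr\|.
\end{equation}

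The remaining step is the identity $\|\bA \otimes \bB\| = \|\bA\|\,\|\bB\|$ for the operator norm of a Kronecker product, applied inductively over the $m$ factors. This in turn follows from the singular value decomposition: the singular values of $\bA \otimes \bB$ are the pairwise products of the singular values of $\bA$ and of $\bB$, so the largest singular value of the tensor product is the product of the largest singular values. Iterating gives $\|\bigotimes_{\ell=1}^m \bZ^{G_{\ell}}\| = \prod_{\ell=1}^m \|\bZ^{G_{\ell}}\|$, which combined with the previous display yields the claim.

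There is no real obstacle here; the only thing to be a bit careful about is the indexing conventions for $\Pi_{\sL}$ and $\Pi_{\sR}$, since the tensor product in Proposition~\ref{prop:cgm-tt-tensorization} is taken in the order of the connected components rather than in the order dictated by $\kappa_{\sL}$ and $\kappa_{\sR}$. But this discrepancy is absorbed entirely into the permutation matrices, and as noted they do not affect the operator norm, so the identity goes through exactly as stated.
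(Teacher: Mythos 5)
Your proof is correct and follows the same reduction the paper intends: it simply unfolds the remark "this is a direct application of Proposition~\ref{prop:cgm-tt-tensorization}" by noting that the permutation matrices there are orthogonal (hence norm-preserving) and that the operator norm is multiplicative over Kronecker products. Nothing further is needed.
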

\noindent
The following bound is less trivial and is used repeatedly in our arguments.
\begin{proposition}
    \label{prop:cgm-generic-norm-bound}
    Let $G = (V, E)$ be a ribbon diagram with $V = (\sL \cup \sR) \sqcup V^{\square}$.
    Suppose that $V$ admits a partition $V = V_1 \sqcup \cdots \sqcup V_m$ with $m \geq 2$, such that $V_1 = \sL$, $V_m = \sR$, and the following properties hold:
    \begin{enumerate}
    \item For every $v \in V_1 = \sL$, there exists some $k > 1$ such that $v$ has a neighbor in $V_k$.
    \item Every $v \in V_m = \sR$, there exists some $k < m$ such that $v$ has a neighbor in $V_k$.
    \item For every $1 < j < m$ and every $v \in V_j$, there exist $i < j < k$ such that $v$ has a neighbor in $V_i$ and a neighbor in $V_k$.
    \end{enumerate}
    Then,
    \begin{equation}
        \|\bZ^{G}\| \leq \prod_{\{x, y\} \in E} \|\bM^{(x, y)}\|.
        \label{eq:cgm-generic-norm-bound}
    \end{equation}
\end{proposition}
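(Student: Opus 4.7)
The plan is to prove \eqref{eq:cgm-generic-norm-bound} by induction on $m$, using Proposition~\ref{prop:factorization} as the inductive engine after some preprocessing. First I would reduce to the case $\sL \cap \sR = \emptyset$: by Proposition~\ref{prop:intersection-direct-sum}, $\bZ^G$ is a direct sum of CGMs $\bZ^{G[\ba]}$ indexed by pinnings $\ba$ of $\sL \cap \sR$, whose operator norm equals the maximum of the block norms. In each $G[\ba]$, every edge formerly incident to $\sL \cap \sR$ is relabelled by a row, column, or entry of the corresponding original matrix, whose operator norm is at most $\|\bM^{(x,y)}\|$, so the bound for $G$ follows from the bound for each $G[\ba]$.

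Next I would preprocess $G$ so that every edge connects consecutive layers. For any edge $\{v,w\}$ with $v \in V_i$, $w \in V_j$, $j \geq i+2$, apply edge splitting (Proposition~\ref{prop:edge-splitting}) with the trivial factorization $\bM^{(v,w)} = \bm I_{N(v)} \cdot \bM^{(v,w)}$, inserting auxiliary $\square$ vertices of dimension $N(v)$ in the intermediate layers $V_{i+1}, \ldots, V_{j-1}$. All but one of the new edges carry identity matrices and the last carries $\bM^{(v,w)}$, so both $\bZ^G$ and the right-hand side of \eqref{eq:cgm-generic-norm-bound} are unchanged; a direct check shows that conditions 1--3 are preserved, each new intermediate vertex having neighbors in adjacent layers on both sides.

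For the inductive step $m \geq 3$, I would apply Proposition~\ref{prop:factorization} with $A = V_1$, $B = V_2$, $C = V_3 \cup \cdots \cup V_m$, and edge partition $E^A = E^C = \emptyset$, $E^B = E(V_2, V_2) = \emptyset$ (empty after preprocessing). Under conditions 1 and 3, every $v \in V_2$ has neighbors in both $V_1$ and $V_3$, so $\partial_{\tout} A = \partial_{\tout} C = V_2$; consequently $G[B]$ has left and right index sets both equal to $V_2$ with no internal vertices and no edges, and $\bZ^{G[B]}$ is the identity. The factorization becomes $\bZ^G = \bZ^{G[A]} \bZ^{G[C]}$ up to permutations of norm $1$, so $\|\bZ^G\| \leq \|\bZ^{G[A]}\| \cdot \|\bZ^{G[C]}\|$. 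The diagram $G[C]$ inherits the layered hypotheses with $m - 1$ layers $(V_2, V_3, \ldots, V_m)$, so by induction $\|\bZ^{G[C]}\|$ is bounded by the product of its edge norms. It remains to bound $\bZ^{G[A]}$, a $2$-layer CGM on $(\sL, V_2)$ with no internal vertices, $\sL \cap V_2 = \emptyset$, and edges only between $\sL$ and $V_2$---the same situation as the base case $m = 2$.

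For the base case, I would ``duplicate'' each vertex $u \in \sL \cup \sR$ into $\deg(u)$ copies, one per incident edge, each of the same dimension as $u$. The resulting diagram $G'$ is a disjoint union of single edges, so by Proposition~\ref{prop:cgm-tt-tensorization} the CGM $\bZ^{G'}$ is, up to index permutations of norm $1$, the tensor product $\bigotimes_{\{v,w\} \in E} \bM^{(v,w)}$, whose operator norm is exactly $\prod_{\{v,w\}} \|\bM^{(v,w)}\|$ by multiplicativity under tensor products. The original $\bZ^G$ is recovered from $\bZ^{G'}$ by restricting rows and columns to tuples in which all copies of the same original vertex carry a common index; the corresponding restriction operators $\bR_{\sL}, \bR_{\sR}$ have rows consisting of distinct standard basis vectors (so $\bR_{\sL} \bR_{\sL}^{\top} = \bm I$), hence operator norm $1$, giving $\|\bZ^G\| = \|\bR_{\sL} \bZ^{G'} \bR_{\sR}^{\top}\| \leq \|\bZ^{G'}\|$ and closing the induction. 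The main technical obstacle is the bookkeeping: tracking the permutations produced by Propositions~\ref{prop:cgm-tt-tensorization} and \ref{prop:factorization} together with the restriction operators so that they compose consistently with the original indexing of $\bZ^G$---this is routine but must be handled with care, particularly when several original vertices have high degree.
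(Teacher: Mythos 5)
There is a genuine gap in the handling of within-layer edges. Edge splitting (Proposition~\ref{prop:edge-splitting}) only subdivides an edge $\{v,w\}$ with $v \in V_i$, $w \in V_j$, $j \geq i+2$; it cannot remove an edge with both endpoints in the same $V_i$, and the hypotheses of the proposition do not forbid such edges (indeed, the paper's own base-case argument explicitly enumerates $E(\sL,\sL)$ and $E(\sR,\sR)$). So your assertion that $E(V_2,V_2)=\emptyset$ after preprocessing is false, and $\bZ^{G[B]}$ need not be the identity. That part is repairable, since $\bZ^{G[B]}$ is diagonal with entries that are products of matrix entries and hence has norm $\leq \prod_{e\in E^B}\|\bM^e\|$, or those edges can simply be assigned to $E^A$/$E^C$. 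The unrepaired gap is in the base case: if $u,v\in\sL$ with an edge $\{u,v\}$, then after duplication the component $\{u',v'\}$ has both endpoints in $\sL(G')$, so the corresponding tensor factor is a \emph{column vector} of length $N(u)N(v)$ (the vectorization of $\bM^{(u,v)}$), whose operator norm is $\|\bM^{(u,v)}\|_F$, not $\|\bM^{(u,v)}\|$. Your argument therefore only yields $\|\bZ^G\| \leq \prod_{e\text{ between }\sL,\sR}\|\bM^e\| \cdot \prod_{e\text{ within }\sL\text{ or }\sR}\|\bM^e\|_F$, which is strictly weaker than \eqref{eq:cgm-generic-norm-bound}. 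The correct treatment (as in the paper) factors out within-layer edges first as diagonal matrices $\bD^{\sL},\bD^{\sR}$, using that a diagonal matrix's norm is its maximum absolute entry, which is at most the operator norm of the matrix labelling the edge, and only then tackles the cross-layer bipartite part.

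Your treatment of the purely bipartite part is, however, correct and genuinely different from the paper's. The paper factors each $\bM^e$ via an SVD and bounds $\bU^{G\top}\bU^G$ via a Hadamard-then-tensor decomposition using submultiplicativity of norms under Hadamard products; you instead duplicate vertices so each edge becomes its own component, tensorize via Proposition~\ref{prop:cgm-tt-tensorization}, and pass to $\bZ^G$ via restriction operators of norm one. Both are valid on the bipartite part, and yours avoids the Hadamard-product submultiplicativity step; if you prepend the diagonal factorization for within-layer edges, your base case would be a clean alternative to the paper's.
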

\begin{proof}
    Note that, by repeatedly applying Proposition~\ref{prop:edge-splitting} with edges labelled by an identity matrix, we may furthermore assume without loss of generality that every edge of $G$ is either between two vertices of $V_i$, or between one vertex of $V_i$ and one vertex of $V_{i + 1}$ for some $i$.
    Under this assumption, the three conditions in the statement may be rewritten as follows:
    \begin{enumerate}
    \item For every $v \in V_1 = \sL$, $v$ has a neighbor in $V_2$.
    \item Every $v \in V_m = \sR$, $v$ has a neighbor in $V_{m - 1}$.
    \item For every $1 < j < m$ and every $v \in V_j$, $v$ has a neighbor in $V_{j - 1}$ and a neighbor in $V_{j + 1}$.
    \end{enumerate}
    
    Next, we proceed by induction on $m$.
    Suppose first that $m = 2$.
    Then, the assumptions imply that $V^{\square} = \emptyset$ and $\sL \cap \sR = \emptyset$.
    Let us enumerate the edges within $\sL$, within $\sR$, and between $\sL$ and $\sR$ as follows:
    \begin{align}
      E(\sL, \sL) &= \{\{i^{(1)}_1, i^{(2)}_1\}, \dots, \{i^{(1)}_a, i^{(2)}_a\}\} \text{ for } i^{(k)}_{\ell} \in \sL, \\
      E(\sR, \sR) &= \{\{j^{(1)}_1, j^{(2)}_1\}, \dots, \{j^{(1)}_b, j^{(2)}_b\}\} \text{ for } j^{(k)}_{\ell} \in \sR, \\
      E(\sL, \sR) &= \{\{i_1, j_1\}, \dots, \{i_c, j_c\}\} \text{ for } i_{\ell} \in \sL \text{ and } j_{\ell} \in \sR.
    \end{align}
    Then, we have
    \begin{equation}
        Z^{G}_{\bs, \bt} = \prod_{\ell = 1}^a M^{(i^{(1)}_{\ell}, i^{(2)}_{\ell})}_{s(\kappa_{\sL}(i_{\ell}^{(1)})),s(\kappa_{\sL}(i_{\ell}^{(2)}))} \prod_{\ell = 1}^b M^{(j^{(1)}_{\ell}, j^{(2)}_{\ell})}_{t(\kappa_{\sR}(j_{\ell}^{(1)})),t(\kappa_{\sR}(j_{\ell}^{(2)}))} \prod_{\ell = 1}^c M^{(i_{\ell}, j_{\ell})}_{s(\kappa_{\sL}(i_{\ell})),t(\kappa_{\sR}(j_{\ell}))}.
    \end{equation}
    Let us define an ancillary matrix
    \begin{equation}
        \widehat{Z}^G_{\bs, \bt} \colonequals \prod_{\ell = 1}^c M^{(i_{\ell}, j_{\ell})}_{s(\kappa_{\sL}(i_{\ell})),t(\kappa_{\sR}(j_{\ell}))}.
    \end{equation}
    Then, we may write $\bZ^G = \bD^{\sL} \widehat{\bZ}^G \bD^{\sR}$, for suitable diagonal matrices $\bD^{\sL}$ and $\bD^{\sR}$ having entries equal to the first two products above, respectively.
    Since every entry of a matrix is bounded by the matrix norm, we then have
    \begin{equation}
        \|\bZ^G\| \leq \|\bD^{\sL}\| \cdot \|\widehat{\bZ}^G\| \cdot \|\bD^{\sR}\| \leq \|\widehat{\bZ}^G\|\prod_{\{x, y\} \in E(\sL, \sL) \cup E(\sR, \sR)} \|\bM^{(x, y)}\|.
    \end{equation}
    
    For the remaining factor, by taking a singular value decomposition, we can factorize each labelling matrix as $\bM^{(x, y)} = \bU^{(x, y)^{\top}}\bV^{(x, y)}$ such that $\|\bM^{(x, y)}\| = \|\bU^{(x, y)}\| \cdot \|\bV^{(x, y)}\|$ (by including the factor of the singular values in either of the singular vector matrices).
    Writing $\bu^{(x, y, i)}$ for the columns of $\bU^{(x, y)}$ and $\bv^{(x, y, i)}$ for the columns of $\bV^{(x, y)}$, we then have $M^{(x, y)}_{ij} = \langle \bu^{(x, y, i)}, \bv^{(x, y, j)} \rangle$.
    Therefore,
    \begin{equation}
        \widehat{Z}^G_{\bs, \bt} = \prod_{\ell = 1}^c \left\langle \bu^{(i_{\ell}, j_{\ell},s(\kappa_{\sL}(i_{\ell})))}, \bv^{(i_{\ell}, j_{\ell}, t(\kappa_{\sR}(j_{\ell})))} \right\rangle = \left\langle \bigotimes_{\ell = 1}^c\bu^{(i_{\ell}, j_{\ell},s(\kappa_{\sL}(i_{\ell})))}, \bigotimes_{\ell = 1}^c\bv^{(i_{\ell}, j_{\ell}, t(\kappa_{\sR}(j_{\ell})))}\right\rangle.
    \end{equation}
    This writes $\widehat{\bZ}^G = \bU^{G^{\top}}\bV^G$, so we have $\|\widehat{\bZ}^G\| \leq \|\bU^G\| \cdot \|\bV^G\|$.
    We then compute
    \begin{align}
      (\bU^{G^{\top}}\bU^G)_{\bs, \bs^{\prime}}
      &= \prod_{\ell = 1}^c (\bU^{(i_{\ell}, j_{\ell})^{\top}} \bU^{(i_{\ell}, j_{\ell})})_{s(\kappa_{\sL}(i_{\ell})), s^{\prime}(\kappa_{\sL}(i_{\ell}))} \nonumber \\
      &= \prod_{i \in \sL} \prod_{j \sim i} (\bU^{(i, j)^{\top}} \bU^{(i, j)})_{s(\kappa_{\sL}(i)), s^{\prime}(\kappa_{\sL}(i))}.
    \end{align}
    Thus, $\bU^{G^{\top}}\bU^G$ is the tensor product over $i \in \sL$ of the Hadamard products over $j \sim i$ of $\bU^{(i, j)^{\top}} \bU^{(i, j)}$.
    Since the operator norm is multiplicative over tensor products and submultiplicative over Hadamard products, and every $i \in \sL$ has a neighbor $j \in \sR$, we find
    \begin{equation}
        \|\bU^{G^{\top}}\bU^G\| \leq \prod_{i \in \sL} \prod_{j \sim i} \|\bU^{(i, j)^{\top}} \bU^{(i, j)}\|,
    \end{equation}
    whereby
    \begin{equation}
        \|\bU^G\| \leq \prod_{i \in \sL} \prod_{j \sim i} \|\bU^{(i, j)}\| = \prod_{\{i, j\} \in E} \|\bU^{(i, j)}\|.
    \end{equation}
    Repeating the same argument for $\bV^G$ and multiplying the results together, we have
    \begin{equation}
        \|\bZ^G\| \leq \|\bU^G\| \cdot \|\bV^G\| \leq \prod_{\{i, j\} \in E} \|\bU^{(i, j)}\| \cdot \|\bV^{(i, j)}\| = \prod_{\{i, j\} \in E} \|\bM^{(i, j)}\|,
    \end{equation}
    completing the argument for $m = 2$.

    For the inductive step, if we have the result for $m$ and are given a decomposition of $G$ into $m + 1$ sets, the result follows by applying the factorization of Proposition~\ref{prop:factorization} with $A = V_1$, $B = V_2$, and $C = V_3 \sqcup \cdots \sqcup V_m$, and using that $\|\bZ^{G[A]}\|$ and $\|\bZ^{G[B]}\|$ may be bounded using the $m = 2$ case.
\end{proof}

\noindent
To see that the connectivity requirements are important for this argument, one may consider the simple case where $G$ has isolated vertices in $\sL$ or $\sR$: if so, then the associated CGM is the tensor product of an all-ones matrix with the CGM associated to $G$ with the isolated vertices removed.
The norm of this all-ones matrix is polynomial in $N$, whereby the best bound of the type \eqref{eq:cgm-generic-norm-bound} that we could hope for would depend on $N$, spoiling many of our applications.
Other cases where the connectivity requirements fail reduce to a similar situation after sufficiently many applications of the factorization of Proposition~\ref{prop:factorization}.

\section{Tying Ribbon Diagrams: Norm Bounds}

\subsection{Stretched Forest Ribbon Diagrams: Proof of Lemma~\ref{lem:tying-stretched-ribbon}}
\label{app:pf:lem:tying-stretched-ribbon}

\begin{proof}[Proof of Lemma~\ref{lem:tying-stretched-ribbon}]
    We recall the statement of the result.
    Let $F \in \sF(\ell, m)$ be a stretched forest ribbon diagram, with all edges labelled with $\bM$.
    Let $\tie(F)$ be the forest ribbon diagram constructed by tying every connected component of $F$ that is not a pair into a bowtie.
    By construction, $\tie(F)$ is a bowtie forest ribbon diagram.
    Our goal is then to show the bound
    \begin{equation}
        \|\bZ^F - \bZ^{\tie(F)}\| \leq (\ell + m) (2\|\bM\|)^{\frac{3}{2}(\ell + m)} \epsilon_{\tree}(\bM; (\ell + m) / 2).
    \end{equation}

    Let us first suppose that $T \in \sT(\ell, m)$ is connected.
    We will then show the bound
    \begin{equation}
        \|\bZ^T - \bZ^{\tie(T)}\| \leq 2^{\frac{3}{2}(\ell + m)}\epsilon_{\tree}(\bM; (\ell + m) / 2),
    \end{equation}
    the same as the above but without the leading factor of $(\ell + m)$.
    Since the norm of the CGM of a connected component of $F$ that has $k$ leaves is at most $\|\bM\|^{\frac{3}{2}k}$ by Proposition~\ref{prop:cgm-generic-norm-bound} and Corollary~\ref{cor:sF-vertex-edge-bound}, the bound on arbitrary $F$ will then follow by applying Proposition~\ref{prop:cgm-conn-bd}.

    Write $U \subseteq V^{\square}(T)$ for the set of terminal vertices of $T$.
    By Corollary~\ref{cor:sF-vertex-edge-bound}, $|U| \leq |V^{\square}(T)| \leq (\ell + m) / 2$.
    Recall that, since $T$ is stretched, every vertex of $U$ is adjacent to some vertex of $\sL$ and some vertex of $\sR$.
    Let $A$ be the ribbon diagram on vertex triplet $((\sL, U), \emptyset)$ induced by $T$, let $B$ be the ribbon diagram on vertex triplet $((U, U), \emptyset)$ obtained by deleting $\sL$ and $\sR$ from $T$, and let $C$ be the ribbon diagram on vertex triplet $((U, \sR), \emptyset)$ induced by $T$.
    Then, by Proposition~\ref{prop:factorization}, we may factorize $\bZ^T = \bZ^{G[A]} \bZ^{G[B]} \bZ^{G[C]}$.
    Moreover, let $G^{\prime}[B]$ be the ribbon diagram obtained by relabeling every edge in $G[B]$ with the identity matrix.
    Then we have $\bZ^{\tie(T)} = \bZ^{G[A]} \bZ^{G^{\prime}[B]} \bZ^{G[C]}$.
    Therefore, by norm submultiplicativity and Proposition~\ref{prop:cgm-generic-norm-bound} applied to $G[A]$ and $G[C]$, we may bound
    \begin{equation}
        \|\bZ^T - \bZ^{\tie(T)}\| \leq \|\bZ^{G[A]}\| \cdot \|\bZ^{G[C]}\| \cdot \|\bZ^{G[B]} - \bZ^{G^{\prime}[B]}\| \leq \|\bM\|^{\ell + m} \|\bZ^{G[B]} - \bZ^{G^{\prime}[B]}\|.
    \end{equation}

    Since $\sL(G[B]) = \sR(G[B]) = \sL(G^{\prime}[B]) = \sR(G^{\prime}[B]) = U$, the matrices $\bZ^{G[B]}$ and $\bZ^{G^{\prime}[B]}$ are diagonal.
    Let us view $G[B]$ as being partitioned into edge-disjoint (but not necessarily vertex disjoint) subtrees $B_1, \dots, B_n$, such that every leaf of every $B_i$ belongs to $U$.
    Since the $B_i$ are edge-disjoint, $n$ is at most the number of edges in $T$, which by Corollary~\ref{cor:sF-vertex-edge-bound} is at most $\frac{3}{2}(\ell + m)$.
    Since $\sL(G[B]) = \sR(G[B])$, the labelings $\kappa_{\sL}$ and $\kappa_{\sR}$ must be equal, so let us simply write $\kappa$ for this single labeling.
    Suppose that the leaves of $B_i$ are $\ell_{i, 1}, \dots, \ell_{i, a_i} \in U$.
    Then, we have
    \begin{align}
      Z^{G[B]}_{\bs, \bs} = \prod_{i = 1}^n Z^{B_i}\bigg(\bM; \big(s(\kappa(\ell_{i, 1})), \dots, s(\kappa(\ell_{i, a_i}))\big)\bigg).
    \end{align}
    By the definition of $\epsilon_{\tree}$, we have
    \begin{align}
      \Bigg| \One\{s(\kappa(\ell_{i, 1}))& = \cdots = s(\kappa(\ell_{i, a_i}))\} - Z^{B_i}\bigg(\bM; \big(s(\kappa(\ell_{i, 1})), \dots, s(\kappa(\ell_{i, a_i}))\big)\bigg)\Bigg| \nonumber \\
                                         &\leq \epsilon_{\tree}(\bM; a_i) \nonumber \\
      &\leq \epsilon_{\tree}(\bM; (\ell + m) / 2).
    \end{align}
    Since the $B_i$ form an edge partition of $T$ into subtrees, we have
    \begin{equation}
        \prod_{i = 1}^n \One\{s(\kappa(\ell_{i, 1})) = \cdots = s(\kappa(\ell_{i, a_i}))\} = \One\{s(i) = s(j) \text{ for all } i, j \in [|U|]\} = Z^{G^{\prime}[B]}_{\bs, \bs}.
    \end{equation}
    Therefore, substituting and expanding, we find
    \begin{align}
      \bigg|Z^{G[B]}_{\bs, \bs} &- Z^{G^{\prime}[B]}_{\bs, \bs} \bigg| \nonumber \\
      &= \left| \prod_{i = 1}^n Z^{B_i}\bigg(\bM; \big(s(\kappa(\ell_{i, 1})), \dots, s(\kappa(\ell_{i, a_i}))\big)\bigg) - \prod_{i = 1}^n \One\{s(\kappa(\ell_{i, 1})) = \cdots = s(\kappa(\ell_{i, a_i}))\} \right| \nonumber \\
      &= \Bigg| \sum_{\substack{A \subseteq [n] \\ A \neq \emptyset}} \prod_{i \in A} \left(Z^{B_i}\bigg(\bM; \big(s(\kappa(\ell_{i, 1})), \dots, s(\kappa(\ell_{i, a_i}))\big)\bigg) - \One\{s(\kappa(\ell_{i, 1})) = \cdots = s(\kappa(\ell_{i, a_i}))\}\right) \nonumber \\
      &\hspace{1.55cm} \prod_{i \notin A} \One\{s(\kappa(\ell_{i, 1})) = \cdots = s(\kappa(\ell_{i, a_i}))\} \Bigg| \nonumber \\
      &\leq \sum_{\substack{A \subseteq [n] \\ A \neq \emptyset}} \epsilon_{\tree}(\bM; (\ell + m) / 2)^{|A|} \nonumber \\
      &\leq 2^n \epsilon_{\tree}(\bM; (\ell + m) / 2) \nonumber \\
      &\leq 2^{\frac{3}{2}(\ell + m)}\epsilon_{\tree}(\bM; (\ell + m) / 2),
    \end{align}
    completing the proof.
\end{proof}

\subsection{Partition Transport Ribbon Diagrams: Proof of Lemma~\ref{lem:tying-partition-transport-ribbon}}
\label{app:pf:lem:tying-partition-transport-ribbon}

\begin{proof}[Proof of Lemma~\ref{lem:tying-partition-transport-ribbon}]
    We recall the statement: let $\sigma, \tau \in \Part([d])$, let $\bD \in \Plans(\sigma, \tau)$, and let $G = G(\sigma, \tau, \bD)$ be the associated partition transport ribbon diagram.
    We will show the slightly stronger bound,
    \begin{equation}
      \|\bZ^G - \bZ^{\tie(G)}\| \leq d\|\bM\|^{3d}\bigg(\epsilon_{\pow}(\bM) + d\epsilon_{\offdiag}(\bM) + d\epsilon_{\corr}(\bM)\bigg).
    \end{equation}

    As in the previous proof, let us first suppose that $G$ is connected.
    We will then show the bound
    \begin{align}
      &\|\bZ^G - \bZ^{\tie(G)}\| \leq \|\bM\|^{3d}\bigg(\epsilon_{\pow}(\bM) + d\epsilon_{\offdiag}(\bM) + d\epsilon_{\corr}(\bM)\bigg),
    \end{align}
    the same as the above but without the leading factor of $d$.
    Since there are exactly $3d$ edges in any connected component of a partition transport ribbon diagram such that the component has $2d$ leaves, the final result will then follow by applying Propositions~\ref{prop:cgm-generic-norm-bound} and \ref{prop:cgm-conn-bd}.

    Let us write $\partial^{\square} \sL, \partial^{\square} \sR \subset V^{\square}$ for the sets of $\square$ vertices with a neighbor in $\sL$ and $\sR$, respectively.
    We have $\partial^{\square} \sL \cup \partial^{\square} \sR = V^{\square}$, and we observe that $\partial^{\square} \sL \cap \partial^{\square} \sR = \emptyset$ if and only if both $\sigma$ and $\tau$ contain no singletons.

    \vspace{1em}

    \emph{Case 1: $d = 1$.} This is only possible if $\sigma = \tau = \{\{1\}\}$ and $\bD = [1]$.
    In this case, $G = \tie(G)$ (both consist of two leaves connected to one another), so $\|\bZ^G - \bZ^{\tie(G)}\| = 0$.

    \vspace{1em}

    \emph{Case 2: $\partial^{\square} \sL \cap \partial^{\square} \sR \neq \emptyset$ and $d \geq 2$.}
    We argue by induction on $|V^{\square}(G)|$ that the following bound holds in this case:
    \begin{equation}
        \|\bZ^{G} - \bZ^{\tie(G)}\| \leq \|\bM\|^{3d}\cdot |V^{\square}(G)| \cdot \epsilon_{\offdiag}(\bM)
        \label{eq:partition-transport-tying-bound-2-intermediate}
    \end{equation}
    If $|V^{\square}(G)| = 1$, then $G = \tie(G)$, so $\|\bZ^G - \bZ^{\tie(G)}\| = 0$.
    Now, suppose we have established the result for all partition transport ribbon diagrams satisfying the assumptions of this case with $|V^{\square}| \leq m$, and have $|V^{\square}(G)| = m + 1 > 1$.
    Let $v \in \partial^{\square} \sL \cap \partial^{\square} \sR$.
    We apply the factorization of Proposition~\ref{prop:factorization} with $A = \sL$, $B = V^{\square}$, and $C = \sR$, which factorizes $\bZ^G = \bZ^{G[A]}\bZ^{G[B]}\bZ^{G[C]}$.
    Since $v \in \sL(G[B]) \cap \sR(G[B])$, $\bZ^{G[B]}$ is the direct sum of $N$ further CGMs where $v$ is pinned to each possible value in $[N]$.
    Let $G^{\prime}[B]$ denote the ribbon diagram formed from $G[B]$ by relabelling all edges between $v$ and all of its neighbors with the identity matrix.
    When $v$ is pinned, the factors contributed by these edges are constant factors in each direct summand CGM, so we have $\|\bZ^{G[B]} - \bZ^{G^{\prime}[B]}\| \leq \|\bM\|^{|E(G[B])|}\epsilon_{\offdiag}(\bM)$.
    Now, let $G^{\prime}$ denote the ribbon diagram formed from $G$ by contracting all edges between $v$ and all of its neighbors in $V^{\square}$.
    Then, we have $\bZ^{G^{\prime}} = \bZ^{G[A]}\bZ^{G^{\prime}[B]} \bZ^{G[C]}$, so $\|\bZ^{G^{\prime}} - \bZ^G\| \leq \|\bM^{|E(G)|}\epsilon_{\offdiag}(\bM) \leq \|\bM\|^{3d}\epsilon_{\offdiag}(\bM)$ by Proposition~\ref{prop:cgm-generic-norm-bound} and Corollary~\ref{cor:sF-vertex-edge-bound}.
    Since $\tie(G^{\prime}) = \tie(G)$, we find
    \begin{align}
      \|\bZ^G - \bZ^{\tie(G)}\|
      &\leq \|\bZ^G - \bZ^{G^{\prime}}\| + \|\bZ^{G^{\prime}} - \bZ^{\tie(G^{\prime})}\| \nonumber \\
      &\leq \|\bM\|^{3d}\epsilon_{\offdiag}(\bM) + \|\bM^{3d}\| \cdot |V^{\square}(G^{\prime})| \cdot \epsilon_{\offdiag}(\bM) \tag{by inductive hypothesis}
    \end{align}
    Since $|V^{\square}(G^{\prime})| < |V^{\square}(G)|$ by construction, the proof of \eqref{eq:partition-transport-tying-bound-2-intermediate} is complete.

    Lastly, since each $\square$ vertex of $G$ corresponds to a part of $\sigma$ or $\tau$ having size at least 2, we have $|V^{\square}(G)| \leq d$, so we obtain the simpler version
    \begin{equation}
        \|\bZ^{G} - \bZ^{\tie(G)}\| \leq \|\bM\|^{3d}d\epsilon_{\offdiag}(\bM).
        \label{eq:partition-transport-tying-bound-2}
    \end{equation}

    \emph{Case 3: $\partial^{\square} \sL \cap \partial^{\square} \sR = \emptyset$, $d \geq 2$, and a row or column of $\bD$ has only one non-zero entry.}
    Let us suppose without loss of generality that it is a row of $\bD$ that has the specified property, which corresponds to a part $S \in \sigma$.
    Let $v \in V^{\square}$ be the associated $\square$ vertex in $G$.
    The given condition means that $v$ has only one neighbor in $\partial^{\square} R$, which we call $w$, and that there are $|S| \geq 2$ parallel edges between $v$ and $w$.
    Let $G^{\prime}$ denote the diagram where $v$ and $w$ are identified.
    Then, by Proposition~\ref{prop:cgm-generic-norm-bound}, we have
    \begin{equation}
        \|\bZ^G - \bZ^{G^{\prime}}\| \leq \|\bM\|^{3d} \epsilon_{\pow}(\bM).
    \end{equation}
    We note that $\tie(G^{\prime}) = \tie(G)$, and Case 2 applies to $G^{\prime}$ (indeed, $G^{\prime}$ is the partition transport ribbon diagram formed by replacing the part $S$ of $\sigma$ with $|S|$ singletons that are all transported to the part of $\tau$ corresponding to $w$).
    Therefore, using that result, we conclude
    \begin{equation}
        \|\bZ^G - \bZ^{\tie(G)}\| \leq \|\bM\|^{3d}(\epsilon_{\pow}(\bM) + d\epsilon_{\offdiag}(\bM)).
        \label{eq:partition-transport-tying-bound-3}
    \end{equation}

    \vspace{1em}

    \emph{Case 4: $\partial^{\square} \sL \cap \partial^{\square} \sR = \emptyset$, $d \geq 2$, and no row or column of $\bD$ has only one non-zero entry.}
    We argue by induction on $|V^{\square}(G)|$ that the following bound holds:
    \begin{equation}
        \|\bZ^G - \bZ^{\tie(G)}\| \leq \|\bM\|^{3d}(\epsilon_{\pow}(\bM) + |V^{\square}(G)|\epsilon_{\corr}(\bM)).
    \end{equation}
    We cannot have $|V^{\square}(G)| = 1$, since then the single $\square$ vertex would need to belong to $\partial^{\square} \sL \cap \partial^{\square} \sR$.
    Thus the base case is $|V^{\square}(G)| = 2$.
    In this case, $G$ consists of two $\square$ vertices, each connected to $d$ leaves, and with $d$ parallel edges between them.
    We apply the factorization of Proposition~\ref{prop:factorization} with $A = \sL$, $B = V^{\square}$, and $C = \sR$.
    Then, $G[B]$ consists of two vertices, one in $\sL(G[B])$ and one in $\sR(G[B])$, connected by $d$ parallel edges.
    Writing $G^{\prime}[B]$ for the diagram where these edges are replaced by a single one labelled with the identity, we then have $\|\bZ^{G[B]} - \bZ^{G^{\prime}[B]}\| = \|\bM^{\circ d} - \bm I_N\| \leq \epsilon_{\pow}(\bM)$.
    And, since $\bZ^{\tie(G)} = \bZ^{G[A]} \bZ^{G^{\prime}[B]} \bZ^{G[C]}$, using Proposition~\ref{prop:cgm-generic-norm-bound} we find
    \begin{equation}
        \|\bZ^G - \bZ^{\tie(G)}\| \leq \|\bM\|^{3d}\epsilon_{\pow}(\bM).
    \end{equation}
    (We could reduce the exponent to $2d$ here, but accept the insignificant additional slack to make the final expression cleaner.)

    Now, suppose we have established the result for all partition transport ribbon diagrams $G$ with $|V^{\square}(G)| \leq m$, and have $G$ with $|V^{\square}(G)| = m + 1 > 2$.
    Since $G$ is connected, has $\partial^{\square} \sL \cap \partial^{\square} \sR = \emptyset$, and $d > 1$, all parts of $\sigma$ and $\tau$ must have size at least 2.
    Moreover, since $|V^{\square}(G)| > 2$, we must have either $|\sigma| > 1$ or $|\tau| > 1$.
    Let us suppose, without loss of generality, that $|\sigma| > 1$ (otherwise we may reverse the roles of $\sigma$ and $\tau$, which amounts to transposing the ribbon diagram $G$ and the associated CGM).

    Choose any part $S \in \sigma$, and let $v \in V^{\square}(G)$ be the associated $\square$ vertex.
    We apply the factorization of Proposition~\ref{prop:factorization} with $A = \sL \cup (\partial^{\square} \sL \setminus \{v\})$, $B = \partial^{\square} \sR \cup \{v\}$, and $C = \sR$.
    Consider the diagram $G[B]$.
    We have $\sR(G[B]) = \partial^{\square} \sR$ and $V^{\square}(G[B]) = \emptyset$.
    Moreover, by the assumption of this case, every vertex of $\partial^{\square} \sR$ has a neighbor in $\sL \setminus \{v\}$.
    Therefore, $\sL(G[B]) = \partial^{\square} \sR \cup \{v\}$.
    In particular, $\sR(G[B]) \subseteq \sL(G[B])$, so $\sR(G[B]) = \sL(G[B]) \cap \sR(G[B]) = \partial^{\square} \sR$.

    Following the pinning transformation of Proposition~\ref{prop:intersection-direct-sum}, after a suitable permutation, the CGM of $G[B]$ will then be the direct sum of column vectors $\bv_{\bs} \in \RR^N$, indexed by $\bs \in [N]^{\partial^{\square} \sR}$, where
    \begin{equation}
        (\bv_{\bs})_{i} = \prod_{x \in \partial^{\square} v} M_{i,s(x)},
    \end{equation}
    where the product is understood to repeat vertices $x$ when $v$ is connected to $x$ with multiple edges.
    In particular, we have
    \begin{equation}
        \|\bv_{\bs}\|^2_2 = \sum_{i = 1}^N \prod_{x \in \partial v} M_{i,s(x)}^2,
    \end{equation}
    and since, again by the assumption of this case, $v$ has at least two different neighbors in $\partial^{\square} \sR$, we have $\|\bv_{\bs}\|_2 \leq \epsilon_{\corr}(\bM)$ whenever $\bs$ is not constant on $\partial v$.
    Thus letting $G^{\prime}$ be the diagram formed from $G$ by identifying all neighbors of $v$, and applying Proposition~\ref{prop:cgm-generic-norm-bound}, we find that
    \begin{equation}
        \|\bZ^G - \bZ^{G^{\prime}}\| \leq \|\bM\|^{3d}\epsilon_{\corr}(\bM),
    \end{equation}
    so by the inductive hypothesis,
    \begin{equation}
        \|\bZ^G - \bZ^{\tie(G)}\| \leq \|\bM\|^{3d}\epsilon_{\corr}(\bM) + \|\bM\|^{3d}(\epsilon_{\pow}(\bM) + |V^{\square}(G^{\prime})| \epsilon_{\corr}(\bM)),
  \end{equation}
  and since $|V^{\square}(G^{\prime})| < |V^{\square}(G)|$, the induction is complete.
  Finally, using again that $|V^{\square}(G)| \leq d$ since all parts of $\sigma$ and $\tau$ have size at least 2 in this case, we find the looser bound
    \begin{equation}
        \|\bZ^G - \bZ^{\tie(G)}\| \leq \|\bM\|^{3d}(\epsilon_{\pow}(\bM) + d\epsilon_{\corr}(\bM)).
        \label{eq:partition-transport-tying-bound-4}
    \end{equation}

    \emph{Conclusion.} In the four cases considered above, we have shown that either $\|\bZ^G - \bZ^{\tie(G)}\|$ is zero, or is bounded as in \eqref{eq:partition-transport-tying-bound-2}, \eqref{eq:partition-transport-tying-bound-3}, and \eqref{eq:partition-transport-tying-bound-4}.
    We then observe that the ``master bound'' in the statement is larger than each of these, completing the proof.
\end{proof}

\section{Tying Ribbon Diagrams: Combinatorial Reductions}

\subsection{Stretched Forest Ribbon Diagrams: Proof of Lemma \ref{lem:stretched-ribbon-combinatorial}}
\label{app:pf:lem:stretched-ribbon-combinatorial}

\begin{proof}[Proof of Lemma~\ref{lem:stretched-ribbon-combinatorial}]
    It suffices to consider connected forests, since both the left- and right-hand sides of the statement factorize over connected components.
    Thus, we want to show
    \begin{equation}
        \sum_{\substack{T \in \sT(\ell, m) \\ T \text{ stretched}}} \mu(F) = \One\{\ell = m\} (-1)^{m - 1}(m - 1)!\, m!.
    \end{equation}
    It will be slightly easier to work with a less stringent definition of ``stretched'' which removes the exceptions for skewed stars, and also allows sided stars if the other side has no $\bullet$ vertices.
    Let us call $F$ \emph{weakly stretched} if every terminal $\square$ vertex has a neighbor both in $\sL$ and in $\sR$, or if there is only one $\square$ vertex and one of $\sL$ and $\sR$ is empty.
    Then, our task is equivalent to showing
    \begin{equation}
        \sum_{\substack{T \in \sT(\ell, m) \\ T \text{ weakly stretched}}} \mu(F) = \begin{cases} (-1)^{m - 1}(m - 1)!\, m! & \text{if } \ell = m \geq 1, \\
            -(m - 2)! & \text{if } \ell = 0, m \geq 4 \text{ is even}, \\
            -(\ell - 2)! & \text{if } m = 0, \ell \geq 4 \text{ is even}, \\
            -(m - 1)! & \text{if } \ell = 1, m \geq 3 \text{ is odd}, \\
            -(\ell - 1)! & \text{if } m = 1, \ell \geq 3 \text{ is odd}, \\ 0 & \text{otherwise.}\end{cases}
    \end{equation}
    
    We use the convention that lowercase functions of combinatorial variables, like $f(a, b)$, give coefficients, and uppercase functions of analytic variables, like $F(x, y)$, give the corresponding exponential generating functions.

    Define the coefficients
    \begin{equation}
        c(k) = \left\{\begin{array}{ll} -(k - 2)! & \text{if } k \geq 4 \text{ is even,} \\ 0 & \text{otherwise}.\end{array}\right.
    \end{equation}
    Our goal is to compute the coefficients
    \begin{equation}
        f(\ell, m) = \sum_{\substack{T \in \sT(\ell, m) \\ T \text{ weakly stretched}}} \prod_{v \in V^{\square}(T)} c(\deg(v)),
    \end{equation}
    Equivalently, separating the terminal and non-terminal vertices, we may rewrite with the following intermediate quantities:
    \begin{align}
      g_T(m) &\colonequals \sum_{\phi: [m] \to V^{\square}(T)} \prod_{v \in V^{\square}(T)} c(\deg(v) + |\phi^{-1}(v)|) \text{ for a given } T \in \sT(m), \\
      g(\ell, m) &\colonequals \sum_{\substack{T \in \sT(\ell)}} g_T(m), \\
      h(\ell, m, n, p) &\colonequals \sum_{\substack{\pi \in \mathsf{Part}([\ell + m];\, \mathsf{odd}) \\ S \cap \{1, \dots, \ell\} \neq \emptyset \text{ for all } S \in \pi \\ S \cap \{\ell + 1, \dots, m\} \neq \emptyset \text{ for all } S \in \pi}} \prod_{S \in \pi}(-(|S| - 1)!)\, g(n + |\pi|, p), \\
      f(\ell, m) &= c(\ell + m) + \sum_{a = 0}^{\ell} \sum_{b = 0}^m \binom{\ell}{a} \binom{m}{b} h(a, b, 0, \ell + m - a - b).
    \end{align}
    The first term in the final expression counts the star tree on $[a + b]$ having a single $\square$ vertex.
    In any other tree, every terminal $\square$ vertex must be adjacent to an odd number of leaves, giving the remaining recursion.
    We will calculate the exponential generating functions of the sums appearing, in the same order as they are given above.
    We have introduced a needlessly general version of $h(\cdot, \cdot, \cdot, \cdot)$ in order to make it simpler to close a recursion to come.

    Before proceeding, we compute the exponential generating function of the $c(k)$:
    \begin{equation}
        C(x) = \sum_{k \geq 0} \frac{x^k}{k!}c(k) = -\sum_{k = 2}^{\infty} \frac{x^{2k}}{2k(2k - 1)} = \frac{1}{2}x^2 - \frac{1}{2}(1 + x)\log(1 + x) - \frac{1}{2}(1 - x)\log(1 - x)
    \end{equation}

    Next, to compute the exponential generating function of the $g_T(m)$, note that, grouping by the values of $|\phi^{-1}(v)|$ for each $v$, we may rewrite
    \begin{equation}
        g_T(m) = \sum_{\substack{\bz \in \mathbb{N}^{V^{\square}(T)} \\ |\bz| = m}} \binom{m}{\bz} \prod_{v \in V^{\square}(T)} c(\deg(v) + z_v).
    \end{equation}
    Thus, the generating function factorizes as
    \begin{align}
      G_T(x)
      &= \sum_{m \geq 0} \frac{x^{m}}{m!}g_T(m) \nonumber \\
      &= \prod_{v \in V^{\square}(T)}\left(\sum_{m \geq 0} \frac{x^{m}}{m!} f(\deg(v) + m)\right) \nonumber \\
      &= \prod_{v \in V^{\square}(T)}\frac{d^{\deg(v)}}{dx^{\deg(v)}}\left(\sum_{m \geq 0} \frac{x^{m}}{m!} c(m)\right) \nonumber \\
      &= \prod_{v \in V^{\square}(T)}C^{(\deg(v))}(x).
    \end{align}

    Next, for the $g(\ell, m)$, we have
    \begin{equation}
        G(x, y) = \sum_{\ell \geq 0} \frac{x^\ell}{\ell!} \sum_{\substack{T \in \sF(\ell) \\ T \text{ connected}}} G_T(y).
    \end{equation}
    Let us define
    \begin{equation}
        G_{\ell}(y) = \sum_{\substack{T \in \sF(\ell) \\ T \text{ connected}}} G_T(y).
    \end{equation}
    A tree on $\ell > 1$ leaves can either be a single edge between two leaves (if $\ell = 2$), or will have every leaf connected to an internal vertex.
    In the latter case, let us think of the tree as being rooted as the internal vertex that the leaf labelled $\ell$ is attached to.
    Then, recursively, the tree consists of several rooted trees, whose leaves form a partition of $[\ell - 1]$, attached to the single new root.
    (This is similar to our formalism of ``rooted odd trees'' from Definition~\ref{def:T-root}.)
    Writing this recursive structure in terms of generating functions,
    \begin{equation}
        G_{\ell}(y) = \One\{\ell = 2\} + \sum_{\pi \in \mathsf{Part}([\ell - 1])} C^{(|\pi| + 1)}(y) \prod_{S \in \pi} G_{|S| + 1}(y).
    \end{equation}
    Now, noting that $G(x, y)$ is the exponential generating function of the $G_k(y)$, we use the composition formula.
    This calculation is clearer if we reindex, defining $\widetilde{G}_{\ell}(y) = G_{\ell + 1}(y)$, which satisfy
    \begin{equation}
        \widetilde{G}_{\ell}(y) = \One\{\ell = 1\} + \sum_{\pi \in \mathsf{Part}([\ell])} C^{(|\pi| + 1)}(y) \prod_{S \in \pi} \widetilde{G}_{|S|}(y).
    \end{equation}
    Note that
    \begin{equation}
        \sum_{\ell \geq 0} \frac{x^{\ell}}{\ell!} C^{(\ell)}(y) = C(x + y)
    \end{equation}
    since this is just a Taylor expansion of $C$ about $y$.
    The generating function of $C^{(k + 1)}(y)$ is then the derivative in $x$, which is $C^{\prime}(x + y)$.
    However, we must subtract off the term that is constant in $x$ before using this in the composition formula, giving $C^{\prime}(x + y) - C^{\prime}(y)$.
    Thus, we find by the composition formula
    \begin{equation}
        \widetilde{G}(x, y) \colonequals \sum_{\ell = 0}^{\infty} \frac{x^{\ell}}{\ell!} \widetilde{G}_{\ell}(y) = x + C^{\prime}(\widetilde{G}(x, y) + y)  - C^{\prime}(y).
    \end{equation}
    We have
    \begin{equation}
        C^{\prime}(x) = x - \frac{1}{2}\log(1 + x) + \frac{1}{2}\log(1 - x)
    \end{equation}
    whereby the above functional equation is
    \begin{align*}
      &\widetilde{G}(x, y) = x + \widetilde{G}(x, y) + y - \frac{1}{2}\log(1 + \widetilde{G}(x, y) + y) + \frac{1}{2}\log(1 - \widetilde{G}(x, y) - y) - y \\
      &\hspace{1.95cm} + \frac{1}{2}\log(1 + y) - \frac{1}{2}\log(1 - y),
    \end{align*}
    which, after cancellations, gives
    \begin{equation}
        0 = x - \frac{1}{2}\log(1 + \widetilde{G}(x, y) + y) + \frac{1}{2}\log(1 - \widetilde{G}(x, y) - y) + \frac{1}{2}\log(1 + y) - \frac{1}{2}\log(1 - y),
    \end{equation}
    and exponentiating we have
    \begin{equation}
        e^{-2x} = \frac{(1 - \widetilde{G}(x, y) - y)(1 + y)}{(1 + \widetilde{G}(x, y) + y)(1 - y)} = \frac{1 - \frac{\widetilde{G}(x, y)}{1 - y}}{1 + \frac{\widetilde{G}(x, y)}{1 + y}}
    \end{equation}
    solving which we find
    \begin{equation}
        \widetilde{G}(x, y) = \frac{1 - e^{-2x}}{\frac{1}{1 - y} + \frac{e^{-2x}}{1 + y}} = \frac{(1 - y^2)(1 - e^{-2x})}{1 + e^{-2x} + y(1 - e^{-2x})} = \frac{1 - y^2}{y + \coth(x)}.
    \end{equation}
    Finally, $G(x, y)$ is the integral of this with respect to $x$, with the boundary condition $G(0, y) = 0$.
    This gives
    \begin{equation}
        G(x, y) = \log(\cosh(x)) - xy + \log(1 + y\tanh(x))
    \end{equation}
    Note that, when $y = 0$, we recover the result from the proof of Lemma~\ref{lem:mobius-fn} that the sum over trees of our Mobius function gives the alternating tangent numbers from Example~\ref{ex:mobius-part-even}, whose generating function is $\log(\cosh(x))$.

We next compute the exponential generating function of the $h(\ell, m, n, p)$.
Define the simpler version of these coefficients, without the condition that each subset of the partition $\pi$ intersect both $\{1, \dots, \ell\}$ and $\{\ell + 1, \dots, m\}$:
\begin{equation}
    h(\ell, m, n, p) \colonequals \sum_{\pi \in \mathsf{Part}([\ell + m];\, \mathsf{odd})} \prod_{S \in \pi}(-(|S| - 1)!)\, g(n + |\pi|, p).
\end{equation}
Note that, by decomposing an odd partition into the parts that are contained in $k$, the parts that are contained in $\ell$, and all the other parts, we have
\begin{equation}
    \widetilde{h}(\ell, m, n, p) = \sum_{q = 0}^{\ell}\sum_{r = 0}^{m} \binom{\ell}{q} \binom{m}{r}\sum_{\substack{\pi \in \mathsf{Part}([q]; \mathsf{odd}) \\ \rho \in \mathsf{Part}([r]; \mathsf{odd})}} \prod_{S \in \pi + \rho}(-(|S| - 1)!) \, h(\ell - q, m - r, n + |\pi| + |\rho|, p).
\end{equation}
Now, by the composition formula this implies
\begin{equation}
    \widetilde{H}(w, x, y, z) = H(w, x, y - \tanh^{-1}(w) - \tanh^{-1}(x), z)
\end{equation}
and therefore we can conversely recover $H$ from $\widetilde{H}$ by
\begin{equation}
    H(w, x, y, z) = \widetilde{H}(w, x, y + \tanh^{-1}(w) + \tanh^{-1}(x), z).
\end{equation}
On the other hand, again by the composition formula and the addition formula,
\begin{equation}
    \widetilde{H}(w, x, y, z) = \sum_{m \geq 0} \frac{y^m}{m!} \frac{\partial^m}{\partial t^m}[G(t, z)]_{t = -\tanh^{-1}(w + x)} = G(y - \tanh^{-1}(w + x), z)
\end{equation}
and thus
\begin{equation}
    H(w, x, y, z) = G(y + \tanh^{-1}(w) + \tanh^{-1}(x) - \tanh^{-1}(w + x), z).
\end{equation}

Lastly, by the addition formula we have
\begin{align}
F(x, y) &= C(x + y) + H(x, y, 0, x + y) \nonumber \\
&= C(x + y) + G(\tanh^{-1}(x) + \tanh^{-1}(y) - \tanh^{-1}(x + y), x + y) \nonumber \\
&= \frac{1}{2}(x + y)^2 - \frac{1}{2}(1 + x + y)\log(1 + x + y) - \frac{1}{2}(1 - x - y)\log(1 - x - y) \nonumber \\
  &\hspace{1cm} + \log(\cosh(\tanh^{-1}(x) + \tanh^{-1}(y) - \tanh^{-1}(x + y))) \nonumber \\
  &\hspace{1cm} - (x + y)(\tanh^{-1}(x) + \tanh^{-1}(y) - \tanh^{-1}(x + y)) \nonumber \\
        &\hspace{1cm} + \log(1 + (x + y)\tanh(\tanh^{-1}(x) + \tanh^{-1}(y) - \tanh^{-1}(x + y))),
          \intertext{which after some algebra is equivalent to}
&= C(x) + C(y) - x \tanh^{-1}(y) - y\tanh^{-1}(x) + xy + \log(1 + xy).
\end{align}
Expanding the exponential generating function coefficients of the final line then gives the result.
\end{proof}

\subsection{Partition Transport Ribbon Diagrams: Proof of Lemma \ref{lem:partition-transport-ribbon-combinatorial}}

\begin{proof}[Proof of Lemma~\ref{lem:partition-transport-ribbon-combinatorial}]
    Let us say that a partition transport plan $\bD \in \Plans(\sigma, \tau)$ is \emph{connected} if its diagram $G^{(\sigma, \tau, \bD)}$ is connected, and refer to the connected components of $\bD$, denoted $\conn(\bD)$, as the subsets of $\sigma + \tau$ that belong to each connected component of the diagram.
    As in the previous Lemma, both sides of the result factorize over connected components, so it suffices to show that
    \begin{equation}
        \sum_{\sigma, \tau \in \Part([d])} (-1)^{|\sigma| + |\tau|}\prod_{A \in \sigma + \tau}(|A| - 1)!(|A|)!\sum_{\substack{\bD \in \Plans(\sigma, \tau) \\ \bD \text{ connected}}} \frac{1}{\bD!} = (-1)^{d - 1}(d - 1)! \, d!,
    \end{equation}

    Let us first work with the innermost sum.
    For $\sigma, \tau$ arbitrary partitions, writing $\|\sigma\| = \sum_{S \in \sigma} |S|$ and likewise for $\|\tau\|$, by the inner product calculation from Proposition~\ref{prop:gram-cgm-expansion} we have
    \begin{equation}
        \sum_{\bD \in \mathsf{Plans}(\sigma, \tau)}\frac{1}{\bD!} = \One\{\|\sigma \| = \|\tau\|\}\, \frac{d!}{\prod_{A \in \sigma + \tau} (|A|)!}.
    \end{equation}
    We now compute the restriction to connected $\bD$ using \Mobius\ inversion (a similar calculation to the proof of Lemma~\ref{lem:mobius-fn}).
    For $\pi \in \Part(\sigma + \tau)$, let us define
    \begin{equation}
        b(\pi) \colonequals \sum_{\substack{\bD \in \Plans(\sigma, \tau) \\ \conn(\bD) = \pi}} \frac{1}{\bD!}.
    \end{equation}
    Then, the quantity we are interested in is $b(\{\sigma + \tau\})$.

    We may compute the downward sums of $b(\pi)$ as
    \begin{align}
      c(\pi)
      &\colonequals \sum_{\pi^{\prime}\leq \pi} b(\pi^{\prime}) \nonumber \\
      &= \sum_{\substack{\bD \in \mathsf{Plans}(\sigma, \tau) \\ \conn(D) \leq \pi}} \frac{1}{\bD!} \nonumber \\
      &= \prod_{\rho \in \pi}\left(\sum_{\bD \in \mathsf{Plans}(\sigma \cap \rho, \tau \cap \rho)} \frac{1}{\bD!}\right) \nonumber \\
      &= \prod_{\rho \in \pi} \One\{\|\sigma \cap \rho\| = \|\tau \cap \rho \|\}\,  \frac{(\|\rho\| / 2)!}{ \prod_{R \in \rho} (|R|)!}  \nonumber \\
      &= \One\{\|\sigma \cap \rho \| = \|\tau \cap \rho\| \text{ for all } \rho \in \pi\}\, \frac{\prod_{\rho \in \pi} (\|\rho \| / 2)!}{\prod_{A \in \sigma + \tau} (|A|)!}.
    \end{align}
    Therefore, by \Mobius\ inversion (in the poset $\Part(\sigma + \tau)$),
    \begin{align}
      \sum_{\substack{\bD \in \mathsf{Plans}(\sigma, \tau) \\ \bD \text{ connected}}}\frac{1}{\bD!}
      &= b(\{\sigma + \tau\}) \nonumber \\
      &= \sum_{\pi \in \mathsf{Part}(\sigma + \tau)}\mu_{\Part}(\pi, \{\sigma + \tau\})\, c(\pi) \nonumber \\
      &= \frac{1}{\prod_{A \in \sigma + \tau}(|A|)!} \sum_{\substack{\pi \in \mathsf{Part}(\sigma + \tau) \\ \|\sigma \cap \rho\| = \|\tau \cap \rho\| \text{ for all } \rho \in \pi}}(-1)^{|\pi| - 1} (|\pi| - 1)!\prod_{\rho \in \pi} (\|\rho\| / 2)!
    \end{align}

    Now, we substitute this into the left-hand side in the initial statement:
\begin{align}
  &\sum_{\substack{\sigma, \tau \in \Part([d])}} (-1)^{|\sigma| + |\tau|}\prod_{A \in \sigma + \tau}(|A| - 1)!(|A|)!\sum_{\substack{\bD \in \Plans(\sigma, \tau) \\ \bD \text{ connected}}} \frac{1}{\bD!} \nonumber \\
  &\hspace{0.5cm} = \sum_{\sigma, \tau \in \mathsf{Part}([d])} (-1)^{|\sigma| + |\tau|} \prod_{A \in \sigma + \tau} (|A| - 1)! \sum_{\substack{\pi \in \mathsf{Part}(\sigma + \tau) \\ \|\sigma \cap \rho\| = \|\tau \cap \rho\| \text{ for all } \rho \in \pi}}(-1)^{|\pi| - 1} (|\pi| - 1)!\prod_{\rho \in \pi} (\|\rho\| / 2)! \nonumber
  \intertext{and exchanging the order of summation,}
  &\hspace{0.5cm} = \sum_{\substack{\pi \in \mathsf{Part}([2d]) \\ |R \cap \{1, \dots, d\}| = |R \cap \{d + 1, \dots, 2d\}| \\ \text{for all } R \in \pi}} (-1)^{|\pi| - 1}(|\pi| - 1)! \prod_{R \in \pi}(|R| / 2)! \nonumber \\
  &\hspace{3.5cm} \sum_{\substack{\sigma \in \Part(\{1, \dots, d\}) \\ \tau \in \Part(\{d + 1, \dots, 2d\}) \\ \sigma + \tau \leq \pi}} (-1)^{|\sigma| + |\tau|} \prod_{A \in \sigma + \tau}(|A| - 1)!.
\end{align}

We again think in terms of \Mobius\ functions, but now on a different poset: on the product poset $\Part(\{1, \dots, d\}) \times \Part(\{d + 1, \dots, 2d\})$, the \Mobius\ function is (see \cite{Rota-1964-Foundations} for this fact)
\begin{equation}
    \mu_{\Part \times \Part}((\{\{1\}, \dots, \{\ell\}\}, \{\{\ell + 1\}, \dots, \{2\ell\}\}), (\sigma, \tau)) = (-1)^{|\sigma| + |\tau|} \prod_{A \in \sigma + \tau} (|A| - 1)!,
\end{equation}
and $\{(\sigma, \tau): \sigma + \tau \leq \pi\}$ is an interval.
Therefore, the inner sum above is zero unless every part of $\pi$ has size two, so that $\pi$ is a matching of $\{1, \dots, d\}$ with $\{d + 1, \dots, 2d\}$, in which case the inner sum is 1.
There are $d!$ such matchings $\pi$, so we find that the above expression equals $(-1)^{d - 1} (d - 1)! \, d!$, giving the result.
\end{proof}

\end{document}